\documentclass[11pt]{article}
\usepackage[utf8]{inputenc}
\usepackage{amsmath, amssymb, amsthm}
\usepackage{graphicx}
\usepackage{float}
\usepackage{algorithm}
\usepackage{algorithmic}
\usepackage{geometry}
\usepackage{booktabs}
\usepackage{multirow}
\usepackage{makecell}
\usepackage{tikz}
\usepackage{forest}
\usepackage{lipsum}
\usepackage{subcaption}
\usepackage{enumitem}
\usepackage{xcolor}
\usepackage{listings}
\usepackage{caption}
\usepackage{afterpage}
\usepackage{parskip}
\usepackage{threeparttable}
\usepackage{hyperref}
\usepackage{mathtools}
\usepackage[T1]{fontenc}
\usepackage{lmodern}
\usepackage[numbers]{natbib}
\hypersetup{
	breaklinks=true,
	colorlinks=true,
	linkcolor=blue,
	citecolor=blue,
	filecolor=black,
	urlcolor=blue,
	pdfborder={0 0 0}
}
\let\oldref\ref
\renewcommand{\ref}[1]{\textcolor{blue}{\oldref{#1}}}

\usepackage{titlesec}
\usepackage{tikz-cd}

\titlespacing*{\section}{0pt}{1.5\baselineskip}{1\baselineskip}
\titlespacing*{\subsection}{0pt}{1.2\baselineskip}{0.8\baselineskip}

\DeclareMathOperator{\tdw}{tdw}
\theoremstyle{plain}

\newtheorem{theorem}{Theorem}[section]
\newtheorem{lemma}[theorem]{Lemma}
\newtheorem{corollary}[theorem]{Corollary}

\newtheorem{hypothesis}[theorem]{Hypothesis}

\theoremstyle{definition}
\newtheorem{example}[theorem]{Example}
\newtheorem{problem}[theorem]{Problem}
\newtheorem{definition2}[theorem]{Definition}
\newtheorem{remark}[theorem]{Remark}

\titleformat{\subsubsection}[runin]
{\normalfont\bfseries}
{\thesubsubsection.}
{0.5em}
{}
[.]
\title{On the Structure of $(\min,+)$ Convolution}
\author{Huanyi Zhou\thanks{Tsinghua University. \url{zhouhuan24@mails.tsinghua.edu.cn.}}}
\date{}

\begin{document}
	\maketitle
	
	\begin{abstract}
		The $(\min,+)$ convolution is a central problem in fine-grained complexity, and it remains open whether it can be computed in truly subquadratic time. We study this problem through the algebraic structure of tropical polynomials, where $(\min,+)$ convolution is exactly tropical polynomial multiplication.
		
		We introduce the \textit{tropical decomposition width}, denoted by $\operatorname{tdw}(A)$, which measures how finely a tropical polynomial can be decomposed into factors of small degree. We prove two modular convexity theorems showing that bounded tropical decomposition width forces a convexity structure on arithmetic progression subpolynomials. As algorithmic consequences, we obtain deterministic algorithms for computing $a\otimes b$ in time
		$$O\left(n\max(\operatorname{tdw}(a),\operatorname{tdw}(b))^2\right)$$
		when the value of $\max(\operatorname{tdw}(a),\operatorname{tdw}(b))$ is given, and in
		$$O\left(ne^{\min(\operatorname{tdw}(a),\operatorname{tdw}(b))(1+o(1))}\right)$$
		time without prior knowledge about tropical decomposition width in advance. Neither algorithm requires a decomposition of the input sequences.
		
		The same structural ideas yield a randomized algorithm for Multiple-Sequence $(\min,+)$ Convolution. Given $k$ sequences of length at most $n$, their convolution can be computed in
		$$O\left(kn^2\sqrt{\min(k,n)}\log^{1.5}(kn)\right)$$
		time, improving over the natural $O(k^2n^2)$ bound. We complement this result with conditional lower bounds and a faster algorithm for the single-entry version of the problem. Through reductions to Multiple-Sequence $(\min,+)$ Convolution, these algorithms also give new upper bounds for Multiple-Choice Knapsack in the $(n,w_{\max})$ parameterization, improving the previous bound $\widetilde O(n+w_{\max}^4)$ (Jakub Pawlewicz 2026).
		
		Finally, we investigate two classical algebraic approaches to polynomial multiplication. We introduce interpolation algebras for tropical polynomials and show that classes of tropical polynomials with bounded tropical decomposition width admit interpolation algebras of finite generating rank. In contrast, distinguishing all tropical polynomials of degree at most $n$ requires generating rank $\lfloor n/2\rfloor+1$. We also prove that tropical decomposition width cannot decrease after passing to any flat $\mathbb T$-algebra extension. Together, these results relate the algorithmic tractability of $(\min,+)$ convolution to structural rigidity in tropical polynomial multiplication.
	\end{abstract}
	\newpage
	\tableofcontents
	\newpage
	\section{Introduction}
	The $(\min,+)$ convolution $c=a\otimes b$ of two sequences $a=(a_0,\ldots,a_n)$ and $b=(b_0,\ldots,b_n)$ is the sequence $c=(c_0,\ldots,c_{2n})$ defined by
	$$
	c_i=\min_{j+k=i}(a_j+b_k).
	$$
	
	Over the past decade, $(\min,+)$ convolution has become a central problem in fine-grained complexity theory. The core open question is whether it admits a truly subquadratic-time algorithm, that is, an algorithm running in $O(n^{2-\varepsilon})$ time for some constant $\varepsilon>0$. Currently, the fastest known algorithm runs in $n^2/2^{\Omega(\sqrt{\log n})}$ time \cite{wil14,cw16}, which falls short of truly subquadratic.
	
	Research on $(\min,+)$ convolution has followed three main directions. First, many works \cite{bcd06,lrc14,kps17,cmww19} treat it as a hardness assumption, deriving conditional lower bounds for various combinatorial problems via the $(\min,+)$ convolution hypothesis. In particular, they show that this hypothesis implies the APSP hypothesis and the 3SUM hypothesis, both of which are central to fine-grained complexity.
	
	Second, in light of the difficulty of obtaining a truly subquadratic algorithm, several approximation algorithms have been developed. Notably, \cite{mww19} gave a $(1+\epsilon)$-approximation algorithm running in $\widetilde O(n/\epsilon\log W)$ time for integer inputs bounded by $W$, while \cite{bkw19} obtained a strongly polynomial $(1+\epsilon)$-approximation algorithm running in $\widetilde O(n^{1.5}/\epsilon^{0.5})$ time.
	
	Third, faster exact algorithms exist for structured instances. If one sequence is convex, $(\min,+)$ convolution can be computed in $O(n)$ time using the SMAWK algorithm \cite{akmsw87}; recent work \cite{bc23,bl23} extended this to the nearly-convex setting. Other works \cite{cl15,cdx22,cdxz22,jpsx26} studied the bounded-difference case and the monotone case, where the convolution can be computed in $\widetilde O(n^{1.5})$ time.
	
	Despite this extensive body of work, the existence of a truly subquadratic algorithm for general $(\min,+)$ convolution remains open. Nor do we have an unconditional truly superlinear lower bound. This stands in sharp contrast to classical $(+,\times)$ convolution, which can be computed in $O(n\log n)$ time via the Fast Fourier Transform \cite{ct65}. This contrast raises a natural question:
	\begin{center}
		\textit{Why is $(\min,+)$ convolution so difficult, while classical $(+,\times)$ convolution is so easy, even though the two operations are defined analogously?}
	\end{center}
	In this paper, we approach this question indirectly, drawing inspiration from the field of tropical geometry \cite{ms15}. We begin with a structural transformation: view sequences as polynomials in the $(\min,+)$ semiring, called tropical polynomials. Under this transformation, the $(\min,+)$ convolution of two sequences becomes the multiplication of two tropical polynomials.
	
	In classical polynomial algebra, several properties, such as the polynomial identity property and the unique factorization property, play fundamental roles over fields and commutative rings. Related factorization phenomena have appeared in tropical and idempotent algebra in several different settings and under different formulations \cite{cm80,bcoq92,agt25,pon26}.
	
	Motivated by $(\min,+)$ convolution, we revisit a collection of classical polynomial properties and examine their behavior on formal tropical polynomials. We find that none of these properties holds for general polynomials in $\mathbb T[x]$. Interestingly, however, when we restrict to classes of $\mathbb T[x]$ that partially recover these properties, those classes support efficient computation. This offers high-level indirect evidence for why a truly subquadratic algorithm for $(\min,+)$ convolution remains elusive: $(\min,+)$ convolution lacks the nice properties that facilitate efficient computation.
	
	Inspired by this perspective, we introduce and study these properties for tropical polynomials, and define several structural parameters that characterize their behavior. In particular, we give faster algorithms for $(\min,+)$ convolution on classes of polynomials that partially satisfy these properties. Surprisingly, the same technique yields a faster algorithm for the Multiple-Sequence $(\min,+)$ Convolution, demonstrating the counterintuitive fact that although two-sequence $(\min,+)$ convolution appears resistant to improvement, the multi-sequence version admits a faster algorithm.
	\subsection{Fragmented Properties for Tropical Polynomials}
	We begin by examining three basic properties in polynomial algebra:
	\begin{enumerate}
		\item[(1)] Polynomial Identity Property: every polynomial is uniquely determined by its evaluation values on the base algebra.
		\item[(2)] Linear Factorization Property: every polynomial factors completely into linear factors.
		\item[(3)] Unique Factorization Property: every polynomial admits an essentially unique decomposition into irreducible factors, up to order and non‑zero constant multiples.
	\end{enumerate}
	Polynomials over the classical $(\mathbb C,+,\cdot)$ algebra satisfy all three properties. In the tropical $(\min,+)$ setting, however, these properties become fragmented and no longer hold for arbitrary tropical polynomials. In fact, among the classes considered here, we prove that the convex polynomial set, for which $(\min,+)$ convolution can be computed efficiently, satisfies all three properties (Lemma \ref{lem:poly_con_lf}, Lemma \ref{lem:poly_con_ip}, and Lemma \ref{lem:poly_con_uf}).
	
	Since these properties are too restrictive in the tropical setting, it is natural to consider relaxed versions that still retain enough structure to support efficient computation. We therefore consider the following two natural relaxations:
	\begin{enumerate}
		\item[(4)] Extended Identity Property: every polynomial is uniquely determined by its evaluation values on a finitely generated extension algebra over the base algebra.
		\item[(5)] Bounded Factorization Property: every polynomial admits a decomposition into factors of bounded degree.
	\end{enumerate}
	These properties are defined on multiplicatively closed sets of tropical polynomials rather than on individual polynomials, which ensures that the definitions are well-posed and that the sets are closed under multiplication. The formal definitions are given at the beginning of Section \ref{sec:char}.
	
	We examine the relations among these five basic properties together with the convexity and the convex polynomial set. They are summarized in the diagram below. An arrow $A\to B$ indicates that every polynomial set $\mathcal S\subseteq\mathbb T[x]$ satisfying property $A$ also satisfies property $B$, and that the implication is strict: there exists a set satisfying $B$ but not $A$. A double arrow $A\leftrightarrow B$ indicates that the two properties are equivalent for every polynomial set $\mathcal S\subseteq\mathbb T[x]$.
	$$
	\begin{tikzcd}[column sep=small]
		& \text{Unique Factorization Property} & & \\
		\text{Convex Polynomial Set} \arrow[ru,"\text{Lemma \ref{lem:poly_con_uf}}"]
		\arrow[r,"\text{trivial}"]
		& 
		\text{Convex Property} \arrow[r,leftrightarrow,"\text{Lemma \ref{lem:poly_con_lf}}"] 
		\arrow[d,"\text{Lemma \ref{lem:poly_con_ip}}"] 
		& 
		\text{Linear Factorization Property} \arrow[d,"\text{trivial}"] & \\
		& \text{Polynomial Identity Property} \arrow[rd,"\text{trivial}"] & 
		\text{Bounded Factorization Property} \arrow[d,"\text{Corollary \ref{cor:poly_bf_ei}}"] & \\
		& & \text{Extended Identity Property} &
	\end{tikzcd}
	$$
	In particular, we show that $\mathbb T[x]$ satisfies neither the extended identity property nor the unique factorization property (Corollary \ref{cor:poly_ei_uf}).
	
	Among these properties, we focus mainly on the bounded factorization property and the extended identity property. Motivated by the bounded factorization property, we introduce the parameter of tropical decomposition width, which reflects the rigidity of a tropical polynomial under decomposition. For a tropical polynomial $A$, its tropical decomposition width is the minimum possible value of
	$$
	\max_{i=1}^{N}\left(\deg B^{(i)}\right)
	$$
	over all decompositions
	$$
	A=\bigotimes_{i=1}^{N}B^{(i)}.
	$$
	Here $\bigotimes$ denotes multiplication in the semiring setting.
	
	We study the structural properties of this parameter in Section \ref{sec:char}, and in Section \ref{sec:alg} we design faster algorithms for $(\min,+)$ convolution-type problems based on this parameter.
	
	Motivated by the extended identity property, we introduce the notion of an interpolation algebra, which extends the classical evaluation-interpolation techniques for polynomials over fields to polynomials over
	commutative semirings.
	
	Necessary lemmas and theorems for interpolation algebras are established in Section \ref{sec:char}, and the power of interpolation algebras is investigated in Section \ref{sec:ext}.
	\subsection{Structural Properties of Bounded Tropical Decomposition Width Polynomials}
	We first prove two core structural theorems for tropical polynomials with bounded tropical decomposition width. These theorems show that bounded factorization imposes convexity properties on arithmetic progression subpolynomials--a surprising bridge between convexity and modularity. The proofs proceed by reducing the problem to an additive combinatorics statement of our own formulation, for which we then prove a quantitative bound.
	\begin{theorem}[the first modular convexity theorem]
		\label{thm:fir_con}
		For any tropical polynomial $A$, for every positive integer $d$ and
		every subpolynomial $B\in \operatorname{MS}_d(A)$, we have
		$
		\operatorname{cgap}(B)\leq \tdw(A).
		$
	\end{theorem}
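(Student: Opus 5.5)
Throughout I use the following reading of the notation: $\operatorname{MS}_d(A)$ consists of the subpolynomials $B(x)=\bigoplus_j a_{r+jd}x^j$ for a fixed residue $r$, and $\operatorname{cgap}(B)$ is the largest distance between two consecutive indices at which $B$ touches its lower convex hull. The plan is an exchange argument on factorizations.

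Fix a decomposition $A=\bigotimes_{i=1}^N B^{(i)}$ with $\max_i\deg B^{(i)}=w=\tdw(A)$. Each coefficient has the form
\[
a_t=\min\Bigl\{\textstyle\sum_i b^{(i)}_{e_i} : 0\le e_i\le \deg B^{(i)},\ \sum_i e_i=t\Bigr\}.
\]
Suppose for contradiction that $j_1<j_2$ are consecutive hull-touching indices of $B$ with $m=j_2-j_1>w$. Choose optimal tuples $e$ for $a_{r+j_1d}$ and $f$ for $a_{r+j_2d}$. Put $\delta_i=f_i-e_i$, so that $|\delta_i|\le w$ and $\sum_i\delta_i=md$.

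The key step is an additive-combinatorics lemma. \emph{If $\delta_1,\dots,\delta_N$ are integers with $|\delta_i|\le w$ and $\sum\delta_i=md$ with $m>w$, then some subset $S$ has $\sum_{i\in S}\delta_i=kd$ for an integer $0<k<m$.} Granting it, form two hybrid tuples:
\begin{itemize}
\item $g$, equal to $f$ on $S$ and to $e$ off $S$, which has index sum $r+(j_1+k)d$;
\item $h$, equal to $e$ on $S$ and to $f$ off $S$, which has index sum $r+(j_2-k)d$.
\end{itemize}
Both are admissible tuples, and their costs add up to $\operatorname{cost}(e)+\operatorname{cost}(f)$. Hence
\[
a_{r+(j_1+k)d}+a_{r+(j_2-k)d}\le a_{r+j_1d}+a_{r+j_2d}.
\]
The two indices $j_1+k$ and $j_2-k$ are symmetric about the midpoint of $[j_1,j_2]$, and both lie strictly inside it. So at least one of them has value at most the chord through $(j_1,a_{r+j_1d})$ and $(j_2,a_{r+j_2d})$. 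Because $j_1,j_2$ lie on the hull and no index between them does, every interior value must lie strictly above that chord. This is a contradiction, and therefore $m\le w$.

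For the lemma, my first attempt is a greedy reordering of the $\delta_i$: append a positive term when the current prefix sum is $\le 0$ and a negative term otherwise. This keeps every prefix sum in the window $(-w,\,md+w]$, and each step moves by at most $w$. I then want a prefix sum, or a difference of two prefix sums, equal to $kd$ with $0<k<m$.

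I expect this lemma to be the main obstacle. When $d>w$, the steps are shorter than $d$, so the walk from $0$ to $md$ must land within distance $w$ of each multiple $kd$, but it need not hit one exactly. I would try the following remedies in order:
\begin{enumerate}
\item Run pigeonhole on the residues of the prefix sums modulo $d$, combined with the bounded window, to force two prefix sums that are congruent modulo $d$ and differ by between $d$ and $(m-1)d$.
\item If that fails, split the $\delta_i$ into positive and negative parts and apply an Erdős--Ginzburg--Ziv/Davenport-type argument to them.
\item If neither works, reinterpret $\operatorname{cgap}$ on the original index scale, where the exchange argument already gives a bound of $w$ with no modular condition at all.
\end{enumerate}
The correct normalization of $\operatorname{cgap}$ is exactly what makes the inequality $m>w$ strong enough here, so I would check the paper's definition at this point first.
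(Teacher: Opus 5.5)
Your exchange argument is the paper's proof of Theorem~\ref{thm:fir_con_sec}. Your hybrid tuples $g,h$ give the paper's bounds on $a_z$ and $a_{z'}$ with $z=x+\sigma$ and $z'=y-\sigma$, and your symmetric-pair/chord contradiction is equivalent to its case split on $E$. Your lemma is exactly Corollary~\ref{cor:sub_find}, with $m\mapsto d$, $c\mapsto m$, $k\mapsto w$.

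The gap is that this lemma, the only nontrivial step, is never proved, and the route you try first does not reach it. After your signed greedy reordering the walk is not monotone. So pigeonhole on residues of prefix sums modulo $d$ may return the trivial pair $P_0=0$, $P_N=md$, a zero-sum block, or a block of negative sum. None of these gives $0<k<m$, and the window $(-w,md+w]$ does not exclude them. Your fallback (3) would prove a different statement, and on $A$'s index scale that statement is false. For example, $A=\min(0,x,2x)=2\min(0,x)$ has $\tdw(A)=1$, yet with $d=2$ the hull points of $\operatorname{ms}_{2,0}(A)=\min(0,x)$ are $2$ apart in $A$'s indices.

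The missing idea, which your remedy (2) points toward, is to discard the nonpositive $\delta_i$ altogether.
\begin{itemize}
\item The positive $\delta_i$ sum to at least $md>wd$, and each is at most $w$, so there are more than $d$ of them.
\item Take any $d$ of them. Among their $d+1$ prefix sums, two are congruent modulo $d$.
\item The block between those two prefix sums is nonempty and its sum is divisible by $d$. That sum is positive and at most $dw<md$, so it equals $kd$ with $0<k<m$.
\end{itemize}
The paper does essentially this. It sorts the $\delta_i$ decreasingly, keeps the shortest prefix with sum at least $md$ (all of whose terms are positive), and applies the pigeonhole bound $\mathcal F(d,w)\le dw+1$ of Lemma~\ref{lem:add_com_upp}.

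With this lemma in place, the rest of your argument goes through unchanged. Your hull-gap reading of $\operatorname{cgap}$ agrees with the paper's definition by Theorem~\ref{thm:cgap}, up to a $\max(\cdot,1)$. That term is harmless because $\tdw(A)\ge 1$ whenever some $B\in\operatorname{MS}_d(A)$ is nonconstant. So $m>w$ is exactly the right hypothesis, and no renormalization is needed.
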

	Here $\operatorname{MS}_d(A)$ denotes the mod-$d$ arithmetic progression
	subpolynomial multiset of $A$, and $\operatorname{cgap}(B)$ denotes the
	convex gap of $B$. Additionally, we provide the structural properties of the convex gap in Section \ref{sec:cgap}. Although this concept is defined combinatorially, it has an intuitive geometric interpretation.
	\begin{theorem}[the second modular convexity theorem]
		\label{thm:sec_con}
		Let $k,L$ be positive integers. If
		$
		\operatorname{lcm}_{i=1}^{k}i\mid L,
		$
		then for every tropical polynomial $A$ with $\tdw(A)\leq k$ and
		every subpolynomial $B\in \operatorname{MS}_L(A)$, the
		polynomial $B$ is convex. Conversely, if
		$
		\operatorname{lcm}_{i=1}^{k}i\nmid L,
		$
		then there exists a tropical polynomial $A$ with
		$\operatorname{tdw}(A)\leq k$ such that some
		$B\in \operatorname{MS}_L(A)$ is not convex.
	\end{theorem}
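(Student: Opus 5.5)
We prove the two directions separately. In both we use the first modular convexity theorem (Theorem \ref{thm:fir_con}) together with the structure of decompositions of polynomials of bounded width.

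\textbf{Forward direction.} Assume $\operatorname{lcm}_{i=1}^k i\mid L$ and $\tdw(A)\le k$, and fix a decomposition $A=\bigotimes_{j=1}^N B^{(j)}$ with every $\deg B^{(j)}\le k$. The plan is to show directly that every $B\in\operatorname{MS}_L(A)$ is convex, by reducing to the additive-combinatorics statement that underlies Theorem \ref{thm:fir_con}.

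The key observation concerns a single factor $B^{(j)}$ of degree $d_j\le k$. Its exponents lie in $\{0,\dots,d_j\}$. Since $d_j\mid L$, the step $L$ is a multiple of every possible "period" $d_j$ of a factor. So when the exponents of the factors are summed and restricted to a fixed residue class mod $L$, each factor contributes, at step $L$, something that behaves like a multiple of a linear term.

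Concretely, I would write $B$ as the restriction of $A$ to $\{r+tL\}$. Each coefficient $b_t$ is then a minimum over choices of exponents $e_j\in[0,d_j]$ with $\sum e_j=r+tL$. I would then check midpoint convexity, $b_{t-1}+b_{t+1}\ge 2b_t$, by an exchange argument. Take optimal exponent vectors $e$ and $e'$ for $t-1$ and $t+1$. Their total differs by $2L$. The goal is to redistribute the differences $e'_j-e_j$ into two vectors $f,f'$ with
\begin{itemize}
\item $f+f'=e+e'$,
\item both sums equal to $r+tL$,
\item each coordinate still in range.
\end{itemize}
Then the costs satisfy $\mathrm{cost}(f)+\mathrm{cost}(f')\le \mathrm{cost}(e)+\mathrm{cost}(e')$, which gives the inequality.

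The difficulty is that per-factor costs are arbitrary, not convex. So the redistribution must use whole-factor swaps: for each $j$ we take either $(f_j,f'_j)=(e_j,e'_j)$ or $(e'_j,e_j)$. This needs a subset $S$ of indices with $\sum_{j\in S}(e'_j-e_j)=L$ exactly. Each difference lies in $[-d_j,d_j]\subseteq[-k,k]$, and the total is $2L$.

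This is the main obstacle: a zero-sum/subset-sum lemma stating that a sequence of integers in $[-k,k]$ with total $2L$, where $\operatorname{lcm}(1..k)\mid L$, has a subsequence summing to exactly $L$. I would first try to take the "convex gap" bound from Theorem \ref{thm:fir_con}, i.e. its additive lemma, which presumably yields subset sums hitting every multiple of some $g\le k$ in the relevant range. Since $g\mid L$ for every $g\le k$, the value $L$ is then attained. If the lemma only controls gaps, I would prove the divisibility version by induction on $k$, grouping equal values $v$ into blocks of size $L/v$.

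\textbf{Converse.} Suppose $\operatorname{lcm}_{i=1}^k i\nmid L$. Then some $i\le k$, which we may take to be a prime power $p^a\le k$, does not divide $L$. I would take
\[
A=(0\oplus c\,x^{i})^{\otimes M}
\]
for large $M$ and suitable $c$, or a similar product of degree-$i$ binomials. Then $A=\bigoplus_m (mc)x^{mi}$ has finite coefficients only at multiples of $i$, so $\tdw(A)\le i\le k$.

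Since $i\nmid L$, a residue class mod $L$ meets multiples of $i$ only periodically: only every $i/\gcd(i,L)>1$ steps. The corresponding $B\in\operatorname{MS}_L(A)$ therefore has interior $+\infty$ entries between finite ones, which makes it non-convex under the paper's convention, where convexity requires the finite support to be an interval. If the convention instead requires the finite coefficients themselves to fail convexity, I would adjust by multiplying by a convex filler whose finite support is an interval but whose coefficients are steep, so that the periodic dips survive. This last check is routine.
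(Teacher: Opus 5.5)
Your converse is correct and is essentially the paper's construction. The paper takes $A=(L/g)\min(0,dx)$ with $d\le k$, $d\nmid L$, $g=\gcd(d,L)$, so that $\operatorname{ms}_{L,0}(A)=\min(0,hx)$ with $h=d/g>1$. In your version, just take $M\ge L/\gcd(i,L)$ so that $B$ really has two finite coefficients with $\infty$ between them.

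Your forward direction also has the paper's skeleton. Whole-factor swaps between optimal exponent vectors reduce convexity to a subset-sum statement about the differences $e'_j-e_j\in[-k,k]$; the paper reruns the proof of Theorem \ref{thm:fir_con_sec} with Corollary \ref{cor:sub_find2} in place of Corollary \ref{cor:sub_find}. The gap is that this subset-sum statement is the entire content of the theorem, you have not proved it, and neither route you suggest delivers it:
\begin{itemize}
\item The lemma behind the first theorem (Lemma \ref{lem:add_com_upp}, pigeonhole on prefix sums) only gives a sub-multiset with sum divisible by $L$ once the positive total exceeds $kL$. Here the total is only $2L$, and that lemma says nothing about hitting multiples of a small $g$.
\item Blocks of $L/v$ copies of one value $v$ exist only if a single value class sums to at least $L$. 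For $k=3$, $L=6$, the multiset $\{1,1,1,1,1,2,2,3\}$ has total $12=2L$, yet every class sums to less than $6$, so mixed blocks are unavoidable.
\item If the copies of $k$ sum to less than $L$, discarding them leaves a total only known to exceed $L$, not $2L$, so the natural induction on $k$ does not close.
\end{itemize}
The paper needs genuine work here. Lemma \ref{lem:add_com_upp3_tig} proves $\mathcal F(m,k)\le 2m$ whenever $\operatorname{lcm}_{i=1}^{k}i\mid m$. For $k\ge 7$ it uses mixed blocks of $p$ copies of $q$, padded with copies of a dominant value $p$ (Lemma \ref{lem:add_com_upp2}), together with the estimate $\operatorname{lcm}_{i=1}^{k}i\ge k^3$. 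For $k\le 6$ it uses a separate lcm-chain greedy (Lemma \ref{lem:dif_est}) and explicit inequalities. Corollary \ref{cor:sub_find2} then passes from signed differences to positive ones by taking a descending prefix.

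Second, even granting your lemma, the three-term inequality $b_{t-1}+b_{t+1}\ge 2b_t$ does not give convexity in the paper's sense ($\operatorname{cgap}\le 1$). It is vacuous across a run of two or more $\infty$ coefficients; for example, $(0,\infty,\infty,0)$ satisfies it at every $t$. You must also show that the finite coefficients of $B$ form an interval. That is, for finite $b_s,b_{s+c}$ with $c\ge 2$ you need a sub-multiset of the differences whose sum is a multiple of $L$ strictly between $0$ and $cL$. The same swap then yields a convex triplet $(s,s+h,s+c)$ or $(s,s+c-h,s+c)$, exactly as in the proof of Theorem \ref{thm:fir_con_sec}. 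Your lemma is stated only for total exactly $2L$, and the general-$c$ case does not follow from it directly. The paper's positive-multiset formulation (total at least $2L$) covers all $c\ge 2$ at once.
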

	The second modular convexity theorem will also serve as a key ingredient in our study of interpolation algebras (Theorem \ref{thm:int_alg_intro}). This connection gives an algebraic interpretation of the modular convexity phenomenon.
	
	Although computing $\operatorname{tdw}(A)$ is NP-hard (Lemma \ref{lem:tdw_np_hard}), these structural theorems allow us to obtain two algorithms for $(\min,+)$ convolution. Both algorithms are significantly faster than the quadratic brute-force approach when the tropical decomposition width is small. Notably, they do not require the decomposition itself to be provided as input. It suffices to have a promise that the tropical decomposition width is small.
	
	The first algorithm uses an adjustment lemma and has complexity quadratic in the maximum tropical decomposition width of the two sequences.
	\begin{theorem}
		\label{thm:fir_alg}
		Let $a,b$ be two tropical sequences with $|a|,|b|\leq n$. If
		$
		\max(\tdw(a),\tdw(b))
		$
		is given as part of the input, then there is a deterministic algorithm
		that computes
		$
		a\otimes b
		$
		in time
		$$
		O\left(n\max(\tdw(a),\tdw(b))^2\right).
		$$
	\end{theorem}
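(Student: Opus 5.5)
Write $k=\max(\tdw(a),\tdw(b))$ and $L=\operatorname{lcm}_{i=1}^{k} i$. The first temptation is to use the second modular convexity theorem (Theorem \ref{thm:sec_con}) directly: every residue class mod $L$ of $a$ and of $b$ is then convex. But $L$ is exponential in $k$, which only gives the second, $e^{k(1+o(1))}$-type bound. For the $O(nk^2)$ bound I would use the first modular convexity theorem (Theorem \ref{thm:fir_con}) with modulus $d=1$, together with the adjustment lemma mentioned before the statement.

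\textbf{Step 1: a controlled convex gap.} Taking $d=1$ in Theorem \ref{thm:fir_con} (so $A$ itself lies in $\operatorname{MS}_1(A)$) gives $\operatorname{cgap}(a)\le k$ and $\operatorname{cgap}(b)\le k$. I expect the convex gap to mean, roughly, the following. If $\hat a$ is the lower convex hull of $a$, then the indices where $a_i=\hat a_i$ (the hull vertices) leave gaps of length at most $k$. Equivalently, $a$ differs from a convex sequence only inside windows of width at most $k$. I would first state this reformulation precisely, using the structural results on $\operatorname{cgap}$ from Section \ref{sec:cgap}.

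\textbf{Step 2: split each sequence into a convex part and a local correction.} Write each index as $i=v+r$, where $v$ is the last hull vertex at or before $i$ and $0\le r<k$. The adjustment lemma should let us express $a$ as a $(\min,+)$ combination: a convex sequence $\hat a$ plus correction terms with bounded support. Concretely, for each offset $r\in[0,k)$ let $a^{(r)}$ be the sequence supported on the indices (vertex $+\,r$). I would subtract a suitable linear or convex adjustment so that each $a^{(r)}$ becomes convex, or so that its convolution can be handled by SMAWK. Then
$$a\otimes b=\min_{r,s} a^{(r)}\otimes b^{(s)},$$
which is $k^2$ convolutions. Each has size $O(n)$ and is computable in $O(n)$ time by SMAWK \cite{akmsw87}, since one factor is convex (or both are, after adjustment). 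This gives total time $O(nk^2)$. Knowing $k$ as part of the input is exactly what allows the offsets to be fixed in advance, and it explains why the algorithm needs $k$ to be given.

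\textbf{Main obstacle.} The hard part is ensuring that each shifted piece $a^{(r)}$ really is convex, or can be adjusted to be, after being spread along the irregular vertex positions. Hull vertices need not be equally spaced, so the sequence "vertex $+\,r$" is not an arithmetic progression. My first attempt would be to show that bounded $\operatorname{tdw}$ forces a stronger property: for each fixed $r$, the values $a_{v+r}-\hat a_{v+r}$ are monotone or convex along consecutive vertices. I would try to derive this from Theorem \ref{thm:fir_con} applied to the factors of a decomposition. The factors each have degree at most $k$, and their product inherits a slope-sorted structure by the Minkowski-sum description of tropical products.

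If that fails, the fallback is to group gaps by their exact length $\ell\le k$ and offset $r<\ell$. This gives $O(k^2)$ classes. Within each class I would treat the pieces with a single SMAWK pass on a totally monotone matrix, and use the fact that the hull slopes are sorted to establish total monotonicity.
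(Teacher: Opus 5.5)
\textbf{There is a genuine gap, and it is in Step 2.} The identity $a\otimes b=\min_{r,s}a^{(r)}\otimes b^{(s)}$ is harmless. What you never justify is that a single $a^{(r)}\otimes b^{(s)}$ can be computed in $O(n)$, and in general it does not reduce to SMAWK.

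\begin{itemize}
\item \emph{Irregular supports.} By Lemma \ref{lem:expr}, the hull segments of $a=\bigotimes_i B^{(i)}$ are the factors' hull segments merged by slope. So consecutive vertex gaps can be any interleaving of lengths $\le k$, say $2,3,2,3,\dots$. The positions ``vertex $+\,r$'' then do not form an arithmetic progression. Hence $a^{(r)}$ has $\infty$ entries inside its support and is not convex in the sense of Lemma \ref{lem:conv_equiv}.
\item \emph{Compression does not help.} Re-indexing the support to consecutive integers is useless, because $(\min,+)$ convolution only commutes with affine re-indexing. That is exactly why Algorithm \ref{alg3} works with residue classes.
\item \emph{Your fallback does not give total monotonicity.} SMAWK needs the actual entries of one factor to have nondecreasing consecutive differences on an interval (see the proof of Lemma \ref{lem:case_conv}). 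Sorted hull slopes say nothing about entries lying off the hull.
\item \emph{Your ``first attempt'' is false.} Take $a=\bigotimes_i\min(0,s_i+\delta_i+x,2s_i+2x)$ with $s_{i+1}-s_i>\max_j\delta_j>0$. Then every even index is a hull vertex and $a_{2j+1}-\hat f_a(2j+1)=\delta_{j+1}$, which can be an arbitrary positive sequence. So the deviations need not be monotone or convex.
\end{itemize}

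The only clean convexity that bounded $\tdw$ gives is Theorem \ref{thm:sec_con}, and its modulus $\operatorname{lcm}_{i=1}^{k}i$ is exactly the exponential cost you wanted to avoid. A telling sign: your scheme uses only the hull and $\operatorname{cgap}$, both computable in linear time. If it worked, it would not need $k$ as input at all.

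\textbf{The missing idea is locality of optimal splits, not convexity of pieces.} The paper's Algorithm \ref{alg2} sweeps the output index $i$ upward. For each $i$ it scans only $j\in[ps-2K^2,\,ps+2K^2]$, where $ps$ is the optimal split found for the previous finite entry. Correctness rests on two facts.
\begin{enumerate}
\item We have $\operatorname{cgap}(a\otimes b)\le\tdw(a\otimes b)\le K$. Hence consecutive finite entries $d_{t-1}<d_t$ of $c^*=a\otimes b$ are at most $K$ apart.
\item Concatenate decompositions of $a$ and $b$ into factors of degree at most $K$. The adjustment lemma (Lemma \ref{lem:adj}, derived from the zero-sum bound Lemma \ref{lem:sub_find3}) turns an optimal solution at $d_{t-1}$ into one at $d_t$ that differs in at most $2K$ factor coordinates. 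Each changed coordinate moves by at most $K$. So the split index, the sum of the coordinates belonging to $a$'s factors, moves by at most $2K^2$.
\end{enumerate}
Induction over $t$ then gives correctness and $O(nK^2)$ time. Knowing $K$ is needed precisely to fix the window width.

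Note that Lemma \ref{lem:adj} is a statement about solutions of the factor system. It is not a ``convex part plus local correction'' decomposition of $a$, which is how your proposal uses it.
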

	
	The second algorithm does not require the tropical decomposition width to be given; it only assumes that one of the two sequences has small tropical decomposition width. However, its running time is exponential in that width.
	\begin{theorem}
		\label{thm:sec_alg}
		Let $a,b$ be two tropical sequences with $|a|,|b|\leq n$. There is a
		deterministic algorithm that computes
		$
		a\otimes b
		$
		in time
		$$
		O\left(
		n e^{\min(\tdw(a),\tdw(b))(1+o(1))}
		\right).
		$$
	\end{theorem}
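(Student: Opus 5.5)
Let $t=\min(\tdw(a),\tdw(b))$, which is unknown to the algorithm. The plan is to combine the second modular convexity theorem (Theorem \ref{thm:sec_con}) with a doubling search over the candidate width $k=1,2,3,\dots$. For a given $k$, set $L_k=\operatorname{lcm}_{i=1}^{k} i$. By the prime number theorem, $L_k=e^{k(1+o(1))}$. If $\tdw(a)\le k$, then Theorem \ref{thm:sec_con} says that each of the $L_k$ residue-class subpolynomials $a^{(r)}\in\operatorname{MS}_{L_k}(a)$, for $0\le r<L_k$, is convex. We can therefore write
$$
a\otimes b=\min_{0\le r<L_k} x^{r}\otimes\left(a^{(r)}(x^{L_k})\otimes b\right).
$$
Moreover, $a^{(r)}(x^{L_k})\otimes b$ splits into $L_k$ convolutions: each one pairs the convex sequence $a^{(r)}$, of length about $n/L_k$, with a residue class $b^{(s)}$ of $b$ modulo $L_k$, of length about $n/L_k$. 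Each convex-times-arbitrary convolution costs $O(n/L_k)$ by SMAWK \cite{akmsw87}. In total there are $L_k^2$ pairs $(r,s)$, so the cost is $O(L_k^2\cdot n/L_k+n)=O(nL_k)$. The same procedure works symmetrically with the roles of $a$ and $b$ exchanged.

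The algorithm runs as follows. For $k=1,2,\dots$, we compute $L_k$. We test in $O(n)$ time whether every class in $\operatorname{MS}_{L_k}(a)$ is convex, by checking second differences with care for infinite entries and the empty support. We run the same test for $b$. As soon as one of the two sequences passes, we apply the SMAWK-based convolution above and stop. Correctness does not depend on the promise: we only use convexity that has actually been verified. Termination is guaranteed because the test succeeds at the latest when $k=t$, by Theorem \ref{thm:sec_con}. It also succeeds trivially once $L_k>n$, since then every class has at most one entry, so the loop ends with total cost $O(n^2)$ at worst, which matches the claimed bound when $t$ is large.

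For the running time, the tests cost $O(n)$ per $k$, so the search over $k\le t$ costs $O(nt)$. The final convolution costs $O(nL_{k^*})$ with $k^*\le t$. Because $L_k$ is nondecreasing, this gives $O(nt+nL_t)=O(n e^{t(1+o(1))})$. One step needs care: a single test must cost $O(n)$, not $O(nL_k)$, which holds because the classes partition the indices. Another is to restrict $k$ to values where $L_k$ actually changes, or to cap the search once $L_k>n$.

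The main technical point I expect is the convex-times-arbitrary convolution when the convex class contains $+\infty$ entries or has a support that is not contiguous. Convexity in the paper's sense presumably means convexity on an interval of support, so SMAWK applies after shifting. A second point is bookkeeping of lengths when $L_k$ is comparable to $n$: many residue classes are then nearly empty, and the pairs $(r,s)$ must be enumerated only over nonempty classes so that the work stays $O(nL_k)$ rather than $L_k^2$ plus overhead. If $L_k^2$ exceeds $n\cdot L_k$ only by an additive $L_k^2\le nL_k$ term, which holds when $L_k\le n$, this is harmless.
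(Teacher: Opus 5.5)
Your proposal is correct and follows the paper's proof (Algorithm~\ref{alg3}) essentially step for step. Theorem~\ref{thm:sec_con} guarantees that modulus $\operatorname{lcm}_{i=1}^{t} i$ makes every residue class of one input convex. The residue-pair convolutions, $O(nL)$ work in total, are done with the SMAWK-based Lemma~\ref{lem:case_conv}, which is exactly where the paper handles the $\infty$-padding issue you flag. Finally, $\ln \operatorname{lcm}_{i=1}^{t} i = t+o(t)$ gives the bound.

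The only difference is in the search. The paper tests every modulus $k=1,2,\dots$ and stops at the smallest one that works, so its search costs $O(nL)$ but may end on a smaller final modulus. You test only the moduli $\operatorname{lcm}_{i=1}^{k} i$, which makes the search cheaper ($O(nt)$) and leaves the overall $O\bigl(n\operatorname{lcm}_{i=1}^{t} i\bigr)$ bound unchanged.
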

	\subsection{Multiple-Sequence $(\min,+)$ Convolution Problem}
	A natural problem arising from tropical decomposition width is the $(\min,+)$ convolution of multiple sequences. Given $k$ tropical sequences $a^{(1)},\ldots,a^{(k)}$ with $|a^{(i)}|\le n$, we wish to compute
	$$
	\bigotimes_{i=1}^{k} a^{(i)}.
	$$
	If we compute the convolution sequentially or by divide-and-conquer, the intermediate sequence length grows after each step: the convolution of $t$ sequences may have length $\Omega(tn)$. Consequently, such algorithms have running time $O(k^2n^2)$.
	
	Although we cannot improve the brute-force algorithm for two-sequence $(\min,+)$ convolution in the exponent, we can improve on the brute-force exponent for the multiple-sequence version using a combinatorial algorithm. A key observation is the following: every intermediate sequence expressible as a convolution of a subset of the input sequences $a^{(1)},\dots,a^{(k)}$ has tropical decomposition width at most $n$. This suggests designing an algorithm based on the structural properties of bounded tropical decomposition width polynomials.
	
	By applying the approach of Theorem \ref{thm:fir_alg}, together with random-shuffle and isolation techniques, we obtain a randomized algorithm with running time
	$$
	O\left(kn^2\sqrt{\min(k,n)}\log^{1.5}(kn)\right),
	$$
	which has total exponent $3.5$, improving upon the brute-force exponent when $k=\Theta(n)$.
	\begin{theorem}
		The all-entry version of Multiple-Sequence $(\min,+)$ Convolution can be solved by a randomized algorithm in time
		$$
		O\left(kn^2\sqrt{\min(k,n)}\log^{1.5} (kn)\right)
		$$
		with high probability.
	\end{theorem}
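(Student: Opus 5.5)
We assemble the algorithm from the machinery behind Theorem \ref{thm:fir_alg}, using the observation stated just above: every partial product $\bigotimes_{i\in S}a^{(i)}$ has tropical decomposition width at most $n$. The skeleton is divide and conquer. Split the $k$ sequences into two halves, recursively compute the two partial products $P$ and $Q$, each of length $O(kn/2)$, and combine them with one $(\min,+)$ convolution. Both $P$ and $Q$ have $\tdw\le n$. So the first modular convexity theorem (Theorem \ref{thm:fir_con}) gives $\operatorname{cgap}(B)\le n$ for every $B\in\operatorname{MS}_d(P)$ and every $d$, and likewise for $Q$.

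Applied naively, Theorem \ref{thm:fir_alg} would cost $O(kn\cdot n^2)$ per merge, which is too slow. The intended gain is to exploit the product structure instead of only the width bound. For the top-level merge we would organize the computation as follows.
\begin{enumerate}
\item Choose a modulus $d\approx\sqrt{\min(k,n)}$, up to logarithmic factors.
\item Split each factor into its $d$ residue-class subpolynomials $\operatorname{MS}_d(\cdot)$.
\item Use the convex-gap bound to treat each class as ``nearly convex''.
\item Convolve each nearly convex pair with an SMAWK-style routine whose cost is linear in length times the gap parameter.
\end{enumerate}
For this to pay off, the effective gap on a residue class must shrink from $n$ to roughly $n/d$. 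That is the first thing we would try to prove, by refining the adjustment lemma used for Theorem \ref{thm:fir_alg}. The total cost of one merge would then be roughly $kn\cdot n\cdot\sqrt{\min(k,n)}$.

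The randomness enters here. We would randomly shuffle the order of the $k$ input sequences. For any fixed output index, the optimal choice of exponents $(j_1,\dots,j_k)$ then has a partial sum over a random half-set that concentrates: by Chernoff/Hoeffding, it deviates from its proportional share by at most $O(n\sqrt{\min(k,n)\log(kn)})$ with high probability. This means only a band of width $O(n\sqrt{\min(k,n)\log(kn)})$ around the diagonal of each merge needs to be examined. We also need isolation, via random perturbation of the weights, so that one specific optimum is the one that concentrates; otherwise ties could hide it. A union bound over all $O(kn)$ output entries and all $O(\log k)$ recursion levels then yields the extra $\log^{1.5}(kn)$ factor together with the high-probability guarantee.

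The cost accounting is as follows. At each level, the banded merge over outputs of length $O(kn)$ costs $O(kn\cdot n\sqrt{\min(k,n)\log(kn)})$ times a $\log$ factor for the structured convolution. The sizes of subproblems shrink geometrically in band width down the recursion, so the total is dominated by the top level and gives the claimed bound.

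I expect the main obstacle to be the concentration step. We must show that restricting each merge to the band loses no optimal solution with high probability. The delicate part is that the optimizing tuple may depend on the shuffle. Isolation via random perturbation is what fixes a canonical optimizer independent of the permutation, and getting this to interact correctly with the banded, convex-gap-based merge is where most of the care will go.
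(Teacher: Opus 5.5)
\textbf{There is a genuine gap, and it sits exactly where the factor $\sqrt{\min(k,n)}$ is supposed to come from.}

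\textbf{The residue-class scheme fails.} Steps 1--4 rely on the claim that the convex gap of a class in $\operatorname{MS}_d(P)$ shrinks to about $n/d$, and that claim is false. Theorem \ref{thm:fir_con} gives only $\operatorname{cgap}\le\tdw(P)\le n$, and this is essentially attained. Take $P=N\min(0,nx)$ with $N\ge d$ and $\gcd(d,n)=1$; this is a legitimate partial product of length-$n$ inputs. The finite coefficients of $\operatorname{ms}_{d,0}(P)$ then sit exactly at the multiples of $n$, so its convex gap is $n$. If $d$ shares a factor with $n$, use a prime $n'\in(n/2,n]$ in place of $n$. So that part of the plan should be discarded. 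The only structural fact you need from the width bound is $\operatorname{cgap}(u\otimes v)\le n$ (the case $d=1$). That is, consecutive finite entries of each merged product are at most $n$ apart.

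\textbf{The concentration step is applied to the wrong quantity.} You center a fixed band at the ``proportional share'' of each output index. You then apply Hoeffding to the whole optimal exponent vector of a merge of $m$ sequences. That vector can have $\Theta(m)$ nonzero entries of size up to $n$, so its random half-sum fluctuates by $\Theta(n\sqrt m)$, not by $O(n\sqrt{\min(k,n)\log(kn)})$. Concretely, let all inputs equal $\min(0,nx)$ and take output index $nm/2$. The isolated optimum uses exponent $n$ in a fixed set of $m/2$ sequences. After the shuffle, the left-half sum is $n$ times a hypergeometric variable with standard deviation $\Theta(\sqrt m)$. A fixed band therefore only yields $\widetilde O(k^{1.5}n^2)$, which matches the claimed bound only when $k\lesssim n$.

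\textbf{The missing idea is an adaptive window.} Center the window, as Algorithm \ref{alg2} does, at the split index $ps$ of the isolated solution at the previous finite output index $d_{t-1}$. Then apply Hoeffding to the \emph{difference} between the isolated solutions at $d_{t-1}$ and $d_t$, not to a single solution.
\begin{itemize}
\item Since $d_t-d_{t-1}\le n$, Lemma \ref{lem:adj} shows these two solutions differ in at most $\min(k,2n)$ coordinates, each difference lying in $[-n,n]$.
\item After the shuffle, the change $qs-ps$ of the split index is a prefix sum of this shuffled difference sequence.
\item Hoeffding without replacement, with a union bound over prefixes, bounds this change by $U=O(n\sqrt{\min(k,n)\log(kn)})$ with high probability.
\end{itemize}
This is the only source of the $\min(k,n)$.

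\textbf{Your cost accounting also needs fixing.} Because $U$ does not depend on the node size, the recurrence $T(m)=2T(m/2)+O(mnU)$ costs the same on every level. The $\log^{1.5}(kn)$ is therefore $\log k$ levels times the $\sqrt{\log(kn)}$ inside $U$. It is not a top-level-dominated sum, and the union bound does not contribute a running-time factor.
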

	
	We also study the single-entry version of the Multiple-Sequence $(\min,+)$ Convolution problem. This version admits a faster algorithm than the all-entry version when $k\gg n$.
	\begin{theorem}
		\label{thm:mult_sing}
		The single-entry version of Multiple-Sequence $(\min,+)$ Convolution can be solved by a randomized algorithm in time
		$$
		O\left(\left(kn+n^3\sqrt{\min(k,n)}\right)\log(kn)\right)
		$$
		with high probability.
	\end{theorem}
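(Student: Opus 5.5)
Let the input be $k$ sequences $a^{(1)},\dots,a^{(k)}$ of length at most $n$ and a target index $t$. We must compute the single entry $c_t$ of $c=\bigotimes_{i=1}^k a^{(i)}$.

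\textbf{Splitting the product.} I would split $c=P\otimes Q$, where $P$ is the convolution of the first $\lceil k/2\rceil$ sequences and $Q$ that of the rest. Then
$$c_t=\min_j\left(P_j+Q_{t-j}\right).$$
Only the window $j\in[t-\deg Q,\,t]$ matters, so naively this still needs full $P$ and $Q$. The plan is to reduce to a window of width $O(n\sqrt{\min(k,n)})$ around a predictable center, and to compute $P$ and $Q$ only on that window.

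\textbf{Locating the window.} First I would randomly shuffle the sequences, as in the all-entry algorithm. Consider the optimal choice of exponents $e_1,\dots,e_k$ with $\sum_i e_i=t$. After shuffling, the prefix sum $\sum_{i\le m}e_i$ should concentrate around $mt/k$ up to deviation $O(n\sqrt{m\log(kn)})$, by a sampling-without-replacement (Hoeffding/Serfling) bound. There is one subtlety: the optimum need not be unique. I would apply isolation (random perturbation of entries) so that a fixed optimal tuple exists before the shuffle is drawn.

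The same concentration applies recursively at every split point of a divide-and-conquer. So every partial product only needs to be computed on a band of width
$$w=O\left(n\sqrt{\min(k,n)\log(kn)}\right)$$
around its predicted center. The $\min(k,n)$ comes from the fact that the deviation is also trivially at most $mn$. For $k>n$ one can additionally use the tdw-$\le n$ structure via the first modular convexity theorem, which caps the relevant range.

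\textbf{Computing the banded products.} I would do this sequentially: maintain the band of the running product, and convolve with the next sequence of length $n$. Each step costs $O(wn)$, giving
$$O\left(knw\right)=O\left(kn^2\sqrt{\min(k,n)}\right),$$
which is too much compared with the target.

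The target bound $kn+n^3\sqrt{\min(k,n)}$ suggests a different strategy. When $k\gg n$, one first groups sequences so that only $O(n)$ "effective" steps need banded work, and the $kn$ term is just reading the input. The natural tool is the second modular convexity theorem and the tdw-$\le n$ observation. A product of many sequences, restricted to residues mod $L$ with $\operatorname{lcm}(1..n)\mid L$, is convex. More practically, the first theorem gives $\operatorname{cgap}\le n$, so the long product behaves convexly beyond scale $n$. Convex pieces can then be merged in linear time via SMAWK.

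So the plan is:
\begin{enumerate}
\item Convolve $O(n)$ blocks of sequences approximately, using the adjustment lemma from Theorem~\ref{thm:fir_alg} to handle the near-convex structure, in $O(kn\log)$ total.
\item Perform the final $O(n)$ banded merges at cost $O(n\cdot w\cdot n)=O(n^3\sqrt{\min(k,n)}\log)$.
\end{enumerate}

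\textbf{Main obstacle.} The hard part is justifying the concentration-plus-isolation window argument, so that the banded computation is exact with high probability. A second difficulty is making the block-merging for $k\gg n$ run in near-linear time per sequence. For that step I would first try to show that, after shuffling, the optimal exponents within each block can be restricted to a window of size $O(n)$, so each block merge is $O(n)$ amortized per sequence.
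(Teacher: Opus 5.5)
Your proposal does not reach the stated bound. The gap is exactly at the point you flag as the main obstacle.

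\textbf{Where the plan fails.} First, the window width is wrong for $k>n$. You apply sampling-without-replacement concentration to the raw optimal exponents $e_i\in[0,n]$. This bounds the deviation of a prefix of $m$ sequences only by $O(n\sqrt{m\log(kn)})$, which is $O(n\sqrt{k\log(kn)})$ at the top levels. Your remark that the deviation is ``trivially at most $mn$'' never improves on $n\sqrt{m}$, so no factor $\sqrt{\min(k,n)}$ emerges.

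Second, every variant you sketch (sequential banded convolution, banded divide-and-conquer, $n$ blocks of $k/n$ sequences) still performs band work on all $k$ sequences. So $k$ multiplies the expensive term. The block step is not an algorithm: the raw exponent sum of a block of $k/n$ shuffled sequences can fluctuate by order $\sqrt{kn}\gg n$. The modular convexity theorems do not help computationally either. Here $\operatorname{lcm}_{i=1}^{n}i=e^{(1+o(1))n}$, and $\operatorname{cgap}\le n$ does not make partial products convex, so SMAWK does not apply.

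\textbf{The missing idea.} You need to work relative to an exact reference solution at a nearby index, and prune the $k$ sequences down to $O(n^2)$ candidate moves before doing any banded work. The paper's steps are:
\begin{enumerate}
\item Run Algorithm~\ref{alg1}, which merges the lower-hull segments of all $a^{(i)}$ by slope, in $O(kn\log k)$ time. This yields a weak convex support sequence $q$ of $b$ with exact values and, by bookkeeping, exact solutions. Its gaps are at most $\max_i\operatorname{cgap}(a^{(i)})\le n$ by Theorem~\ref{thm:cgap}.
\item Take $q_d<t<q_{d+1}$. This gives a solution $p$ at $q_d$ with $t-q_d\le n$, and Lemma~\ref{lem:adj} gives a solution $r$ at $t$ with $D(r,p)\le 2n$.
\item Use an exchange argument, with $r$ chosen lexicographically largest and sign-dependent tie-breaking. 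It shows that for each shift $\Delta\in[-n,n]$ only the $2n$ cheapest candidates $(i,a^{(i)}_{p_i+\Delta}-a^{(i)}_{p_i})$ can be needed. This leaves $O(n^2)$ candidate adjustments, independent of $k$.
\item Only now apply the shuffle and Hoeffding, to the at most $\min(k,2n)$ nonzero differences $r_i-p_i\in[-n,n]$. This is where $\sqrt{\min(k,n)}$ comes from.
\item Run a DP over the $O(n^2)$ candidates with states truncated to $[-U,U]$, where $U=O(n\sqrt{\min(k,n)\log(kn)})$. It costs $O(n^2U)$.
\end{enumerate}
The $kn\log(kn)$ term pays only for Algorithm~\ref{alg1} and for building and sorting the candidate lists. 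Isolation is not needed here, since $p$ and $r$ are fixed deterministically before the shuffle.
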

	
	Moreover, we give a conditional lower bound for the all-entry version of the Multiple-Sequence $(\min,+)$ Convolution problem, under the $(\min,+)$ convolution hypothesis. For the single-entry version, since the $0$-$1$ Knapsack problem reduces to this problem (as we will show in Section \ref{sec:con}), the conditional lower bound $(k+n)^{2-o(1)}$ for $0$-$1$ Knapsack \cite{cmww19} transfers to the single-entry version.
	\begin{theorem}
		Assuming the $(\min,+)$ convolution hypothesis, for any fixed $\epsilon>0$, the all-entry version of Multiple-Sequence $(\min,+)$ Convolution
		cannot be solved by a randomized algorithm in time $$O(kn^{2-\epsilon}+k^{1-\epsilon}n^2),$$
		even when the sequence entries are restricted to $\{0,1,\ldots,n^{O(1)}\}$.
	\end{theorem}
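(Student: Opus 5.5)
We will prove the lower bound by a fine-grained reduction from two-sequence $(\min,+)$ convolution. Suppose an algorithm $\mathcal A$ solves the all-entry Multiple-Sequence problem in time $O(kn^{2-\epsilon}+k^{1-\epsilon}n^2)$. We show that $\mathcal A$ gives a truly subquadratic algorithm for computing $a\otimes b$ with $|a|,|b|\le N$ and entries in $\{0,\ldots,N^{O(1)}\}$, which contradicts the hypothesis. The plan splits according to which term of the bound we must beat. We handle the two terms separately, because the hypothesis lets us choose the regime, i.e.\ the relation between $k$ and $n$, freely.

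\textbf{Regime $k$ small.} Take $k=2$ and $n=N$, feeding in the sequences $a$ and $b$ themselves. The running time is then $O(N^{2-\epsilon}+N^2)$, which is not subquadratic. So the $k^{1-\epsilon}n^2$ term must be beaten with $k$ large and $n$ comparatively small. For the $kn^{2-\epsilon}$ term we want $k$ small relative to $n$, but padding by identity sequences ($x^0$ with value $0$) only adds useless length. We therefore need genuine ways to encode one length-$N$ convolution into $k$ sequences of length $n$.

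\textbf{Encoding for the $k^{1-\epsilon}n^2$ term.} Split $a$ into $\sqrt{N}$ blocks of length $\sqrt N$, and similarly $b$. We want the blockwise products $a_I\otimes b_J$ to appear separately in a single multi-sequence product. We force this with large penalty weights: each sequence is a small gadget $x^{0}\oplus (M\otimes x^{s})$ placed at a huge shift $s$, so that the total product selects subsets. A cleaner route is to build a $k$-sequence product whose coefficient at a designed index equals $\min_{j+\ell=i}(a_j+b_\ell)$ plus a known constant. To do this, let the sequences encode the digits of $j$ in a mixed radix with heavy weights $M$ that penalize inconsistent choices. This is analogous to the reduction of $(\min,+)$ convolution to knapsack in \cite{cmww19}, where sequences of length $n$ are coordinated via large additive constants. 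Concretely, with $k$ sequences of length $n$, the output length is about $kn$. We aim to encode an instance of size $N\approx \sqrt{k}\,n$ so that $N^2\approx kn^2$. Then an algorithm running in $k^{1-\epsilon}n^2$ or $kn^{2-\epsilon}$ time gives $N^{2-\Omega(\epsilon)}$ for a suitable polynomial relation between $k$ and $n$. We would choose $k=N^{\alpha}$ and $n=N^{1-\alpha/2}$ and verify the exponent for each term.

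\textbf{Simpler alternative.} Solve many independent instances simultaneously. Given $T$ independent convolution instances of size $n$, place them at well-separated offsets inside the $k$ sequences. By a standard self-reduction, a subquadratic algorithm for $T$ instances in time $T^{1-\delta}n^2$ or $Tn^{2-\delta}$ contradicts the hypothesis, since splitting one size-$N$ instance into $(N/n)^2$ block pairs is a valid reduction. The key step, and the main obstacle, is packing: we must build $k=\Theta(T)$ sequences of length $O(n)$ whose product exposes each $a^{(t)}\otimes b^{(t)}$ at a recoverable location, uncontaminated by cross terms. First we would try pairing the instances as sequences $a^{(t)}$ and $b^{(t)}$ with additive offsets $tM$, where $M$ exceeds every possible value sum. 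However, the multi-product sums one term from every sequence, so each entry mixes all instances. To isolate a single pair, we use the other sequences as "selectors" contributing $0$ only at degree $0$. Achieving a clean extraction of all $T$ pairs in one product is the hard part. If it fails, we fall back on calling $\mathcal A$ once per block pair with $k=2$ plus padding, and accept only the $kn^{2-\epsilon}$ regime, with entries kept polynomially bounded by $M=N^{O(1)}$.
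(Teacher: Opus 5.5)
\textbf{Gap: the packing construction is missing.} You have the right overall shape. You cut $a,b$ into blocks and observe that $T=(N/n)^2$ independent size-$n$ instances solved in time $Tn^{2-\delta}$ or $T^{1-\delta}n^2$ is subquadratic for $n=N^{\Theta(1)}$. You then aim to pack all block pairs into one multi-sequence instance with $k=\Theta(T)$. But the packing, which you yourself call the key step and the hard part, is never constructed, so there is no proof.

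Your candidate gadgets do not work or are unspecified:
\begin{itemize}
\item Constant offsets $tM$ cannot work. Every sequence contributes exactly one term to every output coefficient, so adding a constant to a whole sequence only shifts the entire product by that constant.
\item The ``selectors'' and the ``mixed radix with heavy weights'' are never specified.
\item The gadget $x^{0}\oplus(M\otimes x^{s})$ with a huge shift $s$ makes the sequences long, which destroys the small-$n$ parameter you need.
\end{itemize}

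The fallback gives no contradiction at all, as your own first paragraph shows. For $k=2$ the bound $kn^{2-\epsilon}+k^{1-\epsilon}n^2$ is $\Theta(n^2)$, so $(N/n)^2$ separate calls cost $\Theta(N^2)$, and padding only increases $k$. Also, you cannot ``handle the two terms separately''. The assumed running time is their sum, so you need a single regime, with $k$ and $n$ both polynomial in $N$, in which both terms are $N^{2-\Omega(\epsilon)}$.

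\textbf{The missing idea: penalize degree, not constants.} List the block pairs as $(u^{(t)},v^{(t)})$ for $1\le t\le T$. Use the $2T$ sequences $w^{(2t-1)}_d=u^{(t)}_d+tdH$ and $w^{(2t)}_d=v^{(t)}_d+tdH$, where $H>4M$ and $M$ bounds all input entries. The penalty is linear in the index, with a slope that increases with the pair index.

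Now apply an exchange argument to an optimal solution. Suppose some sequence of pair $t$ is not at full degree while a sequence of a later pair $t'>t$ has positive degree. Moving one unit of degree from the latter to the former changes the value by at most $4M-H<0$, contradicting optimality.

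Let $P_t$ be the total length of pairs $1,\ldots,t-1$, and let $0\le d\le |u^{(t)}|+|v^{(t)}|$. At output index $P_t+d$, the exchange argument forces all earlier pairs to full degree and all later pairs to degree $0$. Hence $z_{P_t+d}=C_t+tdH+(u^{(t)}\otimes v^{(t)})_d$ for an explicitly known constant $C_t$.

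With blocks of length $\sqrt N$ this is a single instance with $k=\Theta(N)$ and $n=\sqrt N$. It is solved in $O(N\cdot N^{1-\epsilon/2}+N^{1-\epsilon}\cdot N)=O(N^{2-\epsilon/2})$ time. All entries are nonnegative and bounded by $N^{O(1)}$, so the restricted-entry claim also holds. The blockwise results are then combined by taking minima at offsets $l_x+l_y$. This is exactly the paper's construction, and it is the piece your proposal needs in order to become a proof.
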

	
	Finally, we observe that several Knapsack-type problems reduce to the Multiple-Sequence $(\min,+)$ Convolution problem (Section \ref{sec:con}), including Subset Sum, $0$-$1$ Knapsack and Multiple-Choice Knapsack \cite{nau78,sz79,ks26}. These problems form a natural hardness hierarchy:
	$$
	\text{Subset Sum} \longrightarrow \text{$0$-$1$ Knapsack} \longrightarrow \text{Multiple-Choice Knapsack}.
	$$
	
	Recent improvements for Subset Sum \cite{prw21,clmz24a}, $0$-$1$ Knapsack \cite{jin24,bri24}, and Multiple-Choice Knapsack \cite{paw26} have been obtained in the $(n,w_{\max})$ parameterization, where $n$ is the number of items and $w_{\max}$ is the maximum item weight. Using our Multiple-Sequence $(\min,+)$ Convolution algorithm, we improve the previous upper bound $\widetilde O(n+w_{\max}^4)$ for Multiple-Choice Knapsack from \cite{paw26}. To facilitate comparison of the best known algorithms for Subset Sum, $0$-$1$ Knapsack, and Multiple-Choice Knapsack in the $(n,w_{\max})$ parameter regime, we summarize the currently best known algorithms together with our new results in Table \ref{tab:prob}.
	\begin{table}[H]
		\centering
		\caption{Time complexities of selected Knapsack-type problems (logarithmic factors omitted)}
		\label{tab:prob}
		\scalebox{0.73}{
			\begin{tabular}{|l|c|c|}
				\hline
				Problem & Single-entry & All-entry \\
				\hline
				Subset Sum
				&
				$\tilde{O}(n+w_{\max}^{1.5})$ \cite{clmz24a},
				$\tilde{O}(n+n^{0.5}w_{\max})$ \cite{clmz24b}
				&
				$\tilde{O}(nw_{\max})^{\ast}$ \\
				\hline
				$0$-$1$ Knapsack
				&
				$\tilde{O}(n+w_{\max}^2)$ \cite{jin24,bri24},
				$\tilde{O}(n^{1.5}w_{\max})$ \cite{hx24}
				&
				$\tilde{O}(nw_{\max}^2)$ \cite{at19},
				$\tilde{O}(n^2w_{\max})$ \cite{bel52} \\
				\hline
				Multiple-Choice Knapsack
				&
				$\tilde{O}(n+w_{\max}^{3.5})^{\dagger}$,
				$\tilde{O}(nw_{\max}^{1.5})^{\dagger}$,
				$\tilde{O}(n^{1.5}w_{\max})^{\dagger}$
				&
				$\tilde{O}(nw_{\max}^{2.5})^{\dagger}$,
				$\tilde{O}(n^{1.5}w_{\max}^2)^{\dagger}$,
				$\tilde{O}(n^2w_{\max})$ \cite{dud84} \\
				\hline
			\end{tabular}
		}
		\begin{tablenotes}[flushleft]
			\footnotesize
			\item $\ast$ This bound follows from a naive divide-and-conquer approach combined with FFT.
			\item $\dagger$ Results obtained in this paper.
		\end{tablenotes}
	\end{table}
	
	The reduction from Knapsack-type problems to Multiple-Sequence $(\min,+)$ Convolution reveals that the answer sequences exhibit the same structural property as bounded tropical decomposition width polynomials. This provides a structural explanation for why such Knapsack-type problems often admit faster algorithms when the maximum weight $w_{\max}$ is small.
	\subsection{Extension Algebras for the Tropical Semiring}
	In classical algebra over $(\mathbb C,+,\cdot)$, the Fast Fourier Transform \cite{ct65} provides the following evaluation scheme. For $n=2^k$, every polynomial $A\in\mathbb C[x]$ with $\deg A<n$ is uniquely determined by its evaluations at the $n$-th roots of unity:
	$$
	A(\omega_n^0),A(\omega_n^1),\ldots,A(\omega_n^{n-1}),
	$$
	where $\omega_n$ is a primitive $n$-th root of unity. Since pointwise multiplication of values is far easier than polynomial convolution, this transform enables polynomial multiplication in nearly linear time.
	
	A natural question is whether such an evaluation-interpolation technique can be adapted to tropical polynomials. An analogue is provided by the Discrete Legendre-Fenchel transform \cite{fen49,lue97}, which can be interpreted as evaluating a tropical polynomial on the base semiring $\mathbb T$. In its usual max-plus form,
	$$
	f(x)=\max_{i=0}^{n}(ix-a_i),\quad x\in\mathbb R,
	$$
	and after converting to the $(\min,+)$ convention via $g(x)=-f(-x)$, it becomes
	$$
	g(x)=\min_{i=0}^{n}(a_i+ix),\quad x\in\mathbb R.
	$$
	
	However, this transform is not injective: it may map distinct tropical polynomials to the same evaluation function. For example, let $A=\min(0,2x)$ and $B=\min(0,x,2x)$. Then $A(x_0)=B(x_0)$ for every $x_0\in\mathbb R$, although $A\neq B$. Consequently, when multiplying two tropical polynomials $A$ and $B$, we cannot recover the exact coefficients of $A\otimes B$ from the evaluation function of $A\otimes B$.
	
	This motivates us to evaluate tropical polynomials not only in the base semiring $\mathbb T$, but also in an extension algebra $\mathcal R$ over $\mathbb T$. Ideally, we require that $\mathcal R$ be generated by finitely many elements over $\mathbb T$, so that every element admits a finite expression.
	
	A positive result in this direction is that, using the second modular convexity theorem (Theorem \ref{thm:sec_con}), we can show that for the class of tropical polynomials with bounded tropical decomposition width, a suitable tropical cyclic polynomial semiring serves as an interpolation algebra of finite generating rank over $\mathbb T$. This reveals a connection between the bounded factorization property and the extended identity property. The polynomials that support the efficient computation discussed above also admit an evaluation-interpolation approach.
	\begin{theorem}
		\label{thm:int_alg_intro}
		For every positive integer $k$,
		$$
		\mathbb T[y]/\left\langle\left(\operatorname{lcm}_{i=1}^{k}i\right)y\sim 0\right\rangle
		$$
		is an interpolation algebra of $\mathcal T_k$.
	\end{theorem}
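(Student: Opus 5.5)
We must guess the definitions. $\mathcal T_k$ is presumably the class of tropical polynomials with $\tdw\le k$; this class is multiplicatively closed, since concatenating decompositions works. An interpolation algebra of a class $\mathcal S$ is presumably an extension $\mathcal R$ over $\mathbb T$ such that the evaluation map $A\mapsto (A(r))_{r\in\mathcal R}$ is injective on $\mathcal S$. Set $L=\operatorname{lcm}_{i=1}^k i$ and $\mathcal R=\mathbb T[y]/\langle Ly\sim 0\rangle$. Elements of $\mathcal R$ are tuples $(c_0,\dots,c_{L-1})$ indexed by $\mathbb Z/L$, with multiplication given by cyclic $(\min,+)$ convolution. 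Clearly $\mathcal R$ is generated over $\mathbb T$ by $y$, so it has finite generating rank.

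\textbf{Step 1: evaluating at $x=t\otimes y$.} Take $t\in\mathbb R$. Then
$$A(t\otimes y)=\bigoplus_i a_i\otimes t^{\otimes i}\otimes y^{\otimes i}.$$
The coefficient of $y^r$ for $r\in\mathbb Z/L$ is
$$g_r(t)=\min_{i\equiv r\ (\mathrm{mod}\ L)}(a_i+it).$$
This is exactly the Legendre-type evaluation function of the arithmetic progression subpolynomial $B_r\in\operatorname{MS}_L(A)$, which consists of the coefficients $a_{r},a_{r+L},\dots$. Such a function is evaluated with exponent shifted by $r$ and step $L$. So $A(t\otimes y)$ encodes the evaluations over $\mathbb T$ of every member of $\operatorname{MS}_L(A)$ simultaneously.

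\textbf{Step 2: recovering each residue class.} By Theorem \ref{thm:sec_con}, since $L$ is divisible by $\operatorname{lcm}_{i=1}^k i$ and $\tdw(A)\le k$, every $B_r$ is convex. By Lemma \ref{lem:poly_con_ip}, convex polynomials satisfy the polynomial identity property, that is, they are determined by their evaluation function on $\mathbb T$. A rescaling is needed here: $B_r$ written in the variable $x^L$ with offset $r$. Substituting $s=Lt$ and subtracting $rt$ turns $g_r$ into the ordinary evaluation of $B_r$, and this is a bijective reparametrization of $\mathbb R$. Finiteness issues also need care. If $B_r$ has $+\infty$ entries or a support that is not contiguous, then convexity, as the paper defines it, should handle this. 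Otherwise one uses the convention that $\infty$ coefficients in the interior violate convexity. Hence each $B_r$ is recovered from $t\mapsto g_r(t)$, and so $A$ is recovered from $t\mapsto A(t\otimes y)$. Injectivity of evaluation on $\mathcal T_k$ follows. Only the points $t\otimes y$ are used, which are a subset of $\mathcal R$, so injectivity over all of $\mathcal R$ certainly holds.

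\textbf{Main obstacle.} The substantive input is Theorem \ref{thm:sec_con}, which we assume. The delicate part is therefore matching the definition of interpolation algebra exactly. If it requires more than injectivity, such as flatness, being a $\mathbb T$-algebra of finite generating rank, or an explicit interpolation map, each requirement must be verified for the cyclic quotient. In particular, one must check that the congruence $\langle Ly\sim 0\rangle$ really yields the cyclic-index description, so that distinct residues are not merged. I would first prove a normal-form lemma for this quotient: every element equals a unique $\bigoplus_{r<L}c_r y^r$. This follows because the congruence generated by $y^L=0$ (the multiplicative identity) identifies exponents exactly modulo $L$. After that, the computation in Step 1 is rigorous.
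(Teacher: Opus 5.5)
Your proposal is correct and is essentially the paper's proof. You evaluate at $t+y$ (the paper's $\sigma=x_0/L+y$) and read off the $ry$-coefficient as $\operatorname{ms}_{L,r}(A)(Lt)+rt$. Each such subpolynomial is convex by Theorem~\ref{thm:sec_con}, and convex polynomials are separated by real evaluations via Lemma~\ref{lem:poly_con_ip}.

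The only difference is cosmetic. The paper separates a given pair $A\neq B$ at one residue class where they differ, and treats unequal degrees separately. You instead reconstruct every residue class at once.

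The paper's definition of interpolation algebra is exactly injectivity of evaluation for a $\mathbb T$-algebra. So the extra requirements you worried about under ``Main obstacle'' do not arise.
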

	However, for the full semiring $\mathbb T[x]$, we obtain a negative result: the set of polynomials of degree at most $n$ requires any interpolation algebra to have generating rank at least $\lfloor n/2\rfloor+1$. Unlike the $(+,\times)$ convolution, for which $\mathbb C$ provides an interpolation algebra for $\mathbb C[x]$ with generating rank $1$, the $(\min,+)$ convolution lacks such a structure (more details of the difference between $\mathbb C[x]$ and $\mathbb T[x]$ will be discussed in Remark \ref{rmk:int}).
	\begin{theorem}
		For a positive integer $n$, let
		$\mathcal S=\{A\in\mathbb T[x]\mid \deg A\leq n\}$.
		Then every interpolation algebra $\mathcal R$ of $\mathcal S$ satisfies
		$\mu_{\mathbb T}(\mathcal R)\geq \lfloor n/2\rfloor+1$.
		Moreover, there exists an interpolation algebra $\mathcal R$ of
		$\mathcal S$ satisfying
		$\mu_{\mathbb T}(\mathcal R)=\lfloor n/2\rfloor+1$.
	\end{theorem}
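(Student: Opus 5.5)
The plan has two halves: a counting obstruction for the lower bound, and an explicit truncated algebra for the upper bound. Throughout I use the formal notions from Section~\ref{sec:char}. An interpolation algebra $\mathcal R$ of $\mathcal S$ is a commutative $\mathbb T$-algebra such that the evaluation map $A\mapsto (A(r))_{r\in\mathcal R}$ is injective on $\mathcal S$. The generating rank $\mu_{\mathbb T}(\mathcal R)$ is the least number of elements generating $\mathcal R$ as a $\mathbb T$-semimodule.

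\textbf{Lower bound, Steps 1--2: reduction to a matrix identity.} First I reduce failure of injectivity to a single "redundant monomial" situation. Take $P\in\mathcal S$ and an index $j$, and set $P'=P\oplus (t\otimes x^j)$ with $t$ larger than the current coefficient of $x^j$. Then $P(r)=P'(r)$ for all $r\in\mathcal R$ iff $t\otimes r^{j}\ge P(r)$ in the natural order of $\mathcal R$ for every $r$. Second, fix generators $g_1,\dots,g_m$ of $\mathcal R$ with $m\le\lfloor n/2\rfloor$. Multiplication by $r$ is a $\mathbb T$-linear endomorphism of an $m$-generated semimodule, so it lifts to an $m\times m$ tropical matrix $M_r$ acting on $\mathbb T^m$, compatibly with the surjection $\mathbb T^m\to\mathcal R$. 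Hence any polynomial identity $U(M)=V(M)$ valid for all $m\times m$ tropical matrices descends to $U(r)=V(r)$ for all $r\in\mathcal R$. It therefore suffices to exhibit two distinct polynomials of degree $\le n$, with $n\ge 2m$, that agree on every $m\times m$ tropical matrix.

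\textbf{Lower bound, Step 3: the uniform identity (main obstacle).} This is the crux. The tropical Cayley--Hamilton theorem (Straubing) gives an identity $p^+_M(M)=p^-_M(M)$ of degree $m$, but its coefficients depend on $M$; I need a pair that is uniform in $M$. My first attempt is a convexity-plus-periodicity argument. The sequence of entries of $M^{i}$ is eventually governed by maximal-mean cycles, whose lengths are at most $m$. So, in the same way that a strictly non-convex middle coefficient is invisible over $\mathbb T$, a middle term $t\otimes x^j$ sandwiched between $x^{j-\ell}$ and $x^{j+\ell}$ for every $\ell\le m$ should be dominated entrywise for all $M$. This needs degree about $2m$, which is exactly where $\lfloor n/2\rfloor$ should enter. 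Concretely, I would take $P=\bigoplus_{i\ne m}x^{i}$ on $\{0,\dots,2m\}$ and verify entrywise, via the walk interpretation of $(M^i)_{uv}$, that every walk of length $m$ can be shortened or lengthened by a cycle of length $\le m$ without increasing its weight beyond the allowed slack. The delicate point is controlling the sign of the cycle weight; I expect to need both the shortening and the lengthening directions, which is what the symmetric window $[0,2m]$ provides. If this degree-$2m$ identity holds for all $m\times m$ matrices, the lower bound $\mu_{\mathbb T}(\mathcal R)\ge\lfloor n/2\rfloor+1$ follows from Steps 1--2.

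\textbf{Upper bound.} I would take $m=\lfloor n/2\rfloor+1$ and the truncated algebra $\mathcal R=\mathbb T[y]/\langle y^{m}\sim\infty\rangle$, which has rank $m$ with basis $1,y,\dots,y^{m-1}$. I would evaluate $A$ at the elements $u\oplus (v\otimes y)$ for $u,v\in\mathbb T$. The coefficient of $y^j$ in $A(u\oplus v y)$ is
\[
\min_{i\ge j}\bigl(a_i+(i-j)u+jv\bigr),
\]
the Legendre transform of the tail $(a_j,\dots,a_n)$. So evaluations recover the lower convex hull of every tail starting at $j<m$; in particular they recover $a_0,\dots,a_{m-1}$ exactly. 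For the coefficients $a_i$ with $i\ge m$ I would add the reversed evaluations, i.e.\ use elements playing the role of $x^{-1}$ via the relation $x^nA(x^{-1})$. Since $n-i<m$, those coefficients also become leading coefficients of tails. If reversal is not available inside $\mathcal R$, I would modify the relation to a cyclic one, $y^{m}\sim c\,y^{m-1}$, so that both ends are visible.
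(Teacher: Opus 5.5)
\textbf{Lower bound.} This half is correct and is essentially the paper's argument. Lifting multiplication by $\sigma$ to an $m\times m$ tropical matrix is exactly the paper's choice of $c_{j,k}$ with $g_j+\sigma=\min_k(c_{j,k}+g_k)$. The pair $\min_{i=0}^{2m}(ix)$, $\min_{0\le i\le 2m,\,i\ne m}(ix)$ with the cycle argument is the same.

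The step you call delicate is routine. A walk of length $m$ on $m$ vertices repeats a vertex, which gives a closed subwalk of length $q\le m$ and finite weight $\Delta$. Delete it if $\Delta\ge 0$ and duplicate it if $\Delta<0$. The result is a walk with the same endpoints, length $m\mp q\in[0,2m]\setminus\{m\}$, and weight at most the original; length $0$ is covered by $M^0$. No maximal-mean-cycle or periodicity theory is needed. One small fix: in Step 1 you need $t$ \emph{smaller} than the current coefficient of $x^j$, otherwise $P'=P$.

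\textbf{Upper bound: genuine gap.} The truncated algebra $\mathbb T[y]/\langle my\sim\infty\rangle$ is not an interpolation algebra of $\mathcal S$ once $n\ge 3$. Your own formula shows why. For $r=\min_j(r_j+jy)$, every coefficient of $A(r)$ depends on $A$ only through the tail functions $T_\ell(u)=\min_{i\ge \ell}(a_i+(i-\ell)u)$ at $u=r_0$, for $\ell\le m-1$. So any non-convexity strictly inside a tail is invisible.

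Concretely, for $n=3$, $m=2$, take $A=\min(0,x,3x)$ and $B=\min(0,x,2x,3x)$. At $r=\min(u,v+y)$ both give constant coefficient $\min(0,u,3u)$ and $y$-coefficient $v+\min(0,2u)$, including when $u$ or $v$ is $\infty$. So $A\sim_{\mathcal R}B$ although $A\neq B$. More generally, for every $n\ge 3$, the polynomials $\min_{0\le i\le n,\,i\ne n-1}(ix)$ and $\min_{i=0}^n(ix)$ are not separated.

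Neither fallback rescues this. The reversal trick cannot be carried out inside this $\mathcal R$: its only units are real constants, so nothing acts as $x^{-1}$. The variant $y^m\sim c\,y^{m-1}$, i.e.\ $my\sim c+(m-1)y$, also fails. There $y$ is still not a unit, since the constant coefficient of $y+z$ is always $\infty$. For $m=2$, a direct computation gives the $y$-coefficient of $A(\min(u,v+y))$ as $v+T_1(\min(u,v+c))$, so the same pair $A,B$ is still not separated.

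\textbf{The missing idea.} You need the genuinely cyclic algebra $\mathbb T[y]/\langle my\sim 0\rangle$, with $m=\lfloor n/2\rfloor+1$. Equivalently, up to rescaling $y$, use $y^m\sim c$ with $c$ real. In this algebra $y$ is a unit, and the argument goes as follows.
\begin{enumerate}
\item Evaluating at $\sigma=x_0/m+y$ gives $[qy]A(\sigma)=\operatorname{ms}_{m,q}(A)(x_0)+qx_0/m$.
\item Since $n<2m$, each residue class mod $m$ contains at most two indices. So every $\operatorname{ms}_{m,q}(A)$ has degree at most $1$, is convex, and is therefore determined by its real evaluations.
\item Two distinct $A,B$ of equal degree differ in some residue class, so they are separated.
\item The set $\{0,y,\dots,(m-1)y\}$ generates $\mathcal R$, so $\mu_{\mathbb T}(\mathcal R)\le m$. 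Your lower bound then gives equality.
\end{enumerate}
This is the paper's construction.
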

	We have seen that for evaluation and interpolation, extension algebras provide extra support for polynomials of bounded tropical decomposition width, but not for the entire semiring $\mathbb T[x]$. Another natural question is what extension algebras can offer regarding decomposition.
	
	In classical algebra, a recent work \cite{gri26} shows that polynomials over any commutative ring can be factored into linear factors over an extension commutative ring. If this technique could be applied to $\mathbb T[x]$ to make every polynomial factorable into linear factors and units, then tropical polynomial multiplication could be reduced to the linear case.
	
	We give a negative result concerning this approach: no tropical polynomial admits a refined decomposition over any commutative semiring that is a flat $\mathbb T$-algebra. This reveals a rigidity in the decomposition structure of tropical polynomials, in contrast to classical polynomials. On the other hand, it also shows that the tropical decomposition width is a natural parameter, as it remains invariant under extension of the algebra.
	\begin{theorem}
		\label{thm:tdw_inv_intro}
		Let $\mathcal R$ be a commutative semiring that is a flat $\mathbb T$-algebra. Then for every tropical polynomial $A$,
		$$
		\operatorname{tdw}_{\mathcal R}(A)=\operatorname{tdw}(A).
		$$
	\end{theorem}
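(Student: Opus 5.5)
The plan is to prove the two inequalities $\operatorname{tdw}_{\mathcal R}(A)\le\operatorname{tdw}(A)$ and $\operatorname{tdw}_{\mathcal R}(A)\ge\operatorname{tdw}(A)$ separately. The first is immediate. Since $\mathcal R$ is a $\mathbb T$-algebra, we have the structure map $\mathbb T\to\mathcal R$, and flatness makes it injective: flatness preserves the injectivity of the relevant equalizers/kernels, so in particular $\mathbb T[x]\to\mathcal R[x]$ is injective. Hence any decomposition $A=\bigotimes_i B^{(i)}$ over $\mathbb T$ is also a decomposition over $\mathcal R$ with the same factor degrees.

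The content is in the reverse inequality. Suppose $A=\bigotimes_{i=1}^N C^{(i)}$ with $C^{(i)}\in\mathcal R[x]$ and $\deg C^{(i)}\le k$. We must show that $A$ already admits a $\mathbb T$-decomposition into factors of degree at most $k$. Working directly with an arbitrary $\mathcal R$ is hopeless, so the plan is to use flatness in its equational form. This is the semiring analogue of Lazard's theorem, which should be the working definition of flatness used in the paper for semimodules. It says that every finite system of linear (semiring) equations with coefficients in $\mathbb T$ that has a solution in $\mathcal R$ is a combination of solutions over $\mathbb T$: the solution tuple factors as $\mathcal R$-linear combinations of $\mathbb T$-solutions.

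The difficulty is that the equation $A=\bigotimes C^{(i)}$ is multilinear, not linear, in the unknown coefficients. I would handle it one factor at a time. First, fix a \emph{split}. Write $A=C^{(1)}\otimes D$ where $D=\bigotimes_{i\ge 2}C^{(i)}$. Then induct on $N$, using a structural characterization of $\operatorname{tdw}\le k$ over $\mathbb T$ that is visibly base-change invariant.

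The cleanest route I see is through the first and second modular convexity theorems. They give necessary conditions on $A$: convexity of the $\operatorname{MS}_L(A)$ subpolynomials and $\operatorname{cgap}$ bounds. These conditions are expressed purely in terms of the coefficients of $A$, which lie in $\mathbb T$. Their proofs use only the semiring identities of the factorization, together with an additive-combinatorial argument about which exponent tuples attain each coefficient. So I would re-run the proofs of Theorem \ref{thm:fir_con} over $\mathcal R$. This requires that $\mathcal R$ be idempotent and cancellative enough for "attaining the minimum" to make sense. Flatness over $\mathbb T$ should supply this, since $\mathcal R$ embeds into a product of copies of $\mathbb T$ under a suitable ordering, via Lazard-type arguments.

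However, necessary conditions alone do not rebuild a $\mathbb T$-decomposition. Since $\operatorname{tdw}$ computation is NP-hard, I do not expect a clean characterization. So the final step should instead be a descent argument. Regard the factorization as a point of the "variety" of degree-$\le k$ factorizations of $A$. Using flatness, express each coefficient of each $C^{(i)}$ as $\sum_j r_j t_j$, with $t_j$ coming from $\mathbb T$-solutions of the linearized coefficient-matching system. Then choose a $\mathbb T$-point by specializing the $r_j$ through a semiring homomorphism $\mathcal R\to\mathbb T$, or a $\mathbb T$-valued "evaluation" of $\mathcal R$. Such a specialization exists after localizing at a maximal congruence.

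A homomorphism $\phi:\mathcal R\to\mathbb T$ fixing $\mathbb T$ finishes the proof at once: applying $\phi$ coefficientwise gives $A=\phi(A)=\bigotimes\phi(C^{(i)})$ with $\deg\phi(C^{(i)})\le k$. So the main obstacle is producing such a retraction $\phi$, or a family of them sufficient for this purpose, from flatness. I would attack this by taking the finitely generated $\mathbb T$-subalgebra $\mathcal R_0$ spanned by the coefficients of the $C^{(i)}$, which is again flat over $\mathbb T$ by finite presentation plus Lazard. I would then argue that $\mathcal R_0$ is a finite free or "tropical-projective" semimodule, from which the coordinate projection gives a $\mathbb T$-linear retraction. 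Finally, I would check that some retraction is multiplicative, for example by choosing a minimal idempotent coordinate.

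If multiplicativity fails, the fallback is to use only the linear retraction $\pi$ together with the minimality structure. The minimum in each coefficient of $A$ is attained by some product of coefficients, and I would try to show that applying $\pi$ preserves the relevant attaining terms, which yields the inequality $\ge$.
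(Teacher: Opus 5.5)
\textbf{Gap 1: the flatness hypothesis you use is not the paper's.} In the paper, ``flat over $\mathbb T$'' is not a Lazard-type module condition but a cancellation property: for $a,b\in\mathbb T$ and $c\in\mathcal R$, $a+c=b+c$ forces $a=b$ or $c=\infty$. No equational criterion follows from this, i.e.\ you cannot write an $\mathcal R$-solution as an $\mathcal R$-combination of $\mathbb T$-solutions. So the steps ``$\mathcal R_0$ is flat by finite presentation plus Lazard'' and ``$\mathcal R_0$ is finite free or tropical-projective'' have no support. In any case the coefficient system is multilinear, which you note but never resolve. For the easy inequality, injectivity of $\varphi$ comes directly from cancellation: $\varphi(a)+0=\varphi(b)+0$ gives $a=b$ since $0\neq\infty$.

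\textbf{Gap 2: you never produce a $\mathbb T$-valued specialization.} You leave this as ``the main obstacle,'' and the mechanisms you suggest do not work.
\begin{itemize}
\item Passing to the quotient by a maximal congruence does not land in $\mathbb T$. Already on $\mathbb T$ itself, the congruence identifying all finite elements with each other is maximal. Its quotient is the two-element Boolean semiring $\{0,\infty\}$, which collapses the structure map.
\item Flat algebras need not embed, via $\mathbb T$-algebra maps, into a product of copies of $\mathbb T$. For $k\ge 2$, $\mathbb T[y]/\langle ky\sim 0\rangle$ is flat in the paper's sense. Yet its only $\mathbb T$-algebra map to $\mathbb T$ is $y\mapsto 0$, which identifies $\min(0,y)$ with $0$.
\item The fallback via a merely $\mathbb T$-linear retraction is unjustified. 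Such a map need not respect the products of coefficients that define each $a_d$.
\item Re-running the modular convexity arguments over $\mathcal R$ only yields necessary conditions, as you concede.
\end{itemize}

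\textbf{The missing idea: the weak tropical Nullstellensatz.} Your observation that a retraction $\phi:\mathcal R\to\mathbb T$ would finish the proof is correct. But you do not need one on all of $\mathcal R$; you only need a $\mathbb T$-solution of a finite system.
\begin{itemize}
\item Introduce variables $c^{(i)}_j$ for the coefficients of the factors, and set $h_d=\min_{\sum_i p_i=d}\sum_i c^{(i)}_{p_i}$.
\item Let $\mathcal E=\langle h_d\sim a_d\rangle$, a finitely generated congruence on a polynomial semiring over $\mathbb T$.
\item The $\mathcal R$-coefficients $\tau$ of your decomposition satisfy every generator, hence every relation in $\mathcal E$.
\item Suppose $V(\mathcal E)=\varnothing$. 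The Nullstellensatz yields $f\sim f+r\in\mathcal E$ with $r\neq 0$ and $f$ having finite constant term $u$.
\item Evaluating at $\tau$ gives $f(\tau)=f(\tau)+r$ in $\mathcal R$, with $f(\tau)\le u<\infty$. This contradicts exactly the cancellation form of flatness.
\end{itemize}
So $V(\mathcal E)\neq\varnothing$. Any point of it is a $\mathbb T$-decomposition of $A$ whose factor degrees are at most those of the $\mathcal R$-factors. This is where flatness genuinely enters, and it replaces your entire retraction/Lazard apparatus.
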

	\section{Preliminaries}
	\label{sec:pre}
	\subsection*{Notations}
	We use $\widetilde O(f)$ to denote $O(f\operatorname{polylog} f)$. For a
	logical proposition $P$, let $[P]$ be $1$ if $P$ is true and $0$ otherwise. We use $\log x$ to denote the logarithmic function with base $2$.
	
	For two positive integers $a,b$, we use $\gcd(a,b)$ and
	$\operatorname{lcm}(a,b)$ to denote their greatest common divisor and least
	common multiple, respectively.
	
	For sequences over a commutative semiring (such as tropical
	sequences), indices start at $0$, and for such a sequence
	$a=(a_0,a_1,\dots,a_n)$ we write $|a|=n$ for its largest index.
	For all other sequences (such as an index sequence), indices start at $1$, and for a sequence
	$a=(a_1,\dots,a_n)$ we write $|a|=n$ for its length.
	
	For a randomized algorithm, we say that it solves a problem with high probability if its success probability is at least $1-n^{-c}$ for every constant $c>0$, where $n$ denotes the input size.
	
	Throughout this paper, whenever we denote a time complexity by $T(n_1,\dots,n_k)$, we assume that it is monotone and closed under scaling of any argument by a constant factor: for any $c>0$ and any $1\le i\le k$,
	$T(n_1,\dots,cn_i,\dots,n_k)=O(T(n_1,\dots,n_k)).$
	\subsection*{Definitions}
	\begin{definition2}[commutative semiring]
		A \textit{commutative semiring} is a nonempty set $\mathcal R$ with two binary operations $\oplus$ and $\otimes$ such that
		\begin{itemize}
			\item $(\mathcal R,\oplus)$ is a commutative monoid with identity $0_{\mathcal R}$.
			\item $(\mathcal R,\otimes)$ is a commutative monoid with identity $1_{\mathcal R}\neq 0_{\mathcal R}$.
			\item $\otimes$ distributes over $\oplus$: $(a\oplus b)\otimes c=(a\otimes c)\oplus(b\otimes c)$ for all $a,b,c\in\mathcal R$.
			\item $0_{\mathcal R}\otimes a=0_{\mathcal R}$ for all $a\in\mathcal R$.
		\end{itemize}
		If $\oplus$ is idempotent, then $\mathcal R$ is called a \textit{commutative idempotent semiring}.
		
		For $k\in \mathbb N$ and $a\in \mathcal R$, we define
		$$
		a^{\otimes k}=
		\begin{cases}
			1_{\mathcal R} & \text{if } k=0,\\
			\underbrace{a\otimes a\cdots\otimes a}_{k\text{ times}} & \text{if } k\neq 0\\
		\end{cases}.
		$$
		
		Let $\mathcal R$ and $\mathcal S$ be commutative semirings. We say that $\mathcal S$ is an $\mathcal R$-algebra if there exists a unital semiring homomorphism $\varphi:\mathcal R\to\mathcal S$, called the structure homomorphism. By definition, such a homomorphism preserves both additive and multiplicative identities, so $\varphi(0_{\mathcal R})=0_{\mathcal S}$ and $\varphi(1_{\mathcal R})=1_{\mathcal S}$. If there are several natural choices for $\varphi$, we explicitly specify which one is intended; otherwise, we assume the canonical one whenever it is unambiguous.
		
		A subset $\mathcal{S}\subseteq\mathcal{R}$ is called
		\textit{multiplicatively closed} if for every $x,y\in\mathcal{S}$, we have
		$x\otimes y\in\mathcal{S}$.
	\end{definition2}
	\begin{lemma}[\cite{gol99}, Proposition 20.19]
		Let $\mathcal R$ be a commutative idempotent semiring. Define a binary
		relation $\leq$ on $\mathcal R$ by
		$$
		x\leq y
		\iff
		x\oplus y=x.
		$$
		Then $\leq$ is a partial order.
	\end{lemma}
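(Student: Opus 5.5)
We check the three partial-order axioms (reflexivity, antisymmetry, transitivity) directly from the definition $x\le y \iff x\oplus y=x$. Each one uses only idempotence, commutativity, and associativity of $\oplus$. Multiplication plays no role here, so the argument works for any commutative idempotent monoid $(\mathcal R,\oplus)$.

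\emph{Reflexivity.} For any $x\in\mathcal R$, idempotence gives $x\oplus x=x$, so $x\le x$.

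\emph{Antisymmetry.} Suppose $x\le y$ and $y\le x$, meaning $x\oplus y=x$ and $y\oplus x=y$. Commutativity of $\oplus$ turns the second equation into $x\oplus y=y$. Comparing the two equations gives $x=x\oplus y=y$.

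\emph{Transitivity.} Suppose $x\le y$ and $y\le z$, meaning $x\oplus y=x$ and $y\oplus z=y$. We need $x\oplus z=x$. Replace $x$ by $x\oplus y$ and use associativity:
$$x\oplus z=(x\oplus y)\oplus z=x\oplus(y\oplus z)=x\oplus y=x.$$
Hence $x\le z$.

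There is no genuine obstacle. The only step needing any thought is transitivity, where the trick is to substitute $x=x\oplus y$ before applying associativity.
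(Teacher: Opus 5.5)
Your proof is correct: the three checks use only idempotence, commutativity, and associativity of $\oplus$, which is the standard argument. The paper gives no proof of its own and simply cites Golan (Proposition 20.19), so your direct verification is exactly what is needed here.
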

	For a commutative idempotent semiring $\mathcal R$, the preceding lemma shows that $\oplus$ induces a natural partial order, with respect to which it acts as the minimum. To make subsequent formulas more readable and intuitive, we adopt the following notation for any commutative idempotent semiring:
	\begin{itemize}
		\item we write $\min$ for $\oplus$, $+$ for $\otimes$, $\infty$ for $0_{\mathcal R}$, and $0$ for $1_{\mathcal R}$;
		\item we denote the induced order by $\le$;
		\item for $k\in\mathbb N$ and $a\in\mathcal R$, we write $ka$ for $a^{\otimes k}$.
	\end{itemize}
	Thus, in the commutative idempotent case, we denote the semiring by \((\mathcal R,\min,+,\infty,0)\) instead of \((\mathcal R,\oplus,\otimes,0_{\mathcal R},1_{\mathcal R})\).
	\begin{definition2}[tropical semiring]
		The \textit{tropical semiring} is
		$
		\mathbb T=(\mathbb R\cup\{\infty\},\min,+,\infty,0),
		$
		where $\min$ and $+$ restrict to the usual minimum and addition on
		$\mathbb R$. In particular, for $x\in\mathbb R$, we set
		$
		\min(x,\infty)=\min(\infty,x)=x$ and $x+\infty=\infty+x=\infty$.
		For $k\in\mathbb R_{\geq 0}$ and $a\in\mathbb T$, we define
		$$
		ka=
		\begin{cases}
		0 & \text{if } k=0,\\
		k\cdot a & \text{if } k\neq 0 \text{ and } a\in\mathbb R,\\
		\infty & \text{if } k\neq 0 \text{ and } a=\infty,
		\end{cases}
		$$
		where $\cdot$ denotes the usual multiplication on $\mathbb R$.
		
		This algebraic structure first appeared in the context of automata
		theory in the work of \cite{sim78}, where the semiring
		$(\mathbb N\cup\{\infty\},\min,+)$ was used to study the limitedness
		problem for regular expressions. It has since been developed and
		generalized in subsequent works, evolving into a fundamental subject in
		theoretical computer science and mathematics \cite{sim94,pin98}, now
		often called \textit{tropical geometry} \cite{ms15} or \textit{tropical algebra}.
	\end{definition2}
	\begin{definition2}[polynomial semirings over a commutative semiring]
		Let $\mathcal R$ be a commutative semiring. We denote by $\mathcal R[x]$ the polynomial semiring in one variable $x$ over $\mathcal R$, and by $\mathcal R[x_1,\ldots,x_n]$ the polynomial semiring in $n$ variables $x_1,\ldots,x_n$ over $\mathcal R$.
		
		For two polynomials $A,B\in \mathcal R[x_1,\ldots,x_n]$, we say that $A$ and $B$ are \textit{equal}, denoted $A=B$, if all their corresponding coefficients are equal.
		
		Let $\mathcal R_2$ be a commutative semiring that is an $\mathcal R$-algebra. We say that $A$ and $B$ are \textit{functionally equivalent over $\mathcal R_2$}, denoted $A\sim_{\mathcal R_2} B$, if $A(\tau)=B(\tau)$ for all $\tau\in\mathcal R_2^n$.
		
		For a univariate polynomial $A\in\mathcal R[x]$ of a nonnegative degree $n$, we write
		$$
		A=\bigoplus_{i=0}^{n}\left(a_i\otimes x^{\otimes i}\right).
		$$
		Separately, if $\mathcal R$ is a commutative idempotent semiring, this becomes
		$$
		A=\min_{i=0}^{n}(a_i+ix).
		$$
		We denote by $L(A)$ the smallest index $i$ such that $a_i\neq 0_{\mathcal R}$.
		
		Throughout the remainder of this paper, the term \textit{tropical polynomial} refers to a polynomial in $\mathbb T[x]$.
	\end{definition2}
	\begin{definition2}[convex parameters of tropical polynomials]
		\label{def:conv_par}
		Let
		$
		A=\min_{i=0}^{n}(a_i+ix)
		$
		be a non-$\infty$ tropical polynomial. For indices $0\leq x<y<z\leq n$, the triplet
		$(x,y,z)$ is called a \textit{convex triplet} of $A$ if
		$$
		(z-x)a_y\leq (z-y)a_x+(y-x)a_z.
		$$
		
		The \textit{convex gap} of $A$, denoted
		$\operatorname{cgap}(A)$, is the smallest nonnegative integer $k$ such
		that for all integers $0\leq x<y\leq n$ with $y-x>k$, there exists an
		integer $z$ satisfying
		$	
		x<z<y
		$
		such that $(x,z,y)$ is a convex triplet of $A$. By convention, we set $\operatorname{cgap}(\infty)=0$. We call $A$ a \textit{convex polynomial} if
		$
		\operatorname{cgap}(A)\leq 1.
		$
		
		The \textit{convex support sequence} of $A$ is the
		increasing index sequence
		$
		p=(p_1,p_2,\ldots,p_t)
		$
		such that, for every $0\leq x\leq n$,
		$$
		x\in p
		\iff
		a_x\neq\infty
		\text{ and }
		\hat f_A(x)=a_x.
		$$
		We denote this sequence by $\operatorname{csupp}(A)$.
		
		An increasing sequence
		$
		p'=(p'_1,p'_2,\ldots,p'_r)
		$
		is called a \textit{weak convex support sequence} of $A$ if
		\begin{itemize}
			\item $p'\subseteq \operatorname{csupp}(A)$,
			\item for every integer $x$ with $0\leq x\leq n$, if
			$
			a_x\neq\infty
			$ and
			$
			|\partial \hat f_A(x)|\geq 2,
			$
			then $x\in p'$.
		\end{itemize}
		
		For every nonnegative integer $k$, we define $\mathcal C_k=\{A\in \mathbb T[x]\mid \operatorname{cgap}(A)\leq k\}$.
	\end{definition2}
	\begin{lemma}
		Let $A$ be a non-$\infty$ tropical polynomial. Then
		$\operatorname{csupp}(A)$ is a weak convex support sequence of
		$A$, and it has maximum length among all weak convex support
		sequences of $A$.
		
		Moreover, if $q'=(q'_1,\ldots,q'_m)$ is a weak convex support sequence of
		$A$, then
		$q'_1=L(A)$, $q'_m=\deg A$.
	\end{lemma}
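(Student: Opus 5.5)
The plan is to unwind the definitions, reading $\hat f_A$ as the lower convex envelope of the finite point set $\{(i,a_i) : a_i\neq\infty\}$. This envelope is a piecewise linear convex function with effective domain $[L(A),\deg A]$, extended by $+\infty$ outside. Its breakpoints, the points where the subdifferential $\partial\hat f_A(x)$ contains more than one element, are vertices of the lower hull. Every vertex of the convex hull of a finite point set is one of the points. So at such an $x$ we have $a_x\neq\infty$ and $\hat f_A(x)=a_x$.

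First, $\operatorname{csupp}(A)$ is a weak convex support sequence. The condition $\operatorname{csupp}(A)\subseteq\operatorname{csupp}(A)$ is trivial. For the second condition, take $x$ with $a_x\neq\infty$ and $|\partial\hat f_A(x)|\ge 2$. By the observation above, $x$ is a vertex of the lower hull, so $\hat f_A(x)=a_x$ and hence $x\in\operatorname{csupp}(A)$. Maximality of length is then immediate: any weak convex support sequence $p'$ is by definition a subsequence of $\operatorname{csupp}(A)$, so $|p'|\le|\operatorname{csupp}(A)|$.

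Second, for the endpoint claim, let $q'$ be any weak convex support sequence. Because $q'\subseteq\operatorname{csupp}(A)$ and every element of $\operatorname{csupp}(A)$ has $a_x\neq\infty$, all entries lie in $[L(A),\deg A]$. It therefore suffices to show that $L(A)$ and $\deg A$ belong to $q'$. Both have finite coefficients, since $a_{L(A)}\neq\infty$ by definition and $a_{\deg A}\neq\infty$ as the leading coefficient. At $x=L(A)$ the function $\hat f_A$ jumps to $+\infty$ immediately to the left, so $\partial\hat f_A(L(A))$ is an unbounded half-line $(-\infty,s]$, which has at least two elements. Symmetrically, $\partial\hat f_A(\deg A)$ is a half-line $[s',\infty)$. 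The second defining condition then forces $L(A),\deg A\in q'$, so $q'_1=L(A)$ and $q'_m=\deg A$.

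The only real obstacle is conventions, not mathematics. One must fix how $\hat f_A$ is defined (as a convex function on $\mathbb R$ with value $+\infty$ off $[L(A),\deg A]$) so that the subdifferential at the endpoints is indeed a half-line. In the degenerate case $L(A)=\deg A$, the subdifferential is all of $\mathbb R$, and the claim still holds with $m=1$. One must also justify that hull vertices are data points, where $\hat f_A=a_x$; this is the standard fact that extreme points of a polytope lie in its generating set.
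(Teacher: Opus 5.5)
Your proof is correct. The paper states this lemma without any proof, treating it as immediate from the definitions, so there is no argument to compare against. Your unwinding of the definitions is exactly what is needed: breakpoints with $a_x\neq\infty$ lie in $\operatorname{csupp}(A)$, maximality follows from $q'\subseteq\operatorname{csupp}(A)$, and the subdifferentials at $L(A)$ and $\deg A$ are half-lines, or all of $\mathbb R$ when $L(A)=\deg A$.

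The one step you delegate to the ``hull vertices are data points'' fact can be closed directly from the definition of $\hat f_A$. Suppose $k_1<k_2$ both lie in $\partial\hat f_A(x)$, and let $\lambda$ attain $\hat f_A(x)$, which is finite because $\hat f_A(x)\le a_x<\infty$. Averaging $a_i\ge\hat f_A(i)\ge\hat f_A(x)+k_j(i-x)$ against $\lambda$ gives $\hat f_A(x)\ge\hat f_A(x)$. Hence each inequality with $\lambda_i>0$ is tight for both $j$. This forces $(k_2-k_1)(i-x)=0$, i.e.\ $i=x$. So $\lambda$ is concentrated at $x$, and $\hat f_A(x)=a_x$.
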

	As in the computation of the convex hull of a planar point set, we can use
	Andrew's monotone chain algorithm \cite{and79} to compute the lower convex
	hull of the points
	$(i,a_i)$ with $a_i\neq\infty$.
	This gives the convex support sequence of a tropical polynomial in linear
	time, since the points are already sorted by their first coordinate.
	\begin{definition2}[functions with values in the tropical semiring]
		A function $f:\mathbb{R}\to\mathbb{T}$ is called a \textit{convex
			function} if for all $x,y\in\mathbb{R}$ and all $0\leq\lambda\leq 1$,
		$$
		f(\lambda x+(1-\lambda)y)
		\leq
		\lambda f(x)+(1-\lambda)f(y).
		$$
		
		For a tropical polynomial
		$
		A=\min_{i=0}^{n}(a_i+ix),
		$
		we define the \textit{convex envelope} of $A$ to be the function
		$\hat f_A:\mathbb{R}\to\mathbb{T}$ given by
		$$
		\hat f_A(x)
		=
		\min\left(
		\sum_{i=0}^{n}\lambda_i a_i
		\;\middle|\;
		\lambda=(\lambda_0,\ldots,\lambda_n)\in\mathbb{R}_{\geq 0}^{n+1},
		\sum_{i=0}^{n}\lambda_i i=x,
		\sum_{i=0}^{n}\lambda_i=1
		\right).
		$$
		If the feasible set is empty, the minimum is understood to be
		$\infty$.
		
		Let $f:\mathbb{R}\to\mathbb{T}$ be a function. For $x_0\in\mathbb{R}$,
		the \textit{subdifferential} of $f$ at $x_0$ is
		$$
		\partial f(x_0)
		=
		\{k\in\mathbb{R}\mid
		f(y)\geq f(x_0)+k(y-x_0)\text{ for all }y\in\mathbb{R}\}.
		$$
		
		For two convex functions $f_1,f_2:\mathbb{R}\to\mathbb{T}$, define
		their \textit{infimal
			convolution}, to be the function
		$$
		(f_1\square f_2)(x)
		=
		\inf_{y\in\mathbb{R}}\left(f_1(y)+f_2(x-y)\right).
		$$
		
		For a non-$\infty$ tropical polynomial $A$, a sequence
		$
		\left((s,h),(k_1,b_1),\ldots,(k_t,b_t)\right)
		$
		with $t\geq 0$ is called an \textit{expression} of $\hat f_A$ if
		$s+\sum_{i=1}^{t}b_i=\deg A$, $b_i>0$ for all $1\leq i\leq t$, 
		and $\hat f_A$ is given by
		{
		\small
		$$
			\hat f_A(x)=
			\begin{cases}
				h & \text{if } t=0 \text{ and } x=s,\\[2mm]
				\displaystyle
				h+\sum_{i=1}^{u-1}k_i b_i
				+k_u\left(x-s-\sum_{i=1}^{u-1}b_i\right)
				&
				\text{if } t>0 \text{ and }
				x\in
				\left[s+\sum_{i=1}^{u-1}b_i,\;
				s+\sum_{i=1}^{u}b_i\right]
				\text{ for some }1\leq u\leq t,\\[2mm]
				\infty & \text{otherwise}.
			\end{cases}
		$$
		}
		Here $(s,h)$ denotes the starting position and value of $\hat f_A$,
		while $(k_i,b_i)$ denotes the $i$-th segment: $k_i$ is its slope and
		$b_i$ is its horizontal length.
	\end{definition2}
	We present two important lemmas on the convex support sequence and convex envelope of tropical polynomials. Since these are standard results in discrete convex analysis \cite{mur03} (tropical polynomial multiplication can be seen as infimal convolution of functions $\mathbb Z\to \mathbb T$), we omit their proofs.
	\begin{lemma}
		\label{lem:env}
		Let $A=\min_{i=0}^{n}(a_i+ix)$ be a non-$\infty$ tropical polynomial, and let $\operatorname{csupp}(A)=(p_1,\ldots,p_m)$.
		
		The function $\hat f_A$ is given as follows. When $m=1$,
		$$
		\hat f_A(x)=
		\begin{cases}
			a_{p_1} & \text{if } x=p_1,\\
			\infty & \text{otherwise}.
		\end{cases}
		$$
		When $m\ge 2$,
		$$
		\hat f_A(x)=
		\begin{cases}
			\displaystyle
			\frac{p_{d+1}-x}{p_{d+1}-p_d}a_{p_d}
			+
			\frac{x-p_d}{p_{d+1}-p_d}a_{p_{d+1}}
			&
			\text{if } x\in[p_d,p_{d+1}]
			\text{ for some }1\le d<m,\\[2mm]
			\infty & \text{if } x\notin[p_1,p_m].
		\end{cases}
		$$
		
		Consequently,
		$$
		\left(
		(p_1,a_{p_1}),
		\left(
		\frac{a_{p_2}-a_{p_1}}{p_2-p_1},
		p_2-p_1
		\right),
		\ldots,
		\left(
		\frac{a_{p_m}-a_{p_{m-1}}}{p_m-p_{m-1}},
		p_m-p_{m-1}
		\right)
		\right)
		$$
		is an expression of $\hat f_A$.
		
		Furthermore, $\hat f_A$ is a convex function. For every $x_0\in [p_1,n]$, $\partial \hat f_A(x_0)$ is nonempty. Moreover, for any expression
		$\left((s,h),(k_1,b_1),\ldots,(k_t,b_t)\right)$ of $\hat f_A$, we have
		$k_1\le k_2\le\cdots\le k_t$.
	\end{lemma}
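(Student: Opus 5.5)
The plan is to prove the explicit formula for $\hat f_A$ first, then obtain the remaining assertions (the expression, convexity, nonempty subdifferentials, monotone slopes) as consequences.

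\textbf{Step 1: the formula.} Let $P=\{(i,a_i): a_i\neq\infty\}$. By the definition of the convex envelope, $\hat f_A(x)$ is the minimum height over the convex hull $\operatorname{conv}(P)$ above abscissa $x$. In other words, $\hat f_A$ is the lower boundary of $\operatorname{conv}(P)$.

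Since $A$ is non-$\infty$, $P$ is nonempty and finite. Its abscissas range over $[L(A),\deg A]$, so $\hat f_A=\infty$ outside this interval; this is the feasibility condition $\sum\lambda_i i=x$.

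By the definition of $\operatorname{csupp}$, the points $(p_j,a_{p_j})$ are exactly the points of $P$ lying on this lower boundary. The vertices of the lower hull belong to $P$, so every vertex is some $p_j$. In particular the extreme abscissas satisfy $p_1=L(A)$ and $p_m=\deg A$, by the preceding lemma.

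For $x\in[p_d,p_{d+1}]$ we prove the formula by two inequalities:
\begin{itemize}
\item The upper bound is the feasible choice of $\lambda$ supported on $\{p_d,p_{d+1}\}$.
\item For the lower bound, the lower hull restricted to $[p_d,p_{d+1}]$ is a single affine piece. No vertex lies strictly between $p_d$ and $p_{d+1}$, since any vertex is in $\operatorname{csupp}$. That piece passes through $(p_d,a_{p_d})$ and $(p_{d+1},a_{p_{d+1}})$, because both points are on the boundary.
\end{itemize}
When $m=1$, the interval $[p_1,p_m]$ degenerates to a single point.

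\textbf{Step 2: the expression.} Reading off the linear pieces gives the stated expression. Its lengths $p_{j+1}-p_j$ are positive and sum to $p_m-p_1$, so $s+\sum_j b_j=\deg A$.

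\textbf{Step 3: convexity.} $\hat f_A$ is convex as the lower boundary of a convex set. Directly: for two feasible weight vectors, their convex combination is feasible, and its cost is the same convex combination of the costs. This also covers the case where either value is $\infty$.

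\textbf{Step 4: the key chord inequality.} All the remaining claims rest on the following: for every $x\in[p_1,p_m]$ and every $y$,
$$\hat f_A(y)\ge \hat f_A(x)+k(y-x)$$
for a suitable $k$. I will need this in two forms.
\begin{itemize}
\item \textbf{Interior points.} Take $k$ to be the slope of an adjacent segment, or any value between the slopes of the two adjacent segments.
\item \textbf{Endpoints.} At $x=p_m$, take $k$ large. At $x=p_1$, take $k$ very negative. At $x=p_1=p_m$, any real $k$ works, since $\hat f_A=\infty$ elsewhere.
\end{itemize}
Outside $[p_1,p_m]$ the inequality is automatic, because $\hat f_A=\infty$ there.

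The statement asks for nonempty subdifferentials on $[p_1,n]$. Points beyond $p_m$ up to $n$ have value $\infty$, and then $\partial$ consists of every $k$ under the paper's $\mathbb T$-valued conventions, so I will only note this convention.

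\textbf{Step 5: monotone slopes.} This holds for the canonical expression by convexity applied to three consecutive points $p_{j-1},p_j,p_{j+1}$. For an arbitrary expression $((s,h),(k_i,b_i))$, it describes the same function $\hat f_A$. So each segment $i$ is an interval on which $\hat f_A$ is affine with slope $k_i$. Because $\hat f_A$ is convex, consecutive segments meeting at an interior point have nondecreasing slopes: the left derivative at that point is at most the right derivative.

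\textbf{Main obstacle.} The delicate step is Step 1. I must justify that the convex-envelope minimum equals the piecewise-linear interpolation between consecutive csupp points, which needs an argument that the lower hull has no vertex outside $\operatorname{csupp}$. I will argue it by contradiction. A vertex $v$ of the lower hull is an extreme point of $\operatorname{conv}(P)$, hence lies in $P$. It is then on the lower boundary, so its coordinates satisfy $\hat f_A(x)=a_x$, which means $v$ is in $\operatorname{csupp}$.
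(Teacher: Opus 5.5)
The paper does not prove this lemma. It labels it a standard fact of discrete convex analysis, citing Murota, and omits the proof, so your lower-hull argument serves as a self-contained replacement. It is sound. The following steps all go through:
\begin{itemize}
\item identifying $\hat f_A$ with the lower boundary of $\operatorname{conv}(P)$ for $P=\{(i,a_i): a_i\neq\infty\}$;
\item observing that every breakpoint of that boundary is an extreme point of $\operatorname{conv}(P)$, hence lies in $P$, hence has abscissa in $\operatorname{csupp}(A)$;
\item the two-point upper bound;
\item convexity via convex combinations of feasible weight vectors;
\item supporting lines taken from adjacent segment slopes;
\item monotone slopes for an arbitrary expression, using $b_i>0$ and the three-point convexity inequality at each junction.
\end{itemize}
The paper's appeal to the literature is shorter and places this lemma in the same standard theory as Lemma~\ref{lem:expr}. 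Your route is elementary and isolates the single geometric fact being used: extreme points of a planar polytope belong to its generating set.

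One remark in your Step~4 is wrong, although it does no damage. You claim that at points $x_0\in(p_m,n]$, where $\hat f_A(x_0)=\infty$, every $k$ belongs to $\partial\hat f_A(x_0)$ under the paper's conventions. The opposite is true. Since $\hat f_A(x_0)+k(y-x_0)=\infty$ for every $y$, membership would require $\hat f_A(y)=\infty$ for all $y$. That fails at $y=p_1$, so $\partial\hat f_A(x_0)=\varnothing$ at such points.

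The correct observation is that this case never arises. The paper writes $A=\min_{i=0}^{n}(a_i+ix)$ with $n=\deg A$, and you already noted $p_m=\deg A$. Hence $[p_1,n]=[p_1,p_m]$, and your interior and endpoint cases cover everything. If $n$ were allowed to exceed $\deg A$, the subdifferential clause of the lemma would be false, so you must use $n=\deg A$ here rather than a convention about $\infty$.
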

	\begin{lemma}
		\label{lem:expr}
		Let $A,B$ be two non-$\infty$ tropical polynomials, and let $C=A+B$.
		Suppose $\left((s,h),(k_1,b_1),\ldots,(k_t,b_t)\right)$ is an expression of $\hat{f}_A$,
		and $\left((s',h'),(k'_1,b'_1),\ldots,(k'_r,b'_r)\right)$ is an expression of $\hat{f}_B$.
		
		Let $\big((K_1,B_1),\ldots,(K_{t+r},B_{t+r})\big)$ be a stable merge of the two lists
		$\left((k_1,b_1),\dots,(k_t,b_t)\right)$ and $\left((k'_1,b'_1),\dots,(k'_r,b'_r)\right)$,
		where the merged list is ordered by nondecreasing slope,
		segments of equal slope may be placed in any order,
		and the relative order of segments coming from the same original list is preserved.
		Then
		$$
		\hat{f}_C = \hat{f}_A \mathbin{\square} \hat{f}_B,
		$$
		and
		$$
		\big((s+s',h+h'),(K_1,B_1),\dots,(K_{t+r},B_{t+r})\big)
		$$
		is an expression of $\hat{f}_C$.
	\end{lemma}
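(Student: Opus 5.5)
The statement to prove is Lemma~\ref{lem:expr}: if $C=A+B$ (tropical product), then $\hat f_C=\hat f_A\mathbin{\square}\hat f_B$, and the stable merge by slope of the two segment lists, with start $(s+s',h+h')$, is an expression of $\hat f_C$. The proof has two parts: an envelope identity, and a description of the infimal convolution of two piecewise-linear convex functions.

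\textbf{Step 1: the identity $\hat f_C=\hat f_A\mathbin{\square}\hat f_B$.}

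For ``$\le$'', fix $y$. Write $\hat f_A(y)=\sum_i\lambda_i a_i$ with $\sum_i\lambda_i i=y$, and $\hat f_B(x-y)=\sum_j\mu_j b_j$ with $\sum_j\mu_j j=x-y$. Consider the product weights $\lambda_i\mu_j$ placed on the index $i+j$.
\begin{itemize}
\item These weights are nonnegative and sum to $1$.
\item Their barycenter is $\sum_{i,j}\lambda_i\mu_j(i+j)=y+(x-y)=x$.
\item Their value satisfies $\sum_{i,j}\lambda_i\mu_j(a_i+b_j)\ge\sum_{i,j}\lambda_i\mu_j c_{i+j}$, since $c_{i+j}\le a_i+b_j$.
\end{itemize}
Hence $\hat f_C(x)\le\hat f_A(y)+\hat f_B(x-y)$ for every $y$.

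For ``$\ge$'', write $\hat f_C(x)=\sum_k\nu_k c_k$. Each $c_k$ with $\nu_k>0$ equals $a_{i_k}+b_{k-i_k}$ for some index $i_k$. Then
\[
\sum_k\nu_k c_k=\sum_k\nu_k a_{i_k}+\sum_k\nu_k b_{k-i_k}\ge\hat f_A(y)+\hat f_B(x-y),
\qquad y=\sum_k\nu_k i_k,
\]
which gives $\hat f_C(x)\ge(\hat f_A\mathbin{\square}\hat f_B)(x)$.

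\textbf{Step 2: the merged list is an expression.}

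Let $g$ be the piecewise-linear function determined by the merged list. Its slopes are nondecreasing by construction, so $g$ is convex on $[s+s',\deg A+\deg B]$.

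First, $g\ge\hat f_A\mathbin{\square}\hat f_B$. Take any prefix of the merged list ending at $x$. By stability, it consists of a prefix of $A$'s list (possibly with a partially used last segment) together with a prefix of $B$'s list. This gives a split $x=y+(x-y)$ with $g(x)=\hat f_A(y)+\hat f_B(x-y)$.

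Second, $g\le\hat f_A\mathbin{\square}\hat f_B$. For arbitrary $y$, compare $\hat f_A(y)+\hat f_B(x-y)$ with $g(x)$ by an exchange argument. Both are sums of $x-s-s'$ units of slope taken from the two lists, with prefixes of each list used. The merged list picks the smallest available slopes greedily, and because each list's slopes are nondecreasing (Lemma~\ref{lem:env}), any other choice of prefixes uses at least as large a total. Equivalently, at the merge point $y^*$ a common subgradient $k\in\partial\hat f_A(y^*)\cap\partial\hat f_B(x-y^*)$ exists. Then
\[
\hat f_A(y)+\hat f_B(x-y)\ge\hat f_A(y^*)+k(y-y^*)+\hat f_B(x-y^*)-k(y-y^*)=g(x).
\]

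The domain and boundary checks are routine: the domain of the convolution is $[s+s',\deg A+\deg B]$ and both functions are $\infty$ outside it. Note that $\deg C=\deg A+\deg B$ and $L(C)=L(A)+L(B)$, so the requirement $s+s'+\sum B_i=\deg C$ holds. Finally, ties among equal slopes give the same $g$ regardless of how they are ordered.

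The main obstacle is stating the exchange/subgradient step cleanly. At a breakpoint, choose $k$ between the slope of the last consumed segment and the first unconsumed one in each list, and verify that it lies in both subdifferentials. The edge cases are when a list is fully consumed, or when $t=0$ or $r=0$; there the subdifferential is unbounded on one side, which makes the choice of $k$ easier.
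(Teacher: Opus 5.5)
Your proof is correct. Note that the paper gives no proof of this lemma: it states it together with Lemma~\ref{lem:env} as a standard fact of discrete convex analysis and omits the argument.

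\textbf{Comparison with the cited route.} The citation route would obtain the lemma from general geometry:
\begin{itemize}
\item the discrete epigraph of $c=a\otimes b$ is the Minkowski sum of the epigraphs of $a$ and $b$;
\item convex hulls commute with Minkowski sums;
\item the lower boundary of a Minkowski sum of two convex polygons is obtained by merging their edges by slope.
\end{itemize}
Your argument instead works directly from the linear-programming definition of $\hat f$:
\begin{itemize}
\item the product weights, together with the splitting $c_k=a_{i_k}+b_{k-i_k}$, give $\hat f_C=\hat f_A\mathbin{\square}\hat f_B$;
\item the merged prefix yields an explicit split point $y^*$ with $g(x)=\hat f_A(y^*)+\hat f_B(x-y^*)$, which is the one place where stability of the merge is used;
\item a common subgradient at $(y^*,x-y^*)$ gives the reverse inequality.
\end{itemize}
This route is more elementary and uses only the slope monotonicity from Lemma~\ref{lem:env}. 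It also makes the boundary cases explicit: $t=0$, $r=0$, exhausted lists, and $\deg C=\deg A+\deg B$. It mirrors the subgradient reasoning the paper itself uses in Part~3 of the proof of Theorem~\ref{thm:alg_weak_css}.

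\textbf{One detail to make explicit.} You must produce the common subgradient for every $x$, not only at breakpoints of $g$. Knowing $g\le\hat f_A\mathbin{\square}\hat f_B$ at breakpoints does not carry over to the interior of segments, because convexity only bounds the infimal convolution from above by chords.

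Suppose $x$ lies strictly inside a merged segment that is the $(\alpha+1)$-st segment of $A$'s list, and $\beta$ segments of $B$'s list have been consumed. Take $k=k_{\alpha+1}$. Then $\partial\hat f_A(y^*)=\{k_{\alpha+1}\}$. The merge order gives $k'_\beta\le k_{\alpha+1}\le k'_{\beta+1}$, reading $k'_0=-\infty$ and $k'_{r+1}=+\infty$. Hence $k\in\partial\hat f_B(x-y^*)$ as well.
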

	\begin{definition2}[mod-$k$ arithmetic progression subpolynomial multiset]
		Let
		$
		A=\min_{i=0}^{n}(a_i+ix)
		$
		be a non-$\infty$ tropical polynomial. A tropical polynomial $B$ is called a
		\textit{subpolynomial} of $A$ if $B=\infty$ or there exists a strictly increasing
		index sequence
		$
		p=(p_0,p_1,\ldots,p_t)
		$
		with $0\leq p_0<\cdots<p_t\leq n$ such that
		$
		B=\min_{j=0}^{t}(a_{p_j}+jx).
		$
		
		For a positive integer $k$ and an integer $r$ with $0\leq r<k$, we define
		$$
		\operatorname{ms}_{k,r}(A)=
		\begin{cases}
			\displaystyle
			\min_{i=0}^{\left\lfloor(n-r)/k\right\rfloor}
			(a_{r+ki}+ix) & \text{if } r\leq n,\\
			\infty & \text{if } r>n.
		\end{cases}
		$$
		The \textit{mod-$k$ arithmetic progression subpolynomial multiset} of $A$ is the multiset
		$
		\operatorname{MS}_k(A)
		=
		\{\operatorname{ms}_{k,r}(A)\mid 0\leq r<k\}.
		$
		
		In the special case where $A=\infty$, the subpolynomial of $A$ is just $\infty$. For every positive integer $k$ and every integer $r$ with $0\leq r<k$, we define $\operatorname{ms}_{k,r}(A)=\infty$. The mod-$k$ arithmetic progression subpolynomial multiset is then defined in the same way as $\operatorname{MS}_k(A)=\{\operatorname{ms}_{k,r}(A)\mid 0\le r<k\}$.
	\end{definition2}
	\begin{definition2}[tropical decomposition width]
		Let $A$ be a non-$\infty$ tropical polynomial. A \textit{tropical polynomial
			decomposition} of $A$ is an equation
		$$
		A=\sum_{i=1}^{N}B^{(i)},
		$$
		where $B^{(1)},\ldots,B^{(N)}$ are tropical polynomials.
		
		The \textit{tropical decomposition width} of $A$, denoted by $\operatorname{tdw}(A)$, is the smallest nonnegative integer $k$ such that $A$ admits such a decomposition with $\deg B^{(i)}\le k$ for all $1\leq i\leq N$. Separately, we set $\operatorname{tdw}(\infty)=0$.
		
		For every nonnegative integer $k$, we define $\mathcal T_k=\{A\in \mathbb T[x]\mid \operatorname{tdw}(A)\leq k\}$.
	\end{definition2}
	\begin{definition2}[unique factorization polynomial]
		Let $A$ be a tropical polynomial. We call $A$ \textit{linear} if $\deg A=1$, and a \textit{unit} if $A\in \mathbb R$.
		
		For a non-$\infty$ and non-unit tropical polynomial $A$, we say that $A$ is \textit{irreducible} over $\mathcal S$ if every decomposition $A=\sum_{i=1}^N B^{(i)}$ with $B^{(i)}\in\mathcal S$ satisfies $\max_{i=1}^{N}\deg B^{(i)}=\deg A$; otherwise $A$ is \textit{reducible} over $\mathcal S$.
		
		Two tropical polynomials $A$ and $B$ are \textit{associated} if $A=B+u$ for some unit $u$.
		
		A tropical polynomial decomposition $A=\sum_{i=1}^N B^{(i)}$ is \textit{normal} over $\mathcal S$ if each $B^{(i)}$ lies in $\mathcal S$ and is irreducible over $\mathcal S$.
		
		Let $\mathcal S\subseteq \mathbb T[x]$. A polynomial $A\in\mathcal S$ is a \textit{unique factorization polynomial over $\mathcal S$} if either $A$ is $\infty$ or a unit, or any two normal decompositions of $A$ into factors in $\mathcal S$ agree up to permutation and association: precisely, if
		$
		A=\sum_{i=1}^N B^{(i)}
		$
		and
		$
		A=\sum_{j=1}^M C^{(j)}
		$
		are normal, then $N=M$, and after a permutation of the $C^{(j)}$, each $B^{(i)}$ is associated to $C^{(i)}$. When $\mathcal S=\mathbb T[x]$, we simply call $A$ a \textit{unique factorization polynomial}.
	\end{definition2}
	\begin{lemma}
		\label{lem:tdw_np_hard}
		Let $A$ be a tropical polynomial whose coefficients are integers in $\{0,1,2,3\}$. Then computing $\tdw(A)$ is NP-hard.
	\end{lemma}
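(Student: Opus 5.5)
The plan is to derive the hardness from the known NP-completeness of deciding whether a univariate tropical polynomial admits a nontrivial factorization. This is the result of Kim and Roush on factorization of one-variable polynomials over the tropical semiring, and their hard instances can be taken with coefficients in a small integer range. The connecting observation is immediate from the definition of $\tdw$: for a non-$\infty$, non-unit $A$,
$$\tdw(A)<\deg A\iff A \text{ is reducible over } \mathbb T[x],$$
because a decomposition with all $\deg B^{(i)}<\deg A$ is exactly a nontrivial factorization. We should first discard unit factors: they have degree $0$ and can be absorbed into any other factor. We should also discard factors of the form $c+mx$ (a pure shift), which are linear and so count as nontrivial whenever $\deg A\ge 2$. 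So any algorithm computing $\tdw(A)$ decides reducibility, and NP-hardness of reducibility transfers directly.

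If I cannot simply cite the coefficient bound, I will prove reducibility hardness myself by a reduction from a sumset-decomposition problem. For a finite set $S\subseteq\{0,\dots,n\}$ with $0,n\in S$, define $P_S$ with coefficient $0$ at indices in $S$ and $1$ at indices in $[0,n]\setminus S$. The key computation is
$$P_X\otimes P_Y=P_{X+Y}\qquad\text{whenever } 0,\max X\in X \text{ and } 0,\max Y\in Y.$$
To see this, the value $0$ at index $i$ is attained exactly when $i\in X+Y$. Every other index $i\in[0,\max X+\max Y]$ attains value $1$, since we can write $i=x'+y$ with $y\in\{0,\max Y\}$ and $x'\in[0,\max X]$, or symmetrically. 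Hence a nontrivial sumset decomposition of $S$ yields a nontrivial factorization of $P_S$.

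The converse direction is the main obstacle. Given an arbitrary factorization $P_S=\sum_i B^{(i)}$ with real coefficients, I must extract a set decomposition $S=X+Y$ with both $X$ and $Y$ of size at least $2$. The first step is to normalize each factor by a unit so that its minimum coefficient is $0$. Since $P_S$ has minimum $0$, the coefficients equal to $0$ add up exactly: the zero set of the product is the sumset of the zero sets of the factors. The remaining difficulty is that the zero set of some factor might be a singleton. Then that factor contributes no combinatorial information, although it still lowers degree through its other coefficients. This is where coefficients $2$ and $3$ should be used: I will pad $S$ with boundary gadgets, placing values $2,3$ near the endpoints in a non-convex pattern. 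The gadgets are chosen so that Theorem \ref{thm:fir_con} (applied with $d=1$) and an endpoint/slope analysis via Lemma \ref{lem:expr} force every nontrivial factor to have at least two zero coefficients, including both endpoints.

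Finally, I would take the source problem to be one for which deciding the existence of a nontrivial sumset decomposition (with the endpoint condition) is NP-hard; Kim and Roush obtain this from a Subset-Sum/Partition style encoding. Composing the reductions then gives that deciding $\tdw(A)<\deg A$ is NP-hard for coefficients in $\{0,1,2,3\}$, hence computing $\tdw(A)$ is NP-hard. I would check in particular that the gadget values never exceed $3$ after normalization.
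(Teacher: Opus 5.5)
Your main argument is the same as the paper's proof, and it is correct. You cite Kim and Roush for NP-hardness of deciding reducibility of tropical polynomials with coefficients in $\{0,1,2,3\}$, and observe that for non-constant $A$, $A$ is reducible if and only if $\tdw(A)<\deg A$, so computing $\tdw$ decides reducibility. The self-contained sumset fallback is unnecessary, and its converse direction (the boundary gadgets) is only sketched and would not stand on its own, so you can simply drop it.
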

	\begin{proof}
		By \cite{kr05} (Theorem 7), deciding reducibility of tropical polynomials whose coefficients are integers in $\{0,1,2,3\}$ is NP-hard. Since
		$$A\text{ is reducible} \iff \tdw(A) < \deg A\text{ and }A\not \in \mathbb T,$$ computing $\operatorname{tdw}(A)$ would solve reducibility, hence is NP-hard.
	\end{proof}
	\begin{definition2}[generating rank]
		Let $\mathcal R$ be a commutative semiring, and let $\mathcal M$ be a $\mathcal R$-semimodule. A finite set $\mathcal G=\{g_1,\ldots,g_m\}\subseteq\mathcal M$ is called a \textit{generating set} of $\mathcal M$ over $\mathcal R$ if 
		$$
		\mathcal R\mathcal G=\left\{\bigoplus_{i=1}^{m}\left(r_i\otimes g_i\right)\;\middle|\; r_i\in \mathcal R\right\}=\mathcal M.
		$$
		The \textit{generating rank} of $\mathcal M$ over $\mathcal R$ is defined to be the minimum cardinality of a generating set of $\mathcal M$ over $\mathcal R$, denoted by $\mu_{\mathcal R}(\mathcal M)$. If no such generating set exists, then $\mu_{\mathcal R}(\mathcal M)=\infty$.
	\end{definition2}
	\begin{definition2}[interpolation algebra]
		Let $\mathcal R$ be a commutative semiring, and let
		$\mathcal S\subseteq\mathcal R[x]$ be a set of polynomials. A commutative semiring $\mathcal R_2$ is called an
		\textit{interpolation algebra} of $\mathcal S$ if $\mathcal R_2$ is an $\mathcal R$-algebra, and for all
		$A,B\in\mathcal S$,
		$$
		A=B
		\iff
		A\sim_{\mathcal R_2}B.
		$$
	\end{definition2}
	\begin{remark}
		\label{rmk:int}
		The semiring $\mathbb T[x]$ itself provides a trivial interpolation algebra for $\mathbb T[x]$. However, since $\mu_{\mathbb T}(\mathbb T[x])=\infty$, for any fixed $m$, we cannot represent every element of $\mathbb T[x]$ using only $m$ registers over $\mathbb T$. To obtain an interpolation algebra $\mathcal R$ that is computationally manageable, we impose the condition $\mu_{\mathbb T}(\mathcal R)<\infty$, which aligns with the finitely generated setting in the extended identity property.
		
		In the classical $(+,\times)$ setting, $\mathbb C$ serves as an interpolation algebra for $\mathbb C[x]$: two complex polynomials are equal if and only if they agree on all complex inputs. Moreover, $\mu_{\mathbb C}(\mathbb C)=1$. Thus $\mathbb C$ provides an interpolation algebra for the structural information of $\mathbb C[x]$, with a single complex number register sufficient to represent any element of $\mathbb C$.
		
		It is important to note that the generating rank of an interpolation algebra is fundamentally different from the minimum number of evaluation points required. Although any polynomial over $\mathbb C[x]$ of degree at most $n$ can be uniquely determined by $n+1$ fixed evaluation points, the same does not hold for tropical polynomials: even for degree at most $1$, no fixed finite set of evaluation points can uniquely determine all such tropical polynomials. This is precisely why we use generating rank, rather than the number of evaluation points, to capture the complexity of interpolation in the tropical setting.
	\end{remark}
	Since tropical polynomials may be less intuitive in the context of combinatorial optimization, we adopt tropical sequence notation in Sections \ref{sec:cgap} and \ref{sec:alg}. Section \ref{sec:cgap} focuses on the geometric structure rather than the algebraic one, while Section \ref{sec:alg} is concerned with algorithm design for $(\min,+)$ convolution. For convenience, we assume that every non-$\infty$ tropical sequence $(a_0,\ldots,a_n)$ satisfies $a_n\neq\infty$, which avoids unnecessary complications and is consistent with the operations on tropical polynomials.
	\begin{definition2}[tropical sequence]
		We identify a tropical sequence $a=(a_0,\ldots,a_n)$ (with $a_n\neq\infty$) with the tropical polynomial
		$$
		A=\min_{i=0}^{n}(a_i+ix).
		$$
		The polynomial $\infty$ is identified with a distinguished tropical sequence, also denoted by $\infty$; equivalently, trailing $\infty$-entries are omitted, and the all-$\infty$ sequence is represented by $\infty$.
		
		Under this identification, we write
		$$
		c=\min(a,b) \iff C=\min(A,B),
		$$
		and
		$$
		c=a\otimes b \iff C=A+B.
		$$
	\end{definition2}
	\begin{remark}
		The above definitions for tropical polynomials also apply to the
		corresponding tropical sequences. If
		$
		a=(a_0,\ldots,a_n)
		$
		corresponds to
		$
		A=\min_{i=0}^{n}(a_i+ix),
		$
		then we write
		$$
		\operatorname{tdw}(a)=\operatorname{tdw}(A),
		\quad
		\operatorname{cgap}(a)=\operatorname{cgap}(A),
		\quad
		L(a)=L(A),
		\quad
		\hat f_a=\hat f_A
		$$
		and similarly for the other notions.
	\end{remark}
	\subsection*{Computational Model}
	We use a restricted version of the Real RAM, following the framework of \cite{cwx22}. Let $N$ denote the total input size. In addition to the standard unit-cost operations of a Word RAM with $O(\log N)$-bit words, the model supports unit-cost comparisons and arithmetic operations (addition, subtraction, multiplication, and division) on real numbers, as well as unit-cost conversion of word integers into real numbers. Conversion from real numbers to integers is not allowed, and randomization only uses random $O(\log N)$-bit words.
	
	To avoid unrealistic uses of unlimited-precision real arithmetic, we impose a low-degree computation restriction similar to Model (B') in Appendix A of \cite{cwx22}. At every point of the computation, every finite real register is required to equal the evaluation of a constant-degree polynomial in the real input values, with rational coefficients whose numerators and denominators have $O(\log N)$ bits. Arithmetic operations are permitted only when the resulting value satisfies this restriction. In particular, divisions by polynomially bounded word integers, after casting them to real numbers, are allowed.
	
	We augment the model with a distinguished symbol $\infty$. A tropical value is represented either by a finite real register or by $\infty$, and operations involving $\infty$ follow the conventions
	$$
	x+\infty=\infty+x=\infty,
	\qquad
	\min(x,\infty)=\min(\infty,x)=x.$$
	Such operations take constant time.
	
	For polynomially bounded integer inputs, all algorithms in this paper can be implemented on a standard Word RAM with essentially the same running times; the real quantities generated by our algorithms have $O(\log N)$-bit exact representations.
	\section{Structural Characterization for Tropical Polynomials}
	\label{sec:char}
	\subsection{Fragmented Properties for Tropical Polynomials}
	Let $\mathcal{S}\subseteq \mathbb{T}[x]$ be a multiplicatively closed set. We now give formal definitions
	of the five properties together with the convexity and the convex polynomial set mentioned in the introduction.
	\begin{definition2}[polynomial identity property]
		We say that $\mathcal{S}$ satisfies the \textit{polynomial identity property} if $\mathbb{T}$ is an interpolation algebra of $\mathcal{S}$.
	\end{definition2}
	\begin{definition2}[unique factorization property]
		We say that $\mathcal{S}$ satisfies the \textit{unique factorization property} if every polynomial $f\in \mathcal{S}$ is a unique
		factorization polynomial over $\mathcal{S}$.
	\end{definition2}
	\begin{definition2}[linear factorization property]
		We say that $\mathcal{S}$ satisfies the \textit{linear factorization property} if for every $f\in \mathcal{S}$,
		$
		\operatorname{tdw}(f)\leq 1.
		$
	\end{definition2}
	\begin{definition2}[bounded factorization property]
		We say that $\mathcal{S}$ satisfies the \textit{bounded factorization property} if there exists a constant $k>0$ such that for every
		$f\in \mathcal{S}$,
		$
		\operatorname{tdw}(f)\leq k.
		$
	\end{definition2}
	\begin{definition2}[extended identity property]
		\label{def:eip}
		We say that $\mathcal{S}$ satisfies the \textit{extended identity property} if there exists a commutative semiring
		$\mathcal{R}$ that is a $\mathbb T$-algebra such that
		\begin{itemize}
			\item $\mu_{\mathbb{T}}(\mathcal{R})<\infty$,
			\item $\mathcal{R}$ is an interpolation algebra of $\mathcal{S}$.
		\end{itemize}
	\end{definition2}
	\begin{definition2}[convexity]
		We say that $\mathcal S$ satisfies \textit{convexity} if every polynomial $f\in \mathcal S$ is a convex polynomial.
	\end{definition2}
	\begin{definition2}[convex polynomial set]
		Define the convex polynomial set $\mathcal S$ as the set of all convex polynomials.
	\end{definition2}
	We will prove some of the relations between the properties introduced in the Introduction. We begin with an elementary lemma that will be used repeatedly throughout this subsection; its proof is straightforward and hence omitted.
	\begin{lemma}
		\label{lem:conv_equiv}
		Let $A=\min_{i=0}^{n}(a_i+ix)$ be a non-$\infty$
		tropical polynomial, and let $L(A)=k$. Then the following statements
		are equivalent:
		\begin{itemize}
			\item[(1)] $A$ is a convex polynomial.
			\item[(2)] $a_i=\infty$ for all $0\leq i<k$, $a_i\neq\infty$ for all
			$k\leq i\leq n$, and
			$$
			a_{i+1}-a_i\leq a_{i+2}-a_{i+1}
			$$
			for all $k\leq i\leq n-2$.
			\item[(3)] $\operatorname{csupp}(A)=(k,k+1,\ldots,n)$.
		\end{itemize}
	\end{lemma}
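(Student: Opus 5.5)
We prove the cycle of implications $(1)\Rightarrow(2)\Rightarrow(3)\Rightarrow(1)$, using only the definition of convex triplets, Lemma \ref{lem:env}, and the characterization of $\operatorname{csupp}(A)$ through the lower convex hull of the finite points $(i,a_i)$. By the lemma on weak convex support sequences, $\operatorname{csupp}(A)$ always begins at $L(A)=k$ and ends at $n=\deg A$. So throughout we only need to understand the indices strictly between $k$ and $n$.

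For $(1)\Rightarrow(2)$, assume $\operatorname{cgap}(A)\le 1$. That $a_i=\infty$ for $i<k$ is the definition of $L(A)$. For finiteness on $[k,n]$, we first apply the cgap condition to the pair $x=k<y=n$ and then repeat it on the resulting subintervals. Each step gives an intermediate $z$ with
$$(n-k)a_z\le (n-z)a_k+(z-k)a_n<\infty,$$
so $a_z$ is finite. Iterating on subintervals of length $\ge 2$ whose endpoints are already known to be finite, induct on interval length: every pair $x<y$ with $a_x,a_y$ finite and $y-x\ge2$ has some $z$ strictly between with $a_z$ finite and $(x,z,y)$ convex. A cleaner route is a strong induction proving that every index in $[x,y]$ is finite and that $a$ restricted to $[x,y]$ lies on or below the chord. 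Concretely, we show by induction on $y-x$ that $a_j\le$ chord$_{x,y}(j)$ for all $x\le j\le y$, whenever $a_x,a_y$ are finite. The cgap condition supplies a $z$ below the chord, and the induction on $[x,z]$ and $[z,y]$ gives points below sub-chords, which lie below the big chord. This yields finiteness on $[k,n]$. Applying it to $x=i$, $y=i+2$ forces $z=i+1$, which is exactly $2a_{i+1}\le a_i+a_{i+2}$, the discrete second-difference inequality.

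For $(2)\Rightarrow(3)$, nondecreasing consecutive slopes mean that the polygonal line through $(i,a_i)$ for $k\le i\le n$ is convex. That line is therefore its own lower convex envelope on $[k,n]$, so $\hat f_A(i)=a_i$ for every such $i$. Hence every $i\in[k,n]$ lies in $\operatorname{csupp}(A)$. Indices $<k$ are excluded because $a_i=\infty$ there.

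For $(3)\Rightarrow(1)$, Lemma \ref{lem:env} with $p=(k,\dots,n)$ shows that $\hat f_A$ is linear on each $[i,i+1]$ and agrees with $a$ at the integers. Given $x<y$ with $y-x\ge2$, take $z=x+1$. By convexity of $\hat f_A$ and $\hat f_A=a$ on $[k,n]$, we get $(y-x)a_z\le(y-z)a_x+(z-x)a_y$. If $x<k$, then $a_x=\infty$ and the inequality holds trivially. The same holds if $y$ lies outside the support. So $\operatorname{cgap}(A)\le1$.

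The only real obstacle is the finiteness part of $(1)\Rightarrow(2)$. The cgap definition only provides some intermediate convex point, and $\infty$ entries make convex triplets with infinite endpoints vacuous. The chord induction restricted to finite endpoints handles this.
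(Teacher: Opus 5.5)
Your proof is correct. The paper omits this proof as straightforward, and your cycle $(1)\Rightarrow(2)\Rightarrow(3)\Rightarrow(1)$ is the natural way to supply it:
\begin{itemize}
\item the chord induction gives finiteness on $[k,n]$;
\item the forced triplet $(i,i+1,i+2)$ gives the second-difference inequality;
\item Jensen's inequality against the convex piecewise-linear interpolant gives $\hat f_A(i)=a_i$;
\item convexity of $\hat f_A$, with the vacuous case $a_x=\infty$ for $x<k$, gives the converse.
\end{itemize}
The finiteness step can be shortened by taking a maximal run of $\infty$ entries strictly between two finite entries $a_x,a_y$, since then no $z\in(x,y)$ can satisfy the triplet inequality. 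That step genuinely needs $a_n\neq\infty$, i.e.\ $n=\deg A$, which you correctly invoke.
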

	We now prove some of the implications in the relation graph introduced in the Introduction.
	\begin{lemma}
		\label{lem:poly_con_lf}
		A multiplicatively closed set $\mathcal{S}\subseteq \mathbb{T}[x]$ satisfies
		convexity if and only if it satisfies the linear factorization property.
	\end{lemma}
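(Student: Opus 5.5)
The plan is to prove the stronger pointwise statement: a single tropical polynomial $A$ is convex if and only if $\tdw(A)\le 1$. Applying this to every element of $\mathcal S$ gives the lemma at once, since both convexity and the linear factorization property are conditions imposed on each $f\in\mathcal S$ separately. Multiplicative closedness plays no role in this argument. The cases $A=\infty$ and $A$ a unit are trivial: then $\operatorname{cgap}(A)=0$, and either $\tdw(A)=0$ or $A$ is itself a product of degree-$0$ factors.

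\emph{Convex $\Rightarrow$ $\tdw\le 1$.} Let $A$ be convex with $L(A)=k$ and $\deg A=n$. By Lemma \ref{lem:conv_equiv}, $a_i$ is finite exactly for $k\le i\le n$, and the slopes $s_i=a_{i+1}-a_i$ are nondecreasing for $k\le i\le n-1$. I will write
\[
A=\underbrace{(a_k+kx)}_{\text{monomial}}+\sum_{i=k}^{n-1}\min(0,\,s_i+x).
\]
The monomial $a_k+kx$ is itself a sum of $k$ copies of the linear polynomial $x$ plus the unit $a_k$, so every factor has degree at most $1$. To verify the identity, expand the product of the linear factors. The coefficient of $x^j$ is the minimum, over subsets of size $j$ of the slopes, of their sum. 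Because the slopes are sorted, this minimum is attained by the $j$ smallest slopes, namely $s_k,\dots,s_{k+j-1}$, and their sum is $a_{k+j}-a_k$. This is a short telescoping computation.

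\emph{$\tdw\le 1$ $\Rightarrow$ convex.} Suppose $A=\sum_i B^{(i)}$ with every $\deg B^{(i)}\le 1$. Each factor is a unit, a monomial $c+x$, or $\min(c_0,c_1+x)$. Factors of the first two kinds only shift the coefficient sequence or add a constant, and both operations preserve the characterization (2) of Lemma \ref{lem:conv_equiv}. It therefore suffices to show that the product of linear factors $\min(c_0^{(i)},c_1^{(i)}+x)$ with finite coefficients is convex. After pulling out the constants $c_0^{(i)}$, each factor becomes $\min(0,s^{(i)}+x)$. The same subset-minimum computation as above then gives coefficients $a_j=\sum$ of the $j$ smallest $s^{(i)}$, which are all finite. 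Their successive differences are the sorted slopes, so they are nondecreasing, and Lemma \ref{lem:conv_equiv}(2) shows the product is convex.

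Alternatively, this direction can be done by induction on the number of factors, using Lemma \ref{lem:expr} to merge slope lists. The slope-list merge preserves the property that the convex support is a full interval of indices, and Lemma \ref{lem:conv_equiv}(3) turns that into convexity.

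I expect the main obstacle to be the bookkeeping around $\infty$ coefficients and $L(A)$. A linear factor may have $c_0=\infty$, in which case it is a monomial, and one must confirm that the finite support stays a contiguous interval. The other step needing care is a clean justification that the coefficient of $x^j$ in $\sum_i\min(0,s_i+x)$ is the sum of the $j$ smallest $s_i$. I would handle this with an exchange argument: any size-$j$ subset can be improved to the set of the $j$ smallest slopes without increasing its sum.
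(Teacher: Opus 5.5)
Your proof is correct and follows essentially the same route as the paper. The forward direction uses the identical decomposition $A=(a_k+kx)+\sum_{i=k}^{n-1}\min(0,a_{i+1}-a_i+x)$, and the converse separates units, monomials and binomials, normalizes the binomials to $\min(0,d_i+x)$ with sorted $d_i$, and reads off nondecreasing coefficient differences to apply Lemma \ref{lem:conv_equiv}(2), which you do slightly more explicitly than the paper via the ``sum of the $j$ smallest slopes'' argument.
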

	\begin{proof}
		Assume first that $\mathcal{S}$ satisfies convexity. Let $A\in \mathcal S$. If $A=\infty$, then $\operatorname{tdw}(A)=0\leq 1$; otherwise write $A=\min_{i=0}^{n}(a_i+ix)$ and put $k=L(A)$.
		Since $A$ is convex, by Lemma \ref{lem:conv_equiv}, the finite coefficients of $A$ are exactly
		$a_k,\ldots,a_n$, and the differences
		$
		a_{k+1}-a_k,a_{k+2}-a_{k+1},\ldots,a_n-a_{n-1}
		$
		are nondecreasing. Hence
		$$
		A
		=
		kx+a_k+\sum_{i=k}^{n-1}\min(0,a_{i+1}-a_i+x),
		$$
		so $\operatorname{tdw}(A)\leq 1$.
		
		Conversely, assume that $\mathcal S$ satisfies the linear factorization property. For all $A\in \mathcal S$, consider a tropical polynomial decomposition $A=\sum_{i=1}^{N}B^{(i)}$ with $\deg B^{(i)}\leq 1$ for all $1\leq i\leq N$. We permute all factors $B^{(1)},\ldots,B^{(N)}$, write them as $B^{(i)}=\min(b_i,c_i+x)$ for $1\leq i\leq p$ with $d_i=c_i-b_i$ ordered such that $d_1\leq\cdots\leq d_p$, and $B^{(i)}=c_i+x$ for $p<i\leq n$, and $B^{(i)}=c_i$ for $n+1\leq i\leq N$.
		Then
		$$
		\begin{aligned}
			A&=\sum_{i=1}^{p}\min(b_i,c_i+x)+\sum_{i=p+1}^{n}(c_i+x)+\sum_{i=n+1}^{N}c_i\\
			&=\left(\sum_{i=1}^{p}b_i+\sum_{i=p+1}^{N}c_i\right)+(n-p)x+\sum_{i=1}^{p}\min(0,d_i+x)\\
			&=\left(\sum_{i=1}^{p}b_i+\sum_{i=p+1}^{N}c_i\right)+\min_{i=n-p}^{n}\left(\sum_{j=1}^{i-(n-p)}d_j+ix\right).
		\end{aligned}
		$$
		As the successive coefficient differences
		$d_1,\ldots,d_p$ are nondecreasing, Lemma
		\ref{lem:conv_equiv} implies that $A$ is a convex polynomial.
	\end{proof}
	\begin{lemma}
		\label{lem:poly_con_ip}
		If a multiplicatively closed set $\mathcal{S}\subseteq \mathbb{T}[x]$ satisfies
		convexity, then it satisfies the polynomial identity property.
		Furthermore, there exists a multiplicatively closed set
		$\mathcal{S}\subseteq \mathbb{T}[x]$ satisfying the polynomial identity
		property but not satisfying convexity.
	\end{lemma}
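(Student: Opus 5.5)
The first part reduces to showing that two convex polynomials with the same evaluation function on $\mathbb T$ coincide. Take $A,B\in\mathcal S$ with $A\sim_{\mathbb T}B$; the direction $A=B\Rightarrow A\sim_{\mathbb T}B$ is trivial. If one of them is $\infty$, then so is the other, since a non-$\infty$ polynomial takes finite values at real points. Otherwise both are convex, and by Lemma \ref{lem:conv_equiv} we have $\operatorname{csupp}(A)=(L(A),\ldots,\deg A)$, and likewise for $B$.

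The key step is to recover the convex envelope $\hat f_A$ from the function $x\mapsto A(x)$ on $\mathbb R$. This is the discrete Legendre–Fenchel transform, and $A(x)=\min_i(a_i+ix)=\min_{y}(\hat f_A(y)+yx)$, since a linear function on the lower hull attains its minimum at a vertex. Concretely, I would argue as follows:
\begin{itemize}
\item the slopes of the piecewise linear concave function $A(x)$ as $x\to+\infty$ and $x\to-\infty$ are $L(A)$ and $\deg A$, so these agree for $A$ and $B$;
\item the breakpoints and slope changes of $A(x)$ determine the segments $\bigl(k_i,b_i\bigr)$ of the expression of $\hat f_A$ from Lemma \ref{lem:env}: a breakpoint at $x=-k$ corresponds to a segment of slope $k$, with horizontal length equal to the jump in slope;
\item the values $h$ at the endpoints are determined as well, for instance via $\hat f_A(y)=\sup_x\bigl(A(x)-yx\bigr)$, i.e. Fenchel biconjugation of the closed convex function $\hat f_A$.
\end{itemize}
Hence $\hat f_A=\hat f_B$. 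Since every index in $[L(A),\deg A]$ lies in $\operatorname{csupp}$, we get $a_i=\hat f_A(i)=\hat f_B(i)=b_i$ for every such $i$, and all other coefficients of both are $\infty$. So $A=B$. The main care needed is the biconjugation identity on integer-supported envelopes, which is standard and which I would cite or verify directly from the expression form.

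For the counterexample, the key idea is that a set can satisfy the identity property without convexity if it contains just one non-convex element besides its convex products. Take $P=\min(0,2x)$, which is not convex since $a_1=\infty$ lies strictly inside the support. Let $\mathcal S=\{\,c+kP \mid c\in\mathbb R,\ k\geq 1\,\}$, which is multiplicatively closed. Compute $kP=\min_{j=0}^{k}(2jx)$, whose coefficients are $0$ at even indices up to $2k$ and $\infty$ at odd indices. Its evaluation function is $c+\min(0,2kx)$, which determines $c$ (from the value for $x\geq 0$) and $k$ (from the slope for $x<0$). So evaluation on $\mathbb T$ is injective on $\mathcal S$, while $P\in\mathcal S$ is not convex. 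One should also double-check that $kP$ indeed has this form; this follows by induction, since $\min(0,2x)+\min_{j\le k-1}(2jx)$ has minimum $0$ at each even exponent.
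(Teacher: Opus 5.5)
Your proof is correct, and your counterexample is essentially the paper's. The first part, however, takes a different route.

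The paper argues by direct separation. It first handles different degrees with $x_0\to-\infty$ and different lowest indices with $x_0\to+\infty$. It then takes the first index $p$ with $a_p<b_p$, picks $\lambda\in\partial\hat f_B(p)$ via Lemma \ref{lem:env}, and checks $A(-\lambda)\le a_p-p\lambda<b_p-p\lambda=B(-\lambda)$.

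You instead reconstruct the whole envelope from the evaluation function through $\hat f_A(y)=\sup_{x\in\mathbb R}(A(x)-yx)$, and then read off $a_i=\hat f_A(i)$ on $[L(A),\deg A]$. The two arguments are closer than they appear:
\begin{itemize}
\item The easy inequality $\sup_x(A(x)-yx)\le\hat f_A(y)$ is immediate from $A(x)\le\sum_i\theta_i(a_i+ix)$ for any convex weights $\theta$ with $\sum_i\theta_i i=y$.
\item The only nontrivial inequality, $\hat f_A(y)\le\sup_x(A(x)-yx)$ for $y$ in the domain, is exactly the paper's computation: take $\lambda\in\partial\hat f_A(y)$ and evaluate at $x=-\lambda$.
\end{itemize}
Verifying the formula this way lets you skip Fenchel--Moreau and the closedness check it needs, and makes your breakpoint discussion unnecessary.

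What your route buys is an explicit interpolation formula, $a_i=\sup_x(A(x)-ix)$ for convex $A$. This is precisely the mechanism the paper only uses later, for the left inverse $\operatorname{in}_{\mathcal R}$ over the cyclic semiring. The paper's route buys a concrete separating point and needs no convex duality.

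For the second part, your family $\{c+kP\}$ is the paper's $\{k\min(0,2x)\mid k\in\mathbb Z_{>0}\}$ enlarged by units. The paper separates its elements with the single evaluation at $-1$; you need a value at some $x\ge 0$ together with a value at some $x<0$. Both work.

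One small slip is in your motivating sentence. No element of your set is convex, since every $kP$ has $\infty$ at all odd indices, so ``besides its convex products'' is inaccurate. This does not affect the argument.
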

	\begin{proof}
		Assume that $\mathcal{S}$ satisfies convexity, and let $A,B$ be two distinct tropical polynomials in $\mathcal{S}$. If $A=\infty$ or $B=\infty$, then clearly $A\not\sim_{\mathbb{T}}B$. Thus it suffices to consider the case $A,B\neq\infty$. Write $A=\min_{i=0}^{m}(a_i+ix)$ and $B=\min_{i=0}^{t}(b_i+ix)$. If $m\neq t$, then evaluating $A$ and $B$ at a real number $x_0$ and taking $x_0\to -\infty$ separates them.
		
		Now suppose $m=t$, and let $p$ be the first index with $a_p\neq b_p$.
		Since we may swap $A$ and $B$, we assume that $a_p<b_p$. If $b_p=\infty$, then
		$L(A)<L(B)$, and evaluating $A$ and $B$ at a real number $x_0$ and taking $x_0\to \infty$ separates them. Otherwise $b_p\ne \infty$. Since $B$ is convex,
		$\hat f_B(p)=b_p$. By Lemma \ref{lem:env}, choose $\lambda\in\partial\hat f_B(p)$. Then for every
		$0\leq y\leq t$,
		$$
		b_y\geq \hat{f}_B(y)\geq \hat{f}_B(p)+\lambda(y-p)=b_p+\lambda(y-p),
		$$
		and hence
		$
		B(-\lambda)=\min_{i=0}^{t}(b_i-i\lambda)=b_p-p\lambda.
		$
		But
		$$
		A(-\lambda)\leq a_p-p\lambda<b_p-p\lambda=B(-\lambda).
		$$
		Thus any two distinct polynomials in $\mathcal S$ are separated by
		evaluation over $\mathbb T$, so $\mathbb T$ is an interpolation algebra of
		$\mathcal S$.
		
		For the second statement, let $\mathcal S=\{k\min(0,2x)\mid k\in \mathbb Z_{>0}\}$.
		For any two elements $a\min(0,2x)$ and $b\min(0,2x)$ in $\mathcal S$, we have $$a\min(0,2x)+b\min(0,2x)=(a+b)\min(0,2x)\in \mathcal S,$$ 
		so $\mathcal S$ is multiplicatively closed. Moreover, evaluating at $-1$ separates the elements of $\mathcal S$, since $a\min(0,-2)=-2a$. Thus $\mathcal S$ satisfies the polynomial identity property. However, the function $\min(0,2x)\in \mathcal S$ is not convex, hence $\mathcal S$ does not satisfy convexity.
	\end{proof}
	\begin{lemma}
		\label{lem:poly_con_uf}
		The convex polynomial set satisfies the unique factorization property. However, convexity of a multiplicatively closed set $\mathcal S\subseteq \mathbb T[x]$ does not imply the unique factorization property. Moreover, there exists a multiplicatively closed set $\mathcal S\subseteq \mathbb T[x]$ satisfying the unique factorization property but not satisfying the bounded factorization property.
	\end{lemma}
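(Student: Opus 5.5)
The plan is to prove the three assertions separately. Each uses an explicit description of the irreducible elements of the relevant set together with an invariant that is additive under $+$.

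\textbf{Part 1 (the convex polynomial set has unique factorization).} First, the convex polynomial set $\mathcal S$ is multiplicatively closed. By Lemma \ref{lem:poly_con_lf} (or directly from Lemma \ref{lem:expr}), a sum of convex polynomials is again a sum of linear factors, hence convex.

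Next, determine the irreducibles over $\mathcal S$. Every non-unit convex polynomial of degree $\ge 2$ is reducible: the explicit decomposition $A=kx+a_k+\sum_{i=k}^{n-1}\min(0,a_{i+1}-a_i+x)$ from the proof of Lemma \ref{lem:poly_con_lf} uses only degree-$\le1$ convex factors. Every degree-$1$ polynomial is irreducible, since factors of degree $0$ cannot sum to degree $1$. Hence the irreducibles are exactly $c+x$ and $\min(b,c+x)$, and the latter is determined up to association by $d=c-b$.

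For uniqueness, take any normal decomposition. The computation in the second half of the proof of Lemma \ref{lem:poly_con_lf} shows two things. The number of factors of the form $c+x$ equals $L(A)$. The sorted multiset of the values $d_i$ equals the multiset of successive differences $a_{k+1}-a_k,\dots,a_n-a_{n-1}$. Both data depend only on $A$, so any two normal decompositions agree after a permutation, up to units.

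\textbf{Part 2 (convexity does not imply unique factorization).} Put $F=2\min(0,x)$, $G=2\min(0,1+x)$ and $H=\min(0,x)+\min(0,1+x)$. Let $\mathcal S$ consist of the units together with all $c+\alpha F+\beta G+\gamma H$ with $(\alpha,\beta,\gamma)\in\mathbb Z_{\ge0}^3\setminus\{0\}$.

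This set is multiplicatively closed. It is convex, because each element is a sum of linear polynomials, so Lemma \ref{lem:poly_con_lf} applies. Every non-unit element has degree $\ge2$, and degree exactly $2$ occurs only for $F,G,H$ up to units. Hence $F,G,H$ are irreducible over $\mathcal S$.

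Comparing multisets of slope data $d$ as in Part 1 gives $\{0,0\}$ for $F$, $\{1,1\}$ for $G$, and $\{0,1\}$ for $H$. Therefore $F+G=2H$. These are two normal decompositions that cannot be matched up to association, because $F$ is not associated to $H$.

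\textbf{Part 3 (unique factorization without bounded factorization).} Choose reals $\varepsilon_1,\varepsilon_2,\dots$ that are linearly independent over $\mathbb Q$, for example $\varepsilon_j=\sqrt{p_j}$ with $p_j$ the $j$-th prime. Set $G_j=\min(0,\varepsilon_j+jx)$. Let $\mathcal S$ consist of the units and all $c+\sum_j m_jG_j$ with $m\in\mathbb Z_{\ge0}^{(\infty)}$ nonzero; this set is multiplicatively closed.

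The key invariant is $\phi(A)=(\text{top coefficient})-(\text{lowest finite coefficient})$. It is additive under $+$ for non-$\infty$ polynomials, because top coefficients add and lowest coefficients add. It is also invariant under association. For the element $c+\sum_j m_jG_j$ we get $\phi=\sum_j m_j\varepsilon_j$. By $\mathbb Q$-linear independence, $\phi$ determines the vector $m$.

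Irreducibility of $G_j$ follows. If $G_j=\sum_i B^{(i)}$ with $B^{(i)}\in\mathcal S$, then the vectors $m^{(i)}\in\mathbb Z_{\ge0}^{(\infty)}$ sum to $e_j$. So all but one of them vanish, meaning those $B^{(i)}$ are units, and the remaining factor has degree $j$. The same argument shows that the only irreducibles over $\mathcal S$ are the $G_j$ up to units. Any normal decomposition of $A$ therefore contains $G_j$ exactly $m_j$ times, and $m$ is forced by $\phi(A)$. Hence unique factorization holds.

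Finally, bounded factorization fails. Since $G_j\in\mathcal S$, Theorem \ref{thm:fir_con} with $d=1$ gives $\operatorname{tdw}(G_j)\ge\operatorname{cgap}(G_j)=j$, because the only finite indices of $G_j$ are $0$ and $j$.

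The main obstacle is Part 3: finding a family whose factorizations are rigid. Natural $0/\infty$ choices fail, for instance $\min(0,x)+\min(0,2x)=3\min(0,x)$. The $\mathbb Q$-independent offsets $\varepsilon_j$, combined with the additive invariant $\phi$, are what I would rely on to make multiplicities recoverable.
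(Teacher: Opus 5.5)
Your proof is correct. Part 1 follows the paper's argument: the irreducibles over the convex set are exactly the degree-$1$ polynomials, and the coefficient comparison from Lemma \ref{lem:poly_con_lf} fixes both the number of monomial factors (namely $L(A)$) and the multiset of the $d_i$. Parts 2 and 3, however, use different witnesses from the paper's.

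For Part 2, the paper takes $\{\min_{i=0}^{k}(ix)\mid k\ge 10\}$. There the failure of unique factorization is purely a matter of degrees, a numerical-semigroup phenomenon: $10+13=11+12$. Your set keeps all irreducibles in the single degree $2$. It exploits instead the fact that a tropical product of linear factors only remembers the constant, the monomial part and the multiset of the $d_i$, so the linear factors can be regrouped as in $F+G=2H$. Both examples certify irreducibility by degree additivity alone: all elements of the paper's set have degree $\ge 10$, and all non-units in yours have degree $\ge 2$.

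For Part 3, the paper uses $\sum\min(0,c_i+1+c_ix)$ with integer $c_i$. It recovers the multiset $\{c_i\}$ from the convex envelope via Lemma \ref{lem:expr}, using that the slopes $1+1/c$ are pairwise distinct. You replace this envelope argument with the additive, association-invariant quantity $\phi$ (top coefficient minus lowest finite coefficient) together with $\mathbb Q$-linear independence of the offsets. Your route avoids the discrete-convex-analysis machinery and proves irreducibility of each generator explicitly. It also justifies $\operatorname{tdw}(G_j)\ge j$ via the cgap bound of Theorem \ref{thm:fir_con}; this is not circular, since that theorem does not depend on this lemma, whereas the paper only asserts the corresponding tdw value.

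The price is irrational coefficients and a nontrivial independence fact: for $\sqrt{p_j}$ this is Besicovitch's theorem. Choosing $\varepsilon_j=\log p_j$ instead would reduce it to unique prime factorization. The paper's integer-coefficient family has the further advantage that it is reused in Corollary \ref{cor:poly_bf_ei} to separate the polynomial identity property from bounded factorization.
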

	\begin{proof}
		For the convex polynomial set $\mathcal S$, let
		$A=\min_{i=0}^{n}(a_i+ix)\in\mathcal S$ be a non-$\infty$ and non-unit polynomial, and consider any
		normal tropical polynomial decomposition
		$
		A=\sum_{i=1}^{N} B^{(i)}
		$
		with $B^{(i)}\in \mathcal S$. By Lemma \ref{lem:poly_con_lf}, $\mathcal S$ satisfies the linear factorization property. Then for all $1\leq i\leq N$, $\operatorname{tdw}(B^{(i)})\leq 1$. Since all linear polynomials are convex, they belong to $\mathcal S$. Hence any linear tropical decomposition of $B^{(i)}$ is a decomposition over $\mathcal S$. Since $B^{(i)}$ is irreducible over $\mathcal S$, it must itself be linear. Consequently, we have $n=N$.
		
		Applying the coefficient comparison from the proof of the converse direction of Lemma \ref{lem:poly_con_lf} (using the notation $p,d_i$ introduced there), we obtain
		$$p=n-L(A),\qquad d_i=a_{i+n-p}-a_{i-1+n-p}(1\le i\le p).$$
		Thus the monomial part and all binomial linear factors are determined up to association and permutation.
		
		For the second statement, let $\mathcal S=\{\min_{i=0}^{k}(ix)\mid k\geq 10, k\in\mathbb Z\}$
		which satisfies convexity. For any two elements $\min_{i=0}^{a}(ix)$ and $\min_{i=0}^{b}(ix)$ in $\mathcal S$, we have
		$$\min_{i=0}^{a}(ix)+\min_{i=0}^{b}(ix)=\min_{i=0}^{a+b}(ix)\in \mathcal S.$$
		So $\mathcal S$ is multiplicatively closed. Moreover,
		$$\min_{i=0}^{23}(ix)=\min_{i=0}^{10}(ix)+\min_{i=0}^{13}(ix)\quad \text{and}\quad
		\min_{i=0}^{23}(ix)=\min_{i=0}^{11}(ix)+\min_{i=0}^{12}(ix)$$
		provides two distinct normal tropical polynomial decompositions. Thus, $\min_{i=0}^{23}(ix)$ is not a unique factorization polynomial over $\mathcal S$. Hence $\mathcal S$ does not satisfy the unique factorization property.
		
		For the third statement, define
		$$\mathcal S=\left\{\sum_{i=1}^{m}\min(0,c_i+1+c_i x)\;\middle|\; m\ge 1, (c_1,\ldots,c_m)\in \mathbb Z_{>0}^{m}\right\}.$$
		For any two elements $\sum_{i=1}^{m}\min(0,c_i+1+c_i x)$ and $\sum_{i=1}^{t}\min(0,d_i+1+d_i x)$ in $\mathcal S$, their product $\sum_{i=1}^{m}\min(0,c_i+1+c_ix)+\sum_{i=1}^{t}\min(0,d_i+1+d_ix)$ is again of the same form and hence belongs to $\mathcal S$. Thus $\mathcal S$ is multiplicatively closed.
		
		Moreover, every irreducible polynomial in $\mathcal S$ must be of the form $\min(0,c+1+cx)$. Consequently, for any $A\in\mathcal S$, any normal tropical polynomial decomposition can only be written as
		$A=\sum_{i=1}^{m}\min(0,c_i+1+c_i x)$.
		
		Now suppose that $A$ admits two distinct normal decompositions
		$$A=\sum_{i=1}^{m}\min(0,c_i+1+c_i x)\quad\text{and}\quad A=\sum_{i=1}^{t}\min(0,d_i+1+d_i x).$$
		Permute the factors so that $c_1\ge\cdots\ge c_m$ and $d_1\ge\cdots\ge d_t$. For all $1\leq i\leq m$, let $B^{(i)}=\min(0,c_i+1+c_i x)$. Then $\bigl((0,0),((c_i+1)/c_i,c_i)\bigr)$ is an expression of $\hat f_{B^{(i)}}$. By Lemma \ref{lem:expr},
		$$\left((0,0),((c_1+1)/c_1,c_1),\ldots,((c_m+1)/c_m,c_m)\right)$$
		is an expression of $\hat f_A$. Similarly,
		$$\left((0,0),((d_1+1)/d_1,d_1),\ldots,((d_t+1)/d_t,d_t)\right)$$
		is also an expression of $\hat f_A$.
		
		Since the two decompositions are distinct, and since $\sum_{i=1}^{m} c_i=\deg A=\sum_{i=1}^{t}d_i$, we can choose the smallest index $p$ such that $c_p\ne d_p$. Then $c_i=d_i$ for all $1\le i<p$. Set $x=\sum_{i=1}^{p-1}c_i+1/2=\sum_{i=1}^{p-1}d_i+1/2$.
		Using the two expressions for $\hat f_A$, we obtain
		$$
		\hat f_A(x)=
		\sum_{i=1}^{p-1}(c_i+1)+\frac12\cdot\frac{c_p+1}{c_p}
		=
		\sum_{i=1}^{p-1}(d_i+1)+\frac12\cdot\frac{d_p+1}{d_p}.
		$$
		Since $\sum_{i=1}^{p-1}(c_i+1)=\sum_{i=1}^{p-1}(d_i+1)$, it follows that $c_p=d_p$, contradicting $c_p\neq d_p$. Therefore, $\mathcal S$ satisfies the unique factorization property.
		
		Finally, for every $k\in\mathbb Z_{>0}$, we have $\operatorname{tdw}(\min(0,k+2+(k+1)x))=k+1>k$, which shows that $\mathcal S$ does not satisfy the bounded factorization property.
	\end{proof}
	\subsection{Tropical Polynomials of Bounded Tropical Decomposition Width}
	To prove two structural theorems (Theorems \ref{thm:fir_con} and \ref{thm:sec_con}), we first need to solve an additive combinatorics problem.
	\begin{definition2}
		\label{def:add_com}
		For positive integers $m$ and $k$, define $\mathcal{F}(m,k)$ to be the smallest integer $d$ such that every positive integer multiset $S$ satisfying
		\begin{itemize}
		\item $1\leq x\leq k$ for all $x\in S$
		\item $\sum_{x\in S}x\geq d$
		\end{itemize}
		contains a nonempty proper sub-multiset $T$ with sum divisible by $m$.
	\end{definition2}
	Since obtaining the exact value appears difficult, we instead establish upper and lower bounds using different approaches.
	\begin{lemma}
		\label{lem:add_com_upp}
		For any positive integers $m,k$, we have $\mathcal{F}(m,k)\le mk+1.$
	\end{lemma}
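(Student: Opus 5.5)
We must show: any multiset $S$ of integers in $[1,k]$ with $\sum_{x\in S}x\ge mk+1$ contains a nonempty proper sub-multiset whose sum is divisible by $m$. The plan is the classical prefix-sum pigeonhole argument, plus a small adjustment to guarantee properness.

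\textbf{Step 1: many elements.} Since every element is at most $k$ and the total sum is at least $mk+1$, we have
$$|S|\ \ge\ \frac{mk+1}{k}\ >\ m,$$
so $|S|\ge m+1$.

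\textbf{Step 2: prefix sums.} List the elements of $S$ in any order as $x_1,\dots,x_N$ with $N\ge m+1$. Consider the $m+1$ prefix sums $s_0=0,s_1,\dots,s_m$, where $s_j=x_1+\dots+x_j$. Two of them are congruent modulo $m$, say $s_i\equiv s_j$ with $0\le i<j\le m$. Then
$$T=\{x_{i+1},\dots,x_j\}$$
is nonempty and its sum $s_j-s_i$ is divisible by $m$.

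\textbf{Step 3: properness.} $T$ uses at most $m$ elements, while $S$ has at least $m+1$. Hence $T\neq S$ as multisets, so $T$ is a nonempty proper sub-multiset with sum divisible by $m$. This gives $\mathcal F(m,k)\le mk+1$.

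The only point needing care is properness, and it is handled by having $N\ge m+1$ elements rather than just $m$. Step 1 provides exactly this, so I expect no real obstacle. The edge case $m=1$ is covered too: then $|S|\ge 2$, and any single element forms a proper sub-multiset.
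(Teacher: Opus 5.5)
Your proposal is correct and follows essentially the same route as the paper: you use $s_i\le k$ to get $|S|\ge m+1$, apply pigeonhole to the $m+1$ prefix sums modulo $m$, and get properness from $|T|\le m<|S|$.
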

	\begin{proof}
		Let $S=\{s_1,\ldots,s_n\}$ be a positive integer multiset with
		$1\le s_i\le k$ for all $1\leq i\leq n$, and suppose that
		$\sum_{x\in S}x\ge mk+1$. Since $s_i\le k$, we have $n\ge m+1$.
		
		Consider the prefix sums
		$t_i=\left(\sum_{j=1}^i s_j\right)\bmod m$.
		By the pigeonhole principle, there exist $0\le l<r\le m$ such that
		$t_l=t_r$. Hence
		$$
		\sum_{j=l+1}^r s_j\equiv t_r-t_l\equiv 0\pmod m.
		$$
		Set $T=\{s_i\mid l+1\leq i\leq r\}$. Since $0<|T|\leq m<n$, $T$ gives a nonempty proper sub-multiset of $S$ whose sum is divisible by $m$. Therefore
		$\mathcal{F}(m,k)\le mk+1$.
	\end{proof}
	\begin{lemma}
		\label{lem:add_com_upp2}
		For any positive integers $m,k$, we have
		$$
		\mathcal{F}(m,k)\le
		\max\left(
		\max_{i=1}^{k}
		\left(
		\frac{(i-1)k(k+1)}{2}+1+\operatorname{lcm}(i,m)
		\right),
		k^3
		\right).
		$$
	\end{lemma}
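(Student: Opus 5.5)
The plan is to split according to the size of $m$ relative to $k^2$. Throughout, write $c_j$ for the multiplicity of the value $j\in\{1,\ldots,k\}$ in $S$.

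\emph{Small modulus.} If $mk+1\le k^3$, which holds roughly when $m<k^2$, then Lemma~\ref{lem:add_com_upp} already gives $\mathcal F(m,k)\le mk+1\le k^3$. So the $k^3$ term covers this range. Any off-by-one issue at the boundary $m\approx k^2$ will be absorbed by the $\operatorname{lcm}$ term, since $\operatorname{lcm}(i,m)\ge m$.

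\emph{Large modulus.} Now assume $m$ is large and $\sum_{x\in S}x\ge \frac{(i-1)k(k+1)}{2}+1+\operatorname{lcm}(i,m)$ for every $i$. The idea is to build a sub-multiset whose sum is exactly $\operatorname{lcm}(i,m)$ for a suitably chosen value $i$ that occurs in $S$; I would take $i$ to be a value with $c_i\ge 1$ chosen to maximize $c_i$.

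The key device is an exchange observation. For any value $j$, a bundle of $i$ copies of $j$ has sum $ij$, the same as $j$ copies of $i$. So every full bundle of $i$ copies of some value $j$ behaves like a block whose sum is a multiple of $i$, namely $j$ units of size $i$.
\begin{itemize}
\item Partition each value class into $\lfloor c_j/i\rfloor$ bundles plus fewer than $i$ leftovers.
\item The leftovers contribute at most $(i-1)\sum_{j=1}^k j=\frac{(i-1)k(k+1)}{2}$ in total.
\item Hence the bundles, all multiples of $i$, have total sum at least $\operatorname{lcm}(i,m)+1$.
\end{itemize}

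Since $\operatorname{lcm}(i,m)$ is a multiple of $i$, we then want a sub-collection of bundles (and loose copies of $i$) whose sum, measured in units of $i$, is exactly $\operatorname{lcm}(i,m)/i$. The strict inequality, total $>\operatorname{lcm}(i,m)$, guarantees that such a sub-multiset is proper. Its sum $\operatorname{lcm}(i,m)$ is divisible by $m$, as required.

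\emph{Main obstacle.} The hard step is hitting $\operatorname{lcm}(i,m)/i$ \emph{exactly*} in units, since the bundles have unit sizes $j\le k$ and cannot simply be added greedily. I would first try greedy selection of bundles until the running total is within $k$ units below the target, and then finish with single copies of $i$. This requires enough loose copies of $i$, which is exactly why $i$ is chosen with maximal multiplicity. If greedy fails, the fallback is a pigeonhole argument on prefix sums of the bundle sizes modulo $\operatorname{lcm}(i,m)/i$. That argument needs enough bundles, and I would try to extract this from the $k^3$ lower bound on the total together with $m>k^2$.
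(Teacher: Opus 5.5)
Your plan is essentially the paper's proof: fix a value $i$, group every other value $j$ into blocks of $i$ copies (sum $ij$), lose at most $\frac{(i-1)k(k+1)}{2}$ to leftovers, greedily add blocks without exceeding $\operatorname{lcm}(i,m)$, and top up with loose copies of $i$, where $\sum_{x\in S}x\ge k^3$ guarantees $c_i\ge k$ (your maximal-multiplicity choice gives $c_i\ge |S|/k\ge k$, just as the paper's choice maximizing $ic_i$ does). The greedy never fails---either some block is rejected, leaving a deficit of fewer than $k$ units, or all blocks fit and the leftover bound makes the deficit smaller than $c_i$---so the small-$m$ split via Lemma~\ref{lem:add_com_upp}, the assumption $m>k^2$, and the pigeonhole fallback are all unnecessary, as long as the copies of $i$ itself are kept out of the blocks.
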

	\begin{proof}
		Let
		$$
		B=
		\max\left(
		\max_{i=1}^{k}
		\left(
		\frac{(i-1)k(k+1)}{2}+1+\operatorname{lcm}(i,m)
		\right),
		k^3
		\right).
		$$
		Let $S$ be a positive integer multiset with $1\le x\le k$ for all
		$x\in S$, and suppose that
		$
		\sum_{x\in S}x\ge B.
		$
		For $1\le i\le k$, let
		$
		a_i=\sum_{x\in S}[x=i].
		$
		Choose $p\in\{1,\ldots,k\}$ such that $pa_p$ is maximal. Then
		$$
		pa_p\ge \frac{1}{k}\sum_{i=1}^k ia_i
		=\frac{1}{k}\sum_{x\in S}x
		\ge k^2.
		$$
		In particular,
		$
		a_p\geq k^2/p\geq k.
		$
		Put
		$
		L=\operatorname{lcm}(p,m).
		$
		
		For every $1\le q\le k$ with $q\neq p$, call a multiset of $p$ copies of $q$ \textit{a block of type} $q$. We build a multiset $U$ by processing the types in the order
		$1,2,\ldots,p-1,p+1,\ldots,k$. Initially $U=\varnothing$. For each type $q$ in this order, we repeatedly take a block of type $q$ from $S$ and add it to $U$, as long as the sum of $U$ does not exceed $L$ and $S$ contains enough copies of $q$ to form such a block.
		
		Equivalently, at step $q$, we add $b$ blocks of type $q$, where
		$b$ is the largest integer satisfying
		$$
		0\le b\le \left\lfloor \frac{a_q}{p}\right\rfloor
		\quad\text{and}\quad
		\sum_{x\in U}x+bpq\le L.
		$$
		
		There are two cases:
		
		$\textbf{Case 1}$. If for each step $q$, we add exactly $\lfloor a_q/p\rfloor$ blocks of type $q$ to $U$. Then $U$ contains exactly
		$\left\lfloor a_q/p\right\rfloor $ blocks of type $q$ for each $q\ne p$. Hence
		$$
		\begin{aligned}
		\sum_{x\in U}x+a_pp
		&=
		\sum_{q\ne p}\left\lfloor \frac{a_q}{p}\right\rfloor pq+a_pp\\
		&=
		\sum_{q\ne p}(a_q-(a_q\bmod p))q+a_pp\\
		&=
		\sum_{q=1}^k a_qq-\sum_{q\ne p}(a_q\bmod p)q\\
		&\ge
		\sum_{x\in S}x-(p-1)\sum_{q=1}^k q\\
		&\ge
		\left(\frac{(p-1)k(k+1)}{2}+1+L\right)
		-\frac{(p-1)k(k+1)}{2}\\
		&>L.
		\end{aligned}
		$$
		Since both $L$ and $\sum_{x\in U}x$ are divisible by $p$,
		$
		w=(L-\sum_{x\in U}x)/p
		$
		is a nonnegative integer. The inequality above gives $w<a_p$. Therefore we may add
		$w$ copies of $p$ to $U$, obtaining a sub-multiset $T\subseteq S$ with
		$
		\sum_{x\in T}x=L.
		$
		
		$\textbf{Case 2}.$ If for some step $q$, we add less than $\lfloor a_q/p\rfloor$ blocks of $q$ to $U$. At the first such step, say for
		$q$, let $U_0$ be the multiset before adding blocks of type $q$. Then
		$$
		\sum_{x\in U_0}x+
		\left\lfloor \frac{a_q}{p}\right\rfloor pq>L.
		$$
		Since we added the maximum possible number of blocks, the new
		multiset $U_1$ satisfies
		$$
		\begin{aligned}
		\sum_{x\in U_1}x
		&=
		\sum_{x\in U_0}x+
		\left\lfloor
		\frac{L-\sum_{x\in U_0}x}{pq}
		\right\rfloor pq\\
		&=
		L-
		\left(
		\left(L-\sum_{x\in U_0}x\right)\bmod pq
		\right)\\
		&>L-pq\\
		&\ge L-pk.
		\end{aligned}
		$$
		Since the total sum only increases afterwards, the final $U$ satisfies
		$
		\sum_{x\in U}x>L-pk.
		$
		Again $p\mid L-\sum_{x\in U}x$, so
		$
		w=(L-\sum_{x\in U}x)/p
		$
		is a nonnegative integer. From the previous inequality,
		$
		w<k\leq a_p.
		$
		Thus we may add $w$ copies of $p$ to $U$, obtaining a sub-multiset
		$T\subseteq S$ with
		$
		\sum_{x\in T}x=L.
		$
		
		In both cases, we obtain a sub-multiset $T$ such that
		$
		\sum_{x\in T}x=L=\operatorname{lcm}(p,m),
		$
		so
		$
		m\mid \sum_{x\in T}x.
		$
		Moreover, $T$ is nonempty. It is proper because
		$$
		\sum_{x\in T}x=L
		<
		\frac{(p-1)k(k+1)}{2}+1+L
		\le B\le \sum_{x\in S}x.
		$$
		Therefore every such $S$ contains a nonempty proper sub-multiset whose
		sum is divisible by $m$. Hence $\mathcal{F}(m,k)\le B$.
	\end{proof}
	\begin{lemma}
		For positive integers $m,k$, we have
		$
		\mathcal{F}(m,k)\geq
		\max_{i=1}^{k}\left(\operatorname{lcm}(i,m)\right)+1.
		$
	\end{lemma}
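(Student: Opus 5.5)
We need to show that for each $i\in\{1,\dots,k\}$ some multiset $S$ with elements in $[1,k]$ and sum at least $\operatorname{lcm}(i,m)$ has no nonempty proper sub-multiset with sum divisible by $m$. Then $\mathcal F(m,k)>\operatorname{lcm}(i,m)$ for every $i$, which gives the claim. Since $\mathcal F(m,k)$ is the smallest threshold that forces such a sub-multiset, exhibiting one bad multiset whose sum is exactly $\operatorname{lcm}(i,m)$ is enough.

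The construction is to fix $i$, put $L=\operatorname{lcm}(i,m)$, and let $S$ consist of $L/i$ copies of $i$. Every element lies in $[1,k]$, and the total sum is $L$. Every nonempty proper sub-multiset has the form $T=$ $j$ copies of $i$ with $1\le j<L/i$, so its sum is $ji$, where $0<ji<L$.

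It remains to check that $m\nmid ji$ for these $j$. Suppose instead that $m\mid ji$. Then $ji$ is a common multiple of $i$ and $m$, hence a multiple of $L$. But $0<ji<L$, which is a contradiction. So $S$ has no nonempty proper sub-multiset with sum divisible by $m$, even though its sum equals $L$.

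Therefore $\mathcal F(m,k)\ge L+1$ for each $i$, and taking the maximum over $i$ finishes the proof.

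One edge case needs attention. When $L/i=1$, i.e. $i$ is a multiple of $m$, the multiset is $S=\{i\}$. It has no nonempty proper sub-multiset at all, so the argument still applies. The only mildly delicate point is reading the definition correctly: $\mathcal F$ is the least $d$ such that every multiset with sum at least $d$ is good, so a bad multiset of sum $L$ forces $d>L$. No step here seems to be a real obstacle.
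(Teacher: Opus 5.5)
Your proposal is correct and follows essentially the same route as the paper. The paper also takes $\operatorname{lcm}(p,m)/p$ copies of a single $p$ (only for the maximizing $p$) and rules out $m\mid rp$ for $0<r<L/p$. It does this via $m\mid rp \iff \frac{m}{\gcd(m,p)}\mid r$ together with $\frac{m}{\gcd(m,p)}=\frac{L}{p}$, which is equivalent to your ``every common multiple is a multiple of the lcm'' argument.
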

	\begin{proof}
		Let $p$ be an integer with $1\leq p\leq k$ such that
		$
		\operatorname{lcm}(p,m)=\max_{i=1}^{k}\operatorname{lcm}(i,m).
		$
		Put
		$
		L=\operatorname{lcm}(p,m),
		$
		and let $S$ be the multiset consisting of $L/p$ copies of $p$. Then
		$
		\sum_{x\in S}x=L.
		$
		
		Every nonempty proper sub-multiset $T$ of $S$ consists of $r$ copies of
		$p$ for some integer
		$
		0<r<L/p.
		$
		If $m\mid \sum_{x\in T}x$, then
		$
		m\mid rp.
		$
		Equivalently,
		$
		\frac{m}{\gcd(m,p)}\mid r.
		$
		But
		$$
		\frac{m}{\gcd(m,p)}=\frac{\operatorname{lcm}(p,m)}{p}=\frac{L}{p},
		$$
		which contradicts $0<r<L/p$. Hence no nonempty proper sub-multiset of
		$S$ has sum divisible by $m$.
		
		Thus there exists a multiset $S$ with
		$
		\sum_{x\in S}x=L
		$
		which does not satisfy the property in Definition \ref{def:add_com}. Therefore
		$$
		\mathcal{F}(m,k)\geq L+1
		=
		\max_{i=1}^{k}\left(\operatorname{lcm}(i,m)\right)+1.
		$$
	\end{proof}
	For the second structural theorem (Theorem \ref{thm:sec_con}), we need to prove that for all positive integers $m,k$ with $\operatorname{lcm}_{i=1}^{k}i\big|m$, we have $\mathcal{F}(m,k)\leq 2m$. Lemma \ref{lem:add_com_upp2} gives $\mathcal{F}(m,k)\leq m+k^3$, which proves the desired bound whenever $k^3\leq \operatorname{lcm}_{i=1}^{k}i$. This inequality holds for all sufficiently large $k$, but fails for some small values of $k$, so we give a more precise bound for $\mathcal{F}$ in the small-$k$ regime.
	
	To obtain an absolute lower bound for $\operatorname{lcm}_{i=1}^{k}i$ for large $k$, rather than one in the asymptotic sense, we will present some crude but tight lower bounds.
	\begin{lemma}
		\label{lem:numb}
		For positive integers $k,d$ with $k\geq d$, $$\operatorname{lcm}_{i=1}^{k} i\geq \prod_{i=d}^{k}\frac{i}{(k-i)!}.$$
	\end{lemma}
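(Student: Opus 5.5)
The plan is to bound $\operatorname{lcm}_{i=d}^{k} i$ from below, since $\operatorname{lcm}_{i=1}^{k} i\ge \operatorname{lcm}_{i=d}^{k} i$ (the latter divides the former). The underlying fact is that the lcm of a set of integers is at least their product divided by the product of all pairwise gcds. Each pairwise gcd is then controlled by the difference of the two integers.

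\textbf{Step 1 (general inequality).} For positive integers $n_1,\ldots,n_r$ we show
$$
\prod_{i=1}^{r} n_i \;\Big|\; \operatorname{lcm}(n_1,\ldots,n_r)\cdot\prod_{1\le i<j\le r}\gcd(n_i,n_j).
$$
We compare $p$-adic valuations prime by prime. Fix $p$ and let $e_1\ge e_2\ge\cdots\ge e_r$ be the valuations $v_p(n_i)$ after reordering. The left side has valuation $\sum_i e_i$, and $v_p(\operatorname{lcm})=e_1$. The gcd product has valuation $\sum_{i<j}\min(e_i,e_j)$. This sum is at least $\sum_{j\ge 2}\min(e_1,e_j)=\sum_{j\ge 2}e_j$, because all terms are nonnegative. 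Adding $e_1$ gives the claim, and in particular the inequality $\operatorname{lcm}\ge \prod n_i/\prod_{i<j}\gcd(n_i,n_j)$.

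\textbf{Step 2 (bounding the gcds).} Apply Step 1 to the integers $d,d+1,\ldots,k$. For $d\le i<j\le k$ we have $\gcd(i,j)\mid j-i$, so $\gcd(i,j)\le j-i$. For fixed $i$, taking the product over $j=i+1,\ldots,k$ gives $\prod_{j=i+1}^{k}(j-i)=(k-i)!$. Hence
$$
\prod_{d\le i<j\le k}\gcd(i,j)\le\prod_{i=d}^{k}(k-i)!,
$$
where the factor for $i=k$ is the empty product, which equals $0!=1$.

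\textbf{Step 3 (combining).} Steps 1 and 2 give
$$
\operatorname{lcm}_{i=1}^{k} i\ \ge\ \operatorname{lcm}_{i=d}^{k} i\ \ge\ \frac{\prod_{i=d}^{k} i}{\prod_{i=d}^{k}(k-i)!}\ =\ \prod_{i=d}^{k}\frac{i}{(k-i)!}.
$$
The only step that needs care is the valuation inequality in Step 1. Once the valuations are sorted, it is the elementary bound above, so I do not expect a real obstacle.
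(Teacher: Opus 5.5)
Your proof is correct, but it takes a different route from the paper. The paper telescopes: with $L_i=\operatorname{lcm}_{j=i}^{k} j$ it uses the exact identity $L_i/L_{i+1}=i/\gcd(i,L_{i+1})$. It then bounds $\gcd(i,L_{i+1})\le\gcd\bigl(i,\prod_{j=i+1}^{k}j\bigr)=\gcd\bigl(i,(k-i)!\bigr)\le (k-i)!$ using the single congruence $\prod_{j=i+1}^{k}j\equiv (k-i)!\pmod i$, and multiplies over $i=d,\ldots,k$.

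You instead prove a symmetric general inequality, $\operatorname{lcm}(n_1,\ldots,n_r)\ge\prod_i n_i/\prod_{i<j}\gcd(n_i,n_j)$, by comparing $p$-adic valuations. You then bound each pairwise gcd by the difference $j-i$.

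The two arguments are closely linked. For fixed $i$, your factor $\prod_{j>i}\gcd(i,j)$ is at least the paper's $\gcd\bigl(i,\prod_{j>i}j\bigr)$, since $\gcd(a,bc)$ divides $\gcd(a,b)\gcd(a,c)$. So the paper's intermediate bound is marginally sharper, but both are relaxed to the same $(k-i)!$ and give identical final bounds.

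Your version yields a reusable lemma valid for any finite set of positive integers, and it makes explicit that the loss comes from gcds dividing differences. The paper's version stays within elementary gcd and divisibility manipulations and never needs prime factorizations.
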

	\begin{proof}
		Let $L_i=\operatorname{lcm}_{j=i}^{k} j$ for all $1\leq i\leq k$, and set $L_{k+1}=1$. For each $1\leq i\leq k$, we have
		$$\prod_{j=i+1}^k j \equiv \prod_{j=i+1}^k (j-i) \equiv (k-i)! \pmod i,$$
		so
		$$\gcd(i,L_{i+1})\leq \gcd(i,\prod_{j=i+1}^{k}j)=\gcd(i,(k-i)!)\leq (k-i)!.$$
		Therefore
		$$\frac{L_i}{L_{i+1}}=\frac{i}{\operatorname{gcd}(i,L_{i+1})}\geq \frac{i}{(k-i)!},$$
		and multiplying over $i=d,\ldots,k$ gives
		$$\operatorname{lcm}_{i=1}^{k}i=L_1\geq L_d\geq \prod_{i=d}^k \frac{i}{(k-i)!}.$$
	\end{proof}
	For the case $1\leq k\leq 6$, the greedy construction argument of Lemma \ref{lem:add_com_upp2} does not apply. We instead give a different greedy construction argument to handle these cases.
	\begin{lemma}
		\label{lem:dif_est}
		Let $m,k$ be positive integers such that
		$
		\operatorname{lcm}_{i=1}^{k}i\mid m.
		$
		Let $S$ be a positive integer multiset with all elements in $[1,k]$ and
		$
		\sum_{x\in S}x>m.
		$
		Suppose that no nonempty proper sub-multiset of $S$ has sum divisible
		by $m$. Define
		$
		a_i=\sum_{x\in S}[x=i]
		$.
		Let $b=(b_1,\ldots,b_t)$ be a sequence of distinct elements from
		$[1,k]$, and set
		$$
		c_i=\operatorname{lcm}_{j=1}^{i}b_j,
		\quad
		1\leq i\leq t.
		$$
		Then
		$$
		\sum_{i=1}^{t}
		c_i\left\lfloor \frac{b_i a_{b_i}}{c_i}\right\rfloor
		<m.
		$$
	\end{lemma}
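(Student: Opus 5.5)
We argue by contradiction with a greedy construction that imitates Case 1 and Case 2 of Lemma \ref{lem:add_com_upp2}, processing the values $b_1,b_2,\ldots,b_t$ in order. Assume
$$
\sum_{i=1}^{t}c_i\left\lfloor \frac{b_ia_{b_i}}{c_i}\right\rfloor\ge m.
$$
We build a sub-multiset $U\subseteq S$ whose sum is exactly $m$. Since $\sum_{x\in S}x>m$, such a $U$ is proper. It is nonempty because $m>0$. Its sum is divisible by $m$, which contradicts the hypothesis on $S$.

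\textbf{The key observation.} For each $i$, $c_i$ is a multiple of $b_i$, so a \emph{block} of $c_i/b_i$ copies of $b_i$ has sum $c_i$. There are $\lfloor b_ia_{b_i}/c_i\rfloor=\lfloor a_{b_i}/(c_i/b_i)\rfloor$ disjoint such blocks available in $S$. The $b_i$ are distinct, so blocks of different indices use disjoint elements of $S$. Moreover $c_1\mid c_2\mid\cdots\mid c_t$, and every $c_i$ divides $\operatorname{lcm}_{j=1}^{k}j$, hence divides $m$.

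\textbf{The greedy procedure.} Start with $U=\varnothing$, whose sum $s=0$ is divisible by $c_1$. Process $i=1,\ldots,t$. At step $i$ we maintain the invariant that $s$ is divisible by $c_{i-1}$ (with $c_0:=1$), and we add as many blocks of type $i$ (each of sum $c_i$) as possible without exceeding $m$.

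If at some step we can reach $s=m$, we stop. Otherwise, suppose we cannot add all $\lfloor b_ia_{b_i}/c_i\rfloor$ blocks at step $i$. Then after adding the maximal number we have $m-c_i<s<m$. We also have $c_{i-1}\mid s$ and $c_{i-1}\mid m$, so $m-s$ is a positive multiple of $c_{i-1}$.

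\textbf{The main obstacle.} We must close the gap $m-s$ using single elements rather than whole blocks. Since $m-s<c_i$, the gap is less than one block of type $i$. The natural fix is to use individual copies of $b_i$ instead of blocks. It is not enough that we could not add all blocks; the unused copies of $b_i$ must be enough to cover the gap, and that needs care. First note that $b_i$ divides $c_i$ and $m$, but not necessarily $s$, because $s$ is only guaranteed divisible by $c_{i-1}$. So $m-s$ need not be a multiple of $b_i$.

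To repair this, we change the granularity. At step $i$ we add units of $\operatorname{lcm}(c_{i-1},b_i)/b_i=c_i/b_i$ copies. Alternatively, we add single copies of $b_i$ only when the current sum is divisible by $c_i$. Here is the cleaner argument. Before step $i$, $s\equiv 0\pmod{c_{i-1}}$. Adding $r$ copies of $b_i$ changes $s$ by $rb_i$, and $s+rb_i\equiv 0\pmod{c_i}$ is solvable in $r$ because $\gcd(b_i,c_i)$ divides $c_{i-1}$, which divides $s$; more precisely, $c_i=\operatorname{lcm}(c_{i-1},b_i)$. We can therefore take the smallest such $r<c_i/b_i$, which reaches a multiple of $c_i$ while staying below $m$. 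We then continue adding blocks of sum $c_i$.

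If copies of $b_i$ run out before the running sum reaches $m$, count how many we used. We used at most $a_{b_i}$ copies but at least enough that the total contributed by step $i$ is at least $c_i\lfloor b_ia_{b_i}/c_i\rfloor-c_i+(\text{alignment})$. The exact bookkeeping here, showing that the alignment loss at each step is charged against the floor, is the step I expect to be delicate. I would instead organize it as follows: the sum contributed by step $i$ is exactly $c_i$ times the number of full blocks, and the alignment at step $i+1$ uses elements $b_{i+1}$ only.

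If every step completes, the total is at least $\sum_i c_i\lfloor b_ia_{b_i}/c_i\rfloor\ge m$ minus alignment costs. Since all partial sums are multiples of $c_{i}$ dividing $m$ and each increment is at most $c_i\le m$, the running sum hits $m$ exactly at some point. Hence there is a nonempty proper sub-multiset with sum $m$, which gives the contradiction.
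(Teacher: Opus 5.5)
\textbf{There is a genuine gap, and it comes from the order in which you process the types.}

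You go through $i=1,2,\ldots,t$, i.e.\ increasing $c_i$. Then the running sum $s$ before step $i$ is only guaranteed divisible by $c_{i-1}$, so adding blocks of sum $c_i$ can jump over $m$.

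Your proposed repair does not work. The congruence $s+rb_i\equiv 0\pmod{c_i}$ is solvable only if $\gcd(b_i,c_i)=b_i$ divides $s$. But $b_i$ need not divide $c_{i-1}$; if it did, we would have $c_i=c_{i-1}$. So your justification that ``$\gcd(b_i,c_i)$ divides $c_{i-1}$'' is false.

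A concrete failure: take $k=3$, $m=6$, $S=\{2,3,3\}$, $b=(2,3)$. Then $c=(2,6)$ and $\sum_i c_i\lfloor b_ia_{b_i}/c_i\rfloor=2+6=8\ge m$, while $\sum_{x\in S}x=8>m$. Your construction is supposed to find a sub-multiset of sum $m$ from exactly these facts, but it does not:
\begin{itemize}
\item it adds the block $\{2\}$;
\item it cannot add the block $\{3,3\}$, since that would give $8>6$;
\item $2+3r\equiv 0\pmod 6$ has no solution, so alignment is impossible.
\end{itemize}
It therefore stalls at $s=2$, even though $\{3,3\}$ has sum exactly $m$. Your closing claim that ``the running sum hits $m$ exactly'' fails for the same reason. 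In addition, the ``alignment costs'' are never bounded, so even the lower bound on the final sum is not established.

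\textbf{The missing idea is to reverse the order.} Process $i=t,t-1,\ldots,1$, adding whole blocks only.
\begin{itemize}
\item Since $c_q\mid c_j$ for all $j>q$, the running sum before step $q$ is a multiple of $c_q$.
\item Since also $c_q\mid m$, the gap $m-s$ is a nonnegative multiple of $c_q$.
\item If at some first step $q$ adding all $F_q=\lfloor b_qa_{b_q}/c_q\rfloor$ blocks would exceed $m$, then adding $(m-s)/c_q<F_q$ blocks closes the gap exactly.
\item Otherwise all blocks fit, so the final sum is both $\le m$ and $\ge\sum_i c_iF_i\ge m$.
\end{itemize}
Either way you obtain a sub-multiset with sum exactly $m$. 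It is nonempty because $m>0$ and proper because $\sum_{x\in S}x>m$, which gives the contradiction. No single-element alignment is ever needed.

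The rest of your setup (the contradiction framework, the block observation, the divisibility chain $c_1\mid\cdots\mid c_t\mid m$) is fine and is exactly what the paper uses.
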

	\begin{proof}
		Suppose, to the contrary, that
		$$
		\sum_{i=1}^{t}
		c_i\left\lfloor\frac{b_i a_{b_i}}{c_i}\right\rfloor
		\geq m.
		$$
		
		For each $1\leq q\leq t$, since $b_q\mid c_q$, $c_q/b_q$ is a positive integer. Define \textit{a block of type} $q$ to be a multiset consisting of $c_q/b_q$ copies of
		$b_q$, its total sum is $c_q$. Set
		$$
		F_q = \left\lfloor \frac{b_q a_{b_q}}{c_q}\right\rfloor,
		$$
		the maximum number of blocks of type $q$ that can be chosen from $S$.
		
		We now build a multiset $U$ by processing the types in the order $t,t-1,\ldots,1$, similarly to the greedy construction in Lemma \ref{lem:add_com_upp2}. Initially $U=\varnothing$. For each type $q$ in this order, we repeatedly take a block of type $q$ from $S$ and add it to $U$, as long as the sum of $U$ does not exceed $m$ and $S$ contains enough copies of $q$ to form such a block.
		
		As in the proof of Lemma \ref{lem:add_com_upp2}, we distinguish two cases.
		
		\textbf{Case 1.} If there exists a step $q$ where adding all $F_q$ blocks would exceed $m$.
		Let $q$ be the first such step, and let $U_0$ be the multiset before this step.
		Then $\sum_{x\in U_0}x$ is a multiple of $c_q$. This is because all previously chosen blocks are of types $j>q$, and for each such type, the sum of its blocks is a multiple of $c_j$, with $c_q\mid c_j$.
		Moreover, $c_q\mid m$ because $c_q\mid\operatorname{lcm}_{i=1}^{k}i$ and the latter divides $m$.
		Hence $m-\sum_{x\in U_0}x$ is a nonnegative multiple of $c_q$.
		
		Since adding all $F_q$ blocks would exceed $m$, we have
		$\sum_{x\in U_0}x+c_qF_q>m$. This implies that
		$$
		0\le\frac{m-\sum_{x\in U_0}x}{c_q}<F_q.
		$$
		Thus we may add exactly $(m-\sum_{x\in U_0}x)/c_q$ blocks of type $q$ to $U_0$, 
		obtaining a sub-multiset $T\subseteq S$ with $\sum_{x\in T}x=m$.
		
		\textbf{Case 2.} If no such step occurs. Then we add all \(F_q\) blocks of every type. The total sum of $U$ is at most $m$ by construction, and at least $m$ since
		$$\sum_{x\in U}x=\sum_{i=1}^{t} c_i F_i = \sum_{i=1}^{t} c_i \left\lfloor \frac{b_i a_{b_i}}{c_i} \right\rfloor \ge m,$$
		so it equals $m$. Hence $U$ gives a sub-multiset of $S$ with total sum $m$.
		
		In both cases we obtain $T\subseteq S$ with $\sum_{x\in T}x=m$.
		Since $m>0$, $T$ is nonempty, and since $\sum_{x\in S}x>m$, $T$ is proper.
		This contradicts the hypothesis that no nonempty proper sub-multiset has sum divisible by $m$.
		Therefore $$\sum_{i=1}^{t} c_i\left\lfloor \frac{b_i a_{b_i}}{c_i}\right\rfloor < m.$$
	\end{proof}
	Combining Lemmas \ref{lem:add_com_upp2} and \ref{lem:dif_est} yields our final upper bound for $\mathcal F$ in the case $\operatorname{lcm}_{i=1}^{k} i\mid m$.
	\begin{lemma}
		\label{lem:add_com_upp3_tig}
		For positive integers $m,k$ with
		$
		\operatorname{lcm}_{i=1}^{k} i\mid m,
		$
		we have
		$
		\mathcal{F}(m,k)\leq 2m.
		$
	\end{lemma}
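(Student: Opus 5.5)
We split on $k$. For large $k$ we use Lemma \ref{lem:add_com_upp2}; for $1\le k\le 6$ we use Lemma \ref{lem:dif_est}. Throughout, $M=\operatorname{lcm}_{i=1}^{k}i$ divides $m$.

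\textbf{Large $k$.} Since $i\le k$ and $M\mid m$, we have $\operatorname{lcm}(i,m)=m$ for every $1\le i\le k$. Hence the bound of Lemma \ref{lem:add_com_upp2} reads
$$\mathcal F(m,k)\le \max\left(\frac{(k-1)k(k+1)}{2}+1+m,\;k^3\right)\le m+k^3.$$
So it suffices to have $k^3\le M$, since then $m+k^3\le 2m$ (note $m\ge M$). To get $k^3\le M$ for all $k\ge 7$, we apply Lemma \ref{lem:numb} with a suitable $d$, say $d=k-2$ or $d=k-3$, to obtain an explicit polynomial lower bound for $M$. For example, with $d=k-3$ the lemma gives
$$M\ge \frac{(k-3)(k-2)(k-1)k}{3!\,2!\,1!}=\frac{k(k-1)(k-2)(k-3)}{12}.$$
This exceeds $k^3$ once $k$ is moderately large (roughly $k\ge 17$). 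The finitely many remaining values of $k$ from $7$ up to that threshold we check directly: $M(7)=420\ge 343$, $M(8)=840\ge 512$, $M(9)=2520\ge 729$, and $M$ grows fast enough to cover the rest. We also check $k^3$ against the slightly sharper expression $\frac{(k-1)k(k+1)}2$ where convenient.

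\textbf{Small $k$ ($1\le k\le 6$).} Argue by contradiction. Suppose $S$ has all elements in $[1,k]$, $\sum_{x\in S}x\ge 2m$, and no nonempty proper sub-multiset with sum divisible by $m$. Let $a_i$ be the multiplicities. Lemma \ref{lem:dif_est} applies to every ordering $b$ of distinct values, giving
$$\sum_{i=1}^{t} c_i\left\lfloor \frac{b_i a_{b_i}}{c_i}\right\rfloor<m.$$
Since $\lfloor y/c\rfloor c> y-c$, each such ordering yields the linear inequality
$$\sum_{i=1}^{t} b_i a_{b_i}<m+\sum_{i=1}^{t}(c_i-\text{slack}).$$
More precisely, it shows that every value $b_i$ contributes at most about $c_i-1$ beyond the full blocks, so the total sum is below $m+\sum_i (c_i-b_i)$ for a single well-chosen ordering.

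The natural choice is $b=(1,2,\dots,k)$. Then $c_1=1$, so the value $1$ contributes all of $a_1$ in full blocks. The waste for value $b_i$ is at most $c_i-b_i$, where $c_i=\operatorname{lcm}(1,\dots,i)$, giving
$$\sum_{x\in S}x< m+\sum_{i=2}^{k}(c_i-i).$$
For $k=6$ this is $m+(0+3+8+55+54)=m+120$, which is at most $2m$ since $m\ge 60$ only gives $m+120\le 3m$. That is not enough, so we must combine several orderings, e.g. $b$ starting at $6$, $(6,3,2,\dots)$, or $(5,4,3,\dots)$, and take the best bound, or use both $\sum_i$ inequalities together.

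\textbf{Main obstacle.} The main obstacle is exactly this small-$k$ casework. We must find, for each $k\le 6$ and each residue pattern of the $a_i$, an ordering $b$ whose waste $\sum_i c_i\{b_ia_{b_i}/c_i\}$ is at most $M\le m$. We would try first to order so that $c_i$ stays small for the values that carry remainders, and if necessary to split on which $a_i$ are small (below $c_i/b_i$) and apply different orderings in each case.
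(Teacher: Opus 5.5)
Your large-$k$ half is the paper's argument. With $\operatorname{lcm}_{i=1}^{k}i\mid m$, Lemma~\ref{lem:add_com_upp2} collapses to $\mathcal F(m,k)\le m+k^3$, and Lemma~\ref{lem:numb} with $d=k-3$ gives $\operatorname{lcm}_{i=1}^{k}i\ge k(k-1)(k-2)(k-3)/12$. One correction: this quartic beats $k^3$ only from $k=18$ on (it fails at $k=17$), so your direct check must run through $k=17$. That is immediate from $\operatorname{lcm}_{i=1}^{13}i=360360>17^3$.

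The small-$k$ half, however, is not a proof. You show that the single ordering $b=(1,\dots,k)$ is too weak for $k=6$ and then stop, leaving the casework as the ``main obstacle''. Less is open than you suggest. The estimate $\sum_{x\in S}x<m+\sum_i(c_i-b_i)$ from $b=(1,\dots,k)$ already contradicts $\sum_{x\in S}x\ge 2m$ in two regimes:
\begin{itemize}
\item for $k\le 4$, where the losses $0,0,3,11$ are at most $m\ge 1,2,6,12$;
\item for $k\in\{5,6\}$ whenever $m\ge 120$, where the losses are $66$ and $120$.
\end{itemize}
What remains is $k\in\{5,6\}$ with $m=60$, and there no single ordering can succeed. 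For any ordering $b$ of all of $\{1,\dots,k\}$, the last position has $c_t=60$ and loses up to $60-b_t$. The worst-case loss at the penultimate position already exceeds the leftover slack $b_t$. Hence $\sum_i(c_i-b_i)>60$ always, and your remainder estimate cannot close this case.

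The missing idea is that Lemma~\ref{lem:dif_est} does not require $b$ to exhaust $[1,k]$. You may therefore use several \emph{partial} orderings, each avoiding positions with large $c_i$, and add the resulting inequalities. The paper takes $b=(1,2,4,5)$, $(1,3,6,5)$, $(1,3,6,2,4)$, whose largest $c_i$ are only $20$, $30$, $12$. Bounding the floors gives $a_1+2a_2+4a_4+5a_5<m+15$, $a_1+3a_3+6a_6+5a_5<m+25$, and $a_1+2a_2+3a_3+4a_4+6a_6<m+12$. Every value $2,\dots,6$ appears in exactly two of these and $1$ appears in all three. Adding them gives $2\sum_i ia_i\le 3a_1+4a_2+6a_3+8a_4+10a_5+12a_6<3m+52$, i.e.\ $\sum_{x\in S}x<\tfrac{3}{2}m+26<2m$ since $m\ge 60$. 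The case $k=5$ is $a_6=0$, and the paper handles $k\le 4$ the same way with one or two orderings. The double covering is exactly what pays for the factor $2$ in $2m$ while keeping every floor loss small.

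Your plan of choosing a good ordering per instance is consistent with this: for each such $S$, one of these three orderings, completed arbitrarily, has $\sum_i c_i\lfloor b_ia_{b_i}/c_i\rfloor\ge m$. But finding such orderings and proving they always suffice is the entire content of the small-$k$ case, and the proposal does not do it.
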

	\begin{proof}
		We consider two cases for $k$:
		
		\textbf{Case 1}. If $k\geq 7$, let $L_k=\operatorname{lcm}_{i=1}^{k}i$. We first note that $L_k\geq k^3$ for all $k\geq 7$. Indeed, by Lemma
		\ref{lem:numb},
		$$
		L_k\geq
		\prod_{i=k-3}^{k}\frac{i}{(k-i)!}
		=
		\frac{k(k-1)(k-2)(k-3)}{12},
		$$
		which is at least $k^3$ for $k\geq 18$; the remaining cases
		$7\leq k\leq 17$ are checked directly.
		
		Since $L_k\mid m$, we have $\operatorname{lcm}(i,m)=m$ for all
		$1\leq i\leq k$. Therefore, by Lemma \ref{lem:add_com_upp2},
		$$
		\begin{aligned}
			\mathcal{F}(m,k)&\leq
			\max\left(
			\max_{i=1}^{k}\left(\frac{(i-1)k(k+1)}{2}+1+m\right),
			k^3
			\right)\\
			&=\max\left(\frac{(k-1)k(k+1)}{2}+1+m,k^3\right)\\
			&\leq m+k^3\\
			&\leq m+L_k\\
			&\leq 2m.
		\end{aligned}
		$$
		
		\textbf{Case 2}. If $k\leq 6$, let $S$ be a positive integer multiset with all elements in $[1,k]$ and
		$
		\sum_{x\in S}x\geq 2m>m.
		$
		Put
		$
		a_i=\sum_{x\in S}[x=i]
		$
		and suppose, for contradiction, that no nonempty proper sub-multiset of
		$S$ has sum divisible by $m$. We repeatedly apply Lemma \ref{lem:dif_est}.
		
		For $k=1$, taking $b=(1)$ gives
		$$
		a_1<m,
		$$
		contradicting $\sum_{x\in S}x=a_1\geq 2m$.
		
		For $k=2$, taking $b=(1,2)$ gives
		$$
		a_1+2a_2<m,
		$$
		contradicting $\sum_{x\in S}x=a_1+2a_2\geq 2m$.
		
		For $k=3$, taking $b=(1,2)$ and $b=(1,3)$ gives
		$$
		a_1+2a_2<m,
		\qquad
		a_1+3a_3<m.
		$$
		Hence
		$$
		a_1+2a_2+3a_3
		\leq
		2a_1+2a_2+3a_3
		<2m,
		$$
		contradicting $\sum_{x\in S}x=a_1+2a_2+3a_3\geq 2m$.
		
		For $k=4$, taking $b=(1,2,4)$ and $b=(1,3)$ gives
		$$
		a_1+2a_2+4a_4<m,
		\qquad
		a_1+3a_3<m.
		$$
		Thus
		$$
		a_1+2a_2+3a_3+4a_4
		\leq
		2a_1+2a_2+3a_3+4a_4
		<2m,
		$$
		again a contradiction.
		
		For $k=6$, taking
		$b=(1,2,4,5)$, $b=(1,3,6,5)$, and $b=(1,3,6,2,4)$ gives
		$$
		a_1+2a_2+4a_4+20\left\lfloor\frac{a_5}{4}\right\rfloor<m,\qquad
		a_1+3a_3+6a_6+30\left\lfloor\frac{a_5}{6}\right\rfloor<m,
		$$
		and
		$$
		a_1+3a_3+6a_6
		+6\left\lfloor\frac{a_2}{3}\right\rfloor
		+12\left\lfloor\frac{a_4}{3}\right\rfloor<m.
		$$
		Using
		$$
		5a_5\leq 20\left\lfloor\frac{a_5}{4}\right\rfloor+15,
		\qquad
		5a_5\leq 30\left\lfloor\frac{a_5}{6}\right\rfloor+25,
		$$
		and
		$$
		2a_2+4a_4
		\leq
		6\left\lfloor\frac{a_2}{3}\right\rfloor
		+
		12\left\lfloor\frac{a_4}{3}\right\rfloor
		+12,
		$$
		we obtain
		$$
		a_1+2a_2+4a_4+5a_5<m+15,
		\qquad
		a_1+3a_3+6a_6+5a_5<m+25,
		$$
		and
		$$
		a_1+3a_3+6a_6+2a_2+4a_4<m+12.
		$$
		Adding these three inequalities gives
		$$
		3a_1+4a_2+6a_3+8a_4+10a_5+12a_6<3m+52.
		$$
		Therefore
		$$
		a_1+2a_2+3a_3+4a_4+5a_5+6a_6
		<
		\frac{3}{2}m+26.
		$$
		Since $\operatorname{lcm}_{i=1}^{6}i=60\mid m$, we have $m\geq 60$, and
		hence
		$$
		\sum_{i=1}^{6}ia_i<\frac{3}{2}m+26<2m.
		$$
		This contradicts $\sum_{x\in S}x=\sum_{i=1}^{6}ia_i\geq 2m$.
		
		Finally, since
		$
		\operatorname{lcm}_{i=1}^{5}i
		=
		\operatorname{lcm}_{i=1}^{6}i,
		$
		the case $k=5$ follows from the case $k=6$ by setting $a_6=0$.
		
		Thus in every case there exists a nonempty proper sub-multiset $T$ with
		$
		m\mid \sum_{x\in T}x.
		$
		Therefore $\mathcal{F}(m,k)\leq 2m$.
	\end{proof}
	\begin{corollary}
		\label{cor:sub_find}
		For positive integers $m,k$, let $S$ be an integer multiset with
		$|x|\leq k$ for all $x\in S$ and
		$
		\sum_{x\in S}x=cm
		$
		for some integer $c>k$. Then $S$ contains a sub-multiset $T$ such that
		$$
		m\mid \sum_{x\in T}x
		\quad\text{and}\quad
		0<\sum_{x\in T}x<\sum_{x\in S}x.
		$$
	\end{corollary}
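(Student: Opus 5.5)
The plan is to reduce the corollary to a walk argument on partial sums, in the spirit of the prefix-sum pigeonhole proof of Lemma \ref{lem:add_com_upp}. The sub-multiset $T$ will be a contiguous block of a well-chosen ordering of $S$. Only the conditions $m\mid\sum_{x\in T}x$ and $0<\sum_{x\in T}x<cm$ matter, so zeros can be placed anywhere in the ordering and ignored.

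\textbf{Step 1 (balanced ordering).} I would order the elements $x_1,\dots,x_n$ of $S$ greedily. If the current partial sum $s$ is negative, append a remaining positive element. Otherwise append a remaining nonpositive element if one exists, and a positive one if none does.

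The remaining elements always sum to $cm-s$. So whenever $s<0$, some positive element remains. Once the nonpositive elements are exhausted, the walk increases monotonically to $cm$. From these two facts, all partial sums $s_0=0,s_1,\dots,s_n=cm$ lie in $[-k,cm]$: a positive step taken from a negative value lands below $k\le cm$. Also every step has absolute value at most $k$.

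\textbf{Step 2 (first crossings).} For $1\le j\le c$, let $r_j$ be the first index with $s_{r_j}\ge jm$. Then $r_1\le r_2\le\cdots\le r_c$. Moreover $s_{r_j}\in[jm,\,jm+k)$, because the previous partial sum was below $jm$ and the step is at most $k$. Hence $s_{r_j}\ge m$ and $s_{r_j}\le cm$.

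Suppose two indices $r_j<r_{j'}$ satisfy $s_{r_j}\equiv s_{r_{j'}}\pmod m$. Then the block $T=\{x_{r_j+1},\dots,x_{r_{j'}}\}$ has sum $s_{r_{j'}}-s_{r_j}$. Since the $r$'s are first crossings of increasing levels, this sum is positive. It is also at most $cm-m<cm$. So $T$ satisfies both required conditions.

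\textbf{Step 3 (pigeonhole).} The offsets $s_{r_j}-jm$ lie in $\{0,\dots,k-1\}$, and the residues of $s_{r_j}$ modulo $m$ take at most $\min(k,m)$ values. We have $c$ levels, and $c>k\ge\min(k,m)$. If the indices $r_j$ are pairwise distinct, pigeonhole finishes the proof.

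The main obstacle is that $r_j=r_{j'}$ can happen when $k>m$, since one large step can cross several levels at once. Such a coincidence gives an empty block and no conclusion. When $m\ge k$ the indices are automatically distinct and the argument above is complete.

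For $k>m$, my first attempt would be to use only levels spaced $m\lceil k/m\rceil$ apart, which forces distinct crossing indices. The cost is fewer levels, so the count must be rebalanced by a sharper pigeonhole on offsets. If that count falls short, the fallback is as follows. Any step $x_i$ with $m\le x_i\le k$ lets one apply Lemma \ref{lem:add_com_upp} to the positive elements of $S$ (whose sum is at least $cm\ge(k+1)m>mk$) to extract a nonempty proper sub-multiset with sum divisible by $m$. If that sum is at least $cm$, one then trims it using the negative elements, or passes to the complement $S\setminus T$, to bring the sum into $(0,cm)$.
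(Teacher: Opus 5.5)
\textbf{Gap: the case $k>m$ is not proved.} Your Steps 1--3 are correct and complete when $m\ge k$. For $k>m$ you stop at the obstacle of coinciding crossing indices and offer two tentative repairs, and neither one works.

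Spacing the levels $m\lceil k/m\rceil$ apart leaves only about $c/\lceil k/m\rceil$ levels, which can be no more than the number of available residues. For example, $m=2$, $k=3$, $c=4$ leaves only the levels $4$ and $8$. With $S$ ordered as $3,2,3$, their first crossings have partial sums $5$ and $8$, so the residues differ and pigeonhole gives nothing.

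The fallback correctly applies Lemma \ref{lem:add_com_upp} to the positive elements, but the step meant to bring $\sum_{x\in T}x$ into $(0,cm)$ fails. If $\sum_{x\in T}x\ge cm$, then $S\setminus T$ has sum $cm-\sum_{x\in T}x\le 0$. Adding negative elements to $T$ has no reason to preserve divisibility by $m$.

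The paper avoids this by shrinking the positive part first. Sort the positive elements in decreasing order and take the shortest prefix $S'$ with sum at least $cm$. Every proper sub-multiset of $S'$ then has sum at most $\sum_{x\in S'}x-\min S'<cm$. Lemma \ref{lem:add_com_upp} applies because $\sum_{x\in S'}x\ge (k+1)m>mk$.

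Your own Step 2 also suffices for all $m,k$ if you pigeonhole on the offsets themselves rather than on residues modulo $m$. There are $c\ge k+1$ levels but only $k$ possible offsets $s_{r_j}-jm\in\{0,\dots,k-1\}$, so two levels $j<j'$ share an offset. Then $s_{r_{j'}}-s_{r_j}=(j'-j)m\neq 0$, which forces $r_j<r_{j'}$, so the coincidence problem never arises. The block $\{x_{r_j+1},\dots,x_{r_{j'}}\}$ has sum exactly $(j'-j)m\in[m,(c-1)m]$.

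This version does not even need the balanced ordering of Step 1. It uses only $s_0=0$, $s_n=cm$, and the fact that the step entering each $r_j$ is at most $k$, so any ordering of $S$ works. Compared with the paper's proof, it avoids both the sorting and the appeal to Lemma \ref{lem:add_com_upp}.
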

	\begin{proof}
		Write $S=\{s_1,\ldots,s_t\}$ with $s_1\geq s_2\geq \cdots \geq s_t$.
		Let $p$ be the smallest index such that $\sum_{i=1}^{p}s_i\geq cm$.
		Then $\sum_{i=1}^{p-1}s_i<cm$ and hence $s_p>0$. Therefore $s_i>0$ for every $1\leq i\leq p$.
		
		Let $S'=\{s_1,\ldots,s_p\}$.
		Then every element of $S'$ lies in $[1,k]$, and
		$$
		\sum_{x\in S'}x\geq cm\geq (k+1)m\geq mk+1.
		$$
		By Lemma \ref{lem:add_com_upp}, $S'$ contains a nonempty proper sub-multiset
		$T$ such that $m\mid \sum_{x\in T}x$.
		Since $T$ is nonempty and all elements of $S'$ are positive, $\sum_{x\in T}x>0$.
		Moreover, since $T$ is proper in $S'$,
		$$
		\sum_{x\in T}x
		\leq
		\sum_{x\in S'}x-\min_{x\in S'}x
		=
		\sum_{i=1}^{p-1}s_i
		<cm
		=
		\sum_{x\in S}x.
		$$
		Thus $T$ satisfies the desired conditions.
	\end{proof}
	\begin{corollary}
		\label{cor:sub_find2}
		For positive integers $m,k$ with $\operatorname{lcm}_{i=1}^{k}i\big|m$, let $S$ be an integer multiset with $|x|\leq k$ for all $x\in S$ and
		$\sum_{x\in S}x=cm$ for some integer $c\geq 2$. Then $S$ contains a sub-multiset $T$ such that
		$$
		m\mid \sum_{x\in T}x
		\quad\text{and}\quad
		0<\sum_{x\in T}x<\sum_{x\in S}x.
		$$
	\end{corollary}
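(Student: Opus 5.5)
I would mirror the proof of Corollary \ref{cor:sub_find} exactly, replacing the invocation of Lemma \ref{lem:add_com_upp} by Lemma \ref{lem:add_com_upp3_tig}. The idea is to extract from $S$ a multiset of positive elements in $[1,k]$ whose sum is at least $2m$, apply the bound $\mathcal F(m,k)\le 2m$, and check that the resulting sub-multiset has sum strictly between $0$ and $\sum_{x\in S}x$.

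\emph{Setup.} Write $S=\{s_1,\ldots,s_t\}$ with $s_1\ge s_2\ge\cdots\ge s_t$. Let $p$ be the smallest index with $\sum_{i=1}^{p}s_i\ge cm$; it exists because the full sum equals $cm$.

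\emph{Positivity of the prefix.} Minimality of $p$ gives $\sum_{i=1}^{p-1}s_i<cm$, and hence $s_p>0$. Since the $s_i$ are sorted in decreasing order, $s_i\ge 1$ for all $i\le p$. Together with $|s_i|\le k$, the prefix $S'=\{s_1,\ldots,s_p\}$ consists of positive integers in $[1,k]$.

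\emph{Applying the bound.} We have $\sum_{x\in S'}x\ge cm\ge 2m$, using $c\ge 2$. Since $\operatorname{lcm}_{i=1}^k i\mid m$, Lemma \ref{lem:add_com_upp3_tig} gives $\mathcal F(m,k)\le 2m$. By the definition of $\mathcal F$, $S'$ therefore contains a nonempty proper sub-multiset $T$ with $m\mid\sum_{x\in T}x$.

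\emph{Bounding the sum of $T$.}
\begin{itemize}
\item Lower bound: $T$ is nonempty and every element of $S'$ is positive, so $\sum_{x\in T}x>0$.
\item Upper bound: $T$ is proper in $S'$ and all elements of $S'$ are positive, so
$$
\sum_{x\in T}x\le\sum_{x\in S'}x-\min_{x\in S'}x=\sum_{i=1}^{p-1}s_i<cm=\sum_{x\in S}x.
$$
The equality uses that the minimum of $S'$ is $s_p$.
\end{itemize}
Since $T\subseteq S'\subseteq S$, the multiset $T$ is the required sub-multiset of $S$.

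\emph{Main obstacle.} The only point needing care is the same as in Corollary \ref{cor:sub_find}: the threshold $\mathcal F(m,k)\le 2m$ must be met by the positive prefix $S'$ rather than by $S$ itself. This is handled by choosing $p$ at the first time the prefix sum reaches $cm\ge 2m$. The argument uses only the hypothesis $c\ge 2$ and no larger $c$.
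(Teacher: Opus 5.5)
Your proposal is correct and matches the paper's proof. The paper proves this corollary by repeating the argument of Corollary \ref{cor:sub_find} with Lemma \ref{lem:add_com_upp3_tig} ($\mathcal F(m,k)\le 2m$) in place of Lemma \ref{lem:add_com_upp}, and you do exactly that, including taking the positive sorted prefix $S'$ so that its sum reaches $cm\ge 2m$.
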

	\begin{proof}
		The proof is similar to that of Corollary \ref{cor:sub_find}, but uses Lemma \ref{lem:add_com_upp3_tig} instead.
	\end{proof}
	Using these lemmas, theorems, and corollaries, we can now prove the two structural theorems.
	\begin{theorem}
		\label{thm:fir_con_sec}
		For any tropical polynomial $A$, for every positive integer $d$ and
		every subpolynomial $B\in \operatorname{MS}_d(A)$, we have
		$\operatorname{cgap}(B)\leq \tdw(A).$
	\end{theorem}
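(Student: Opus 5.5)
The plan is to work directly from a decomposition of $A$ of width $k=\tdw(A)$ and to use Corollary \ref{cor:sub_find} with $m=d$ to produce, for any two indices of $B$ that are far apart, an intermediate index forming a convex triplet.

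\textbf{Reductions.} If $A=\infty$ or $B=\infty$, the claim is immediate from the convention $\operatorname{cgap}(\infty)=0$. Otherwise, fix a decomposition
$$A=\sum_{i=1}^{N}B^{(i)},\qquad \deg B^{(i)}\le k,\qquad B^{(i)}=\min_{j}\bigl(b^{(i)}_j+jx\bigr).$$
Let $B=\operatorname{ms}_{d,r}(A)$, and write $\beta_j=a_{r+dj}$ for its coefficients. Take indices $x<y$ of $B$ with $y-x>k$. We must find $z$ with $x<z<y$ such that $(y-x)\beta_z\le (y-z)\beta_x+(z-x)\beta_y$.

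If $\beta_x=\infty$ or $\beta_y=\infty$, the right-hand side is $\infty$ for any $z$ strictly between $x$ and $y$, since both weights are positive. Such a $z$ exists because $y-x>k\ge 1$, so we may assume both $\beta_x$ and $\beta_y$ are finite.

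\textbf{Realizing the coefficients by exponent vectors.} Since tropical multiplication is $(\min,+)$ convolution of coefficient sequences, there are exponent vectors $e=(e_1,\dots,e_N)$ and $f=(f_1,\dots,f_N)$ with
$$\sum_i e_i=r+dx,\qquad \sum_i f_i=r+dy,\qquad 0\le e_i,f_i\le k,$$
such that $\beta_x=\sum_i b^{(i)}_{e_i}$ and $\beta_y=\sum_i b^{(i)}_{f_i}$.

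Consider the multiset $S=\{f_i-e_i\}$. Each element has absolute value at most $k$, and the total sum is $d(y-x)$ with $y-x>k$. Corollary \ref{cor:sub_find}, applied with $m=d$ and $c=y-x$, yields an index set $T$ with
$$\sum_{i\in T}(f_i-e_i)=ds,\qquad 0<s<y-x.$$

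\textbf{Swapping.} Define two new exponent vectors:
\begin{itemize}
\item $g$ uses $f_i$ on $T$ and $e_i$ off $T$, so its exponent sum is $r+d(x+s)$;
\item $g'$ uses $e_i$ on $T$ and $f_i$ off $T$, so its exponent sum is $r+d(y-s)$.
\end{itemize}
Each coefficient of $A$ is the minimum over all exponent vectors with the right sum, and together $g$ and $g'$ use exactly the same multiset of factor coefficients as $e$ and $f$. Hence, with $z_1=x+s$ and $z_2=y-s$,
$$\beta_{z_1}+\beta_{z_2}\le \sum_i b^{(i)}_{g_i}+\sum_i b^{(i)}_{g'_i}=\beta_x+\beta_y.$$
Both $z_1$ and $z_2$ lie strictly between $x$ and $y$.

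\textbf{Convexity by averaging.} Suppose neither $(x,z_1,y)$ nor $(x,z_2,y)$ is a convex triplet. Then
$$(y-x)\beta_{z_1}>(y-x-s)\beta_x+s\beta_y,\qquad (y-x)\beta_{z_2}>s\beta_x+(y-x-s)\beta_y.$$
Adding the two inequalities gives $(y-x)(\beta_{z_1}+\beta_{z_2})>(y-x)(\beta_x+\beta_y)$, which contradicts the swapping inequality. So one of $z_1,z_2$ works, and therefore $\operatorname{cgap}(B)\le k$.

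\textbf{Main obstacle.} The key step is ensuring the subset $T$ produces a genuinely intermediate index, that is, $0<s<y-x$ rather than $s=0$ or $s=y-x$. This is exactly what Corollary \ref{cor:sub_find} guarantees, and it is why the hypothesis $y-x>k$ is needed. The rest is bookkeeping: confirming that the swapped vectors are valid exponent vectors (each entry stays in $[0,k]$ with finite coefficient) and handling the $\infty$ conventions.
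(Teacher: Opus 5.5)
Your proposal is correct and follows essentially the same route as the paper. You realize the two finite coefficients by exponent vectors of a width-$\tdw(A)$ decomposition, apply Corollary \ref{cor:sub_find} with $m=d$ to the difference multiset, and swap on the resulting index set to get two intermediate positions, one of which gives a convex triplet. The only differences are cosmetic. You work directly in the indices of $B$ instead of proving the claim for $A$ and then dividing the triplet inequality by $d$. You also replace the paper's case split on $(y-x)E$ versus $\sigma(a_y-a_x)$ with the equivalent summing argument, which is valid because the swap inequality forces both $\beta_{z_1}$ and $\beta_{z_2}$ to be finite.
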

	\begin{proof}
		If $A\in \mathbb T$, then $\operatorname{cgap}(B)=0=\tdw(A)$. Otherwise let
		$
		q=\tdw(A)\geq 1.
		$
		Choose a tropical polynomial decomposition
		$
		A=\sum_{t=1}^{N}W^{(t)}
		$
		with
		$
		\deg W^{(t)}\leq q.
		$
		Write
		$
		A=\min_{i=0}^{n}(a_i+ix),
		$
		and
		$
		W^{(t)}=\min_{j=0}^{q}(w^{(t)}_{j}+jx).
		$
		
		We first prove the following claim. For any $0\leq x<y\leq n$ with
		$
		d\mid (y-x)$
		and
		$(y-x)/d>q$, 
		there exists an integer $w$ such that
		$
		x<w<y,d\mid (w-x),
		$
		and $(x,w,y)$ is a convex triplet of $A$.
		
		If $a_x=\infty$ or $a_y=\infty$, this is immediate: take any
		integer $w$ with $x<w<y$ and $d\mid(w-x)$.
		
		Now suppose $a_x,a_y\ne \infty$. Choose indices
		$
		X_1,\ldots,X_N
		$ and
		$
		Y_1,\ldots,Y_N
		$
		such that
		$
		\sum_{t=1}^{N}X_{t}=x,
		\sum_{t=1}^{N}Y_{t}=y
		$
		and
		$
		a_x=\sum_{t=1}^{N}w^{(t)}_{X_{t}},
		a_y=\sum_{t=1}^{N}w^{(t)}_{Y_{t}}.
		$
		Set $\Delta_{t}=Y_{t}-X_{t}$, then $|\Delta_{t}|\leq q$ and
		$$
		\sum_{t=1}^{N}\Delta_{t}=y-x=d\cdot \frac{y-x}{d},
		\quad
		\frac{y-x}{d}>q.
		$$
		By Corollary \ref{cor:sub_find}, there exists a nonempty proper subset
		$I\subsetneq \{1,\ldots,N\}$ such that
		$
		\sigma=\sum_{t\in I}\Delta_{t}
		$
		satisfies
		$
		d\mid \sigma
		$
		and
		$
		0<\sigma<y-x.
		$
		
		Put
		$z=x+\sigma$,
		$z'=y-\sigma$.
		Then $x<z,z'<y$ and both $z-x$ and $z'-x$ are divisible by $d$.
		
		Let
		$
		E=\sum_{t\in I}\left(w^{(t)}_{Y_{t}}-w^{(t)}_{X_{t}}\right),
		$
		since
		$$
		z=x+\sum_{t\in I}\Delta_t=\sum_{t\in I}Y_t+\sum_{t\not \in I}X_t
		\quad\text{and}\quad
		z'=y-\sum_{t\in I}\Delta_t=\sum_{t\in I}X_t+\sum_{t\not \in I}Y_t,
		$$
		we have
		$$
		a_z\leq \sum_{t\in I}w^{(t)}_{Y_t}+\sum_{t\not \in I}w^{(t)}_{X_t}=a_x+E
		\quad\text{and}\quad
		a_{z'}\leq \sum_{t\in I}w^{(t)}_{X_t}+\sum_{t\not \in I}w^{(t)}_{Y_t}=a_y-E.
		$$
		
		If
		$
		(y-x)E\leq \sigma(a_y-a_x),
		$
		then
		$$
		(y-x)a_z
		\leq
		(y-x)(a_x+E)
		\leq
		(y-x)a_x+\sigma(a_y-a_x)
		=
		(y-z)a_x+(z-x)a_y.
		$$
		Thus $(x,z,y)$ is a convex triplet of $A$.
		
		Otherwise,
		$
		(y-x)E>\sigma(a_y-a_x),
		$
		and hence
		$$
		(y-x)a_{z'}\leq (y-x)(a_y-E)<(y-x)a_y-\sigma(a_y-a_x)
		=(y-z')a_x+(z'-x)a_y.
		$$
		Thus $(x,z',y)$ is a convex triplet of $A$. This proves the claim.
		
		Now for each $0\leq r<d$, let $B=\operatorname{ms}_{d,r}(A)$. If $B=\infty$, then $\operatorname{cgap}(B)=0\leq \tdw(A)$; otherwise write
		$B=
		\min_{i=0}^{\left\lfloor(n-r)/d\right\rfloor}
		(a_{r+di}+ix)
		$. Take $0\leq u<v\leq \deg B$ with
		$
		v-u>q.
		$
		
		Applying the claim to
		$x=r+du$,
		$y=r+dv$, 
		we obtain an index $w$ with
		$r+du<w<r+dv$, $d\mid(w-r)$
		such that $(r+du,w,r+dv)$ is a convex triplet of $A$. Write
		$
		w=r+de.
		$
		Then $u<e<v$, and dividing the convex-triplet inequality by $d$ gives
		that $(u,e,v)$ is a convex triplet of $B$. Thus, by definition of the convex gap,
		$$\operatorname{cgap}(B)\leq q=\operatorname{tdw}(A).$$
	\end{proof}
	This theorem combines convexity and modularity in a natural way. The modular restriction on the indices is essential: an arbitrary subpolynomial of $A$ need not satisfy the same convex gap bound. The following example illustrates this limitation and shows that the theorem is tight in this sense.
	\begin{example}
		For $A=\min(0,2x,4x,6x)=3\min(0,2x)$, $\tdw(A)\leq 2$. However, for the subpolynomial $B=\min(0,4x)$, we have $\operatorname{cgap}(B)=4>\tdw(A)$, so the same bound fails for arbitrary subpolynomials.
	\end{example}
	\begin{theorem}
		\label{thm:sec_con_sec}
		Let $k,L$ be positive integers. If
		$
		\operatorname{lcm}_{i=1}^{k}i\mid L
		$,
		then for every tropical polynomial $A$ with $\tdw(A)\leq k$ and every subpolynomial $B\in \operatorname{MS}_L(A)$, the polynomial $B$ is convex. Conversely, if
		$
		\operatorname{lcm}_{i=1}^{k}i\nmid L
		$, 
		then there exists a tropical polynomial $A$ with $\tdw(A)\leq k$ such that some $B\in \operatorname{MS}_L(A)$ is not convex.
	\end{theorem}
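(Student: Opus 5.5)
The forward direction follows the proof of Theorem \ref{thm:fir_con_sec} almost verbatim, with Corollary \ref{cor:sub_find2} in place of Corollary \ref{cor:sub_find}.

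Fix a decomposition $A=\sum_{t=1}^N W^{(t)}$ with $\deg W^{(t)}\le k$. Take $B=\operatorname{ms}_{L,r}(A)\ne\infty$ and indices $0\le u<v\le\deg B$ with $v-u\ge 2$. Put $x=r+Lu$ and $y=r+Lv$, so that $y-x=cL$ with $c=v-u\ge 2$.

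If $a_x$ or $a_y$ is $\infty$, any intermediate index in the progression gives a convex triplet. Otherwise, choose optimal index choices $X_t,Y_t$ as before and set $\Delta_t=Y_t-X_t$, so that $|\Delta_t|\le k$ and $\sum_t\Delta_t=cL$. Since $\operatorname{lcm}_{i=1}^k i\mid L$ and $c\ge 2$, Corollary \ref{cor:sub_find2} gives an index set $I$ with $L\mid\sigma$ and $0<\sigma<y-x$. The same exchange argument then shows that either $z=x+\sigma$ or $z'=y-\sigma$ yields a convex triplet of $A$ lying in the progression. Dividing by $L$ gives a convex triplet $(u,e,v)$ of $B$. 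Hence every pair with $v-u>1$ admits an intermediate convex triplet, so $\operatorname{cgap}(B)\le 1$ and $B$ is convex.

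For the converse, suppose $\operatorname{lcm}_{i=1}^k i\nmid L$. Then there is some $p\le k$ with $p\nmid L$; we may even take $p$ to be a prime power $q^e\le k$ with $q^e\nmid L$. Let $g=\gcd(p,L)<p$ and $m=L/g$, so that $\operatorname{lcm}(p,L)=pm$. The plan is to take
$$A=m\min(0,px),$$
which clearly satisfies $\tdw(A)\le p\le k$. Its finite coefficients sit exactly at the multiples of $p$ in $[0,pm]$, all with value $0$. Now look at $\operatorname{ms}_{L,0}(A)$. The indices $0,L,2L,\dots,\lfloor pm/L\rfloor L$ are multiples of $p$ precisely at multiples of $\operatorname{lcm}(p,L)/L=p/g$. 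Since $p/g\ge 2$, index $0$ and index $p/g$ of $B$ are finite while everything strictly between them is $\infty$. Also $\deg B=\lfloor pm/L\rfloor=p/g\ge 2$.

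So $B$ has finite coefficients at the two ends and $\infty$ in between. By Lemma \ref{lem:conv_equiv}(2), which requires all coefficients from $L(B)$ to $\deg B$ to be finite, $B$ is not convex. Equivalently, the pair $(0,p/g)$ has no convex triplet, because every interior coefficient is $\infty$.

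The delicate points are checking that Corollary \ref{cor:sub_find2} really supplies $0<\sigma<y-x$ (its proof mirrors Corollary \ref{cor:sub_find} using Lemma \ref{lem:add_com_upp3_tig} with $c\ge2$), and verifying the degree count $\deg B=p/g$ in the counterexample. I expect both to be routine, so the main content is simply the substitution of the sharper additive lemma.
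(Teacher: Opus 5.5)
Your proposal is correct and follows the paper's proof essentially verbatim. The forward direction is the argument of Theorem~\ref{thm:fir_con_sec} with Corollary~\ref{cor:sub_find2} substituted, applied to pairs with $v-u\ge 2$. For the converse, your $A=(L/g)\min(0,px)$ with $\operatorname{ms}_{L,0}(A)=\min(0,(p/g)x)$ is exactly the paper's construction.
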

	\begin{proof}
		The first statement follows by the same argument as in Theorem
		\ref{thm:fir_con_sec}, but with Corollary \ref{cor:sub_find2}. We only prove the converse.
		
		Assume that
		$
		\operatorname{lcm}_{i=1}^{k}i\nmid L.
		$
		Then there exists $d$ with $1\leq d\leq k$ and $d\nmid L$. Put
		$$
		g=\gcd(d,L),
		\qquad
		h=\frac{\operatorname{lcm}(L,d)}{L}=\frac{d}{g}>1.
		$$
		Let
		$
		A=(L/g)\min(0,dx).
		$
		Then $\tdw(A)\leq d\leq k$.
		
		The finite coefficients of $A$ occur exactly in degrees
		$
		0,d,2d,\ldots,\operatorname{lcm}(L,d).
		$
		Therefore
		$
		\operatorname{ms}_{L,0}(A)=\min(0,hx).
		$
		Since $h>1$, this $\min(0,hx)\in \operatorname{MS}_L(A)$ is not convex. Hence the converse follows.
	\end{proof}
	\begin{lemma}
		For any tropical polynomial $A$, we have
		$\operatorname{cgap}(A)\leq \tdw(A).$ Equality holds when $A$ is convex.
	\end{lemma}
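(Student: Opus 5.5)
The inequality $\operatorname{cgap}(A)\leq\tdw(A)$ is the special case $d=1$ of Theorem \ref{thm:fir_con_sec}. Indeed, $\operatorname{MS}_1(A)=\{\operatorname{ms}_{1,0}(A)\}$, and for a non-$\infty$ polynomial $\operatorname{ms}_{1,0}(A)=\min_{i=0}^{n}(a_i+ix)=A$. So the theorem gives $\operatorname{cgap}(A)\leq\tdw(A)$ directly. If $A=\infty$, both sides are $0$ by convention.

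For the equality claim, assume $A$ is convex, so $\operatorname{cgap}(A)\leq 1$. It suffices to prove $\tdw(A)\leq\operatorname{cgap}(A)$, and we split into two cases.

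\textbf{Case $\operatorname{cgap}(A)=1$.} By Lemma \ref{lem:poly_con_lf}, applied to the multiplicatively closed set of convex polynomials (or directly via its explicit factorization $A=kx+a_k+\sum_{i=k}^{n-1}\min(0,a_{i+1}-a_i+x)$), we get $\tdw(A)\leq 1$. Combined with the first part, this gives equality.

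\textbf{Case $\operatorname{cgap}(A)=0$.} The definition then forces that no pair $0\leq x<y\leq n$ with $y-x\geq 1$ exists. Otherwise a pair with $y-x=1>0$ would need some $z$ with $x<z<y$, which is impossible. Hence $n=0$, so $A$ is a constant, i.e.\ a unit or $\infty$. Then $A$ decomposes as itself with degree $0$, so $\tdw(A)=0$.

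The only point requiring care is this degenerate case: one must note that $\operatorname{cgap}(A)=0$ forces $\deg A=0$, and that the convex-polynomial definition with $\operatorname{cgap}\leq 1$ covers it. The rest is bookkeeping. I expect no real obstacle, since the heavy lifting is in Theorem \ref{thm:fir_con_sec} and Lemma \ref{lem:poly_con_lf}.
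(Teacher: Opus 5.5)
Your proposal is correct and follows essentially the same route as the paper: the inequality is the $d=1$ case of Theorem \ref{thm:fir_con_sec}, and for convex $A$ the equality comes from the explicit linear factorization in Lemma \ref{lem:poly_con_lf}, which gives $\tdw(A)\le 1$. The only cosmetic difference is that you case-split on the value of $\operatorname{cgap}(A)$ instead of on whether $A\in\mathbb T$; since $\operatorname{cgap}(A)=0$ holds exactly when $A$ is constant, the two splits coincide.
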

	\begin{proof}
		For any tropical polynomial $A$, the inequality $\operatorname{cgap}(A)\le \tdw(A)$ follows directly from Theorem \ref{thm:fir_con_sec}. When $A$ is convex, if $A\in\mathbb T$, then $\operatorname{cgap}(A)=\operatorname{tdw}(A)=0$; otherwise $\deg A\geq1$, so $\operatorname{cgap}(A)\geq1$. Since $A$ is convex, the proof in Lemma \ref{lem:poly_con_lf} shows that $\tdw(A)\leq 1$, hence $\operatorname{cgap}(A)=\operatorname{tdw}(A)=1$.
	\end{proof}
	\subsection{Convex Gap}
	\label{sec:cgap}
	In the first structural theorem for tropical polynomials of bounded
	tropical decomposition width, the notion of convex gap arises naturally
	from an additive-combinatorial argument. In this subsection, we relate this
	combinatorial definition to the geometry of the convex support sequence.
	\begin{theorem}
		\label{thm:cgap}
		Let $a$
		be a tropical sequence with $a\not \in \mathbb T$ and
		$
		\operatorname{csupp}(a)=(p_1,\ldots,p_t).
		$
		Then
		$$
		\operatorname{cgap}(a)
		=
		\max\left(\max_{i=1}^{t-1}(p_{i+1}-p_i),1\right).
		$$
	\end{theorem}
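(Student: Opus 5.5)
Set $G=\max\bigl(\max_{i}(p_{i+1}-p_i),1\bigr)$. The plan is to prove $\operatorname{cgap}(a)\le G$ and $\operatorname{cgap}(a)\ge G$ separately, working throughout with the convex envelope $\hat f_a$ as described in Lemma \ref{lem:env}. Since $a\notin\mathbb T$ and $a_n\neq\infty$, we have $n\ge1$, and by the lemma on weak convex support sequences $p_t=n$.

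\textbf{Lower bound.} First, $\operatorname{cgap}(a)\ge 1$ for any sequence with $n\ge1$. Indeed, the pair $(x,y)=(n-1,n)$ has $y-x=1>0$ and no integer strictly between, so $k=0$ fails. Next, suppose $g=p_{i+1}-p_i\ge2$. I will show that the pair $(p_i,p_{i+1})$ has no convex middle index, which forces $\operatorname{cgap}(a)\ge g$. Take any $z$ with $p_i<z<p_{i+1}$. Then $z\notin\operatorname{csupp}(a)$. By Lemma \ref{lem:env}, $\hat f_a$ is linear on $[p_i,p_{i+1}]$ with endpoint values $a_{p_i}$ and $a_{p_{i+1}}$, and these are finite. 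If $a_z=\infty$, the triplet inequality fails because its right side is finite. Otherwise $a_z\ge\hat f_a(z)$ and $z\notin\operatorname{csupp}$ give $a_z>\hat f_a(z)$, which is exactly the strict failure of $(p_{i+1}-p_i)a_z\le(p_{i+1}-z)a_{p_i}+(z-p_i)a_{p_{i+1}}$.

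\textbf{Upper bound.} Take $0\le x<y\le n$ with $y-x>G$; I need some $z$ with $x<z<y$ forming a convex triplet. If $a_x=\infty$ or $a_y=\infty$, the right side is $\infty$ and any $z$ works; there is at least one since $y-x\ge2$. So assume both $a_x,a_y$ are finite. Then both lie in $[p_1,p_t]$, because finite entries satisfy $x\ge L(a)=p_1$. Since every gap between consecutive support points is at most $G<y-x$, the open interval $(x,y)$ contains some support point $z=p_j$; this holds because a gap of consecutive $p$'s cannot cover $[x,y]$ with both ends outside. I must check this carefully: if no $p_j$ lies in $(x,y)$, then $x\ge p_i$ and $y\le p_{i+1}$ for some $i$, giving $y-x\le G$, a contradiction. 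Now $a_z=\hat f_a(z)$, and by convexity of $\hat f_a$ together with $\hat f_a\le a$ at $x$ and $y$,
\[(y-x)a_z=(y-x)\hat f_a(z)\le (y-z)\hat f_a(x)+(z-x)\hat f_a(y)\le (y-z)a_x+(z-x)a_y.\]

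The only mildly delicate step is handling the $\infty$ entries and the boundary support points $p_1$ and $p_t$; everything else follows directly from Lemma \ref{lem:env}.
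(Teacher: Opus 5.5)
Your proposal is correct and follows essentially the same route as the paper. Both prove the upper bound by finding a convex-support index strictly inside $(x,y)$, and the lower bound via $a_z>\hat f_a(z)$ off the support inside a maximal gap. The only differences are cosmetic: you invoke convexity of $\hat f_a$ directly where the paper uses a subgradient $\lambda\in\partial\hat f_a(p_w)$, and your pair $(n-1,n)$ argument handles $t=1$ uniformly instead of setting it aside.
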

	\begin{proof}
		If $t=1$, the statement is interpreted
		separately and is trivial. Otherwise, let $d=\max\left(\max_{i=1}^{t-1}(p_{i+1}-p_i),1\right)=\max_{i=1}^{t-1}(p_{i+1}-p_i)$. 
		
		We first prove that $\operatorname{cgap}(a)\leq d$. Let
		$0\leq x<y\leq |a|$ with $y-x>d$. If $a_x=\infty$ or $a_y=\infty$, then
		$(x,x+1,y)$ is a convex triplet of $a$. Hence we may assume that
		$a_x,a_y\ne \infty$.
		
		Since $a_x,a_y\ne \infty$, both $x$ and $y$ lie in $[p_1,p_t]$. Moreover,
		$x$ and $y$ cannot lie in the same interval $[p_i,p_{i+1}]$, because every
		such interval has length at most $d$. Therefore there exists some convex support
		index $p_w$ such that $x<p_w<y$.
		By Lemma \ref{lem:env}, choose $\lambda\in\partial\hat f_a(p_w)$.
		Since $p_w\in\operatorname{csupp}(a)$, we have
		$\hat f_a(p_w)=a_{p_w}$. Thus, for every $0\leq r\leq |a|$,
		$$
		a_r\geq \hat f_a(r)\geq a_{p_w}+\lambda(r-p_w).
		$$
		In particular, 
		$
		a_x\geq a_{p_w}+\lambda(x-p_w),
		a_y\geq a_{p_w}+\lambda(y-p_w).
		$
		These inequalities imply
		$$
		\frac{a_{p_w}-a_x}{p_w-x}
		\leq
		\lambda
		\leq
		\frac{a_y-a_{p_w}}{y-p_w}.
		$$
		Equivalently,
		$
		(y-x)a_{p_w}
		\leq
		(y-p_w)a_x+(p_w-x)a_y.
		$
		Hence $(x,p_w,y)$ is a convex triplet of $a$. Therefore
		$\operatorname{cgap}(a)\leq d$.
		
		It remains to prove the reverse inequality. Choose $q$ such that
		$
		p_{q+1}-p_q=d.
		$
		For every integer $z$ with
		$
		p_q<z<p_{q+1},
		$
		we have $z\notin\operatorname{csupp}(a)$. Hence
		$$
		a_z>\hat f_a(z)
		=
		\frac{p_{q+1}-z}{p_{q+1}-p_q}a_{p_q}
		+
		\frac{z-p_q}{p_{q+1}-p_q}a_{p_{q+1}}.
		$$
		Thus
		$$
		(p_{q+1}-p_q)a_z
		>
		(p_{q+1}-z)a_{p_q}
		+
		(z-p_q)a_{p_{q+1}},
		$$
		so $(p_q,z,p_{q+1})$ is not a convex triplet. Therefore, $\operatorname{cgap}(a)\geq d$.
		
		Combining the two inequalities gives
		$$
		\operatorname{cgap}(a)=d=\max\left(\max_{i=1}^{t-1}(p_{i+1}-p_i),1\right).
		$$
	\end{proof}
	\begin{remark}
		The theorem shows that although the convex gap is defined combinatorially,
		it has a simple geometric meaning: it is the maximum distance between two
		adjacent indices in the convex support sequence. Thus a small convex gap
		means that the coefficient sequence is close to being convex, while a large
		convex gap indicates long intervals where the coefficients do not touch the
		lower convex envelope.
	\end{remark}
	\begin{corollary}
		For any tropical sequence $a$, $\operatorname{cgap}(a)$ can be computed in time $O(|a|)$.
	\end{corollary}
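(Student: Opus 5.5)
The corollary follows from Theorem \ref{thm:cgap} once $\operatorname{csupp}(a)$ is available, so I will compute it in linear time and then read off the maximum gap.

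\textbf{Trivial cases.} If $a=\infty$ or $a\in\mathbb T$, then $\operatorname{cgap}(a)=0$. Deciding which case applies takes $O(|a|)$ time.

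\textbf{Computing the convex support.} Otherwise, I run Andrew's monotone chain on the points $(i,a_i)$ with $a_i\neq\infty$, scanning in increasing $i$, to build the lower convex hull. These points are already sorted by $x$-coordinate, so no sorting is needed and the scan is $O(|a|)$ by the usual amortized argument: each index is pushed onto and popped from the stack at most once.

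\textbf{The main obstacle: collinear points.} $\operatorname{csupp}(a)$ consists of \emph{all} indices $x$ with $\hat f_a(x)=a_x$. This includes points lying in the interior of a hull edge, not only its vertices. The stack must therefore keep collinear points rather than discard them. Concretely, when the cross product is zero I do not pop, and I pop only on a strict non-left turn.

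I then check this yields exactly $\operatorname{csupp}(a)$.
\begin{itemize}
\item \emph{Kept points are in the support.} Every point retained on the final stack lies on the lower boundary of the hull. Since the points of the final chain lie on the hull boundary, they satisfy $a_x=\hat f_a(x)$ by Lemma \ref{lem:env}.
\item \emph{Removed points are not.} Every popped point lies strictly above the segment joining its stack neighbours. Hence it lies strictly above $\hat f_a$, which is the unique convex piecewise-linear minorant described in Lemma \ref{lem:env}.
\item \emph{Exactness of the orientation test.} The test is exact in the stated real-RAM model, because it evaluates a degree-$2$ polynomial in the inputs, which the low-degree restriction permits.
\end{itemize}

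\textbf{Reading off the gap.} With $\operatorname{csupp}(a)=(p_1,\ldots,p_t)$ in hand, I compute $\max\left(\max_{i}(p_{i+1}-p_i),1\right)$ in one $O(t)$ pass. By Theorem \ref{thm:cgap} this value equals $\operatorname{cgap}(a)$. The total running time is $O(|a|)$.
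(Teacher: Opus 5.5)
Your proof is correct and follows the paper's route: handle $a\in\mathbb T$ separately (where $\operatorname{cgap}(a)=0$), compute $\operatorname{csupp}(a)$ in $O(|a|)$ time with Andrew's monotone chain on the already-sorted points $(i,a_i)$ with $a_i\neq\infty$, and read off $\operatorname{cgap}(a)$ from Theorem \ref{thm:cgap}. Your handling of collinear points is a detail the paper leaves implicit, and it matters, since $\operatorname{csupp}(a)$ also contains non-vertex points on hull edges and keeping only vertices would overestimate the gap (e.g.\ $a=(0,0,0)$ would give $2$ instead of $1$).
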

	\begin{proof}
		If $a\in \mathbb T$, then returning $0$ gives the correct answer. Otherwise, we first compute $\operatorname{csupp}(a)$ in $O(|a|)$ time, and then apply the formula from Theorem \ref{thm:cgap} to obtain $\operatorname{cgap}(a)$. The total running time is $O(|a|)$.
	\end{proof}
	The geometric interpretation above makes convex gap a useful measure for how frequently the lower convex envelope touches the coefficient sequence. In particular, when the input sequences have small convex gap, a weak convex support sequence of their $(\min,+)$ convolution yields a relatively dense set of exact output positions. These positions serve as anchor points, which are then used as core positions in the single-entry algorithm for Multiple-Sequence $(\min,+)$ Convolution (Theorem \ref{thm:mult_sing_sec}). We formalize the required computation in the following problem.
	\begin{problem}[computing a weak convex support sequence]
		\label{prb:weak_css}
		Given $N$ non-$\infty$ tropical sequences $a^{(1)},\ldots,a^{(N)}$, let
		$
		c=\bigotimes_{i=1}^{N}a^{(i)}.
		$
		The goal is to find a weak convex support sequence $q$ of $c$ such that
		$$
		\max_{j=1}^{|q|-1}(q_{j+1}-q_j)
		\leq
		\max_{i=1}^{N}\left(\operatorname{cgap}(a^{(i)})\right),
		$$
		where the maximum over an empty set is $0$, and a sequence $w$ of length $|q|$ such that
		$
		w_j=c_{q_j}
		$ for all $1\leq j\leq |q|$.
	\end{problem}
	We now give an algorithm for Problem \ref{prb:weak_css} running in
	$$
	O\left(\left(\sum_{i=1}^{N}|a^{(i)}|\right)\log N\right)
	$$
	time.
	\begin{algorithm}[H]
		\caption{Computing a weak convex support sequence}
		\label{alg1}
		\begin{algorithmic}[1]
			\REQUIRE Tropical sequences $a^{(1)},\ldots,a^{(N)}$
			\ENSURE A weak convex support sequence $q$ and values $w_j=c_{q_j}$
			\FOR{$i=1$ \TO $N$}
			\STATE Compute $p^{(i)}=\operatorname{csupp}(a^{(i)})$
			\STATE Write $p^{(i)}=(p^{(i)}_1,\ldots,p^{(i)}_{r_i})$
			\STATE Set
			$
			s_i\leftarrow p^{(i)}_1,\quad h_i\leftarrow a^{(i)}_{p^{(i)}_1}
			$
			\STATE Form the expression
			$$
			\operatorname{Expr}_i\leftarrow 
			\left(
			(s_i,h_i),
			\left(
			\frac{a^{(i)}_{p^{(i)}_2}-a^{(i)}_{p^{(i)}_1}}
			{p^{(i)}_2-p^{(i)}_1},
			p^{(i)}_2-p^{(i)}_1
			\right),
			\ldots,
			\left(
			\frac{a^{(i)}_{p^{(i)}_{r_i}}-a^{(i)}_{p^{(i)}_{r_i-1}}}
			{p^{(i)}_{r_i}-p^{(i)}_{r_i-1}},
			p^{(i)}_{r_i}-p^{(i)}_{r_i-1}
			\right)
			\right)
			$$
			\ENDFOR
			\STATE Let
			$
			\operatorname{Expr}_i=
			((s_i,h_i),(k_{i,1},b_{i,1}),\ldots,(k_{i,m_i},b_{i,m_i}))
			$
			\STATE Merge the $N$ sorted lists
			$
			\{\left((k_{i,1},b_{i,1}),\ldots,(k_{i,m_i},b_{i,m_i})\right)\big|1\leq i\leq N\}
			$
			by nondecreasing slope, preserving the relative order inside each list.
			\STATE Let the merged list be
			$
			((K_1,B_1),\ldots,(K_t,B_t))
			$
			\STATE Set
			$
			S\leftarrow \sum_{i=1}^{N}s_i
			$,\quad 
			$
			H\leftarrow \sum_{i=1}^{N}h_i
			$
			\STATE Initialize $q\leftarrow (S),\quad w\leftarrow (H)$
			\FOR{$j=1$ \TO $t$}
			\STATE $S\leftarrow S+B_j$
			\STATE $H\leftarrow H+K_jB_j$
			\STATE Append $S$ to $q$ and $H$ to $w$
			\ENDFOR
			\RETURN $q,w$
		\end{algorithmic}
	\end{algorithm}
	\begin{theorem}
		\label{thm:alg_weak_css}
		Algorithm \ref{alg1} correctly finds a weak convex support sequence $q$ and a sequence $w$ satisfying the condition in Problem \ref{prb:weak_css}, and runs in $$O\left(\left(\sum_{i=1}^{N}|a^{(i)}|\right)\log N\right)$$ time.
	\end{theorem}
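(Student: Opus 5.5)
The plan has three parts: show the output $(q,w)$ lists breakpoints of $\hat f_c$ with $w_j=\hat f_c(q_j)$, lift this to the convex support sequence $\operatorname{csupp}(c)$, and then bound the gaps and running time.

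\textbf{Step 1: $(q,w)$ traces $\hat f_c$.} Each $\operatorname{Expr}_i$ is an expression of $\hat f_{a^{(i)}}$ by Lemma \ref{lem:env}. A $k$-way stable merge of lists sorted by slope equals the result of iterated pairwise stable merges. So induction on $N$ with Lemma \ref{lem:expr} shows that
$\big((S_0,H_0),(K_1,B_1),\ldots,(K_t,B_t)\big)$ is an expression of $\hat f_c$, where $S_0=\sum_i s_i$ and $H_0=\sum_i h_i$. The loop therefore outputs $q_j=S_0+\sum_{l<j}B_l$ and $w_j=\hat f_c(q_j)$, since the running value is the piecewise-linear formula evaluated at segment endpoints. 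The $q_j$ are strictly increasing because $B_l>0$. The first point is $q_1=S_0$ and the last is $\deg c$.

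\textbf{Step 2: $q\subseteq\operatorname{csupp}(c)$ and $w_j=c_{q_j}$.} This is the main step. Each $q_j$ is a sum $\sum_i P_i$, where $P_i$ is an element of $\operatorname{csupp}(a^{(i)})$ (the prefix endpoint reached in list $i$). So
$$c_{q_j}\le\sum_i a^{(i)}_{P_i}=\sum_i\hat f_{a^{(i)}}(P_i)=H\text{-value}=\hat f_c(q_j)\le c_{q_j}.$$
The last inequality holds because the envelope lies below the coefficients. The $H$-value equals $\sum_i \hat f_{a^{(i)}}(P_i)$ since the segments from each list are consumed in order. Hence $c_{q_j}=\hat f_c(q_j)$, which is finite, so $q_j\in\operatorname{csupp}(c)$ and $w_j=c_{q_j}$.

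Next, take $x$ with $c_x\ne\infty$ and $|\partial\hat f_c(x)|\ge2$. Then $x$ is a kink of $\hat f_c$ or an endpoint of its domain. Since $\hat f_c$ is linear on each merged segment, every kink is some segment endpoint $q_j$, and the domain endpoints are $q_1$ and $q_{|q|}$. So $x\in q$, and $q$ is a weak convex support sequence.

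\textbf{Step 3: gap bound and running time.} Consecutive differences are $q_{j+1}-q_j=B_j$. Each $B_j$ is the length $p^{(i)}_{l+1}-p^{(i)}_l$ of a segment of some $a^{(i)}$, which is at most $\operatorname{cgap}(a^{(i)})$ by Theorem \ref{thm:cgap}. If $t=0$ there are no gaps and the empty maximum is $0$.

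For the running time, computing each $\operatorname{csupp}$ with Andrew's algorithm costs $O(|a^{(i)}|)$. The $N$-way merge via a heap costs $O(\sum_i m_i\log N)$, which is within $O(\sum_i|a^{(i)}|\log N)$ since $m_i\le|a^{(i)}|$. The final loop takes linear time.

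In Step 2, ties among equal slopes can be ignored: any stable tie order still gives a valid expression by Lemma \ref{lem:expr}, so the argument above applies unchanged.
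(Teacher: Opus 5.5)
Your proof is correct. Most of it matches the paper: the merged expression via Lemma \ref{lem:expr}, the sandwich $c_{q_j}\le\sum_i a^{(i)}_{P_i}=\hat f_c(q_j)\le c_{q_j}$ giving $w_j=c_{q_j}$ and $q\subseteq\operatorname{csupp}(c)$, the gap bound through Theorem \ref{thm:cgap}, and the heap-merge running time. The genuine difference is how you show that every $x$ with $c_x\neq\infty$ and $|\partial\hat f_c(x)|\ge 2$ lies in $q$.

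The paper's argument runs as follows:
\begin{itemize}
\item It realizes the infimal convolution at $x$ by a split $\hat f_c(x)=\sum_i\hat f_{a^{(i)}}(x_i)$ with $\sum_i x_i=x$.
\item It shows that two distinct subgradients $d_0<d_1$ of $\hat f_c$ at $x$ are also subgradients of every $\hat f_{a^{(i)}}$ at $x_i$.
\item It deduces that each $x_i$ is a breakpoint of $\operatorname{Expr}_i$ separating slopes $\le d_0$ from slopes $\ge d_1$.
\item By the slope ordering, the consumed segments then form a prefix of the merged list, so $x$ is a prefix sum and hence lies in $q$.
\end{itemize}

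You instead use only what the merged list already gives you. Since it is an expression of $\hat f_c$, the function $\hat f_c$ is affine on each $[q_j,q_{j+1}]$ and equals $\infty$ outside $[q_1,q_{|q|}]$. So at any point of the domain other than the $q_j$, the subdifferential is a singleton.

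Your route is shorter. It avoids having to justify that the infimum in $\square$ is attained, and it avoids the subgradient-sharing argument. The paper's route yields extra structure not needed here: at a kink of $c$, an optimal split consists of breakpoints of the factors sharing a common subgradient interval.

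For completeness, state two small points explicitly. First, $c_x\neq\infty$ puts $x$ in the domain of $\hat f_c$, because $\hat f_c(x)\le c_x$. Second, if $\hat f_c$ is affine with slope $K$ on an open neighborhood of $x$, testing the subgradient inequality at $y$ slightly left and slightly right of $x$ forces $\partial\hat f_c(x)=\{K\}$.
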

	\begin{proof}
		Let $c=\bigotimes_{i=1}^{N}a^{(i)}$. For each $1\leq i\leq N$, the algorithm computes an expression
		$$
		\operatorname{Expr}_i=
		((s_i,h_i),(k_{i,1},b_{i,1}),\ldots,(k_{i,m_i},b_{i,m_i}))
		$$
		of $\hat f_{a^{(i)}}$. By Lemma \ref{lem:expr}, merging all pairs
		in order gives an expression
		$$
		\operatorname{Expr}=
		\left(
		\left(\sum_{i=1}^{N}s_i,\sum_{i=1}^{N}h_i\right),
		(K_1,B_1),\ldots,(K_t,B_t)
		\right)
		$$
		of $\hat f_c$.
		
		We now prove the theorem in four parts.
		
		$\textbf{Part 1}$. First, we show that the sequence $q$ found by Algorithm \ref{alg1} satisfies
		$
		\max_{j=1}^{|q|-1}(q_{j+1}-q_j)
		\leq
		\max_{i=1}^{N}\left(\operatorname{cgap}(a^{(i)})\right).
		$ Every gap $q_{j+1}-q_j$ is equal to some segment length $B_u$,
		which comes from some $\operatorname{Expr}_i$. By Theorem \ref{thm:cgap},
		$
		B_u\leq \operatorname{cgap}(a^{(i)}).
		$
		Therefore
		$$
		\max_{j=1}^{|q|-1}(q_{j+1}-q_j)
		\leq
		\max_{i=1}^{N}\left(\operatorname{cgap}(a^{(i)})\right).
		$$
		
		$\textbf{Part 2}$. Next, we prove that $w_j=c_{q_j}$ for all $1\leq j\leq |q|$. Since $(q_j,w_j)$ lies on
		$\hat f_c$, we have
		$
		w_j=\hat f_c(q_j)\leq c_{q_j}.
		$
		On the other hand, by the construction of the merged expression, each
		$q_j$ is obtained as
		$
		q_j=\sum_{i=1}^{N}u_i
		$
		for some convex support indices $u_i\in\operatorname{csupp}(a^{(i)})$, and
		$
		w_j=\sum_{i=1}^{N}a^{(i)}_{u_i}.
		$
		Thus, by the definition of $(\min,+)$ convolution,
		$$
		c_{q_j}\leq \sum_{i=1}^{N}a^{(i)}_{u_i}=w_j.
		$$
		Hence
		$
		w_j=c_{q_j}.
		$
		
		$\textbf{Part 3}$. It remains to prove that $q$ is a weak convex support sequence of $c$.
		From the previous step, for every $q_j\in q$ we have
		$
		w_j=c_{q_j}.
		$
		On the other hand, by the construction of the expression of $\hat f_c$,
		we also have
		$
		\hat f_c(q_j)=w_j=c_{q_j}.
		$
		Therefore
		$
		q\subseteq \operatorname{csupp}(c).
		$
		
		Now let $x$ be an integer such that
		$
		c_x\neq\infty
		$,
		$
		|\partial \hat f_c(x)|\geq 2.
		$
		Pick
		$
		d_0,d_1\in\partial\hat f_c(x)
		$ with
		$
		d_0<d_1.
		$
		By Lemma \ref{lem:expr}, $\hat f_c=\square_{i=1}^{N}\hat f_{a^{(i)}}$. Then we can conclude that there exist $x_1,\ldots,x_N$ such that
		$$
		\hat f_c(x)=\sum_{i=1}^{N}\hat f_{a^{(i)}}(x_i) \quad\text{and}\quad
		\sum_{i=1}^{N}x_i=x.
		$$
		We claim that
		$
		d_0,d_1\in\partial \hat f_{a^{(i)}}(x_i)
		$
		for all $1\leq i\leq N$, which will imply $|\partial \hat{f}_{a^{(i)}}(x_i)|\geq 2$.
		Indeed, if for some $i$ and some $y$,
		$
		\hat f_{a^{(i)}}(y)<\hat f_{a^{(i)}}(x_i)+d_0(y-x_i),
		$
		then
		$$
		\begin{aligned}
		\hat f_c(x-x_i+y)
		&\leq
		\hat f_{a^{(i)}}(y)+\sum_{1\leq j\leq N,j\neq i}\hat f_{a^{(j)}}(x_j)\\
		&<\sum_{j=1}^{N}\hat f_{a^{(j)}}(x_j)+d_0(y-x_i)\\
		&=
		\hat f_c(x)+d_0(y-x_i),
		\end{aligned}
		$$
		contradicting $d_0\in\partial\hat f_c(x)$. Thus
		$d_0\in\partial\hat f_{a^{(i)}}(x_i)$ for all $1\leq i\leq N$. The proof for $d_1$ is the
		same.
		
		For each $1\leq i\leq N$, write
		$
		\operatorname{Expr}_i=
		((s_i,h_i),(k_{i,1},b_{i,1}),\ldots,(k_{i,m_i},b_{i,m_i})).
		$
		Since $|\partial\hat f_{a^{(i)}}(x_i)|\geq 2$, by Definition \ref{def:conv_par}, $x_i\in \operatorname{csupp}(a^{(i)})$. And by the definition of expression there
		exists $0\leq u_i\leq m_i$ such that
		$
		x_i=s_i+\sum_{v=1}^{u_i}b_{i,v}.
		$
		
		Moreover, by the definition of expression, we have
		$k_{i,v}\leq d_0$
		for all $v\leq u_i$
		and
		$k_{i,v}\geq d_1$
		for all $v>u_i$. Hence, in the merged list $((K_1,B_1),\ldots,(K_t,B_t))$, every pair $(k_{i,v},b_{i,v})$ with $1\leq i\leq N,1\leq v\leq u_i$ precedes every pair with $1\leq i\leq N$, $u_i<v\leq m_i$. 
		Therefore, for some $0\leq r\leq t$,
		$$
		\bigcup_{i=1}^{N}
		\{(k_{i,v},b_{i,v})\mid 1\leq v\leq u_i\}
		=
		\{(K_1,B_1),\ldots,(K_r,B_r)\},
		$$
		where the union is taken as a multiset union. It follows that
		$
		x
		=
		\sum_{i=1}^{N}x_i
		=
		\sum_{i=1}^{N}s_i+\sum_{j=1}^{r}B_j.
		$
		By the construction of the algorithm, every index of the form
		$
		\sum_{i=1}^{N}s_i+\sum_{j=1}^{r}B_j
		$
		is appended to $q$. Hence $x\in q$.
		
		Thus every integer $x$ with $c_x\neq\infty$ and
		$|\partial\hat f_c(x)|\geq 2$ lies in $q$. Together with
		$q\subseteq\operatorname{csupp}(c)$, this proves that $q$ is a weak
		convex support sequence of $c$.
		
		$\textbf{Part 4}$. Finally, we analyze the running time. Computing all convex support sequences takes
		$
		O\left(\sum_{i=1}^{N}|a^{(i)}|\right)
		$
		time. The total number of pairs in lists is
		$
		O\left(\sum_{i=1}^{N}|a^{(i)}|\right),
		$
		and since each list is already sorted, the merged list is obtained by an
		$N$-way merge in
		$
		O\left(\left(\sum_{i=1}^{N}|a^{(i)}|\right)\log N\right)
		$
		time. The remaining steps are linear in the output size. Therefore the
		algorithm runs in
		$$
		O\left(\left(\sum_{i=1}^{N}|a^{(i)}|\right)\log N\right)
		$$
		time.
	\end{proof}
	\begin{remark}
		\label{rmk:sol_rec}
		Algorithm \ref{alg1} can also support solution recovery with only minor bookkeeping. During the merge step, we record for each merged segment the index of the tropical sequence from which it comes. Then, for each computed pair $(q_t,w_t)$, we obtain a solution by assigning to the $i$-th term the value of the starting position $s_i$ plus the total horizontal length of the merged segments from sequence $a^{(i)}$ that appear before $q_t$ is appended.
		
		By construction, these positions sum to $q_t$, and the corresponding input values sum to $w_t=c_{q_t}$. Thus Algorithm \ref{alg1} can output both the values on the weak convex support sequence and their corresponding solutions, without changing its asymptotic running time.
	\end{remark}
	This algorithm admits a clear geometric interpretation. Figure \ref{fig:conv} illustrates an example with $N=2$, $a^{(1)}=(2,5,3,0,4)$, and $a^{(2)}=(3,0,0,4,2)$. The top two plots depict the geometric shapes of $a^{(1)}$ and $a^{(2)}$, respectively, while the bottom plot shows the shape of $a^{(1)}\otimes a^{(2)}$.
	\begin{figure}[H]
		\centering
		\includegraphics[width=0.55\linewidth]{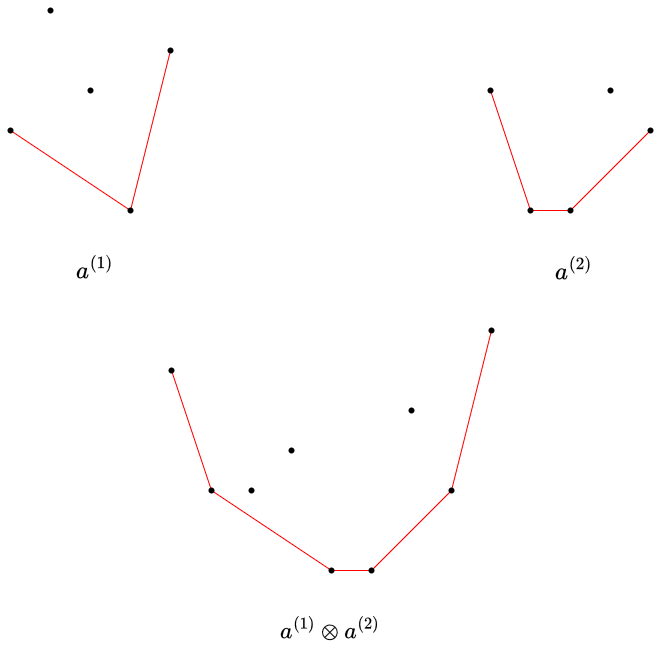}
		\caption{Geometric interpretation of Algorithm \ref{alg1}}
		\label{fig:conv}
	\end{figure}
	For a tropical sequence $b$, the black points mark the coordinates $(i,b_i)$, and the red lines indicate the segments in the expression of $\hat f_b$. A point $(i,b_i)$ lies on a red segment precisely when $i$ belongs to the convex support sequence of $b$. As the figure demonstrates, the segments in the expression of $\hat f_{a^{(1)}\otimes a^{(2)}}$ are obtained by concatenating the segments in the expressions of $\hat f_{a^{(1)}}$ and $\hat f_{a^{(2)}}$, arranged in nondecreasing order of slope.
	
	As an immediate consequence of Algorithm \ref{alg1}, we obtain the
	following property of convex gap.
	\begin{corollary}
		\label{cor:cgap}
		For $N$ tropical sequences $a^{(1)},a^{(2)},\ldots,a^{(N)}$, we have
		$$
		\operatorname{cgap}\left(\bigotimes_{i=1}^{N}a^{(i)}\right)
		\leq
		\max_{i=1}^{N}\left(\operatorname{cgap}(a^{(i)})\right).
		$$
	\end{corollary}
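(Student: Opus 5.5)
The plan is to read the bound directly off Algorithm \ref{alg1} and Theorem \ref{thm:cgap}. First we dispose of degenerate cases. If some $a^{(i)}=\infty$, the product is $\infty$ and its convex gap is $0$ by convention. If the product $c=\bigotimes_{i=1}^{N}a^{(i)}$ lies in $\mathbb T$, then $\operatorname{cgap}(c)=0$. In both cases the inequality is trivial. So we assume all $a^{(i)}$ are non-$\infty$ and $c\notin\mathbb T$.

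Let $q$ and $w$ be the output of Algorithm \ref{alg1} on $a^{(1)},\ldots,a^{(N)}$. By Theorem \ref{thm:alg_weak_css}, $q$ is a weak convex support sequence of $c$. Moreover, its consecutive gaps satisfy $\max_j(q_{j+1}-q_j)\le M:=\max_i\operatorname{cgap}(a^{(i)})$.

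The key step is to transfer this gap bound from $q$ to $\operatorname{csupp}(c)=(p_1,\ldots,p_t)$. By the definition of a weak convex support sequence, $q\subseteq\operatorname{csupp}(c)$. By the lemma on weak convex support sequences, $q$ starts at $L(c)=p_1$ and ends at $\deg c=p_t$. Hence $\operatorname{csupp}(c)$ is a refinement of $q$ inside $[p_1,p_t]$. Each gap $p_{i+1}-p_i$ lies inside some gap $[q_j,q_{j+1}]$, because no element of $q$ falls strictly between $p_i$ and $p_{i+1}$. Therefore $\max_i(p_{i+1}-p_i)\le\max_j(q_{j+1}-q_j)\le M$.

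Finally we apply Theorem \ref{thm:cgap}: $\operatorname{cgap}(c)=\max(\max_i(p_{i+1}-p_i),1)$. If $M\ge1$ this is at most $M$, as desired.

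The only subtle point, which I expect to be the main obstacle, is the lower cutoff $1$ when $M=0$. This case means every $a^{(i)}$ lies in $\mathbb T$ (a non-$\infty$ sequence of positive degree has $\operatorname{cgap}\ge1$, by Theorem \ref{thm:cgap} or the convention). Then $c\in\mathbb T$, which was handled in the first paragraph. So the cutoff never causes a violation.
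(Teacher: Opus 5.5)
Your proof is correct and follows essentially the same route as the paper. You run Algorithm~\ref{alg1} to get a weak convex support sequence whose gaps are at most $\max_i \operatorname{cgap}(a^{(i)})$, observe that $\operatorname{csupp}$ of the product refines it with the same endpoints, and conclude via Theorem~\ref{thm:cgap}, treating the all-constant case separately. Your separate handling of the case where some $a^{(i)}=\infty$, which Algorithm~\ref{alg1} does not accept as input, is a small extra bit of care that the paper's proof leaves implicit.
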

	\begin{proof}
		If $\max_{i=1}^{N}\left(\operatorname{cgap}(a^{(i)})\right)=0$, then $a^{(i)}\in \mathbb T$ for all $1\leq i\leq N$, hence $\bigotimes_{i=1}^{N}a^{(i)}\in \mathbb T$, which implies $\operatorname{cgap}\left(\bigotimes_{i=1}^{N}a^{(i)}\right)=0$, proving the statement; otherwise, let
		$
		b=\bigotimes_{i=1}^{N}a^{(i)}.
		$
		By Algorithm \ref{alg1} and Theorem \ref{thm:alg_weak_css}, we obtain a weak convex
		support sequence $q$ of $b$ such that
		$$
		\max_{i=1}^{|q|-1}(q_{i+1}-q_i)
		\leq
		\max_{i=1}^{N}\left(\operatorname{cgap}(a^{(i)})\right).
		$$
		Let
		$
		q'=\operatorname{csupp}(b).
		$
		Since $q$ is a weak convex support sequence, we have $q\subseteq q'$ and the startpoints and endpoints of $q$ and $q'$ coincide. Hence refining $q$ to $q'$ would not increase the maximum adjacent gap. By Theorem \ref{thm:cgap},
		$$
		\operatorname{cgap}(b)
		=
		\max\left(\max_{i=1}^{|q'|-1}(q'_{i+1}-q'_i),1\right)
		\leq
		\max\left(\max_{i=1}^{|q|-1}(q_{i+1}-q_i),1\right)
		\leq
		\max_{i=1}^{N}\left(\operatorname{cgap}(a^{(i)})\right).
		$$
	\end{proof}
	\begin{corollary}
		For every nonnegative integer $k$, both $\mathcal{T}_k$ and $\mathcal{C}_k$ are multiplicatively closed in $\mathbb{T}[x]$.
	\end{corollary}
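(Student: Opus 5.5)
We treat the two families separately, since each closure property comes from a different earlier argument.

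For $\mathcal T_k$ the argument is direct from the definition of tropical decomposition width. Take $A,B\in\mathcal T_k$. If either is $\infty$, then $A+B=\infty$ and $\operatorname{tdw}(\infty)=0\le k$. Otherwise, choose decompositions
$$A=\sum_{i=1}^{N}B^{(i)},\qquad B=\sum_{j=1}^{M}C^{(j)}$$
with all $\deg B^{(i)}\le k$ and $\deg C^{(j)}\le k$. Concatenating the two lists of factors gives a tropical polynomial decomposition of $A+B$ whose factors all have degree at most $k$. Hence $\operatorname{tdw}(A+B)\le k$, so $A+B\in\mathcal T_k$. The only point to watch is the edge case where $A$ or $B$ is a constant in $\mathbb R$. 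Such a polynomial is itself a single factor of degree $0$, so the concatenation argument still applies.

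For $\mathcal C_k$, we pass to tropical sequence notation and invoke Corollary~\ref{cor:cgap} with $N=2$. For $a,b\in\mathcal C_k$ it gives
$$\operatorname{cgap}(a\otimes b)\le\max(\operatorname{cgap}(a),\operatorname{cgap}(b))\le k,$$
which is exactly $a\otimes b\in\mathcal C_k$.

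That corollary is stated for non-$\infty$ sequences, because Algorithm~\ref{alg1} requires non-$\infty$ inputs. So the case where one factor is $\infty$ must be handled separately. In that case the product is $\infty$, and $\operatorname{cgap}(\infty)=0\le k$ by convention.

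Neither part presents a real obstacle; the only care needed is in these degenerate cases, where $\infty$ and constants must be checked by hand. The substantive work was already done in Corollary~\ref{cor:cgap} and in the definition of $\operatorname{tdw}$.
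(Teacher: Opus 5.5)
Your proof is correct and matches the paper's argument: $\mathcal T_k$ is closed because concatenating two decompositions gives a decomposition of the product (the paper's ``by the definition of tropical decomposition width''), and $\mathcal C_k$ is closed by Corollary~\ref{cor:cgap} with $N=2$. Your separate handling of an $\infty$ factor is a harmless extra safeguard, since that corollary's proof runs through Algorithm~\ref{alg1}, which assumes non-$\infty$ inputs.
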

	\begin{proof}
		The first set is multiplicatively closed by the definition of tropical
		decomposition width. The second set is multiplicatively closed by Corollary \ref{cor:cgap}.
	\end{proof}
	\section{Algorithms for $(\min,+)$ Convolution-Type Problems}
	\label{sec:alg}
	\subsection{$(\min,+)$ Convolution for Sequences of Low Tropical Decomposition Width}
	Now we have several structural properties for convex gap and tropical
	decomposition width. These properties allow us to design and analyze
	combinatorial algorithms more naturally. To obtain a faster $(\min,+)$
	convolution algorithm for sequences of low tropical decomposition width,
	we need the following additive-combinatorial lemma.
	\begin{lemma}
		\label{lem:sub_find3}
		Let $k,d$ be positive integers. Let $S$ be an integer multiset satisfying
		\begin{itemize}
			\item $|x|\leq k$ for all $x\in S$,
			\item $\left|\sum_{x\in S}x\right|\leq d$,
			\item $\sum_{x\in T}x\neq 0$ for every nonempty sub-multiset
			$T$ of $S$.
		\end{itemize}
		Then
		$
		|S|\leq \max(2k,d).
		$
	\end{lemma}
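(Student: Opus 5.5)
We will order the elements of $S$ so that all partial sums lie in a short interval, and then count. Write $n=|S|$ and $s=\sum_{x\in S}x$. By symmetry (negate every element) we may assume $s\ge 0$, so $0\le s\le d$. No element of $S$ is $0$, since a singleton $\{0\}$ would be a nonempty sub-multiset with zero sum. For any ordering $x_1,\ldots,x_n$ of $S$, put $P_j=\sum_{i\le j}x_i$. If $P_l=P_r$ with $l<r$, then $\{x_{l+1},\ldots,x_r\}$ is a nonempty sub-multiset with zero sum. Hence $P_0=0,P_1,\ldots,P_n$ are $n+1$ pairwise distinct integers. It therefore suffices to construct an ordering whose partial sums all lie in an integer interval containing at most $\max(2k,d)+1$ integers.

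We use a threshold greedy. Fix a threshold $t$. At each step, if the current partial sum satisfies $P\le t$ and an unused positive element remains, append a positive element. If $P>t$ and an unused negative element remains, append a negative element. If only one sign remains, append the remaining elements in any order.

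The key invariant has two phases. While both signs are still available, each step started from $P\le t$ adds something in $[1,k]$, and each step started from $P>t$ subtracts something in $[1,k]$. So once the walk has crossed the threshold, it stays in $[t-k+1,t+k]$. Before the first crossing, the walk moves monotonically from $0$ toward $t$. Once one sign is exhausted, the remaining elements all have one sign, so the partial sums move monotonically to the final value $s$. This last move is always toward $s$: if the positives ran out, the remaining negatives decrease $P$ to $s$, so $P\ge s$ at that point, and symmetrically in the other case. So the final monotone stretch stays between the current value and $s$.

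Now choose $t$ by cases. If $s\le k$, take $t=s-k\le 0$. The walk first descends from $0$ and then stays in $[s-2k+1,s]$. The final monotone stretch ends at $s$ and so also stays in this window of $2k$ integers, giving $n+1\le 2k$. If $s>k$, take $t=s-k>0$. The walk first ascends from $0$, then stays in $[s-2k+1,s]$. All partial sums therefore lie in $[\min(0,s-2k+1),\,s]$, which contains $\max(s+1,2k)$ integers, so $n\le\max(s,2k-1)\le\max(d,2k)$.

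The main obstacle is the bookkeeping in the invariant: checking carefully that the pre-crossing phase and the post-exhaustion phase never leave the stated window. In particular, we must check that the monotone tail really heads toward $s$ and does not overshoot the window. We would verify this by a short case analysis on which sign is exhausted first.
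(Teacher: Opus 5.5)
Your proof is correct and follows essentially the same route as the paper. Both use a threshold-greedy ordering that keeps every prefix sum inside a short window, then apply pigeonhole using the fact that the prefix sums must be pairwise distinct. The paper takes the midpoint $D/2$ as threshold and proves the single invariant $|P_r-D/2|\le\max(k,D/2)$, which avoids your case split. Your threshold $s-k$ instead gives the slightly sharper bound $|S|\le\max(s,2k-1)$. One small inaccuracy: when $s=k$ your walk starts by ascending, not descending. This does not affect the bound, because $0$ already lies in the window $[t-k+1,t+k]$.
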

	\begin{proof}
		If $S=\varnothing$, the statement is immediate. Otherwise, define $D=\sum_{x\in S}x$.
		Since $\sum_{x\in T}x\neq 0$ for every nonempty sub-multiset $T$ of $S$, taking $T=\{x\}$ gives $x\neq 0$ for all $x\in S$, and taking $T=S$ gives $D\neq 0$. If $D<0$, replacing each element in $S$ by its negative preserves the property and changes the sign of $D$. Thus, we may assume $D>0$.
		
		We construct an ordering
		$
		c_1,c_2,\ldots,c_{|S|}
		$
		of the elements of $S$. Suppose that $c_1,\ldots,c_r$ have already been
		chosen, and write
		$
		P_r=\sum_{i=1}^{r}c_i.
		$
		
		If $P_r\leq D/2$, choose any positive remaining element. Such an element
		exists because the remaining elements have total sum $D-P_r>0$. If
		$P_r>D/2$, choose a negative remaining element if one exists; otherwise
		choose any remaining element.
		
		We claim that the inequality
		$$
		\left|P_r-\frac{D}{2}\right|
		\leq
		\max\left(k,\frac{D}{2}\right)
		$$
		holds for $0\leq r\leq |S|$ in this process. This is true for $r=0$. Suppose it holds for $r$, and let $x=c_{r+1}$.
		
		Consider three cases:
		
		$\textbf{Case 1}$. If $P_r\leq D/2$, then $0<x\leq k$, so
		$$
		\left|P_r+x-\frac{D}{2}\right|
		\leq
		\max\left(\frac{D}{2}-P_r,x\right)
		\leq
		\max\left(k,\frac{D}{2}\right).
		$$
		
		$\textbf{Case 2}.$ If $P_r>D/2$ and a negative element is chosen, then $-k\leq x<0$, and
		similarly
		$$
		\left|P_r+x-\frac{D}{2}\right|
		\leq
		\max\left(P_r-\frac{D}{2},-x\right)
		\leq
		\max\left(k,\frac{D}{2}\right).
		$$
		
		$\textbf{Case 3}.$ Finally, if $P_r>D/2$ and no negative element remains, then all remaining
		elements are positive. Hence
		$
		P_r+x\leq D,
		$
		and therefore
		$$
		\left|P_r+x-\frac{D}{2}\right|
		=
		P_r+x-\frac{D}{2}
		\leq
		\frac{D}{2}
		\leq
		\max\left(k,\frac{D}{2}\right).
		$$
		Thus the claim follows by induction.
		
		Set
		$
		M=\max\left(k,D/2\right).
		$
		Then all prefix sums
		$
		P_0,P_1,\ldots,P_{|S|}
		$
		are integers lying in the interval
		$
		\left[
		\left\lceil D/2-M\right\rceil,
		\left\lfloor D/2+M\right\rfloor
		\right].
		$
		This interval contains at most $2M+1$ integers. If
		$|S|>\max(2k,d)$, then since $D\leq d$, we have
		$
		|S|+1>\max(2k,D)+1=2M+1.
		$
		By the pigeonhole principle, there exist
		$0\leq r<s\leq |S|$ such that
		$
		P_r=P_s.
		$
		Hence
		$
		\sum_{i=r+1}^{s}c_i=0,
		$
		and then $T=\{c_i\mid r+1\leq i\leq s\}$ gives a nonempty sub-multiset of $S$ with sum zero, contradicting the assumption.
		
		Therefore
		$
		|S|\leq \max(2k,d).
		$
	\end{proof}
	\begin{definition2}[solution for $(\min,+)$ convolution]
		\label{def:sol}
		Let $a^{(1)},\ldots,a^{(N)}$ be $N$ tropical sequences, and define
		$
		b=\bigotimes_{i=1}^{N}a^{(i)}.
		$
		For $0\leq d\leq |b|$, a sequence $r=(r_1,\ldots,r_N)$ is called a
		\textit{solution} of $(a^{(1)},\ldots,a^{(N)},d)$ if
		$$
		\sum_{i=1}^{N}r_i=d
		\quad\text{and}\quad
		\sum_{i=1}^{N}a^{(i)}_{r_i}=b_d.
		$$
		For two sequences $r,r'$ of length $N$, define their \textit{difference}
		by
		$$
		D(r,r')=\sum_{i=1}^{N}[r_i\neq r'_i].
		$$
	\end{definition2}
	As a consequence of Lemma \ref{lem:sub_find3}, we obtain the following adjustment lemma for $(\min,+)$ convolution. This lemma generalizes a proximity technique that has been used in recent improvements for $0$-$1$ Knapsack \cite{prw21,clmz24a,jin24,bri24}. The connection is made precise by the reduction from $0$-$1$ Knapsack to Multiple-Sequence $(\min,+)$ Convolution, which we present in Section \ref{sec:con}.
	\begin{lemma}[adjustment lemma for $(\min,+)$ convolution]
		\label{lem:adj}
		Let $a^{(1)},\ldots,a^{(N)}$ be tropical
		sequences with $|a^{(i)}|\leq k$ for all $1\leq i\leq N$.
		Define
		$
		b=\bigotimes_{i=1}^{N}a^{(i)}.
		$
		Then for all $0\leq x,y\leq |b|$ with $|y-x|\leq 2k$ and
		$b_x\neq\infty$, if $r$ is a solution of $(a^{(1)},\ldots,a^{(N)},x)$, then
		there exists a solution $r'$ of $(a^{(1)},\ldots,a^{(N)},y)$ such that
		$$
		D(r,r')\leq 2k.
		$$
	\end{lemma}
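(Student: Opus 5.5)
The plan is the standard proximity/exchange argument. Fix $x,y$ with $|y-x|\le 2k$, $b_x\ne\infty$, and a solution $r$ of $(a^{(1)},\ldots,a^{(N)},x)$. Since $0\le y\le |b|$ we only know $b_y$ may be $\infty$, so I first treat the degenerate case. If $b_y=\infty$, any sequence $r'$ with $\sum r'_i=y$ and $0\le r'_i\le |a^{(i)}|$ is a solution, since both sides are then $\infty$ whenever some chosen entry is $\infty$; indeed if $b_y=\infty$ every such $r'$ has $\sum a^{(i)}_{r'_i}=\infty$. Such an $r'$ can be built from $r$ by changing coordinates one at a time: each change moves the sum by at most $k$, and $|y-x|\le 2k$. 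Moving greedily requires changing at most $|y-x|$ coordinates (each changed coordinate contributes at least $1$ toward the target), and $|y-x|\le 2k$, so $D(r,r')\le 2k$. In the main case $b_y\ne\infty$, I take a solution $r'$ of $(\ldots,y)$ minimizing $D(r,r')$, and among those minimizing, and suppose $D(r,r')>2k$.

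Let $I=\{i: r_i\ne r'_i\}$ and consider the multiset $S=\{\Delta_i=r'_i-r_i : i\in I\}$. Each $|\Delta_i|\le k$ since $0\le r_i,r'_i\le |a^{(i)}|\le k$, the total is $y-x$ with $|y-x|\le 2k$, and $|S|=D(r,r')>2k=\max(2k,2k)$. Applying Lemma \ref{lem:sub_find3} with $d=2k$, there is a nonempty sub-multiset, i.e.\ a nonempty $J\subseteq I$, with $\sum_{i\in J}\Delta_i=0$.

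Now do the exchange on $J$. Define $\tilde r$ from $r$ by replacing $r_i$ with $r'_i$ for $i\in J$, and $\tilde r'$ from $r'$ by replacing $r'_i$ with $r_i$ for $i\in J$. Because $\sum_J\Delta_i=0$, $\sum\tilde r=x$ and $\sum\tilde r'=y$. Let $E=\sum_{i\in J}(a^{(i)}_{r'_i}-a^{(i)}_{r_i})$; all these entries are finite since $b_x,b_y$ finite and $r,r'$ are solutions. Then $\sum a^{(i)}_{\tilde r_i}=b_x+E\ge b_x$ forces $E\ge0$, and $\sum a^{(i)}_{\tilde r'_i}=b_y-E\ge b_y$ forces $E\le 0$. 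Hence $E=0$ and $\tilde r'$ is also a solution of $(\ldots,y)$, but $D(r,\tilde r')=D(r,r')-|J|<D(r,r')$, contradicting minimality. Thus $D(r,r')\le 2k$.

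The only delicate step is the bookkeeping with $\infty$: we need all entries $a^{(i)}_{r_i},a^{(i)}_{r'_i}$ finite to subtract, which holds because both sums equal finite $b_x,b_y$; and the degenerate case $b_y=\infty$ must be handled separately as above. The combinatorial core is entirely supplied by Lemma \ref{lem:sub_find3}, so I expect no real obstacle beyond checking that its hypothesis $|S|>\max(2k,d)$ is met with $d=2k$.
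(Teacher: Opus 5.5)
Your proof is correct and follows the paper's argument. You take a solution at $y$ minimizing $D(r,\cdot)$, apply Lemma \ref{lem:sub_find3} with $d=2k$ to the difference multiset to get a zero-sum set $J$, and exchange on $J$ to obtain a closer solution. The only deviation is the case $b_y=\infty$: you build $r'$ directly by a greedy shift, whereas the paper reruns the same exchange argument, since every index tuple summing to $y$ is then a solution. Your greedy shift implicitly uses $y\le |b|=\sum_i|a^{(i)}|$, which holds because $b_x\neq\infty$ forces every $a^{(i)}\neq\infty$.
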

	\begin{proof}
		Choose a solution $q$ of $(a^{(1)},\ldots,a^{(N)},y)$ minimizing $D(r,q)$.
		We prove that $D(r,q)\leq 2k$.
		
		Suppose otherwise. Let
		$
		I=\{i\mid 1\leq i\leq N,r_i\neq q_i\},
		$
		and consider the integer multiset
		$
		S=\{q_i-r_i\mid i\in I\}.
		$
		Since $|a^{(i)}|\leq k$, we have $|q_i-r_i|\leq k$ for all $i\in I$. Moreover,
		$$
		\left|\sum_{i\in I}(q_i-r_i)\right|
		=
		\left|\sum_{i=1}^{N}q_i-\sum_{i=1}^{N}r_i\right|
		=
		|y-x|
		\leq 2k.
		$$
		Since $|S|=D(r,q)>2k$, Lemma \ref{lem:sub_find3} gives a nonempty submultiset
		of $S$ with sum zero. Equivalently, there exists a nonempty subset
		$J\subseteq I$ such that
		$
		\sum_{j\in J}(q_j-r_j)=0.
		$
		
		First suppose $b_y=\infty$. Define $t$ by replacing the coordinates of
		$q$ on $J$ by those of $r$:
		$$
		t_i=
		\begin{cases}
		r_i & \text{if } i\in J,\\
		q_i & \text{if } i\notin J.
		\end{cases}
		$$
		Then $\sum_{i=1}^{N}t_i=y$ and $\sum_{i=1}^{N}a^{(i)}_{t_i}\leq \infty=b_y$. Since $b=\bigotimes_{i=1}^{N}a^{(i)}$, the inequality forces equality. Hence $\sum_{i=1}^{N}a_{t_i}^{(i)}=b_y$. This implies $t$ is also a solution of $(a^{(1)},\ldots,a^{(N)},y)$. But
		$
		D(r,t)<D(r,q),
		$
		contradicting the choice of $q$.
		
		Now assume $b_y\neq\infty$. Since $b_x\neq\infty$ and $b_y\neq\infty$,
		all entries appearing in the two solutions are finite. Put
		$$
		\Delta=\sum_{j\in J}(a^{(j)}_{q_j}-a^{(j)}_{r_j}).
		$$
		Consider two cases for $\Delta$:
		
		\textbf{Case 1}. If $\Delta<0$, define $t$ by replacing the coordinates of $r$ on $J$ by
		those of $q$:
		$$
		t_i=
		\begin{cases}
		q_i & \text{if } i\in J,\\
		r_i & \text{if } i\notin J.
		\end{cases}
		$$
		Then
		$
		\sum_{i=1}^{N}t_i
		=
		x,
		$
		but
		$
		\sum_{i=1}^{N}a^{(i)}_{t_i}
		=
		b_x+\Delta
		<
		b_x,
		$
		contradicting $b=\bigotimes_{i=1}^{N}a^{(i)}$.
		
		\textbf{Case 2}. If $\Delta\geq 0$, define $t$ by replacing the coordinates of $q$ on $J$
		by those of $r$:
		$$
		t_i=
		\begin{cases}
		r_i & \text{if } i\in J,\\
		q_i & \text{if } i\notin J.
		\end{cases}
		$$
		Then
		$
		\sum_{i=1}^{N}t_i=y
		$
		and
		$
		\sum_{i=1}^{N}a_{t_i}^{(i)}
		=
		b_y-\Delta
		\leq b_y.
		$
		Since $b=\bigotimes_{i=1}^{N}a^{(i)}$, the inequality forces equality. Hence $\sum_{i=1}^{N}a_{t_i}^{(i)}=b_y$.
		This implies $t$ is a solution of $(a^{(1)},\ldots,a^{(N)},y)$. Moreover,
		$
		D(r,t)<D(r,q),
		$
		again contradicting the minimality of $q$.
		
		Therefore $D(r,q)\leq 2k$, and taking $r'=q$ proves the theorem.
	\end{proof}
	We now turn Lemma \ref{lem:adj} into our first algorithm for $(\min,+)$
	convolution. The theorem guarantees that optimal solutions at nearby output
	indices differ by only a bounded adjustment, which allows the algorithm to
	recover each new non-$\infty$ entry by searching a local window around the
	previous solution. This proves Theorem \ref{thm:fir_alg}.
	\begin{algorithm}[H]
		\caption{The first $(\min,+)$ convolution algorithm}
		\label{alg2}
		\begin{algorithmic}[1]
			\REQUIRE Two tropical sequences $a,b$, together with
			$\tdw(a)$ and $\tdw(b)$
			\ENSURE The $(\min,+)$ convolution $c=a\otimes b$
			\IF{$a\in \mathbb T$ or $b\in \mathbb T$}
			\STATE Compute $c=a\otimes b$ by brute-force
			\RETURN $c$
			\ENDIF
			\STATE Set
			$
			K\leftarrow \max(\tdw(a),\tdw(b)),
			\quad
			U\leftarrow 2K^2
			$
			\STATE Define $c$ to be a tropical sequence with $|c|=|a|+|b|$, and initialize $c_i\leftarrow \infty$ for every $0\le i\le |c|$
			\STATE Set $L\leftarrow L(a)+L(b),\quad ps\leftarrow L(a)$
			\FOR{$i=L$ \TO $|c|$}
			\STATE
			$
			l\leftarrow \max(0,i-|b|,ps-U),
			\quad
			r\leftarrow \min(|a|,i,ps+U)
			$
			\STATE $ps'\leftarrow ps$
			\FOR{$j=l$ \TO $r$}
			\IF{$a_j+b_{i-j}<c_i$}
			\STATE $c_i\leftarrow a_j+b_{i-j},\quad ps'\leftarrow j$
			\ENDIF
			\ENDFOR
			\STATE Set $ps\leftarrow ps'$
			\ENDFOR
			\RETURN $c$
		\end{algorithmic}
	\end{algorithm}
	\begin{theorem}
		\label{thm:fir_alg_sec}
		Algorithm \ref{alg2} correctly computes the $(\min,+)$ convolution
		$c=a\otimes b$, and runs in
		$$
		O\left((|a|+|b|)\max(\tdw(a),\tdw(b))^2\right)
		$$
		time.
	\end{theorem}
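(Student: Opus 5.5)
The plan is to decompose both inputs into small-degree factors and read the convolution $c=a\otimes b$ as a multi-sequence convolution, so that Lemma \ref{lem:adj} applies. Let $K=\max(\tdw(a),\tdw(b))$. The trivial case $a\in\mathbb T$ or $b\in\mathbb T$ is handled by brute force in linear time, so assume $K\ge 1$. Fix decompositions $a=\bigotimes_{t=1}^{N_1}W^{(t)}$ and $b=\bigotimes_{t=1}^{N_2}V^{(t)}$ with all factors of degree at most $K$. These decompositions appear only in the analysis and are never computed. Then $c=\bigotimes_t W^{(t)}\otimes\bigotimes_t V^{(t)}$ is a convolution of $N=N_1+N_2$ sequences of length at most $K$.

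A solution $r$ of this family for index $i$ splits into an $a$-part and a $b$-part. The $a$-part has coordinate sum $j$, and $j$ is an optimal split point: $a_j+b_{i-j}=c_i$.

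The core invariant, proved by induction on $i$, is the following. After processing every index $i\ge L$ with $c_i\neq\infty$, the value $c_i$ is correct, and $ps$ equals an optimal split $j$ for $i$, realized by some full solution $r$ of the $N$-sequence family.

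For the inductive step, take consecutive indices $i<i'$ with $c_i,c_{i'}$ finite. I first need $|i'-i|\le 2K$. Two routes seem available:
\begin{itemize}
\item By Theorem \ref{thm:fir_con_sec} with $d=1$ and Theorem \ref{thm:cgap}, finite entries of $a$ and $b$ occur with gaps at most $K$ between consecutive convex-support points. Finite entries of $c$ then lie on $[L,|c|]$ with gaps at most $K$, since Corollary \ref{cor:cgap} bounds $\operatorname{cgap}(c)$ by $K$ and convex support points are finite.
\item Apply Lemma \ref{lem:adj} directly with $y=i+1$, even when $c_{i+1}=\infty$. The lemma only needs $b_x\neq\infty$ at the starting index.
\end{itemize}
I will actually run the induction over all $i$ and let $ps$ stay put across infinite entries. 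The one point to verify is that starting from the last finite index still works within distance $2K$, which follows from the gap bound above.

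Given $|i'-i|\le 2K$, Lemma \ref{lem:adj} produces a solution $r'$ for $i'$ with $D(r,r')\le 2K$. Each coordinate changes by at most $K$, so the $a$-part sum changes by at most $2K\cdot K=2K^2=U$. Hence some optimal split $j'$ for $i'$ satisfies $|j'-ps|\le U$, and the window $[ps-U,ps+U]$ in Algorithm \ref{alg2}, intersected with the valid range, contains it. The algorithm therefore computes $c_{i'}$ exactly, and sets $ps'$ to the first minimizing $j$ found. That $j$ might not be the split coming from $r'$, but it is some optimal split.

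This is the subtle point. The invariant must say that $ps$ is realized by some solution, not a specific one. That is fine: any optimal split $j$ for $i'$ lifts to a full solution by combining an optimal $a$-solution at $j$ with an optimal $b$-solution at $i'-j$, because $a_j$ and $b_{i'-j}$ are themselves convolutions of the factors. So the invariant is preserved.

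For the base case, $i=L=L(a)+L(b)$ and $ps=L(a)$, which is the unique finite split. For entries with $c_i=\infty$, every candidate is $\infty$, so the output is correct and $ps$ is unchanged. The running time is $O((|a|+|b|)\cdot(2U+1))=O((|a|+|b|)K^2)$.

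The main obstacle I expect is the bookkeeping around infinite entries: making sure $ps$ stays within distance $2K$ (in index $i$) of a finite anchor. I will settle this with the gap bound from Corollary \ref{cor:cgap}, so that Lemma \ref{lem:adj} is always applied between finite indices at distance at most $2K$.
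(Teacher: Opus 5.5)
Your proposal is correct and follows the paper's proof essentially step for step. Like the paper, you fix analysis-only decompositions of $a$ and $b$ into factors of degree at most $K$, and use $\operatorname{cgap}(c)\le K$ to bound gaps between consecutive finite entries of $c$ by $K$. You then lift the stored split $ps$ to a full solution by concatenating factor-level solutions, and apply Lemma \ref{lem:adj} to get $|qs-ps|\le K\cdot D(w,r)\le 2K^2=U$. You were right to settle on the gap-bound route rather than applying Lemma \ref{lem:adj} at infinite targets, since that alternative cannot chain onward: the lemma needs a finite starting entry.
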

	\begin{proof}
		Let
		$
		K=\max(\tdw(a),\tdw(b))
		$
		and
		$
		U=2K^2.
		$
		Let $c^*=a\otimes b$ be the true convolution.
		
		We first prove correctness. If $a\in \mathbb T$ or $b\in \mathbb T$, then the brute-force algorithm computes $a\otimes b$ in time $O(|a|+|b|)$. Otherwise, let
		$
		d_1<d_2<\cdots<d_u
		$
		be all indices $d$ such that $c^*_d\ne\infty$. Since
		$\operatorname{csupp}(c^*)$ is a subsequence of these indices with the
		same first and last indices, and since
		$$
		\operatorname{cgap}(c^*)\le \tdw(c^*)
		\le \max(\tdw(a),\tdw(b))=K,
		$$
		by Theorem \ref{thm:fir_con_sec} and Corollary \ref{cor:cgap}, we have
		$
		d_t-d_{t-1}\le K
		$
		for every $2\le t\le u$.
		
		We prove by induction on $t$ that after the algorithm finishes the
		iteration for $i=d_t$, the stored value $c_{d_t},ps$ satisfies
		$$
		c^*_{d_t}=c_{d_t}=a_{ps}+b_{d_t-ps}.
		$$
		In other words, $(ps,d_t-ps)$ is a solution of $(a,b,d_t)$.
		
		For the base case, $d_1=L(c^*)=L(a)+L(b)$. In the iteration for $i=d_1$,
		the only index that can give a finite update is $j=L(a)$. Hence the algorithm updates $c_{d_1}$ using
		$
		a_{L(a)}+b_{L(b)}=c^*_{d_1},
		$
		and stores an index $ps$ satisfying
		$
		c^*_{d_1}=c_{d_1}=a_{ps}+b_{d_1-ps}.
		$
		
		Now assume the claim holds for $d_{t-1}$, where $2\le t\le u$. Thus,
		before reaching the iteration for $i=d_t$, the stored value $ps$ satisfies
		$
		c^*_{d_{t-1}}=a_{ps}+b_{d_{t-1}-ps}.
		$
		Choose tropical sequence decompositions
		$$
		a=\bigotimes_{i=1}^{N}\alpha^{(i)}
		\quad\text{and}\quad
		b=\bigotimes_{i=1}^{M}\beta^{(i)}
		$$
		such that
		$
		|\alpha^{(i)}|,|\beta^{(i)}|\le K.
		$
		Let $(u_1,\ldots,u_N)$ be a solution of
		$(\alpha^{(1)},\ldots,\alpha^{(N)},ps)$, and let
		$(v_1,\ldots,v_M)$ be a solution of
		$(\beta^{(1)},\ldots,\beta^{(M)},d_{t-1}-ps)$. Then we have
		$
		w=(u_1,\ldots,u_N,v_1,\ldots,v_M)
		$
		is a solution of
		$
		(\alpha^{(1)},\ldots,\alpha^{(N)},
		\beta^{(1)},\ldots,\beta^{(M)},d_{t-1}).
		$
		
		Since
		$
		|d_t-d_{t-1}|\le K\le 2K
		$
		and $c^*_{d_{t-1}}\ne\infty$, Lemma \ref{lem:adj} implies that there
		exists a solution
		$
		r=(r_1,\ldots,r_{N+M})
		$
		of
		$
		(\alpha^{(1)},\ldots,\alpha^{(N)},
		\beta^{(1)},\ldots,\beta^{(M)},d_t)
		$
		such that
		$
		D(w,r)\le 2K.
		$
		Set
		$
		qs=\sum_{i=1}^{N}r_i.
		$
		Then $a_{qs}+b_{d_t-qs}=c^*_{d_t}$. Moreover,
		$$
		|qs-ps|
		=
		\left|\sum_{i=1}^{N}r_i-\sum_{i=1}^{N}u_i\right|
		\le
		\sum_{i=1}^{N}|r_i-u_i|
		\le
		K\sum_{i=1}^{N}[r_i\ne u_i]
		\le
		K\cdot D(w,r)
		\le
		2K^2
		=
		U.
		$$
		Hence the iteration for $i=d_t$ scans the index $qs$, and therefore it finds a value at most $a_{qs}+b_{d_t-qs}=c^*_{d_t}$. Moreover, all terms $a_j+b_{d_t-j}\geq c^*_{d_t}$, so the value computed by the algorithm is exactly $c^*_{d_t}$. Thus the final stored index $ps$ in this iteration also attains the minimum,
		$$c^*_{d_t}=c_{d_t}=a_{ps}+b_{d_t-ps}.$$
		This completes the induction.
		
		It remains to consider indices $i$ with $c^*_i=\infty$. For such an
		index, every term $a_j+b_{i-j}$ is equal to $\infty$, so the algorithm
		never updates $c_i$ from its initial value $\infty$. Thus every entry of
		$c$ is equal to the corresponding entry of $c^*$.
		
		Finally, we bound the running time. For each output index $i$, the inner
		loop scans at most $2U+1=O(K^2)$ indices. Since there are
		$O(|a|+|b|)$ output indices, the total running time is
		$$
		O((|a|+|b|)K^2)
		=
		O\left((|a|+|b|)\max(\tdw(a),\tdw(b))^2\right).
		$$
	\end{proof}
	Next, we present another algorithm for $(\min,+)$ convolution. One of its key ingredients is a linear-time subroutine for the case where one input tropical sequence is convex. This subroutine has been employed in several recent knapsack algorithms \cite{prw21,clmz24a,jin24,bri24} via the SMAWK algorithm \cite{akmsw87}. However, few of these works carefully address the issue that SMAWK may fail when the Monge inequality involves infinite entries ($\infty$ or $-\infty$). Such infinite entries arise naturally from the convolution definition $M_{i,j}=a_j+b_{i-j}$ with $i<j$, where $b_{i-j}$ is undefined and thus conventionally set to $\infty$ or $-\infty$. For completeness, we provide a proof that rigorously handles such infinite entries.
	\begin{lemma}
		\label{lem:case_conv}
		Let $a,b$ be two tropical sequences. If one of the sequences is convex,
		then there is a deterministic algorithm that computes $a\otimes b$ in
		time $O(|a|+|b|)$.
	\end{lemma}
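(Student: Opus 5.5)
Without loss of generality assume $b$ is convex; the operation is symmetric, so otherwise we swap the roles of $a$ and $b$. The plan is to reduce to a totally monotone matrix row-minima problem that SMAWK solves in linear time, while handling the $\infty$ entries explicitly instead of relying on Monge inequalities that involve infinite values.

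\textbf{Step 1: trivial cases and normalization.} If $a=\infty$ or $b=\infty$, return $\infty$. By Lemma \ref{lem:conv_equiv}, the finite entries of $b$ are exactly $b_{L(b)},\ldots,b_{|b|}$, and their differences are nondecreasing. Shift $b$ by $L(b)$ so that all its entries are finite. Next, discard the positions $j$ with $a_j=\infty$ and keep only the finite indices $J=\{j_1<\cdots<j_s\}$ of $a$. These removals change nothing, because every term they contribute is $\infty$.

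\textbf{Step 2: the matrix.} For $0\le i\le |a|+|b|$, we want $c_i=\min_{j\in J,\ 0\le i-j\le |b|}(a_j+b_{i-j})$. Define $M_{i,t}=a_{j_t}+b_{i-j_t}$ when $0\le i-j_t\le |b|$. Otherwise set $M_{i,t}=+\infty$ if $i<j_t$ (the column starts later) and $M_{i,t}=-\infty$-like if $i-j_t>|b|$. Rather than use infinities, I would use the standard trick of replacing them by large finite penalties that grow linearly with the distance outside the valid band. Concretely, extend $b$ to all integers by $b_k=b_0+(-k)\Lambda$ for $k<0$ and $b_k=b_{|b|}+(k-|b|)\Lambda$ for $k>|b|$. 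Here $\Lambda$ exceeds any finite difference $|a_j-a_{j'}|+|b_k-b_{k'}|$ plus the largest slope, which can be computed in linear time. The extended $b$ stays convex, since its slopes are $-\Lambda,\ \ldots\text{(original slopes)}\ldots,\ \Lambda$. For any $x,y$ and $j<j'$, the inequality $b_{x-j}+b_{y-j'}\le b_{x-j'}+b_{y-j}$ for $x<y$ follows from convexity of $b$. Hence $M_{i,t}=a_{j_t}+b_{i-j_t}$ is Monge, with all entries finite, and SMAWK finds all row minima in $O(|a|+|b|)$ time.

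\textbf{Step 3: correctness of the penalty.} This is the main obstacle. We must show that whenever row $i$ has at least one valid entry, its minimum is attained at a valid column. We also need to detect rows with no valid entry. For a valid entry, the value is at most $\max a+\max b$. An invalid entry has value at least $\min a+\min(b_0,b_{|b|})+\Lambda$, which is strictly larger once $\Lambda>(\max a-\min a)+(\max b-\min b)$. For each row we then check the validity of the returned column: if $0\le i-j_t\le|b|$ holds, output $M_{i,t}$; otherwise output $\infty$. That is right, because an invalid minimizer implies no valid entry exists. Finally, undo the shift by $L(b)$, and the indices of $c$ are restored accordingly.

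Regarding the computational model, $\Lambda$ is a sum of differences of input reals, so the entries $M_{i,t}$ are degree-$1$ expressions with small integer coefficients. This respects the low-degree restriction, although the multiplier $(k-|b|)$ is an integer up to $O(N)$ that must be cast to a real. The running time is dominated by SMAWK on an $O(|a|+|b|)\times O(|a|)$ totally monotone matrix, i.e.\ $O(|a|+|b|)$.
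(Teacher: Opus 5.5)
Your proposal is correct and follows essentially the same route as the paper. Both proofs pad the convex sequence with steep linear tails of slope $\pm\Lambda$ (the paper's $H$) so that $\hat b$ stays convex with finite entries. By convexity the matrix $a_j+\hat b_{i-j}$ is then totally monotone, and SMAWK computes its row minima in linear time.

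The differences are cosmetic. You delete the columns with $a_j=\infty$ and decide whether a row is empty by checking whether the returned argmin column is a valid index. The paper keeps those $\infty$ columns, argues they cannot break total monotonicity, and declares the output entry $\infty$ when the row minimum is at least $H/2$.
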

	\begin{proof}
		Assume that $b$ is convex; otherwise, we swap $a$ and $b$. Let
		$l=|a|+|b|$ and $s=|a|$. If either $a$ or $b$ is equal to $\infty$, then $a\otimes b$ is equal to $\infty$, and there is nothing to prove.
		
		Let $U=\max\left(\max_{0\leq i\leq |a|,a_i\neq\infty}(|a_i|),\max_{0\leq j\leq |b|,b_j\neq\infty}(|b_j|)\right)$, and choose $H>100U+100$.
		Let $p=L(b)$ and $q=|b|$. By Lemma \ref{lem:conv_equiv}, the indices $u$ with $b_u\neq \infty$ form the interval $[p,q]$. Define a new sequence
		$\hat b$ on the index range $-l,\ldots,l$ by
		$$
		\hat b_i=
		\begin{cases}
		b_p+H(p-i) & \text{if }i<p,\\
		b_i & \text{if }p\le i\le q,\\
		b_q+H(i-q) & \text{if }i>q.
		\end{cases}
		$$
		Then $\hat b$ has no entries equal to $\infty$. Moreover, the consecutive differences of $\hat b$ are nondecreasing: they are equal to
		$-H$ on the left tail and $H$ on the right tail, while on the original part
		of $b$ this follows from Lemma \ref{lem:conv_equiv}. The two junctions are valid by
		the choice of $H$.
		
		Consider the implicit $(l+1)\times(s+1)$ matrix $M$, with rows indexed by
		$i=0,\ldots,l$ and columns indexed by $j=0,\ldots,s$, defined by
		$$
		M_{i,j}=a_j+\hat b_{i-j}.
		$$
		Since $\hat b$ is finite on the whole range $-l,\ldots,l$, we have
		$M_{i,j}=\infty$ if and only if $a_j=\infty$.
		
		Let $m_d$ be the minimum entry in row $d$ of $M$. We claim that
		$m_d$ determines $(a\otimes b)_d$ as follows:
		$$
		(a\otimes b)_d=
		\begin{cases}
		m_d & \text{if }m_d<H/2,\\
		\infty & \text{if }m_d\ge H/2.
		\end{cases}
		$$
		By the choice of $H$, every sum $a_x+b_y$ with
		$a_x,b_y\ne\infty$ has absolute value less than $H/2$, while every sum $a_x+\hat{b}_y$ using
		a padded entry of $\hat b$ is at least $H/2$.
		Hence $m_d<H/2$ exactly when $(a\otimes b)_d\neq \infty$, and in this case
		$m_d=(a\otimes b)_d$; otherwise $(a\otimes b)_d=\infty$. Thus it suffices to
		compute all row minima of $M$. 
		
		We claim that $M$ is totally monotone for row minima, where ties are
		broken in favor of the leftmost minimum. Equivalently, for any
		$0\le i\le l-1$ and $j<k$, if $M_{i,j}>M_{i,k}$, then
		$M_{i+1,j}>M_{i+1,k}$.
		
		Suppose, for contradiction, that
		$M_{i,j}>M_{i,k}$ and $M_{i+1,j}\leq M_{i+1,k}$ for some $j<k$. Since
		$M_{i,j}=\infty$ if and only if $a_j=\infty$, these inequalities can hold
		only when
		$
		M_{i,j},M_{i,k},M_{i+1,j},M_{i+1,k}\ne\infty.
		$
		Hence all terms above are finite. By the definition of $M$, we have
		$$
		a_j+\hat b_{i-j}>a_k+\hat b_{i-k}
		\quad\text{and}\quad
		a_k+\hat b_{i-k+1}\ge a_j+\hat b_{i-j+1}.
		$$
		Adding the two inequalities gives
		$$
		\hat b_{i-j}+\hat b_{i-k+1}
		>
		\hat b_{i-k}+\hat b_{i-j+1}.
		$$
		Equivalently,
		$
		\hat b_{i-k+1}-\hat b_{i-k}
		>
		\hat b_{i-j+1}-\hat b_{i-j}.
		$
		This contradicts the fact that the consecutive differences of $\hat b$ are
		nondecreasing. Hence $M$ is
		totally monotone.
		
		By the SMAWK algorithm, all row minima of a totally
		monotone matrix with constant-time entry access can be computed in time
		linear in the number of rows and columns. Thus all entries of
		$a\otimes b$ can be computed in $O(l+s)=O(|a|+|b|)$ time.
	\end{proof}
	We now use Lemma \ref{lem:case_conv} as a subroutine to give our second algorithm for $(\min,+)$ convolution. The idea is to choose a
	modulus $k$ such that all modular subsequences of one input become convex,
	compute the convolution on each pair of residue classes, and merge the
	results into the final sequence.
	\begin{algorithm}[H]
		\caption{The second $(\min,+)$ convolution algorithm}
		\label{alg3}
		\begin{algorithmic}[1]
			\REQUIRE Two tropical sequences $a,b$
			\ENSURE The $(\min,+)$ convolution $c=a\otimes b$
			\IF{$a\in \mathbb T$ or $b\in \mathbb T$}
			\STATE Compute $c=a\otimes b$ by brute-force
			\RETURN $c$
			\ENDIF
			\STATE $k=1$
			\WHILE{true}
			\IF{all sequences in $\operatorname{MS}_k(a)$ are convex \textbf{or} all sequences in $\operatorname{MS}_k(b)$ are convex}
			\STATE \textbf{break}
			\ENDIF
			\STATE $k \leftarrow k+1$
			\ENDWHILE
			\STATE Define $c$ to be a tropical sequence with $|c|=|a|+|b|$, and initialize $c_i\leftarrow \infty$ for every $0\le i\le |c|$
			\IF{all sequences in $\operatorname{MS}_k(a)$ are convex}
			\STATE Swap $a$ and $b$
			\ENDIF
			\FOR{$r_0=0$ \TO $\min(k-1,|a|)$}
			\FOR{$r_1=0$ \TO $\min(k-1,|b|)$}
			\STATE $u\leftarrow \operatorname{ms}_{k,r_0}(a)$, \quad $v\leftarrow \operatorname{ms}_{k,r_1}(b)$
			\IF{$u\neq \infty$ and $v\neq \infty$}
			\STATE Compute $w=u\otimes v$ using the algorithm in Lemma \ref{lem:case_conv}
			\FOR{$i=0$ \TO $|w|$}
			\STATE $c_{ik+r_0+r_1}\leftarrow \min(c_{ik+r_0+r_1},w_i)$
			\ENDFOR
			\ENDIF
			\ENDFOR
			\ENDFOR
			\RETURN $c$
		\end{algorithmic}
	\end{algorithm}
	To analyze the running time of Algorithm \ref{alg3}, we use the following
	number-theoretic estimate.
	\begin{lemma}
		\label{lem:numb2}
		For a positive integer $k$, let $L_k=\operatorname{lcm}_{i=1}^{k} i$. Then
		$$
		\ln L_k = k+o(k).
		$$
	\end{lemma}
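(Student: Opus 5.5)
The plan is to go through the Chebyshev function. Write $L_k=\prod_{p\le k}p^{\lfloor \log_p k\rfloor}$, where the product runs over primes $p$. Then
$$
\ln L_k=\sum_{p^j\le k}\ln p=\psi(k),
$$
with $\psi$ the second Chebyshev function. This holds because for each prime $p$ the exponent $\lfloor\log_p k\rfloor$ counts the prime powers $p^j\le k$ with $j\ge 1$. The statement $\ln L_k=k+o(k)$ is then exactly $\psi(k)\sim k$, and that is equivalent to the Prime Number Theorem. So the proof reduces to two steps: establish the identity $\ln L_k=\psi(k)$, then invoke PNT in Chebyshev form.

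First, I will verify the identity directly. For each prime $p\le k$, the largest power of $p$ dividing some integer in $[1,k]$ is $p^{\lfloor\log_p k\rfloor}$. This is exactly the $p$-adic valuation of $\operatorname{lcm}_{i=1}^k i$, since the lcm takes the maximum valuation over $i$. Taking logarithms gives $\ln L_k=\sum_{p\le k}\lfloor \log_p k\rfloor\ln p=\psi(k)$.

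Second, I will cite the Prime Number Theorem in the form $\psi(x)=x+o(x)$, via the classical equivalence $\psi(x)\sim x\iff\pi(x)\sim x/\ln x$. For a self-contained flavour, one can also note $\psi(k)=\theta(k)+O(\sqrt k\ln^2 k)$. Here $\theta(k)=\sum_{p\le k}\ln p$, and the error term bounds the contribution of the prime powers $p^j$ with $j\ge2$, of which there are at most $\sqrt k\log_2 k$, each contributing at most $\ln k$. With this estimate the claim is equivalent to $\theta(k)\sim k$.

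The only real obstacle is that the asymptotic genuinely needs PNT. Elementary Chebyshev bounds give only $c_1k\le\ln L_k\le c_2k$, and the earlier crude product bound of Lemma \ref{lem:numb} is far too weak. So I will not attempt an elementary derivation and will simply quote PNT as a standard result. That is adequate here, since the lemma is used only to turn the modulus bound $\operatorname{lcm}_{i=1}^k i$ from Theorem \ref{thm:sec_con_sec} into the $e^{k(1+o(1))}$ running-time factor for Algorithm \ref{alg3}.
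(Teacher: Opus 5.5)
Your proposal is correct and follows the paper's proof essentially step for step. The paper also writes $L_k=\prod_{p\le k}p^{\lfloor\log_p k\rfloor}$, identifies $\ln L_k$ with the second Chebyshev function $\psi(k)=\sum_{n\le k}\Lambda(n)$, and quotes $\psi(k)=k+o(k)$ (the Prime Number Theorem) as a standard result; your remark relating $\psi$ to $\theta$ is correct but not needed.
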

	\begin{proof}
		For each prime $p$, the exponent of $p$ in $L_k$ is
		$
		\max\left(t\mid p^t\le k\right)=\lfloor \log_p k\rfloor.
		$
		Hence
		$$
		L_k
		=
		\prod_{p\in \mathrm{prime},p\le k} p^{\lfloor \log_p k\rfloor}
		=
		\prod_{p\in \mathrm{prime},t\ge 1,p^t\le k} p.
		$$
		Taking logarithms, we get
		$$
		\ln L_k
		=
		\sum_{p\in \mathrm{prime},t\ge 1,p^t\le k} \ln p
		=
		\sum_{n=1}^{k}\Lambda(n)
		=
		\psi(k),
		$$
		where $\Lambda$ is the von Mangoldt function and $\psi$ is the second
		Chebyshev function. The classical estimate
		$
		\psi(k)=k+o(k)
		$
		is standard in analytic number theory; see, e.g., \cite{apo76}. Hence
		$$
		\ln L_k=k+o(k).
		$$
	\end{proof}
	\begin{theorem}
		\label{thm:sec_alg_sec}
		Algorithm \ref{alg3} correctly computes the $(\min,+)$ convolution
		$c=a\otimes b$, and runs in
		$$
		O\left(
		(|a|+|b|)e^{\min(\tdw(a),\tdw(b))(1+o(1))}
		\right)
		$$
		time.
	\end{theorem}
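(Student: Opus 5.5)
The proof has three parts: correctness of the residue-class split, termination of the search for $k$ at a value bounded by $\operatorname{lcm}_{i=1}^{\tau}i$ with $\tau=\min(\tdw(a),\tdw(b))$, and a running-time bound obtained from Lemma \ref{lem:numb2}.

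\textbf{Correctness.} If $a\in\mathbb T$ or $b\in\mathbb T$, the brute-force branch is trivially correct. Otherwise, fix the final $k$; after the swap every sequence in $\operatorname{MS}_k(b)$ is convex. For any output index $d$, every pair $(j,d-j)$ has $j\equiv r_0$ and $d-j\equiv r_1\pmod k$ for a unique pair of residues. Write $j=r_0+ki_0$ and $d-j=r_1+ki_1$. Then $a_j+b_{d-j}=u_{i_0}+v_{i_1}$ with $u=\operatorname{ms}_{k,r_0}(a)$ and $v=\operatorname{ms}_{k,r_1}(b)$, and $d=(i_0+i_1)k+r_0+r_1$. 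Hence
$$
c_d=\min_{r_0,r_1}\min_{i:\,ik+r_0+r_1=d}(u\otimes v)_i ,
$$
which is exactly what the nested loops compute. Each $u\otimes v$ is computed correctly by Lemma \ref{lem:case_conv}, since $v$ is convex. Residue pairs in which $u$ or $v$ equals $\infty$ contribute nothing, so skipping them is harmless.

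\textbf{Termination.} Let $\tau=\min(\tdw(a),\tdw(b))$ and $L=\operatorname{lcm}_{i=1}^{\tau}i$. By Theorem \ref{thm:sec_con_sec} applied to whichever input attains $\tau$, every member of $\operatorname{MS}_L$ of that input is convex, so the while loop stops at some $k\le L$. (A small edge case: $\tau\ge1$ because neither input is in $\mathbb T$, so the theorem applies with a positive parameter.)

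\textbf{Running time.} Let $n=|a|+|b|$. Testing whether all sequences in $\operatorname{MS}_k(a)$ are convex takes $O(|a|+k)$ time via Lemma \ref{lem:conv_equiv}, and similarly for $b$. Summed over $k'\le k$, the search costs $O(kn+k^2)$. There is a subtlety here: once $k$ exceeds the lengths, most residue classes are $\infty$ or single points. I would bound this term by noting that for $k>n$ every ms-subsequence has degree $0$, hence is convex, so $k\le n+1$. This gives $k^2=O(kn)$.

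For the convolution loops, fix $r_0$. Then
$$
\sum_{r_1}\Bigl(|\operatorname{ms}_{k,r_0}(a)|+|\operatorname{ms}_{k,r_1}(b)|+1\Bigr)=O\!\left(\min(k,|b|+1)\cdot\frac{|a|}{k}+|b|+k\right).
$$
Summing over at most $\min(k,|a|+1)$ values of $r_0$ gives $O(kn)$ in total, with the extra $k$ terms covered by the cap at $n+1$. Merging into $c$ costs the same amount. So the total time is $O(kn)\le O(Ln)$, and Lemma \ref{lem:numb2} gives $L=e^{\tau(1+o(1))}$, which is the claimed bound.

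\textbf{Main obstacle.} The delicate step is this accounting of the per-pair costs. Each pair $(r_0,r_1)$ carries an additive overhead for the call to Lemma \ref{lem:case_conv}, and without the caps $\min(k-1,|a|)$ and $\min(k-1,|b|)$ these overheads would give $k^2$ rather than $kn$. The caps, together with $k\le n+1$, keep the total at $O(kn)$.
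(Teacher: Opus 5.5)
Your proposal is correct and follows the paper's proof. Correctness comes from the residue-class split together with Lemma \ref{lem:case_conv}. Theorem \ref{thm:sec_con_sec} applied with $\tau=\min(\tdw(a),\tdw(b))$ makes the search stop at some $k\le\operatorname{lcm}_{i=1}^{\tau}i$, and Lemma \ref{lem:numb2} turns the resulting $O((|a|+|b|)k)$ cost into the stated bound. Your extra bookkeeping is a more careful version of the paper's ``$O(k^2)$ pairs of length $O((|a|+|b|)/k)$'' estimate, which glosses over the additive $O(1)$ overhead per residue pair. That bookkeeping consists of the loop caps, the observation that $k\le \min(|a|,|b|)+1$, and the check that $\tau\ge 1$.
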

	\begin{proof}
		Let $q=\min(\tdw(a),\tdw(b))$, and let $c^*=a\otimes b$ denote the
		true convolution.
		
		We first prove correctness. If $a\in \mathbb T$ or $b\in \mathbb T$, then the brute-force algorithm computes $a\otimes b$ in time $O(|a|+|b|)$. Otherwise, let
		$L=\operatorname{lcm}_{i=1}^{q} i$. By Theorem \ref{thm:sec_con_sec}, for
		modulus $L$, at least one of $\operatorname{MS}_{L}(a)$ and
		$\operatorname{MS}_{L}(b)$ consists entirely of convex sequences.
		Since Algorithm \ref{alg3} chooses the smallest modulus with this
		property, the chosen value of $k$ satisfies $k\le L$.
		
		After the possible swap in the algorithm, every sequence in
		$\operatorname{MS}_k(b)$ is convex. Therefore, for every pair of
		residues $(r_0,r_1)$, the sequence
		$v=\operatorname{ms}_{k,r_1}(b)$ is convex. Let $u=\operatorname{ms}_{k,r_0}(a)$, If $u,v\neq \infty$, then Lemma \ref{lem:case_conv}
		correctly computes $w=u\otimes v$.
		
		We now show that the computed sequence $c$ is equal to $c^*$. First, consider
		any update made by the algorithm. Suppose $
		w_i
		=
		u_p+v_{i-p}.
		$
		Then, by the definitions of $u$ and $v$,
		$$
		w_i
		=
		u_p+v_{i-p}
		=
		a_{pk+r_0}+b_{(i-p)k+r_1}.
		$$
		From $c^*=a\otimes b$, we obtain $(c^*)_{ik+r_0+r_1}\leq a_{pk+r_0}+b_{(i-p)k+r_1}$, which implies $w_i\ge (c^*)_{ik+r_0+r_1}$. Since $c$ is updated
		only by taking minima with such values, we have $c_j\ge c^*_j$ for every
		$0\le j\le |c|$.
		
		Conversely, fix any $0\le j\le |c|$ and set $c^*_j=a_p+b_{j-p}$. If $c^*_j=\infty$, then $c^*_j=\infty\ge c_j$; otherwise, $a_p,b_{j-p}\neq \infty$. Consider
		the iteration with
		$r_0\equiv p\pmod{k}$ and $r_1\equiv j-p\pmod{k}$, where
		$0\le r_0,r_1<k$. Then
		$
		u_{(p-r_0)/k}=a_p,v_{(j-p-r_1)/k}=b_{j-p}.
		$
		Therefore,
		$$
		w_{(j-r_0-r_1)/k}
		\le
		u_{(p-r_0)/k}+v_{(j-p-r_1)/k}
		=
		a_p+b_{j-p}
		=
		c^*_j.
		$$
		By this inequality, Algorithm \ref{alg3} updates $c_j$ using a value at
		most $c^*_j$. Thus $c_j\le c^*_j$. Combining the two inequalities
		gives $c_j=c^*_j$ for every $j$, so the algorithm is correct.
		
		It remains to bound the running time. Finding the smallest $k$ satisfying the convexity condition takes $O((|a|+|b|)k)=O((|a|+|b|)L)$ time. And for each pair of residues
		$(r_0,r_1)$, Lemma \ref{lem:case_conv} computes
		$u\otimes v$ in time $O(|u|+|v|)$. Since there are $O(k^2)$ residue pairs
		and each subsequence has length $O((|a|+|b|)/k)$, the running time is
		$O((|a|+|b|)k)=O((|a|+|b|)L)$. By Lemma \ref{lem:numb2}, we have
		$
		L=e^{\ln L}=e^{q(1+o(1))}.
		$
		Hence the total running time is
		$$
		O\left((|a|+|b|)e^{q(1+o(1))}\right)
		=
		O\left((|a|+|b|)
		e^{\min(\tdw(a),\tdw(b))(1+o(1))}\right).
		$$
	\end{proof}
	\subsection{Multiple-Sequence $(\min,+)$ Convolution Problem}
	We first give a formal definition for the Multiple-Sequence $(\min,+)$ Convolution problem.
	\begin{problem}[Multiple-Sequence $(\min,+)$ Convolution]
		Given $k$ tropical sequences $a^{(1)},\ldots,a^{(k)}$ with $|a^{(i)}|\le n$, compute their $(\min,+)$ convolution $\bigotimes_{i=1}^{k}a^{(i)}$.
	\end{problem}
	Note that this gives the definition for the all-entry version. The single-entry version is to compute $\bigl(\bigotimes_{i=1}^{k}a^{(i)}\bigr)_t$ for a given target index $0\le t\le \left|\bigotimes_{i=1}^{k}a^{(i)}\right|$.
	
	By using Theorem \ref{thm:fir_alg_sec} in the previous section, we can obtain a simple deterministic divide-and-conquer algorithm for Multiple-Sequence $(\min,+)$ Convolution problem that runs in $O(kn^3\log k)$ time:
	\begin{theorem}
		\label{thm:try_alg_sec}
		The all-entry version of Multiple-Sequence $(\min,+)$ Convolution can be solved by a deterministic algorithm in time
		$$
		O\left(kn^3\log k\right).
		$$
	\end{theorem}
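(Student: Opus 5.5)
Every sequence with $|a^{(i)}|\le n$ trivially has $\tdw(a^{(i)})\le n$, since it is its own one-factor decomposition of degree at most $n$. Any convolution of a subcollection of the inputs therefore also has tropical decomposition width at most $n$: concatenating the trivial decompositions of its factors gives a decomposition into factors of degree at most $n$. The plan is to exploit this in a balanced divide-and-conquer, using Algorithm \ref{alg2} for every merge. By Theorem \ref{thm:fir_alg_sec}, merging two sequences of length at most $m$ then costs $O(mn^2)$ rather than $O(m^2)$.

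Concretely, I will pad $k$ up to a power of two with the unit sequence $(0)$; this neither changes the result nor the width bound. The recursion splits the list into two halves, recursively computes $c^{L}=\bigotimes_{i\in L}a^{(i)}$ and $c^{R}=\bigotimes_{i\in R}a^{(i)}$, and combines them with Algorithm \ref{alg2}. Here lies the one subtle point. Algorithm \ref{alg2} takes $\tdw(a)$ and $\tdw(b)$ as input, but these are NP-hard to compute (Lemma \ref{lem:tdw_np_hard}). I will pass the upper bound $n$ instead. Inspecting the proof of Theorem \ref{thm:fir_alg_sec}, correctness only uses a decomposition of each input into factors of degree at most $K$. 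The gap bound $d_t-d_{t-1}\le K$ then follows from Theorem \ref{thm:fir_con_sec} together with Corollary \ref{cor:cgap}, and the adjustment Lemma \ref{lem:adj} is applied with $|\alpha^{(i)}|,|\beta^{(i)}|\le K$. So any $K\ge\max(\tdw(a),\tdw(b))$ works, and with $K=n$ the window $U=2n^2$ is valid. The decompositions required by the analysis exist: they are the original input sequences themselves, and they are never needed algorithmically.

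For the running time, at recursion depth $j$ there are $2^j$ merges. Each merge combines two sequences of length at most $kn/2^{j+1}$, so it costs $O\bigl((kn/2^{j})\,n^2\bigr)$, and the total per level is $O(kn^3)$. There are $\log k$ levels, giving $O(kn^3\log k)$. At the leaves, no computation is needed. Degenerate inputs ($\infty$ or units) are handled by the brute-force branch of Algorithm \ref{alg2}, which runs in linear time and so does not exceed the per-level bound.

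The main obstacle is justifying the use of the bound $n$ in place of the exact $\tdw$ values. This is a matter of checking that the proof of Theorem \ref{thm:fir_alg_sec} is monotone in $K$, as sketched above, rather than any new argument. Everything else is standard recursion accounting.
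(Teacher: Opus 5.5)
Your proposal is correct and takes essentially the same route as the paper. The paper also uses balanced divide-and-conquer with Algorithm~\ref{alg2} run at $K=n$, justified by the fact that every partial product has $\tdw\le n$ and by the correctness argument of Theorem~\ref{thm:fir_alg_sec} remaining valid for any $K\ge\max(\tdw(u),\tdw(v))$, which yields $T(m)=2T(m/2)+O(mn^3)$. The only differences are cosmetic: you pad with unit sequences and let the brute-force branch handle $\infty$ inputs, whereas the paper returns $\infty$ up front and splits at $\lfloor (l+r)/2\rfloor$.
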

	\begin{proof}
		First, if some $a^{(i)}$ is equal to $\infty$, then we simply return $\infty$. Otherwise, we compute $\bigotimes_{i=1}^{k} a^{(i)}$ by a balanced
		divide-and-conquer procedure. For $1\le l\le r\le k$, let
		$\textsc{Solve}(l,r)$ be the recursive procedure that returns
		$
		\bigotimes_{i=l}^{r} a^{(i)}.
		$
		The procedure is defined as follows:
		\begin{itemize}
			\item If $l=r$, return $a^{(l)}$.
			\item Otherwise, let $m=\lfloor (l+r)/2\rfloor$. Recursively compute
			$u=\textsc{Solve}(l,m)$ and $v=\textsc{Solve}(m+1,r)$. Then compute
			$b=u\otimes v$ using Algorithm \ref{alg2}, except that we do not
			provide the exact values of $\operatorname{tdw}(u)$ and
			$\operatorname{tdw}(v)$; instead, we set the parameter $K$ in
			Algorithm \ref{alg2} to be $n$. Finally, return $b$.
		\end{itemize}
		We first prove correctness. For every recursive call, we have
		$$
		\operatorname{tdw}\left(\bigotimes_{i=l}^{r} a^{(i)}\right)
		\le
		\max_{i=l}^{r}|a^{(i)}|
		\le n.
		$$
		Therefore, in every recursive step,
		$
		\operatorname{tdw}(u),\operatorname{tdw}(v)\le n.
		$
		Since Algorithm \ref{alg2} sets the parameter $K$ to $n$, which satisfies $n\geq \max(\tdw(u),\tdw(v))$, it may scan more valid entries than necessary. This does not affect correctness. Hence each recursive step correctly computes $u\otimes v$, and by induction on $r-l+1$,
		$\textsc{Solve}(l,r)$ returns $\bigotimes_{i=l}^{r} a^{(i)}$. In
		particular, $\textsc{Solve}(1,k)$ returns
		$\bigotimes_{i=1}^{k} a^{(i)}$.
		
		It remains to bound the running time. Let $T(m)$ denote the running time
		of a call $\textsc{Solve}(l,r)$ with $r-l+1=m$. In such a call, the
		lengths of the two intermediate sequences $u$ and $v$ are both bounded by
		$O(mn)$. Since Algorithm \ref{alg2} is run with $K=n$, computing
		$u\otimes v$ takes
		$
		O(mnK^2)=O(mn^3)
		$
		time. Therefore,
		$
		T(m)=2T(m/2)+O(mn^3).
		$
		Solving this recurrence gives
		$
		T(m)=O(mn^3\log m).
		$
		Hence the total running time of the algorithm is
		$$
		O(kn^3\log k).
		$$
	\end{proof}
	To improve the time complexity of Theorem \ref{thm:try_alg_sec}, we use the random
	shuffle technique, which was used in \cite{hx24} for $0$-$1$ Knapsack. The basic idea
	is to randomly permute the input sequences and then argue that, with high
	probability, the adjustment needed between nearby states is well balanced along
	the random order. This concentration step is based on Hoeffding's inequality
	for sampling without replacement.
	\begin{lemma}[Hoeffding’s inequality without replacement, \cite{hoe63}, Section 4]
		\label{lem:hoeff}
		Given an integer multiset $S=\{s_1,\ldots,s_N\}$ with $l\leq s_i\leq r$ for all $1\leq i\leq N$. Let $\mu=(\sum_{i=1}^{N}s_i)/N$ be the population mean. Let $X_1,X_2,\ldots,X_n$ denote a random sample without replacement from $S$, set $\widetilde X=(\sum_{i=1}^{n}X_i)/n$. Then for any $\delta>0$, we have
		$$
		\Pr\left[\left|\widetilde X-\mu\right|\geq \delta\right]\leq 2\exp\left(-\frac{2n\delta^2}{(r-l)^2}\right).
		$$
	\end{lemma}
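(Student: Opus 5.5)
The plan is the classical route from Hoeffding's paper. First I prove the inequality for sampling \emph{with} replacement. Then I transfer it to sampling without replacement through a convex-ordering comparison on the moment generating function. By symmetry (apply the argument to $-s_i$, which lies in $[-r,-l]$) it suffices to bound the upper tail $\Pr[\widetilde X-\mu\ge\delta]$ by $\exp(-2n\delta^2/(r-l)^2)$. A union bound over the two tails then gives the factor $2$. If $n>N$ the sample is undefined, so we assume $1\le n\le N$ throughout.

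\textbf{Step 1 (with replacement).} Let $Y_1,\ldots,Y_n$ be i.i.d.\ uniform draws from $S$. Each $Y_i-\mu$ has mean $0$ and lies in $[l-\mu,r-\mu]$, an interval of length $r-l$. Hoeffding's lemma gives
$$\mathbb E\, e^{\lambda(Y_i-\mu)}\le e^{\lambda^2(r-l)^2/8}.$$
I would prove it by writing $e^{\lambda y}$ as at most the chord through the endpoints, by convexity. The logarithm of the resulting two-point expression then has second derivative at most $(r-l)^2/4$ in $\lambda$.

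By independence, $\mathbb E\, e^{\lambda\sum_i (Y_i-\mu)}\le e^{n\lambda^2(r-l)^2/8}$.

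\textbf{Step 2 (comparison, the main obstacle).} I would show that for every continuous convex $f$,
$$\mathbb E f\Bigl(\sum_{i=1}^n X_i\Bigr)\le \mathbb E f\Bigl(\sum_{i=1}^n Y_i\Bigr).$$
This is Hoeffding's Theorem 4. Following Hoeffding, I expand the with-replacement sum by conditioning on the pattern of repeated indices. Conditioned on which draws coincide, the sum $\sum Y_i$ has the law of $\sum_j m_j X_j$, where $X_1,\ldots,X_t$ are distinct draws without replacement and $m_j\ge1$ are the multiplicities with $\sum_j m_j=n$.

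The key observation is that $\sum_{i=1}^n X_i$ is a conditional expectation of such a weighted sum. Given the unordered set of drawn values, averaging over which further distinct elements could fill in expresses $\sum X_i$ as an average of linear combinations that, in conditional expectation, reproduce $\sum_j m_j X_j$. Jensen's inequality then gives $f(\text{average})\le\text{average of } f$. The delicate part is setting up this averaging so that the combinatorial weights match exactly; it amounts to a symmetrization over permutations of the population.

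If this bookkeeping becomes unwieldy, I would fall back on a different proof. The sequence $M_j=\mathbb E[\sum_{i=1}^n X_i\mid X_1,\ldots,X_j]$ is a martingale, and I would bound its increments. The centered increments still lie in intervals of length at most $(r-l)$ times a shrinking factor $(N-n)/(N-j)\le1$. Azuma--Hoeffding then yields the same exponent $2n\delta^2/(r-l)^2$ directly, and this route avoids Step 2 entirely.

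\textbf{Step 3 (Chernoff).} Apply Step 2 with $f(x)=e^{\lambda(x-n\mu)}$ for $\lambda>0$, and combine it with Step 1 and Markov's inequality:
$$\Pr[\widetilde X-\mu\ge\delta]\le e^{-\lambda n\delta}\,e^{n\lambda^2(r-l)^2/8}.$$
Optimizing at $\lambda=4\delta/(r-l)^2$ gives $\exp(-2n\delta^2/(r-l)^2)$, and the symmetric tail plus the union bound completes the argument. The degenerate case $r=l$ is trivial, since then $\widetilde X=\mu$ always.
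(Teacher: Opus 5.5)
Your plan is correct and is essentially Hoeffding's own argument; the paper does not reproduce a proof but only cites it. You reduce to the i.i.d.\ case by the convex-order comparison $\mathbb E f(\sum X_i)\le\mathbb E f(\sum Y_i)$ (his Theorem 4), and then Hoeffding's lemma plus the Chernoff bound with $\lambda=4\delta/(r-l)^2$ give the stated exponent. For Step 2, the clean formulation is to condition on the unordered $n$-element without-replacement sample into which the distinct with-replacement values are embedded, so that $\mathbb E[\sum Y_i\mid\cdot\,]=\sum X_i$ by exchangeability and conditional Jensen finishes; your martingale fallback is also fully valid, with increments $\frac{N-n}{N-j}X_j$ plus a predictable term and Hoeffding--Azuma applied with interval lengths.
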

	However, in the Multiple-Sequence $(\min,+)$ Convolution problem, the solution
	for a fixed target index $x$ is not necessarily unique. Hence the solution selected
	after a random shuffle may depend on the shuffle order, and the adjustment
	cannot be viewed as being sampled from a fixed underlying solution. To eliminate
	this ambiguity, we apply the isolating lemma \cite{mvv87} by assigning independent random tie-breaking weights to
	the entries. With high probability, for each target index $x$, the perturbation
	selects a unique solution among all solutions with minimum original tropical
	value. The subsequent random shuffle then only exposes the coordinates of this
	fixed isolated solution in a random order, allowing us to apply Hoeffding's
	inequality without replacement.
	\begin{lemma}[isolation lemma, \cite{mvv87}, Lemma 1]
		\label{lem:iso}
		Let $S=\{e_1,\ldots,e_N\}$ be a finite set, and let $\mathcal F\subseteq 2^S$ be a nonempty family of subsets of $S$. For each element $e\in S$, choose its weight $w(e)$ independently and uniformly at random from $\{1,\ldots,D\}$. For every $F\in\mathcal F$, define its total weight by
		$$
		w(F)=\sum_{e\in F} w(e).
		$$
		Then, with probability at least $1-N/D$, there exists a unique set $F^\ast\in\mathcal F$ such that
		$$
		w(F^\ast)=\min_{F\in\mathcal F} w(F).
		$$
	\end{lemma}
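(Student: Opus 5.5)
The statement is the isolation lemma of \cite{mvv87}. I would use the standard union-bound argument over the elements of $S$, rather than arguing over the sets of $\mathcal F$ directly. Call an element $e\in S$ \emph{ambiguous} if there exist two minimum-weight sets $F_1,F_2\in\mathcal F$ with $e\in F_1$ and $e\notin F_2$.

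First I would show that non-uniqueness forces some element to be ambiguous. If two distinct sets $F_1\neq F_2$ both attain $\min_{F\in\mathcal F}w(F)$, then, say, some $e\in F_1\setminus F_2$ exists (otherwise swap the roles). That $e$ is ambiguous. Hence
$$
\Pr[\text{minimizer not unique}]\le \sum_{e\in S}\Pr[e\text{ is ambiguous}],
$$
and it suffices to show $\Pr[e\text{ is ambiguous}]\le 1/D$ for each fixed $e$.

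For the per-element bound, fix $e$ and condition on the weights $w(e')$ for all $e'\neq e$. Define
$$
\alpha=\min_{F\in\mathcal F,\,e\notin F}w(F),
\qquad
\beta=\min_{F\in\mathcal F,\,e\in F}\bigl(w(F)-w(e)\bigr),
$$
with the convention that a minimum over an empty family is $\infty$. Both quantities depend only on the conditioned weights. The overall minimum is $\min(\alpha,\beta+w(e))$. The element $e$ is ambiguous exactly when some minimizer contains $e$ and some minimizer avoids it, i.e.\ when $\beta+w(e)=\alpha$. If either family is empty, $e$ cannot be ambiguous. Otherwise $\alpha-\beta$ is a fixed value, and $w(e)$ is uniform on $\{1,\dots,D\}$ and independent of it, so $\Pr[w(e)=\alpha-\beta]\le 1/D$. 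Averaging over the conditioning gives the bound $1/D$ per element. Summing over the $N$ elements yields failure probability at most $N/D$. Existence of some minimizer is clear because $\mathcal F$ is finite and nonempty.

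The main subtle step is the decoupling: $\alpha$ and $\beta$ must be defined so that they are independent of $w(e)$. The quantity $\beta$ is defined with $w(e)$ subtracted, so that $e$'s own weight does not enter it. The rest is a union bound.
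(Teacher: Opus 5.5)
Your proposal is correct: the per-element threshold $\alpha-\beta$ depends only on the other weights, so each element is ambiguous with probability at most $1/D$, and a union bound over the $N$ elements gives the claimed $1-N/D$. The paper gives no proof of its own and only cites \cite{mvv87}; your argument is the standard proof of that cited lemma, so it takes essentially the same approach.
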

	Since the random tie-breaking weight should be combined with the original
	tropical value, we work over a lexicographic lifting of the tropical semiring.
	\begin{definition2}[lexicographic tropical semiring]
		Let $\widetilde{\mathbb T}=\mathbb T\times \mathbb R$ be a commutative idempotent semiring. For $(a,r),(b,s)\in \widetilde{\mathbb T}$, define
		$$
		\min\left((a,r),(b,s)\right)=
		\begin{cases}
			(a,r) &\text{if }a<b,\\
			(b,s) &\text{if }a>b,\\
			(a,\min(r,s)) &\text{if }a=b\\
		\end{cases}
		$$
		and
		$$
		(a,r)+(b,s)=(a+b,r+s).
		$$
		Define $q(a,r)=(qa,qr)$ for every $q\in \mathbb R_{\geq 0},(a,r)\in \widetilde{\mathbb T}$.
		
		In particular, we identify $(\infty,x)$ with $(\infty,0)$ for all $x\in\mathbb R$, 
		so that $\widetilde{\mathbb T}$ is the quotient of $\mathbb T\times\mathbb R$ 
		by the congruence generated by $\{(\infty,x)\sim (\infty,0)\mid x\in \mathbb R\}$. 
		The additive identity is $(\infty,0)$, and the multiplicative identity is $(0,0)$.
	\end{definition2}
	\begin{remark}
		\label{rmk:lex_ext}
		The statements and proofs of Theorem \ref{thm:try_alg_sec}, together with its preceding definitions, lemmas, and theorems, can be extended from $\mathbb T$ to $\widetilde{\mathbb T}$. This is because the preceding results depend only on the following three properties of the base semiring $\mathcal R$:
		\begin{enumerate}
			\item[(1)] $\mathcal R$ is a commutative idempotent semiring whose natural order $\le$ is total,
			\item[(2)] every non-$\infty$ element of $\mathcal R$ has a multiplicative inverse,
			\item[(3)] $\mathcal R$ is an $\mathbb R_{\geq 0}$-semimodule.
		\end{enumerate}
		Whenever these three requirements are satisfied, the definitions and proofs of the preceding lemmas and theorems remain unchanged. Since $\widetilde{\mathbb T}$, equipped with the lexicographic order, satisfies these properties, Algorithm \ref{alg2} extends to sequences over $\widetilde{\mathbb T}$.
	\end{remark}
	\begin{theorem}
		\label{thm:mult_all_sec}
		The all-entry version of Multiple-Sequence $(\min,+)$ Convolution can be solved by a randomized algorithm in time
		$$
		O\left(kn^2\sqrt{\min(k,n)}\log^{1.5} (kn)\right)
		$$
		with high probability.
	\end{theorem}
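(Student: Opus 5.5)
The plan is to speed up the divide-and-conquer of Theorem \ref{thm:try_alg_sec}. The inner scan window of Algorithm \ref{alg2} has width $2n^2$; I would shrink it to $O\bigl(n\sqrt{\min(m,n)\log(kn)}\bigr)$ at a node containing $m$ sequences, and use randomness to justify the smaller window.

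\textbf{Setup.} First I would lift every entry to $\widetilde{\mathbb T}$. Each finite entry $a^{(i)}_j$ becomes $(a^{(i)}_j,w(i,j))$, where $w(i,j)$ is drawn independently and uniformly from $\{1,\ldots,D\}$ with $D=(kn)^{c}$. Then I apply a uniformly random permutation to the $k$ sequences and run the balanced recursion $\textsc{Solve}(l,r)$ over $\widetilde{\mathbb T}$, which Remark \ref{rmk:lex_ext} permits.

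For a fixed node and a fixed target index $d$, the solutions of minimum tropical value form a family of subsets of the ground set $\{(i,j)\}$. Lemma \ref{lem:iso} then gives a unique lexicographically minimal solution $r^{(d)}$ with failure probability $O(kn/D)$. A union bound over the $O(k\log k)$ nodes and the $O(kn)$ indices per node keeps the total failure probability polynomially small. The point of isolation is that $r^{(d)}$ depends only on the set of sequences at the node and on the weights, and not on how the permutation splits that set into the two halves.

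\textbf{Merge step.} At a node with $m$ sequences, the left child receives $m_1=\lfloor m/2\rfloor$ of them. Conditioned on the set of $m$ sequences, the left half is a uniformly random $m_1$-subset. I process output indices in increasing order as in Algorithm \ref{alg2}. Suppose $ps$ is the exact split point found for the previous finite index $d'$, so $ps=\sum_{i\in\text{left}}r^{(d')}_i$, and Theorem \ref{thm:fir_con_sec} gives $d-d'\le n$.

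By Lemma \ref{lem:adj}, applied within the lexicographic semiring to the isolated solutions, $r^{(d)}$ and $r^{(d')}$ differ on a set $I$ with $|I|\le 2n$. Each difference $r^{(d)}_i-r^{(d')}_i$ lies in $[-n,n]$, and the differences sum to $d-d'$. The new split point is
\[
qs=ps+\sum_{i\in I\cap\text{left}}\bigl(r^{(d)}_i-r^{(d')}_i\bigr).
\]
Pad the multiset of differences with zeros up to size $m$. The sum over the left half is then a sample without replacement of size $m_1$ from this population, whose mean is $(d-d')/m$. Lemma \ref{lem:hoeff}, applied with deviation $\delta$ on the sample mean, gives
\[
\bigl|qs-ps-(d-d')m_1/m\bigr|=O\bigl(n\sqrt{\min(m,n)\log(kn)}\bigr)
\]
with high probability. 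The $\min(m,n)$ appears because only $|I|\le\min(2n,m)$ terms are nonzero; I would use the variance-sensitive (Bernstein/Serfling-type) form of the without-replacement bound here, or else handle $m\le n$ and $m>n$ separately.

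So the window scanned for index $d$ is centred at $ps+(d-d')m_1/m$ with that radius. The correctness induction is then the one in Theorem \ref{thm:fir_alg_sec}, run on the high-probability event, and all failure events are combined by a union bound over nodes and indices.

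\textbf{Running time.} A node with $m$ sequences has $O(mn)$ output indices, so its merge costs $O\bigl(mn\cdot n\sqrt{\min(m,n)\log(kn)}\bigr)$. Each recursion level has total $m$ equal to $k$, so a level costs $O\bigl(kn^2\sqrt{\min(k,n)\log(kn)}\bigr)$. Over $O(\log k)$ levels this gives the claimed $O\bigl(kn^2\sqrt{\min(k,n)}\log^{1.5}(kn)\bigr)$.

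\textbf{Main obstacle.} The delicate step is the probabilistic decoupling. I must argue that the isolated solutions at a node are a deterministic function of the node's sequence set and the weights, so that conditioning on that set leaves the left/right split uniformly random and independent of $I$. I also need to check that the adjustment lemma's bound $|I|\le 2n$ holds between isolated solutions in $\widetilde{\mathbb T}$, using the exchange argument together with minimality of the secondary weights, and not just for some pair of solutions. Getting the $\sqrt{\min(m,n)}$ factor rather than $\sqrt m$ is the remaining point to verify carefully in the Hoeffding step.
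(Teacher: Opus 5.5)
Your proposal follows the paper's proof in all essentials: lifting to $\widetilde{\mathbb T}$ with isolating weights, one random permutation of the sequences, and the balanced recursion running Algorithm~\ref{alg2} with a shrunken window. It also applies Lemma~\ref{lem:adj} to the unique isolated solutions at consecutive finite indices $d'<d$ to get at most $2n$ differing coordinates, then uses Hoeffding without replacement and the same recurrence.

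The only real divergence is the step you flag as delicate, and there the paper uses a cheaper device than a variance-sensitive bound. Padding the differences with zeros to a population of size $m$ only gives deviation $O\bigl(n\sqrt{m\log(kn)}\bigr)$ via Lemma~\ref{lem:hoeff}. The paper instead looks only at the $z\le\min(k,2n)$ nonzero differences. Conditioned on the node's set of sequences, their positional order is uniform, and $qs-ps$ is exactly the sum of the first $h$ of them, where $h$ is the number landing in the left half. Apply Lemma~\ref{lem:hoeff} to this population of size $z$ with sample size $h$, and union-bound over all $1\le h\le z$, so the randomness of $h$ never needs analysis. This gives $|qs-ps|\le 10n\sqrt{(c+5)h\log(kn)}+n$, which is where $\sqrt{\min(k,n)}$ comes from. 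The paper then centres the window at $ps$ with a single radius $U=10n\sqrt{(c+5)\min(k,2n)\log(kn)}+n$. Your recentring at $ps+(d-d')m_1/m$ and your node-dependent radius are therefore harmless but unnecessary.

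Your Bernstein route also works, since the padded population has variance at most $2n^3/m$. However, it leaves an additive $O(n\log(kn))$ term. So when $n\lesssim\log(kn)$ you would have to cap the window by the deterministic $2n^2$ of Theorem~\ref{thm:try_alg_sec} to recover the stated bound. A purely range-based Serfling bound would not help, because with $m_1\approx m/2$ its finite-population correction is only a constant factor.
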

	\begin{proof}
		We first describe the algorithm. If some $a^{(i)}$ is equal to $\infty$, then we simply return $\infty$. Otherwise, fix an arbitrary integer constant $c>0$, and set $W=(k(n+1))^{c+5}$. For each entry
		$a^{(i)}_j$, where $0\leq j\leq |a^{(i)}|$, independently choose
		$r^{(i)}_j$ uniformly from $\{1,\ldots,W\}$, and lift it to
		$\tilde a^{(i)}_j=(a^{(i)}_j,r^{(i)}_j)\in \widetilde{\mathbb T}$.
		We randomly permute $\tilde a^{(1)},\ldots,\tilde a^{(k)}$, and run
		the algorithm of Theorem \ref{thm:try_alg_sec} over $\widetilde{\mathbb T}$ with
		parameter
		$$
		U=10n\sqrt{(c+5)\min(k,2n)\log(kn)}+n.
		$$
		By Remark \ref{rmk:lex_ext}, the same adjustment arguments apply over the
		lexicographic tropical semiring. After computing the output sequence over
		$\widetilde{\mathbb T}$, we project each entry to its first coordinate.
		
		We prove correctness and bound the running time in four parts.
		
		\textbf{Part 1.} We prove correctness by bounding implication events along the recursion. For
		events $A$ and $B$, write $A\Rightarrow B$ for the event $\neg A\cup B$.
		Equivalently, its failure probability is $\Pr[A\wedge \neg B]$. Thus the
		bounds below are not conditional probabilities such as $\Pr[B\mid A]$, they bound the probability that
		the premise of an implication holds but the conclusion fails.
		
		Consider a non-leaf call $\textsc{Solve}(l,r)$, and let
		$m=\lfloor(l+r)/2\rfloor$. Let $C_L$ and $C_R$ be the events that the two
		recursive calls correctly compute
		$\bigotimes_{i=l}^{m}\tilde a^{(i)}$ and
		$\bigotimes_{i=m+1}^{r}\tilde a^{(i)}$, respectively. We bound the failure
		probability of the implication
		$$
		C_L\wedge C_R
		\Rightarrow
		\textsc{Solve}(l,r)\text{ correctly computes }
		\bigotimes_{i=l}^{r}\tilde a^{(i)}.
		$$
		
		Assume the premise $C_L\wedge C_R$ holds, and write
		$\tilde u=\bigotimes_{i=l}^{m}\tilde a^{(i)}$,
		$\tilde v=\bigotimes_{i=m+1}^{r}\tilde a^{(i)}$, and
		$\tilde w=\tilde u\otimes \tilde v$. Let
		$d_1<d_2<\cdots<d_s$ be all indices $d$ such that the first coordinate
		of $\tilde w_d$ is finite. As in Theorem \ref{thm:fir_alg_sec}, the first entry $\tilde w_{d_1}$ and the entries $\tilde w_d$ for which the first coordinate of $\tilde w_d$ is infinite will be computed correctly. Moreover, we have $d_t-d_{t-1}\leq n$ for every
		$2\leq t\leq s$. Hence it suffices to propagate correctness from $d_{t-1}$ to $d_t$.
		
		Fix $t$ with $2\leq t\leq s$. For every $1\leq u\leq s$, let $P_u$ be the event that $\tilde w_{d_u}$
		is computed correctly.
		
		\textbf{Part 2.} We now bound the failure probability of the implication
		$P_{t-1}\Rightarrow P_t$. We first consider the failure probability of isolating the optimum. For every $1\leq u\leq s$, let $I_u$ be the event that the solution of $(\tilde a^{(l)},\ldots,\tilde a^{(r)},d_u)$ is unique. Let
		$E=\{(i,j)\mid l\leq i\leq r,\ 0\leq j\leq |a^{(i)}|\}$, and define
		$\mathcal F_{d_u}$ to be the family
		$$
		\mathcal F_{d_u}
		=
		\left\{
		\{(i,q_i)\mid l\leq i\leq r\}\bigm|
		(q_l,\ldots,q_r)\text{ is a solution of }
		(a^{(l)},\ldots,a^{(r)},d_u)
		\right\}.
		$$
		Define the weight of each element $(i,j)\in E$ as $w((i,j))=r_j^{(i)}$, so that the weight of a set $F=\{(i,q_i)\mid l\le i\le r\}$ is $w(F)=\sum_{i=l}^{r} r_{q_i}^{(i)}$.
		Since $|E|\le k(n+1)$ and each $w((i,j))=r_j^{(i)}$ is chosen independently and uniformly from $\{1,\ldots,W\}$, Lemma \ref{lem:iso} implies that the probability that $\mathcal F_{d_u}$ is not isolated is at most $k(n+1)/W\le (kn)^{-(c+4)}$. 
		When $\mathcal F_{d_u}$ is isolated, its isolated solution is the unique solution of $(\tilde a^{(l)},\ldots,\tilde a^{(r)},d_u)$. Hence $\Pr[\neg I_u]\le (kn)^{-(c+4)}$.
		
		Now suppose $P_{t-1}$, $I_{t-1}$ and $I_t$ hold. Let
		$p=(p_l,\ldots,p_r)$ be the isolated solution of $(\tilde{a}^{(l)},\ldots,\tilde{a}^{(r)},d_{t-1})$, and let $q=(q_l,\ldots,q_r)$ be the isolated
		solution of $(\tilde{a}^{(l)},\ldots,\tilde{a}^{(r)},d_t)$. Since $d_t-d_{t-1}\leq n$, Lemma \ref{lem:adj} gives
		$D(p,q)\leq 2n$. Define a multiset $T=\{q_i-p_i\mid l\leq i\leq r,p_i\neq q_i\}$. Then
		$z=|T|\leq \min(k,2n)$, $\sum_{x\in T}x=d_t-d_{t-1}\in[0,n]$, and every
		element of $T$ lies in $[-n,n]$.
		
		Under the random shuffle, the nonzero coordinates of the difference between
		$p$ and $q$ appear in a uniformly random order. Write this order as
		$\Delta_1,\ldots,\Delta_z$. For every $1\leq h\leq z$, Lemma \ref{lem:hoeff} gives
		$$
		\Pr\left[\left|\frac{\sum_{i=1}^{h}\Delta_i}{h}-\frac{\sum_{x\in T}x}{z}\right|\geq \delta\right]\leq 2\exp\left(-\frac{2h\delta^2}{(2n)^2}\right).
		$$
		Taking $\delta=10n\sqrt{(c+5)\log(kn)/h}$, then 
		$$
		\Pr\left[
		\left|
		\sum_{i=1}^{h}\Delta_i-\frac{h}{z}\sum_{x\in T}x
		\right|
		\geq
		10n\sqrt{(c+5)h\log(kn)}
		\right]
		\leq 2\exp\left(-5(c+5)\log(kn)\right)
		\leq (kn)^{-(c+4)}.
		$$
		Since $h\leq z\leq\min(k,2n)$ and $\sum_{x\in T}x\leq n$, the choice
		$U=10n\sqrt{(c+5)\min(k,2n)\log(kn)}+n$ implies
		$$
		\begin{aligned}
		\Pr\left[\left|\sum_{i=1}^{h}\Delta_i\right|\geq U\right]&\leq \Pr\left[\left|\sum_{i=1}^{h}\Delta_i-\frac{h}{z}\sum_{x\in T}x\right|\geq U-\frac{h}{z}\sum_{x\in T}x\right]\\
		&\leq \Pr\left[\left|\sum_{i=1}^{h}\Delta_i-\frac{h}{z}\sum_{x\in T}x\right|\geq 10n\sqrt{(c+5)h\log(kn)}\right]\\
		&\leq (kn)^{-(c+4)}.
		\end{aligned}
		$$ Hence, by a union bound
		for all $1\leq h\leq z$, with probability at least $1-(kn)^{-(c+4)}\cdot z\geq 1-(kn)^{-(c+3)}$, every prefix
		sum of the shuffled difference sequence has absolute value at most $U$.
		
		On this event, as in the proof of Theorem \ref{thm:fir_alg_sec}, let $qs$ and $ps$ be the split indices of the isolated solutions of $(\tilde a^{(l)},\ldots,\tilde a^{(r)},d_t)$ and $(\tilde a^{(l)},\ldots,\tilde a^{(r)},d_{t-1})$, respectively. These two indices differ by a prefix sum of the shuffled difference sequence, so $|qs-ps|\le U$. Thus the modified Algorithm \ref{alg2} scans
		$qs$ in the interval $[ps-U,ps+U]$ and therefore computes
		$\tilde w_{d_t}$ correctly.
		
		\textbf{Part 3.} We finally bound the failure probability of the algorithm. 
		With $t$ fixed as above, let $H_t$ be the event that all prefix sums of the shuffled difference sequence are bounded by $U$ in the above process, provided that $P_{t-1},I_{t-1},I_t$ all hold. In particular, if $P_{t-1},I_{t-1},I_t$ fail, we simply define $H_t$ to be always true. From the discussion above, we have
		$$
		C_L\wedge C_R\wedge P_{t-1}\wedge I_{t-1}\wedge I_t\wedge H_t
		\Rightarrow P_t.
		$$
		Therefore, the implication $C_L\wedge C_R\wedge P_{t-1}\Rightarrow P_t$ can
		fail only through the three auxiliary bad events $\neg I_{t-1}$, $\neg I_t$ and $\neg H_t$. By the isolation bound,
		$\Pr[\neg I_{t-1}],\Pr[\neg I_t]\leq (kn)^{-(c+4)}$. Moreover, the Hoeffding bound and the union
		bound give $\Pr[\neg H_t]\leq (kn)^{-(c+3)}$. Hence
		$$
		\begin{aligned}
			\Pr\left[
			\neg(C_L\wedge C_R\wedge P_{t-1}\Rightarrow P_t)
			\right]
			&\leq \Pr[\neg I_{t-1}]+\Pr[\neg I_t]+\Pr[\neg H_t] \\
			&\leq 2(kn)^{-(c+4)}+(kn)^{-(c+3)} \\
			&\leq 2(kn)^{-(c+3)}
		\end{aligned}
		$$
		for sufficiently large $k,n$.
		
		By taking a union bound over no more than $k(n+1)$ implications in one merge step and then
		over no more than $2k$ recursive calls, the total failure probability is bounded by $2k\cdot k(n+1)\cdot 2(kn)^{-(c+3)}\leq (kn)^{-c}$ for sufficiently large $k,n$. Hence
		$\bigotimes_{i=1}^{k}\tilde a^{(i)}$ is computed correctly with probability
		at least $1-(kn)^{-c}$. Projecting every output pair to its first coordinate
		then gives the correct value of $\bigotimes_{i=1}^{k}a^{(i)}$. Therefore, the algorithm computes $\bigotimes_{i=1}^{k}a^{(i)}$ correctly with probability at least $1-(kn)^{-c}$.
		
		\textbf{Part 4.} We now analyze the running time. Let $T(m)$ be the time for a call on $m$ sequences. The
		two intermediate sequences $u$ and $v$ have length $O(mn)$, and
		$U=O(n\sqrt{\min(k,n)\log(kn)})$. Hence the modified Algorithm \ref{alg2} takes
		$O(mnU)=O(mn^2\sqrt{\min(k,n)\log(kn)})$ time, so
		$$
		T(m)=2T(m/2)+O\left(mn^2\sqrt{\min(k,n)\log(kn)}\right).
		$$
		Solving the recurrence gives
		$$
		T(k)=O\left(kn^2\sqrt{\min(k,n)\log(kn)}\log k\right)\subseteq O\left(kn^2\sqrt{\min(k,n)}\log^{1.5}(kn)\right).
		$$ Hence the total running time of the algorithm is
		$$
		O\left(kn^2\sqrt{\min(k,n)}\log^{1.5}(kn)\right).
		$$
	\end{proof}
	We now turn from the all-entry version to the single-entry version. When $k\gg n$, computing all entries of the convolution may be wasteful if only one target entry is required. In this regime, we can improve over the algorithm of Theorem \ref{thm:mult_all_sec} by using a more localized strategy. The algorithm uses Algorithm \ref{alg1} as a key subroutine: we first use it to compute a solution satisfying an appropriate proximity guarantee, and then apply the same random-shuffle technique as in Theorem \ref{thm:mult_all_sec} to control the remaining search space.
	\begin{theorem}
		\label{thm:mult_sing_sec}
		The single-entry version of Multiple-Sequence $(\min,+)$ Convolution can be solved by a randomized algorithm in time
		$$
		O\left(\left(kn+n^3\sqrt{\min(k,n)}\right)\log(kn)\right)
		$$
		with high probability.
	\end{theorem}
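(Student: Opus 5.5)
We plan to combine Algorithm \ref{alg1} with the proximity argument of Lemma \ref{lem:adj} and the random-shuffle technique of Theorem \ref{thm:mult_all_sec}.

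\textbf{Anchor.} If some $a^{(i)}=\infty$ we return $\infty$. Otherwise we run Algorithm \ref{alg1} on $a^{(1)},\ldots,a^{(k)}$, with the solution recovery of Remark \ref{rmk:sol_rec}, in $O(kn\log k)$ time. This yields a weak convex support sequence $q$ of $c=\bigotimes_i a^{(i)}$ together with exact values $c_{q_j}$ and explicit solutions. By Theorem \ref{thm:alg_weak_css} and $\operatorname{cgap}(a^{(i)})\le |a^{(i)}|\le n$, consecutive elements of $q$ differ by at most $n$. Since $q$ starts at $L(c)$ and ends at $|c|$, for the target $t$ (assuming $L(c)\le t$; otherwise $c_t=\infty$) we can pick $q_j$ with $|t-q_j|\le n$. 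If $t=q_j$ we are done. Otherwise we hold an exact solution $r$ of $(a^{(1)},\ldots,a^{(k)},q_j)$ with $c_{q_j}\neq\infty$.

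\textbf{Proximity.} We lift all entries to $\widetilde{\mathbb T}$ with independent random weights in $\{1,\ldots,W\}$, $W=(kn)^{c+5}$, exactly as in Theorem \ref{thm:mult_all_sec}. By Lemma \ref{lem:iso}, with high probability both $q_j$ and $t$ have unique (isolated) solutions over $\widetilde{\mathbb T}$. The anchor solution must be recomputed over $\widetilde{\mathbb T}$. Running Algorithm \ref{alg1} directly on the lifted sequences, with the lexicographic order as in Remark \ref{rmk:lex_ext}, gives the isolated anchor solution $p$. By Lemma \ref{lem:adj}, the isolated solution $p'$ at $t$ satisfies $D(p,p')\le 2n$, and every coordinate difference $p'_i-p_i$ lies in $[-n,n]$.

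\textbf{Local dynamic programming.} We randomly permute the $k$ sequences and define, for each prefix length $h$, the shifted partial sum $\sigma_h=\sum_{i\le h}(p'_i-p_i)$. As in Part 2 of Theorem \ref{thm:mult_all_sec}, Hoeffding without replacement (Lemma \ref{lem:hoeff}) plus a union bound over $h$ shows that with high probability $|\sigma_h|\le U=O(n\sqrt{\min(k,n)\log(kn)})$ for all $h$. We then run a DP over the shuffled order whose state is the offset $s\in[-U,U]$ relative to the anchor prefix $\sum_{i\le h}p_i$. The DP value is the minimum of $\sum_{i\le h}(a^{(i)}_{x_i}-a^{(i)}_{p_i})$ over choices with $\sum_{i\le h}(x_i-p_i)=s$. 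The answer is read at $s=t-q_j$ after all $k$ sequences.

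\textbf{Main obstacle: running time.} A naive transition costs $O(n)$ per state per sequence, giving $O(knU)$ total, which is too slow when $k\gg n$. We therefore exploit the fact that at most $2n$ coordinates change. For each sequence, staying at $p_i$ is a zero-cost shift, so the DP with the ``unchanged'' option is the identity. The real work lies in transitions that change a coordinate, of which an optimal path uses at most $\min(k,2n)$. Our first attempt is to stratify the DP by the number of changed coordinates, but this multiplies the state space. A more promising route is to restrict the DP to \emph{candidate} sequences. Sequence $i$ can profitably move by $\delta$ only when $a^{(i)}_{p_i+\delta}-a^{(i)}_{p_i}$ is competitive, and by convexity of the envelope near the anchor these improvement profiles can be grouped by the slopes at $p_i$. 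For each offset $\delta\in[-n,n]$ we keep only the best $O(n)$ sequences, giving $O(n^2)$ relevant sequences, each processed with $O(nU)$ work. That totals $O(n^3 U/n)$ after bucketing, i.e.\ $O(n^3\sqrt{\min(k,n)}\log(kn))$ up to the stated logs, plus $O(kn\log(kn))$ for the anchor and candidate selection. Proving that this pruning preserves the isolated optimum, via an exchange argument analogous to Lemma \ref{lem:adj}, is the step we expect to be hardest, and the exact pruning rule may need adjustment.
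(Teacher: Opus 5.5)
Your overall architecture matches the paper's: anchor and solution from Algorithm~\ref{alg1}, proximity via Lemma~\ref{lem:adj}, per-offset pruning, random shuffle plus Hoeffding, and a truncated DP. The gap is that the one step separating the single-entry bound from the all-entry one is left unproven, and you are unsure of the rule itself.

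\textbf{The pruning rule and why it is safe.} For each $\Delta\in[-n,n]\setminus\{0\}$, form $S_\Delta=\{(i,\,a^{(i)}_{p_i+\Delta}-a^{(i)}_{p_i})\}$ over all $i$ with $a^{(i)}_{p_i+\Delta}\neq\infty$. Sort it by cost and keep the first $\min(2n,|S_\Delta|)$ entries. Write $r$ for the target solution (your $p'$) and $J$ for the set of coordinates where it differs from $p$, so $|J|\le 2n$. Each index occurs at most once in a given $S_\Delta$. Suppose the move of some $i\in J$ by $\Delta=r_i-p_i$, with cost $\gamma$, was discarded. Then at most $|J|-1$ retained entries come from $J$, so some retained $(i',\gamma')$ has $i'\notin J$ and $\gamma'\le\gamma$. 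Transfer the move from $i$ to $i'$: set $u_i=p_i$, $u_{i'}=p_{i'}+\Delta$, and $u_j=r_j$ otherwise. This keeps the index sum $t$ and the distance $|J|$ from $p$, and changes the cost by $\gamma'-\gamma\le 0$.

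To turn this into a contradiction you need a canonical target. The paper takes $r$ lexicographically largest among solutions at $t$ within distance $2n$ of $p$. It breaks cost ties in $S_\Delta$ by increasing index when $\Delta>0$ and by decreasing index when $\Delta<0$, so that equal cost makes $u$ lexicographically larger. In your isolated setting it is even shorter: sort $S_\Delta$ by lifted costs in $\widetilde{\mathbb T}$. Then $u\neq p'$ is a solution at $t$ whose lifted cost is at most that of $p'$, contradicting uniqueness. No convexity or slope information at the anchor is needed; that part of your plan is a red herring.

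\textbf{The running-time accounting.} As written, your bound does not follow: $O(n^2)$ relevant sequences each processed with $O(nU)$ work gives $O(n^3U)$, roughly $n^4\sqrt{\min(k,n)}$. The DP step for sequence $i$ must range only over $W_i$, which consists of $i$'s own retained moves plus the stay move $(0,0)$. Sequences that own no retained move are skipped. Then $\sum_i|W_i|\le 2\cdot(2n)^2$, and the DP costs $O(n^2U)=O(n^3\sqrt{\min(k,n)\log(kn)})$.

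This restriction is valid precisely because of the exchange argument above. The path realizing the target uses only retained moves, and its prefix sums are those of the shuffled differences over $J\subseteq I$, to which your Hoeffding argument applies.

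Finally, isolation and rerunning Algorithm~\ref{alg1} over $\widetilde{\mathbb T}$ are not needed here. The anchor $p$ is deterministic, so any deterministic choice of target, such as the paper's lexicographic one, is already independent of the shuffle.
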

	\begin{proof}
		Let $b=\bigotimes_{i=1}^{k}a^{(i)}$. If some $a^{(i)}$ is equal to
		$\infty$, or if $t<L(b)=\sum_{i=1}^{k}L(a^{(i)})$, then we simply return
		$\infty$. Otherwise, fix an arbitrary integer constant $c>0$. We first apply Algorithm \ref{alg1} to compute a weak convex
		support sequence $q$ of $b$, together with the corresponding values $w$.
		By Theorem \ref{thm:cgap}, we have
		$$
		\max_{i=1}^{|q|-1}(q_{i+1}-q_i)
		\leq
		\max_{i=1}^{k}\left(\operatorname{cgap}(a^{(i)})\right)
		\leq n.
		$$
		If $t=q_d$ for some $d$, then we simply return $w_d$. Otherwise, choose
		$d$ such that $q_d<t<q_{d+1}$. Then $|t-q_d|\leq n$. By Remark
		\ref{rmk:sol_rec}, Algorithm \ref{alg1} can also recover a solution
		$p$ of $(a^{(1)},\ldots,a^{(k)},q_d)$. And by Lemma \ref{lem:adj}, there exists a solution $r$ of $(a^{(1)},\ldots,a^{(k)},t)$ such that
		$D(r,p)\leq 2n$.
		
		We now construct a small set of possible adjustments from $p$. For each
		$-n\leq \Delta \leq n$, initialize an empty multiset $S_\Delta$. For every
		$1\leq i\leq k$ and every $0\leq j\leq |a^{(i)}|$ such that $j\neq p_i,a^{(i)}_j\neq \infty$,
		insert $(i,a^{(i)}_j-a^{(i)}_{p_i})$ into $S_{j-p_i}$. For each
		$-n\leq \Delta\leq n$, sort $S_\Delta$ by increasing second coordinate. If the second
		coordinates are equal, break ties by increasing first coordinate when
		$\Delta>0$, and by decreasing first coordinate when $\Delta<0$. Then keep
		only the first $\min(2n,|S_\Delta|)$ elements in $S_\Delta$.
		
		Let $I$ be the set of all indices $i$ such that $(i,\gamma)$ is retained in
		some $S_\Delta$. For each $i\in I$, define $W_i$ to be the multiset of all pairs
		$(\Delta,\gamma)$ such that the retained element $(i,\gamma)$ belongs to
		$S_\Delta$. We also add the dummy adjustment $(0,0)$ to every $W_i$. This implies
		$$
		\sum_{i\in I}|W_i|=|I|+\sum_{\substack{-n\le \Delta\le n,\Delta\ne 0}}|S_\Delta|
		\leq 2\cdot\left(\sum_{\substack{-n\le \Delta\le n,\Delta\ne 0}}|S_\Delta|\right)
		\le 2\cdot (2n)^2
		=O(n^2).
		$$
		We then randomly permute the indices in $I$, obtaining an ordering sequence $v=(v_1,\ldots,v_z)$.
		
		Set $U=10n\sqrt{(c+5)\min(k,2n)\log(kn)}+n$. We run a dynamic program over
		$W_{v_1},\ldots,W_{v_z}$, truncating the total index adjustment to the range
		$[-U,U]$. Let $dp[h,s]$ be the minimum total value increase after processing
		the first $h$ multisets and obtaining total index adjustment $s$.
		
		We initialize the zeroth layer by setting $dp[0,0]=0$ and
		$dp[0,s]=\infty$ for every $s\neq 0$. For each $1\leq h\leq z$, we compute
		the $h$-th layer from the $(h-1)$-st layer by
		$$
		dp[h,s]=
		\min_{(\Delta,\gamma)\in W_{v_h}}
		\left(dp[h-1,s-\Delta]+\gamma\right),
		$$
		where only states with $s\in[-U,U]$ are kept, and states outside this range
		are treated as $\infty$. The algorithm finally returns
		$\sum_{i=1}^{k}a^{(i)}_{p_i}+dp[z,t-q_d]$.
		
		We prove correctness. If $b_t=\infty$, the dynamic program considers no feasible adjustment and therefore correctly returns $\infty$. Otherwise, among all solutions of $(a^{(1)},\ldots,a^{(k)},t)$
		whose distance from $p$ is at most $2n$, choose one
		$r=(r_1,\ldots,r_k)$ that is lexicographically largest among all such
		solutions. Let $J=\{i\mid 1\leq i\leq k, r_i\neq p_i\}$. Then
		$|J|=D(r,p)\leq \min(k,2n)$.
		
		We claim that for every $i\in J$, the adjustment
		$(r_i-p_i,a^{(i)}_{r_i}-a^{(i)}_{p_i})$ belongs to $W_i$. Suppose not.
		Put $\Delta=r_i-p_i$ and $\gamma=a^{(i)}_{r_i}-a^{(i)}_{p_i}$. Then
		$(i,\gamma)$ was discarded from $S_\Delta$. Hence before the discarding step, we have $|S_\Delta|>2n\geq |J|$. Then
		among the retained $2n$ elements of $S_\Delta$ there is an element
		$(i',\gamma')$ with $i'\notin J$. Since $i'\notin J$, we have $r_i'=p_i'$. Moreover, by the sorting rule, either
		$\gamma'<\gamma$, or $\gamma'=\gamma$ and the tie-breaking order places
		$i'$ before $i$.
		
		Define a new sequence $u=(u_1,\ldots,u_k)$ by
		$$
		u_j=
		\begin{cases}
			p_i & \text{if } j=i,\\
			p_{i'}+\Delta & \text{if } j=i',\\
			r_j & \text{if } j\neq i\text{ and } j\neq i'
		\end{cases}
		$$
		for all $1\leq j\leq k$. Then we have
		$$\sum_{j=1}^{k}u_j=\sum_{j=1}^{k}r_j\quad \text{ and }\quad \sum_{j=1}^{k}a_{u_j}^{(j)}-\sum_{j=1}^{k}a_{r_j}^{(j)}=\gamma'-\gamma.$$
		
		If $\gamma'<\gamma$, this contradicts the optimality of
		$r$. If $\gamma'=\gamma$, the sign-dependent tie-breaking rule makes $u$
		lexicographically larger than $r$, again a contradiction. Therefore all
		adjustments from $p$ to $r$ are present in the dynamic program.
		
		It remains to show that the truncation to $[-U,U]$ preserves the dynamic programming path corresponding to $r$ with high probability. We keep only those elements of $v$ that appear in $J$, obtaining a sequence $\bar v=(\bar v_1,\ldots,\bar v_m)$, where $m\le |J|\le \min(k,2n)$. For each $1\le h\le m$, let $\bar \Delta_h$ denote the total index adjustment from $p$ to $r$ in $W_{\bar v_h}$. Since $(v_1,\ldots,v_z)$ is a random shuffle, so is $(\bar v_1,\ldots,\bar v_m)$. Hence $(\bar \Delta_1,\ldots,\bar \Delta_m)$ is also a random shuffle. Applying the same Hoeffding inequality for sampling without replacement as in the proof of Theorem \ref{thm:mult_all_sec}, we obtain that, with probability at least $1-(kn)^{-c}$,
		$$
		\left|\sum_{h=1}^{o}\bar \Delta_h\right|\leq U
		$$
		for every $1\leq o\leq m$. Conditional on this event, the dynamic program does not truncate the path corresponding to $r$, and hence returns the correct value $b_t$.
		
		Finally, Algorithm \ref{alg1} takes $O(kn\log k)$ time. Constructing and
		sorting the multisets $S_\Delta$ takes $O(kn\log(kn))$ time. Since
		$\sum_{i\in I}|W_i|=O(n^2)$ and the DP range has size $O(U)$, where $U=O\left(n\sqrt{\min(k,n)\log(kn)}\right)$, the dynamic
		program takes $O(n^2U)=O\left(n^3\sqrt{\min(k,n)\log(kn)}\right)$ time. Hence the total running
		time is $$O\left(\left(kn+n^3\sqrt{\min(k,n)}\right)\log(kn)\right).$$
	\end{proof}
	We also give conditional lower bounds for the all-entry version of Multiple-Sequence $(\min,+)$ Convolution problem. These lower bounds are based on the $(\min,+)$ convolution hypothesis, which was explicitly introduced in \cite{cmww19}.
	\begin{hypothesis}[$(\min,+)$ convolution hypothesis]
		\label{hyp:minplus}
		For every fixed $\epsilon>0$, there is no randomized algorithm that, given two tropical sequences $a,b$ with $|a|=|b|=n$ and entries in $\{0,1,\ldots,n^{O(1)}\}$, correctly computes $a\otimes b$ in time $O(n^{2-\epsilon})$ with high probability.
	\end{hypothesis}
	\begin{theorem}
		Assuming the $(\min,+)$ convolution hypothesis, for any fixed $\epsilon>0$, the all-entry version of Multiple-Sequence $(\min,+)$ Convolution
		cannot be solved by a randomized algorithm in time $$O(kn^{2-\epsilon}+k^{1-\epsilon}n^2)$$
		with high probability, even when the sequence entries are restricted to $\{0,1,\ldots,n^{O(1)}\}$.
	\end{theorem}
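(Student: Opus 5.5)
The plan is a fine-grained reduction that proves the two terms of the lower bound separately. For each term we build instances of Multiple-Sequence $(\min,+)$ Convolution from instances of two-sequence $(\min,+)$ convolution. Fix $\epsilon>0$ and suppose an algorithm $\mathcal A$ solves the all-entry version in time $O(kn^{2-\epsilon}+k^{1-\epsilon}n^2)$ with high probability. It suffices to contradict Hypothesis \ref{hyp:minplus} in each of the two regimes. In the first, $\mathcal A$ is run with $k$ a fixed constant, and the $kn^{2-\epsilon}$ term dominates. In the second, $\mathcal A$ is run with $n$ constant or small and $k$ large, and the $k^{1-\epsilon}n^2$ term dominates.

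\textbf{First regime.} Given $a,b$ with $|a|=|b|=N$ and entries in $\{0,\ldots,N^{O(1)}\}$, set $k=2$, $n=N$, $a^{(1)}=a$, $a^{(2)}=b$. Then $\mathcal A$ outputs $a\otimes b$ in time $O(2N^{2-\epsilon}+2^{1-\epsilon}N^2)$. The $N^2$ term kills this naive argument, so instead we pad with many trivial sequences. Take $k$ sequences, where $a^{(1)}=a$, $a^{(2)}=b$, and the other $k-2$ are the sequence $(0)$ of length $0$. These change nothing, but the bound $k^{1-\epsilon}n^2$ still carries $n^2$, so padding does not help this term.

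The better route is to reverse the roles of the parameters: each input sequence is short and there are many of them. Split $a$ into $N/n$ blocks of length $n$ and $b$ similarly. Computing $a\otimes b$ reduces to $(N/n)^2$ block products $a_{[i]}\otimes b_{[j]}$. Each block product is a two-sequence instance and gives no savings on its own. To batch the block products, encode a product of sums as a multi-sequence convolution. Use $k=2$ sequences of length $n=N$ for the $kn^{2-\epsilon}$ term, which is immediate since $2N^{2-\epsilon}$ is subquadratic. For the $k^{1-\epsilon}n^2$ term, choose $n$ constant-sized (say $n=1$ with large offsets) and $k\approx N$.

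\textbf{Second regime.} The key construction is to embed $a$ and $b$ in a product of $k$ short sequences. Take $k=\Theta(\log N)$ binary-digit sequences $a^{(i)}=(\alpha_i,\,\beta_i)$ whose product enumerates $2^k$ subset sums. This only realizes convex-like structure, so it fails. I therefore expect to instead use $k=N$ sequences of length $n$ constant, together with a huge-weight separation trick. Sequence $i$ places the entry $a_i$ at offset $i$ relative to a large multiple $M$. Specifically, let $a^{(i)}$ have length about $2$ with entries chosen so that choosing index $1$ in exactly one sequence is forced by adding large penalties $H$. The single chosen index then recovers the term $a_i$, and $b$ is handled by a second family. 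This requires $|a^{(i)}|$ to be large, so $n$ is not constant. This is the main obstacle: making $n$ small while keeping $k$ large. I would try sequences of length $n=N^{\delta}$ and $k=N^{1-\delta}$, with block decomposition plus penalty weights $H\cdot(\text{index} \bmod n)$. The weights force the optimal solution to use exactly one nontrivial coordinate from the $a$-family and one from the $b$-family. The output entry indexed by $i+j$ plus a large offset then equals $\min(a_i+b_j)$ shifted by a known constant.

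The running time is $O(k^{1-\epsilon}n^2)=O(N^{(1-\delta)(1-\epsilon)+2\delta})$. This is $N^{2-\epsilon'}$ when $\delta$ is small but $kn^{2-\epsilon}$ is the concern, and we need both terms to be subquadratic. Choosing $\delta$ as a small function of $\epsilon$ balances them. Finally, entries remain in $\{0,\ldots,n^{O(1)}\}$ since $H=N^{O(1)}=n^{O(1)}$ for $n=N^\delta$. Verifying that the penalty structure forces the intended solutions is a routine case check.
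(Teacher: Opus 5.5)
There is a genuine gap, and it sits in the construction you leave to a ``routine case check.''

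\textbf{The two-regime plan does not match the statement.} The assumed running time is the sum $kn^{2-\epsilon}+k^{1-\epsilon}n^2$. You therefore need a single family of instances on which both terms are $N^{2-\Omega(1)}$ at once.

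\begin{itemize}
\item The choice $k=2$, $n=N$ leaves $k^{1-\epsilon}n^2=\Theta(N^2)$, so it proves nothing.
\item The choice $n=1$ is hopeless. Length-$1$ sequences are convex, so their product can be computed in $O(k\log k)$ time by merging slopes (Algorithm \ref{alg1}); no hardness can be embedded there.
\end{itemize}

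\textbf{Your final encoding is too small.} It uses one sequence per block of $a$ and one per block of $b$, i.e.\ $k=\Theta(N/n)$ sequences of length $n=N^{\delta}$, so $kn=\Theta(N)$. Such instances are easy:
\begin{itemize}
\item Theorem \ref{thm:mult_all_sec} solves them in $\widetilde O(kn^2\sqrt{\min(k,n)})\subseteq\widetilde O(N^{1.5+0.5\delta})$ time.
\item For $\delta<1/2$, Theorem \ref{thm:try_alg_sec} already gives $O(N^{1+2\delta}\log N)$.
\end{itemize}
So if your penalty scheme really recovered $a\otimes b$, it would be an unconditional truly subquadratic algorithm for $(\min,+)$ convolution.

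The concrete failure is this. A nontrivial coordinate in block $x$ contributes only its local position to the output index, so the block offset $l_x$ has to come from saturating other sequences. Uniform penalties merely shift every output entry by the same amount per unit of index and force nothing. Sequence-dependent penalties strong enough to force saturation also fix, for each output index, which sequences are active. Hence the block indices $x$ and $y$ cannot both be left free for the data to choose, and no output entry equals the minimum over all block pairs.

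\textbf{The missing idea} is to spend one adjacent pair of sequences on every block pair, rather than one sequence per block.
\begin{itemize}
\item Take blocks of size $B=\lfloor\sqrt N\rfloor$, giving $s=O(\sqrt N)$ blocks. List all $s^2$ pairs $(u^{(i)},v^{(i)})$ as $2s^2$ consecutive sequences.
\item Add $\lceil j/2\rceil dH$ to entry $d$ of the $j$-th sequence, with $H>100M+100$.
\item Moving one unit of index from a later pair to an earlier unsaturated pair changes the cost by at most $4M-H<0$. So every optimal solution at index $P_i+d$ saturates all earlier pairs and leaves all later pairs at $0$.
\item This gives $z_{P_i+d}=C_i+idH+(u^{(i)}\otimes v^{(i)})_d$. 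Each block product is read off the single output, and $a\otimes b$ is assembled by taking minima over block pairs at offset $l_x+l_y$.
\end{itemize}
Here $k=\Theta(N)$ and $n=\sqrt N$, so the total size $kn=\Theta(N^{1.5})$ is superlinear in $N$, which is exactly what your encoding lacks. Then $kn^{2-\epsilon}+k^{1-\epsilon}n^2=O(N^{2-\epsilon/2})$, contradicting the hypothesis. All entries remain nonnegative and bounded by $N^{O(1)}=n^{O(1)}$.
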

	\begin{proof}
		Suppose, for contradiction, that such an algorithm exists. We show that this would give a truly subquadratic algorithm for $(\min,+)$ convolution in Hypothesis \ref{hyp:minplus}.
		
		Let $a,b$ be two tropical sequences such that $|a|=|b|=n$ and entries in $\{0,1,\ldots,n^{O(1)}\}$. Let $c^*=a\otimes b$ be the true convolution. Set $B=\lfloor\sqrt n\rfloor$. Partition $[0,n]$ into consecutive intervals
		$[l_1,r_1],[l_2,r_2],\ldots,[l_s,r_s]$
		such that $r_t+1=l_{t+1}$, $r_t-l_t+1\leq B$, and $s=O(n/B)=O(\sqrt n)$.
		
		For each $1\leq x,y\leq s$, define
		$\tilde a^{(x)}=(a_{l_x},\ldots,a_{r_x})$ and
		$\tilde b^{(y)}=(b_{l_y},\ldots,b_{r_y})$.
		List the $s^2$ block pairs
		$(\tilde a^{(x)},\tilde b^{(y)})$
		as
		$(u^{(1)},v^{(1)}),\ldots,(u^{(s^2)},v^{(s^2)})$.
		For $1\leq i\leq s^2$, let $r^{(i)}=u^{(i)}\otimes v^{(i)}$.
		
		Let $M=\max\left(\max_{i=0}^{n}(|a_i|),\max_{j=0}^{n}(|b_j|)\right)$, and choose $H>100M+100$. We construct $2s^2$ tropical sequences
		$w^{(1)},\ldots,w^{(2s^2)}$ as follows. Define $\bar w^{(2i-1)}=u^{(i)}$ and $\bar w^{(2i)}=v^{(i)}$ for $1\leq i\leq s^2$. Then for each $1\leq j\leq 2s^2$, set $|w^{(j)}|=|\bar w^{(j)}|$ and define
		$
		w^{(j)}_d=\bar w^{(j)}_d+\lceil j/2\rceil dH
		$
		for all $0\leq d\leq |w^{(j)}|$.
		
		Now compute
		$z=\bigotimes_{j=1}^{2s^2}w^{(j)}$
		using the assumed algorithm. Since $|w^{(j)}|\le B$ and $s=O(\sqrt n)$, the assumed algorithm runs in time
		$$
		O\left(s^2B^{2-\epsilon}+s^{2(1-\epsilon)}B^2\right)
		=
		O\left(n^{2-\epsilon/2}\right).
		$$
		We claim that all convolutions of block pairs $r^{(i)}$ can be recovered from $z$. Fix $1\leq i\leq s^2$, and let
		$P_i=\sum_{j=1}^{2i-2}|w^{(j)}|$ and
		$C_i=\sum_{j=1}^{2i-2}w^{(j)}_{|w^{(j)}|}+\sum_{j=2i+1}^{2s^2}w^{(j)}_0$.
		For every $0\leq d\leq |r^{(i)}|$, we show that
		$$
		z_{P_i+d}=C_i+idH+r^{(i)}_d.
		$$
		
		We now prove the claim. Let $h=(h_1,\ldots,h_{2s^2})$ be any solution of
		$(w^{(1)},\ldots,w^{(2s^2)},P_i+d)$. We first observe the following
		exchange property. Suppose that there exist an exchanging pair $(x,y)$ such that
		$\lceil x/2\rceil<\lceil y/2\rceil$, $h_x<|w^{(x)}|$, and $h_y>0$.
		Define a new sequence $h'=(h'_1,\ldots,h'_{2s^2})$ by
		$$
		h'_j=
		\begin{cases}
			h_j+1 & \text{if } j=x,\\
			h_j-1 & \text{if } j=y,\\
			h_j & \text{if } j\neq x\text{ and }j\neq y
		\end{cases}
		$$
		for all $1\leq j\leq 2s^2$. Then $\sum_{j=1}^{2s^2}h'_j=\sum_{j=1}^{2s^2}h_j=P_i+d$. The total value adjustment from $h$ to $h'$ is
		$$
		\begin{aligned}
			&\sum_{j=1}^{2s^2}w^{(j)}_{h'_j}-\sum_{j=1}^{2s^2}w^{(j)}_{h_j}  \\
			&=
			\left(w^{(x)}_{h_x+1}-w^{(x)}_{h_x}\right)
			+
			\left(w^{(y)}_{h_y-1}-w^{(y)}_{h_y}\right) \\
			&=
			\left(\bar w^{(x)}_{h_x+1}-\bar w^{(x)}_{h_x}\right)
			+
			\left(\bar w^{(y)}_{h_y-1}-\bar w^{(y)}_{h_y}\right)
			+
			\left(\left\lceil\frac{x}{2}\right\rceil-\left\lceil\frac{y}{2}\right\rceil\right)H.
		\end{aligned}
		$$
		Since all coefficients in $\bar w^{(x)},\bar w^{(y)}$ have absolute value at most $M$, the first two terms together are at most $4M$, and $(\lceil x/2\rceil-\lceil y/2\rceil)H\leq -H$. Therefore
		$
		\sum_{j=1}^{2s^2}w^{(j)}_{h'_j}-\sum_{j=1}^{2s^2}w^{(j)}_{h_j}
		\leq 4M-H<0.
		$
		This contradicts that $h$ is a solution.
		
		Therefore, no such exchanging pair $(x,y)$ can exist. Since
		$
		\sum_{i=1}^{2s^2}h_i=P_i+d\le P_i+|w^{(2i-1)}|+|w^{(2i)}|,
		$
		this exchange property implies that $h_j=|w^{(j)}|$ for all $1\le j\le 2i-2$ and $h_j=0$ for all $2i+1\le j\le 2s^2$. Hence the only nontrivial choices occur in the two active sequences $w^{(2i-1)}$ and $w^{(2i)}$, and they satisfy $h_{2i-1}+h_{2i}=d$. Thus
		$$
		z_{P_i+d}
		=
		C_i+idH+\min_{p+q=d}(u^{(i)}_p+v^{(i)}_q)
		=
		C_i+idH+r^{(i)}_d.
		$$
		Consequently, we recover $r^{(i)}_d$ by
		$r_d^{(i)}=z_{P_i+d}-C_i-idH$.
		
		Finally, we assemble $a\otimes b$. Initialize a tropical sequence $c$ with $|c|=|a|+|b|$ and $c_t=\infty$ for all $0\le t\le |c|$. For each block pair $(\tilde a^{(x)},\tilde b^{(y)})$, suppose it is listed as $(u^{(i)},v^{(i)})$. Then for every $0\le d\le |r^{(i)}|$, update $c_{l_x+l_y+d}$
		by taking the minimum with $r_d^{(i)}$. By definition, $r^{(i)}_d=(\tilde a^{(x)}\otimes \tilde b^{(y)})_d$. Every contribution $a_p+b_q$ to $(c^*)_t=\min_{p+q=t}(a_p+b_q)$ is covered by exactly one block pair $(\tilde a^{(x)},\tilde b^{(y)})$ with $l_x\le p\le r_x$ and $l_y\le q\le r_y$. Hence after all updates, $c=c^*=a\otimes b$.
		
		Therefore, the assumed algorithm gives a randomized algorithm for $(\min,+)$ convolution running in $O(n^{2-\epsilon/2})$ time, contradicting Hypothesis \ref{hyp:minplus}. This proves the theorem.
	\end{proof}
	\subsection{Connection with $(\min,+)$ Convolution and Knapsack-Type Problems}
	\label{sec:con}
	We first give the definitions of the Knapsack-type problems outlined in the introduction.
	\begin{problem}[Subset Sum]
		Given a capacity $W$ and $n$ items, where the $i$-th item has a positive integer weight $w_i$, decide whether there exists a subset $S\subseteq \{1,\ldots,n\}$ such that
		$\sum_{i\in S}w_i=W$.
	\end{problem}
	\begin{problem}[$0$-$1$ Knapsack]
		Given a capacity $W$ and $n$ items, where the $i$-th item has a positive integer weight $w_i$ and profit $p_i$, compute the maximum total profit $\sum_{i\in S} p_i$ over all subsets $S\subseteq \{1,\ldots,n\}$ such that $\sum_{i\in S} w_i \le W$.
	\end{problem}
	\begin{problem}[Multiple-Choice Knapsack]
		Given a capacity $W$ and $m$ nonempty groups $G_1,\ldots,G_m$, where each group $G_i$ consists of $k_i$ items with positive integer weights $w_1^{(i)},\ldots,w_{k_i}^{(i)}$ and profits $p_1^{(i)},\ldots,p_{k_i}^{(i)}$. Let $n=\sum_{i=1}^{m} k_i$ be the total number of items. Exactly one item must be chosen from each group. The goal is to maximize
		$$\sum_{i=1}^{m} p_{r_i}^{(i)}$$
		subject to
		$\sum_{i=1}^{m} w_{r_i}^{(i)} \le W$, where $r_i\in\{1,\ldots,k_i\}$ indicates the chosen item from group $i$. If no feasible selection exists, the answer is defined to be $-\infty$.
	\end{problem}
	For all problems considered here, let $w_{\operatorname{tot}}$ denote the total weight of all items, and let $w_{\max}$ denote the maximum item weight. Although these problems are stated for the single-entry version, the all-entry version is obtained by solving the problem for all capacities $0\le W\le w_{\operatorname{tot}}$.
	
	Subset Sum reduces to $0$-$1$ Knapsack by taking each item $w_i$ as an item with weight $w_i$ and profit $w_i$. The Subset Sum instance has a solution if and only if the optimal profit in the Knapsack instance equals $W$. In turn, $0$-$1$ Knapsack reduces to Multiple-Choice Knapsack by replacing each item $(w_i,p_i)$ with a group $\{(1,0),(w_i+1,p_i)\}$ and increasing the capacity from $W$ to $W+n$. Since exactly one item is chosen from each group, the two instances have the same optimal profit. Both reductions run in linear time and justify the hardness hierarchy described in the Introduction.
	
	We next show that Multiple-Choice Knapsack reduces to Multiple-Sequence $(\min,+)$ Convolution. Together with the reductions above, this implies that all the Knapsack-type problems considered here can be transformed to the Multiple-Sequence $(\min,+)$ Convolution problem.
	\begin{theorem}
		\label{thm:mkp_red}
		A $T(k,n)$-time algorithm for the single-entry version of Multiple-Sequence $(\min,+)$ Convolution implies an $O(n+T(m+\lceil \log w_{\max}\rceil,w_{\max}))$-time algorithm for the single-entry version of Multiple-Choice Knapsack. In particular, a $T(k,n)$-time algorithm for the all-entry version of Multiple-Sequence $(\min,+)$ Convolution implies an $O(w_{\operatorname{tot}}+T(m+\lceil \log w_{\max}\rceil,w_{\max}))$-time algorithm for the all-entry version of Multiple-Choice Knapsack.
	\end{theorem}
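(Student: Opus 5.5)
The plan is to encode each group as one tropical sequence indexed by weight, using a \emph{prefix minimum} so that an exact-index convolution captures the constraint ``total weight at most $W$''. For each group $G_i$, I would define a tropical sequence $a^{(i)}$ of length $|a^{(i)}|=w_{\max}$ by
$$
a^{(i)}_w=\min\left\{-p^{(i)}_j \;\middle|\; 1\le j\le k_i,\ w^{(i)}_j\le w\right\},\qquad 0\le w\le w_{\max},
$$
with the minimum over an empty set equal to $\infty$. This sequence is nonincreasing in $w$, and its last entry is finite because $G_i$ is nonempty. All $m$ sequences can be built in $O(n+m\,w_{\max})$ time: first bucket the items by weight, then take prefix minima.

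The key claim is that for every integer $0\le W'\le m\,w_{\max}$,
$$
\left(\bigotimes_{i=1}^m a^{(i)}\right)_{W'}=-\max\left\{\sum_i p^{(i)}_{r_i}\;\middle|\;\sum_i w^{(i)}_{r_i}\le W'\right\}.
$$
\begin{itemize}
\item[($\le$)] Take a feasible selection $(r_i)$. Starting from $w_i=w^{(i)}_{r_i}$, I would raise the indices $w_i$ one at a time, never exceeding $w_{\max}$, until $\sum_i w_i=W'$. This is possible exactly because $\sum_i w^{(i)}_{r_i}\le W'\le m\,w_{\max}$. Since $w^{(i)}_{r_i}\le w_i$, each term satisfies $a^{(i)}_{w_i}\le -p^{(i)}_{r_i}$.
\item[($\ge$)] Take any decomposition $\sum_i w_i=W'$ with all $a^{(i)}_{w_i}$ finite. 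Each finite value $a^{(i)}_{w_i}$ is attained by some item of weight at most $w_i$. Choosing those items gives a selection of total weight at most $W'$ whose profit is exactly $-\sum_i a^{(i)}_{w_i}$.
\end{itemize}

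For the single-entry version, I would first handle large capacities. If $W\ge m\,w_{\max}$, the capacity never binds, so the answer is the sum of the per-group maximum profits, computable in $O(n)$ time. Otherwise I would query the single-entry oracle at index $W$ and negate the result, with $\infty$ mapping to $-\infty$. For the all-entry version, I would call the all-entry oracle once and read off every capacity $0\le W\le w_{\operatorname{tot}}$, noting that $w_{\operatorname{tot}}\le m\,w_{\max}$. Producing this output costs $O(w_{\operatorname{tot}})$ extra time.

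For the running time, there are $m\le m+\lceil\log w_{\max}\rceil$ sequences, each of length $w_{\max}$. By the monotonicity assumption on $T$, the oracle call costs $O(T(m+\lceil\log w_{\max}\rceil,w_{\max}))$. The preprocessing term $m\,w_{\max}$ is absorbed into $T$, since any algorithm must at least read its input of size $\Theta(k n)$.

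The main obstacle is handling the inequality $\le W$ with only a logarithmic number of extra sequences, which is what the stated bound seems to suggest. The prefix-minimum trick avoids needing any slack sequences. If that step somehow failed, I would instead keep exact-weight sequences and append $\lceil\log w_{\max}\rceil$ binary slack sequences of the form $(0,\infty,\dots,\infty,0)$ with span $2^j$. I would then argue, via an exchange argument, that only slack below roughly $w_{\max}$ is ever needed once the capacity has been clamped appropriately. I would also double-check the boundary cases: $W'=m\,w_{\max}$, and infeasible instances where the output is $\infty$.
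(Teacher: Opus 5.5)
Your reduction is correct, but it takes a different route from the paper's. The paper keeps \emph{exact-weight} group sequences: $a^{(i)}_w$ is the minimum of $-p$ over items of weight exactly $w$, and $\infty$ otherwise. It handles the constraint $\le W$ by appending $m+\lceil\log w_{\max}\rceil$ slack sequences of the form $(0,\infty,\dots,\infty,0)$ with spans $\min(2^{i-1},w_{\max})$. Their product is identically $0$ on an interval of length at least $m w_{\max}$, so the leftover capacity $R=W-\sum_i w^{(i)}_{s_i}$ is absorbed at zero cost. The paper then calls the oracle on $2m+\lceil\log w_{\max}\rceil$ sequences and uses the scaling assumption on $T$.

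Your prefix-minimum encoding folds the slack into each group's own sequence, since raising an index plays the role of the slack. You therefore need only $m$ sequences and get the slightly stronger bound $T(m,w_{\max})$, with no exchange argument. Your fallback plan is essentially the paper's proof.

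What the paper's encoding buys is sparsity. Its instance has only $O(n+\log w_{\max})$ finite entries, and Corollary \ref{cor:mkp_alg} and the corollary after it exploit exactly this by storing only finite entries. That is how they remove the $m w_{\max}$ preprocessing term and obtain the $\widetilde O(n w_{\max}\sqrt{\min(n,w_{\max})})$ single-entry bound. Your prefix-minimum sequences are dense, with up to $\Theta(m w_{\max})$ finite entries, so those refinements would not follow from your reduction used as a black box. For the theorem as stated this is harmless, since both proofs absorb the $m w_{\max}$ construction cost into $T$.

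One small slip: $w_{\operatorname{tot}}$ is the total weight of \emph{all} items. It can be as large as $n w_{\max}$, for example when a single group contains many items of weight $w_{\max}$, so $w_{\operatorname{tot}}\le m w_{\max}$ is false in general. The fix is the clamping you already use in the single-entry case. For every capacity $W>m w_{\max}$, output the value at index $m w_{\max}$. This still costs $O(w_{\operatorname{tot}})$ in total.
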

	\begin{proof}
		We construct an algorithm for the single-entry version of Multiple-Choice Knapsack as follows.
		
		Consider an instance of the single-entry version of Multiple-Choice Knapsack with capacity $W$. Since each of the $m$ groups contributes at most one item, and every item has weight at most $w_{\max}$, the total weight of any feasible selection is at most $mw_{\max}$. If $W>mw_{\max}$, replacing $W$ by $mw_{\max}$ does not change the answer. Thus we may assume $0\le W\le mw_{\max}$.
		
		For each group $G_i$, let $q_i$ be the maximum weight among items in $G_i$. Initialize a tropical sequence $a^{(i)}$ with $|a^{(i)}|=q_i\leq w_{\max}$ and $a^{(i)}_t=\infty$ for all $0\le t\le q_i$. For every item $(w,v)\in G_i$, update $a^{(i)}_w$ by taking the minimum with $-v$.
		
		Next, we add $t=m+\lceil \log w_{\max}\rceil$ additional tropical sequences $a^{(m+1)},\ldots,a^{(m+t)}$. For each $1\le i\le t$, define $L=\min(2^{i-1},w_{\max})$, and set $|a^{(m+i)}|= L$. Define
		$$
		a^{(m+i)}_j =
		\begin{cases}
			0 & j=0 \text{ or } j=L,\\
			\infty & 1\leq j\leq L-1.
		\end{cases}
		$$ We claim that
		$$\text{OPT}(W)=-\left(\bigotimes_{i=1}^{m+t} a^{(i)}\right)_W,$$
		where $\text{OPT}(W)$ denotes the optimal profit for capacity $W$.
		
		To prove the claim, we first establish the inequality $\text{OPT}(W)\le -\left(\bigotimes_{i=1}^{m+t} a^{(i)}\right)_W$.
		Consider an optimal selection for capacity $W$. If $\text{OPT}(W)=-\infty$, the inequality is immediate, so assume a feasible solution exists. Suppose group $G_i$ chooses item $(w^{(i)}_{s_i},p^{(i)}_{s_i})$. Let $b=\bigotimes_{i=m+1}^{m+t} a^{(i)}$.
		Then $$|b|=\sum_{i=1}^{t}\min(2^{i-1},w_{\max})\geq \sum_{i=\lceil \log w_{\max}\rceil+1}^{m+\lceil \log w_{\max}\rceil}w_{\max}=mw_{\max}\ge W,$$ and $b_j=0$ for all $0\leq j\leq |b|$. Set $R=W-\sum_{i=1}^{m} w^{(i)}_{s_i}$. Since the selection is feasible, $0\le R\le W\le |b|$. Consequently,
		$$
		-\left(\bigotimes_{i=1}^{m+t}a^{(i)}\right)_W
		=
		-\left(\bigotimes_{i=1}^{m}a^{(i)}\otimes b\right)_W
		\ge
		-\sum_{i=1}^{m} a^{(i)}_{w^{(i)}_{s_i}} - b_R
		=
		\sum_{i=1}^{m} p^{(i)}_{s_i}
		=
		\text{OPT}(W).
		$$
		
		For the reverse direction, we prove $\text{OPT}(W)\ge -\left(\bigotimes_{i=1}^{m+t} a^{(i)}\right)_W$.
		Suppose $(v_1,\ldots,v_{m+t})$ is a solution of $(a^{(1)},\ldots,a^{(m+t)},W)$. If $(\bigotimes_{i=1}^{m+t}a^{(i)})_W=\infty$, the inequality is immediate, so assume that $(\bigotimes_{i=1}^{m+t}a^{(i)})_W\neq \infty$. For each $1\le i\le m$, the finiteness of $a^{(i)}_{v_i}$ guarantees an item in group $G_i$ with weight $v_i$ and profit $-a^{(i)}_{v_i}$; denote its profit by $\tilde p_i$. The total weight of these selected items is $\sum_{i=1}^{m} v_i \le \sum_{i=1}^{m+t} v_i = W$, so the selection is feasible. Hence
		$$\operatorname{OPT}(W)
		\ge
		\sum_{i=1}^{m} \tilde p_i
		=
		-\sum_{i=1}^{m} a^{(i)}_{v_i}
		=
		-\sum_{i=1}^{m+t} a^{(i)}_{v_i}
		=
		-\left(\bigotimes_{i=1}^{m+t} a^{(i)}\right)_W.$$
		Combining the two inequalities proves the claim. This algorithm runs in $O(n+mw_{\max}+T(m+t,w_{\max}))=O(n+T(m+\lceil \log w_{\max}\rceil,w_{\max}))$ time.
		
		For the all-entry version, the same reduction applies, with the additional requirement of outputting $O(w_{\operatorname{tot}})$ numbers. This gives an algorithm in time $O\left(w_{\operatorname{tot}}+T(m+\lceil \log w_{\max}\rceil,w_{\max})\right)$.
	\end{proof}
	\begin{corollary}
		\label{cor:mkp_alg}
		The all-entry version of Multiple-Choice Knapsack can be solved by a randomized algorithm in time $\widetilde O\left(w_{\operatorname{tot}}+mw_{\max}^2\sqrt{\min(m,w_{\max})}\right)$ with high probability. The single-entry version of Multiple-Choice Knapsack can be solved by a randomized algorithm in time $\widetilde O\left(n+w_{\max}^3\sqrt{\min(m,w_{\max})}\right)$ with high probability. 
	\end{corollary}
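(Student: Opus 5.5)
The plan is to plug the reduction of Theorem \ref{thm:mkp_red} into the two Multiple-Sequence $(\min,+)$ Convolution algorithms, Theorem \ref{thm:mult_all_sec} for the all-entry version and Theorem \ref{thm:mult_sing_sec} for the single-entry version. Theorem \ref{thm:mkp_red} produces $k=m+\lceil\log w_{\max}\rceil$ sequences, each of length at most $w_{\max}$. So the whole proof reduces to substituting these parameters into the stated running times and absorbing the additive $\log w_{\max}$ and the logarithmic factors into $\widetilde O$.

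For the all-entry version, Theorem \ref{thm:mult_all_sec} runs in $T(k,n)=O(kn^2\sqrt{\min(k,n)}\log^{1.5}(kn))$, a function that is monotone and scales correctly in each argument. With $k=m+\lceil\log w_{\max}\rceil$ and $n=w_{\max}$ we get $k\,w_{\max}^2=mw_{\max}^2+O(w_{\max}^2\log w_{\max})$. Since $m\ge 1$, the second term is $\widetilde O(mw_{\max}^2)$. Likewise $\sqrt{\min(k,w_{\max})}\le\sqrt{\min(m,w_{\max})+\log w_{\max}}$, which is $\widetilde O(\sqrt{\min(m,w_{\max})})$ because $\min(m,w_{\max})\ge1$. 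Adding the $O(w_{\operatorname{tot}})$ output term from Theorem \ref{thm:mkp_red} gives $\widetilde O(w_{\operatorname{tot}}+mw_{\max}^2\sqrt{\min(m,w_{\max})})$.

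For the single-entry version, Theorem \ref{thm:mult_sing_sec} gives $T(k,n)=O((kn+n^3\sqrt{\min(k,n)})\log(kn))$. Substituting the same parameters yields $\widetilde O(mw_{\max}+w_{\max}^3\sqrt{\min(m,w_{\max})})$, and Theorem \ref{thm:mkp_red} adds an $O(n)$ term. The main obstacle is the leftover $mw_{\max}$ term, which is not present in the claimed bound. I would remove it as follows. Every group contains at least one item, so $m\le n$. If $m\le w_{\max}$, then $mw_{\max}\le w_{\max}^2\le w_{\max}^3$, so the term is absorbed. If $m>w_{\max}$, then $\min(m,w_{\max})=w_{\max}$; the claimed bound becomes $\widetilde O(n+w_{\max}^{3.5})$, and the term $mw_{\max}$ is still not obviously dominated.

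In that remaining case I would first check whether the $kn$ term actually comes only from running Algorithm \ref{alg1} and building the sets $S_\Delta$. Both of these steps touch only the finite entries of the input sequences. The sequences $a^{(i)}$ coming from groups have $O(k_i)$ finite entries, and the auxiliary sequences have two each. If so, the cost is $O((n+\log w_{\max})\log)$, not $O(mw_{\max})$, and the corresponding preprocessing in Theorem \ref{thm:mkp_red} likewise only needs to touch items, apart from the padded lengths. If this sparse accounting goes through, it gives $\widetilde O(n+w_{\max}^3\sqrt{\min(m,w_{\max})})$.

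The step I most need to verify is that the convex-support computation and the $S_\Delta$ bucketing can be implemented in time proportional to the number of finite entries, with the sequences represented sparsely. Failing that, I would fall back to a standard preprocessing step: keep only the Pareto-useful items, and group identical-weight singleton groups to reduce $m$. I expect this sparse-implementation check to be the only nonroutine point. Correctness and the high-probability guarantee transfer directly from the cited theorems.
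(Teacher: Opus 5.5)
Your proposal is correct and takes the same route as the paper. The all-entry bound is a direct substitution of Theorem \ref{thm:mkp_red} into Theorem \ref{thm:mult_all_sec}. For the single-entry bound, the paper removes the $mw_{\max}$ term exactly by your sparse accounting: only the $O(n+\log w_{\max})$ finite entries are stored, so the reduction, Algorithm \ref{alg1}, and building and sorting the $S_\Delta$ all cost $\widetilde O(n)$, and the dynamic program costs $O(w_{\max}^2U)$. The check you flagged therefore goes through, and neither your case split on $m$ versus $w_{\max}$ nor the Pareto-filtering fallback is needed. One minor point: the reduction actually produces $2m+\lceil\log w_{\max}\rceil$ sequences, which only changes constants.
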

	\begin{proof}
		The first statement follows directly from Theorem \ref{thm:mkp_red} and Theorem \ref{thm:mult_all_sec}.
		
		For the second statement, we apply the reduction from Theorem \ref{thm:mkp_red} and use the algorithm from Theorem \ref{thm:mult_sing_sec}. However, during the process we only store the finite entries of the tropical sequences. In the resulting instance, the tropical sequences contain only $O(n+\log w_{\max})$ finite entries in total, so the reduction itself requires no more than $O(n+\log w_{\max})$ time.
		
		For the algorithm from Theorem \ref{thm:mult_sing_sec}, Algorithm \ref{alg1} scans only the finite entries to construct the convex support sequences, which takes $O\left((n+\log w_{\max})\log(nw_{\max})\right)$ time. The construction and sorting of the multisets $S_{\Delta}$ also involve only the finite entries and require $O\left((n+\log w_{\max})\log(nw_{\max})\right)$ time. Hence the processing time of the algorithm in Theorem \ref{thm:mult_sing_sec} is $\widetilde O(n)$. Consequently, the total running time of the reduction is bounded by
		$$\widetilde O\left(n+w_{\max}^3\sqrt{\min(m, w_{\max})}\right).$$
	\end{proof}
	\begin{corollary}
		The all-entry version of Multiple-Choice Knapsack can be solved by a randomized algorithm in time $\widetilde O\left(nw_{\max}^2\sqrt{\min(n,w_{\max})}\right)$ with high probability, and the single-entry version of Multiple-Choice Knapsack can be solved by a randomized algorithm in time $\widetilde O\left(nw_{\max}\sqrt{\min(n,w_{\max})}\right)$ with high probability.
	\end{corollary}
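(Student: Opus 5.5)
The plan is to reduce Multiple-Choice Knapsack, in both versions, to Multiple-Sequence $(\min,+)$ Convolution in a way where the number of sequences is governed by $n$ rather than $m$. Then we invoke Theorems \ref{thm:mult_all_sec} and \ref{thm:mult_sing_sec}.

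The first step is a trivial preprocessing. Each group contributes exactly one item, so $m\le n$. Moreover $w_{\operatorname{tot}}\le nw_{\max}$, so the additive $w_{\operatorname{tot}}$ term in the all-entry bound is absorbed into $nw_{\max}^2$. Applying Theorem \ref{thm:mkp_red} directly with $k=m+\lceil\log w_{\max}\rceil\le n+\lceil\log w_{\max}\rceil$ sequences of length at most $w_{\max}$, Theorem \ref{thm:mult_all_sec} gives all-entry time
$$
\widetilde O\left(w_{\operatorname{tot}}+(n+\log w_{\max})w_{\max}^2\sqrt{\min(n+\log w_{\max},w_{\max})}\right)=\widetilde O\left(nw_{\max}^2\sqrt{\min(n,w_{\max})}\right).
$$
The only care needed is the $\log w_{\max}$ padding sequences, whose count is dominated by $n\ge 1$ up to polylog factors. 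This settles the all-entry statement, and it is essentially Corollary \ref{cor:mkp_alg} with $m\le n$.

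For the single-entry statement, the bound $\widetilde O(nw_{\max}\sqrt{\min(n,w_{\max})})$ must beat $w_{\max}^3$ when $n<w_{\max}^2$, so I would redo the DP step of Theorem \ref{thm:mult_sing_sec} more carefully rather than quote it. Keep Algorithm \ref{alg1} and the proximity solution $p$, which together cost $\widetilde O(n)$ since only the finite entries, $O(n+\log w_{\max})$ in number, are touched. The key observation is that the adjustment multisets $W_i$ contain at most one pair per finite entry, plus a dummy. Hence $\sum_i|W_i|=O(n+\log w_{\max})$ directly, without truncating each $S_\Delta$ to $2w_{\max}$ elements; one can keep all of them. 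The DP over these sets, with range $[-U,U]$ and $U=O(w_{\max}\sqrt{\min(k,w_{\max})\log})$, then costs $O\left(\left(\sum_i|W_i|\right)U\right)=\widetilde O\left(nw_{\max}\sqrt{\min(n,w_{\max})}\right)$.

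Correctness then follows exactly as in Theorem \ref{thm:mult_sing_sec}. Lemma \ref{lem:adj} supplies a solution $r$ with $D(r,p)\le 2w_{\max}$, and this time all its adjustments are present trivially because nothing was discarded. The Hoeffding-without-replacement argument, Lemma \ref{lem:hoeff}, applied to the randomly shuffled nonzero differences then keeps every prefix sum within $U$ with high probability. I also need the isolation step (Lemma \ref{lem:iso}), or a fixed choice of $r$, so that the differences form a fixed multiset that is independent of the shuffle. Here fixing $r$ before shuffling suffices, exactly as in that proof.

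The main obstacle is this single-entry accounting. I must check that the DP state count and the range $U$ still depend only on $\min(k,2w_{\max})$ with $k\le n+\log w_{\max}$, and that each $W_i$ having at most (number of finite entries of group $i$) $+1$ elements really gives a total linear in $n$. Replacing $\min(k,w_{\max})$ by $\min(n,w_{\max})$ up to logarithms completes the bound.
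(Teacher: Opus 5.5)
Your proposal is correct and follows the paper's proof. The all-entry bound is Corollary \ref{cor:mkp_alg} with $m\le n$ and $w_{\operatorname{tot}}\le nw_{\max}$. The single-entry bound comes from noting that $\sum_{i\in I}|W_i|$ is at most twice the number of finite entries, i.e.\ $O(n+\log w_{\max})$, so the DP costs $\widetilde O(nw_{\max}\sqrt{\min(n,w_{\max})})$. The one difference is that you drop the truncation of each $S_\Delta$, whereas the paper keeps it and bounds $\sum_{i\in I}|W_i|$ the same way; this is harmless and even removes the need for the lexicographic exchange argument. A minor slip: the reduction in Theorem \ref{thm:mkp_red} adds $m+\lceil\log w_{\max}\rceil$ padding sequences, not $\lceil\log w_{\max}\rceil$, but this changes nothing asymptotically.
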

	\begin{proof}
		Since $m\leq n$, the first statement is immediate from Corollary \ref{cor:mkp_alg}.
		
		For the second statement, we apply the reduction in Theorem \ref{thm:mkp_red}. In the resulting instance for the single-entry version, the tropical sequences contain only $O(n+\log w_{\max})$ finite entries in total. Consequently, in the proof of Theorem \ref{thm:mult_sing_sec}, $\sum_{i\in I}|W_i|$ is bounded by twice the total number of finite entries, hence $\sum_{i\in I}|W_i|=O(n+\log w_{\max})$. Therefore, the running time is
		$$\widetilde O\left(n+(n+\log w_{\max})w_{\max}\sqrt{\min(m,w_{\max})}\right)=\widetilde O\left(n+nw_{\max}\sqrt{\min(n,w_{\max})}\right).$$
	\end{proof}
	By the reduction in Theorem \ref{thm:mkp_red}, every instance of these Knapsack-type problems can be transformed into an instance of Multiple-Sequence $(\min,+)$ Convolution. In the resulting instance, every tropical sequence has maximum index at most $w_{\max}$. Hence, for any such Knapsack-type problem, the answer sequence
	$$b=\bigotimes_{i=1}^{N}a^{(i)}$$
	is the $(\min,+)$ convolution of tropical sequences each of which has maximum index at most $w_{\max}$. Consequently, $\operatorname{tdw}(b)\le w_{\max}$.
	
	Hence the structural theorems for bounded tropical decomposition width polynomials apply to these answer sequences. Let $b$ be the answer sequence for any of the problems above. Then for every positive integer $d$ and every arithmetic progression subsequence $c\in \operatorname{MS}_d(b)$, Theorem \ref{thm:fir_con} gives $\operatorname{cgap}(c)\le w_{\max}$. Furthermore, by Theorem \ref{thm:sec_con}, if $\operatorname{lcm}_{i=1}^{w_{\max}} i \mid L$, then every $c\in \operatorname{MS}_L(b)$ is convex.
	
	Therefore, when $w_{\max}$ is small, the answer sequences arising from these Knapsack-type problems exhibit strong regularity. They have bounded tropical decomposition width and satisfy modular convexity properties. This provides a possible structural explanation for why small-weight regimes often admit more efficient algorithms for Knapsack-type problems: the parameter $w_{\max}$ arises naturally from the perspective of tropical decomposition width.
	\section{Power of Extension Algebras}
	\label{sec:ext}
	In this section, we examine the role of extension algebras over $\mathbb T$ in interpolation and decomposition. To align the algebra with the structure of $\mathbb T$, we require that extension algebras be $\mathbb T$-algebras. The following lemma shows that this constraint allows us to restrict our attention to commutative idempotent semirings, which considerably simplifies the structural analysis.
	\begin{lemma}
		\label{lem:ide}
		Let $\mathcal R$ be a commutative semiring that is a $\mathbb T$-algebra. Then $\mathcal R$ is a commutative idempotent semiring.
	\end{lemma}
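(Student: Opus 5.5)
The proof should be a direct use of the structure homomorphism $\varphi:\mathbb T\to\mathcal R$ together with the fact that $\mathbb T$ is idempotent. Idempotency of $\oplus$ in $\mathcal R$ means $r\oplus r=r$ for every $r\in\mathcal R$. By distributivity it is enough to show $1_{\mathcal R}\oplus 1_{\mathcal R}=1_{\mathcal R}$: then for any $r$,
$$r\oplus r=(1_{\mathcal R}\otimes r)\oplus(1_{\mathcal R}\otimes r)=(1_{\mathcal R}\oplus 1_{\mathcal R})\otimes r=1_{\mathcal R}\otimes r=r.$$

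To get $1_{\mathcal R}\oplus 1_{\mathcal R}=1_{\mathcal R}$, use that $\varphi$ is a unital semiring homomorphism, so $\varphi(1_{\mathbb T})=1_{\mathcal R}$. In $\mathbb T$ the multiplicative identity is $0$, and $\min(0,0)=0$, i.e. $1_{\mathbb T}\oplus 1_{\mathbb T}=1_{\mathbb T}$. Since $\varphi$ preserves $\oplus$,
$$1_{\mathcal R}\oplus 1_{\mathcal R}=\varphi(1_{\mathbb T})\oplus\varphi(1_{\mathbb T})=\varphi(1_{\mathbb T}\oplus 1_{\mathbb T})=\varphi(1_{\mathbb T})=1_{\mathcal R}.$$
Commutativity and the remaining semiring axioms hold by assumption. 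So $\mathcal R$ is a commutative idempotent semiring in the sense of the preliminaries, and the earlier lemma from \cite{gol99} then supplies the partial order $\le$ used afterwards.

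The only point needing care is making sure "$\mathbb T$-algebra" really gives a unital homomorphism that preserves $\oplus$ and $1$. It does, since that is exactly how $\mathcal R$-algebras were defined in the preliminaries.
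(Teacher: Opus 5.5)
Your proof is correct and is essentially the paper's own argument. You apply the unital homomorphism $\varphi$ to $1_{\mathbb T}\oplus 1_{\mathbb T}=1_{\mathbb T}$ to get $1_{\mathcal R}\oplus 1_{\mathcal R}=1_{\mathcal R}$, and then use distributivity to conclude $a\oplus a=a$ for every $a\in\mathcal R$.
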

	\begin{proof}
		Since $\mathcal R$ is a $\mathbb T$-algebra, by definition there exists a unital semiring homomorphism 
		$\varphi: \mathbb T \to \mathcal R$. 
		In particular, $\varphi$ preserves the multiplicative identity, so 
		$1_{\mathcal R}=\varphi(1_{\mathbb T})$.
		
		Note that in the tropical semiring $\mathbb T$, $\oplus$ is idempotent. Hence 
		$1_{\mathbb T}\oplus 1_{\mathbb T}=1_{\mathbb T}$. Applying $\varphi$, we get
		$$
		1_{\mathcal R}\oplus 1_{\mathcal R}
		=\varphi(1_{\mathbb T})\oplus \varphi(1_{\mathbb T})
		=\varphi(1_{\mathbb T}\oplus 1_{\mathbb T})
		=\varphi(1_{\mathbb T})
		=1_{\mathcal R}.
		$$
		
		Now for any $a\in\mathcal R$, using the distributivity of $\otimes$ over $\oplus$,
		$$
		a\oplus a
		=(1_{\mathcal R}\otimes a)\oplus(1_{\mathcal R}\otimes a)
		=(1_{\mathcal R}\oplus 1_{\mathcal R})\otimes a
		=1_{\mathcal R}\otimes a
		=a.
		$$
		Thus $\oplus$ is idempotent, so $\mathcal R$ is a commutative idempotent semiring.
	\end{proof}
	Henceforth, whenever $\mathcal R$ is a commutative semiring that is a $\mathbb T$-algebra, we identify it with a commutative idempotent semiring and write it in the $(\min,+)$ notation.
	\subsection{Extension Algebras for Interpolation}
	We first introduce some definitions needed for interpolation algebra on $\mathcal T_k$.
	\begin{definition2}[congruence and quotient semirings on a commutative semiring]
		Let $\mathcal R$ be a commutative semiring. A \textit{congruence} on $\mathcal R$ is a set $\mathcal S$ of relations on $\mathcal R$ (i.e., $\mathcal S \subseteq \mathcal R \times \mathcal R$) satisfying the following properties, where we write $a \sim b$ to denote the relation $(a,b)$:
		\begin{itemize}
			\item[(E1)] $a \sim a \in \mathcal S$ for all $a \in \mathcal R$.
			\item[(E2)] If $a \sim b \in \mathcal S$, then $b \sim a \in \mathcal S$.
			\item[(E3)] If $a \sim b \in \mathcal S$ and $b \sim c \in \mathcal S$, then $a \sim c \in \mathcal S$.
			\item[(I1)] If $a \sim b \in \mathcal S$ and $c \sim d \in \mathcal S$, then $a\oplus c \sim b\oplus d \in \mathcal S$.
			\item[(I2)] If $a \sim b \in \mathcal S$ and $c \sim d \in \mathcal S$, then $a\otimes c \sim b\otimes d \in \mathcal S$.
		\end{itemize}
		
		For any set $\mathcal S$ of relations on $\mathcal R$, the \textit{congruence generated by} $\mathcal S$ is the intersection of all congruences containing $\mathcal S$, denoted by $\langle \mathcal S \rangle$. It is straightforward to verify that $\langle \mathcal S \rangle$ is indeed a congruence on $\mathcal R$. A congruence $\mathcal E$ is called \textit{finitely generated} if there exists a finite set $\mathcal S$ of relations such that $\mathcal E = \langle \mathcal S \rangle$.
		
		We denote the trivial congruence by $\Delta_{\mathcal R}=\{a\sim a\mid a\in \mathcal R\}$.
		
		Let $\mathcal E$ be a congruence on $\mathcal R$. The \textit{quotient semiring} $\mathcal R/\mathcal E$ is the set of equivalence classes induced by $\mathcal E$. Its addition and multiplication are defined by
		$$\min([a],[b]) = [\min(a,b)], \qquad [a] + [b] = [a+b],$$
		where $[x]$ denotes the equivalence class of $x \in \mathcal R$.
	\end{definition2}
	\begin{definition2}[tropical cyclic polynomial semiring]
		For every positive integer $k$, the \textit{$k$-tropical cyclic polynomial semiring} is the quotient semiring
		$$
		\mathbb T[y]/\langle ky\sim 0\rangle.
		$$
		Each element admits a unique representative of the form
		$$
		\min_{i=0}^{k-1}(a_i+iy),\quad a_i\in\mathbb T,
		$$
		which we call its canonical representation. For an element $\tau\in\mathbb T[y]/\langle ky\sim 0\rangle$ and $0\le r<k$, we denote by $[ry]\tau$ the coefficient of $ry$ in the canonical representative of $\tau$.
		
		Under this representation, the sum is given coefficientwise:
		$$
		\min\left(
		\min_{i=0}^{k-1}(a_i+iy),
		\min_{i=0}^{k-1}(b_i+iy)
		\right)
		=
		\min_{i=0}^{k-1}\left(\min(a_i,b_i)+iy\right),
		$$
		and the product is given by cyclic convolution:
		$$
		\min_{i=0}^{k-1}(a_i+iy)+\min_{i=0}^{k-1}(b_i+iy)
		=
		\min_{i=0}^{k-1}
		\left(
		\min_{j=0}^{k-1}\left(a_j+b_{(i-j)\bmod k}\right)+iy
		\right).
		$$
	\end{definition2}
	We apply the structural Theorem \ref{thm:sec_con_sec} to show that $\mathbb T[y]/\left\langle\left(\operatorname{lcm}_{i=1}^{k}i\right)y\sim 0\right\rangle$ is an interpolation algebra of $\mathcal T_k$, which proves Theorem \ref{thm:int_alg_intro}.
	\begin{theorem}
		\label{thm:int_alg}
		For every positive integer $k$, 
		$$
		\mathbb T[y]/\left\langle\left(\operatorname{lcm}_{i=1}^{k}i\right)y\sim 0\right\rangle
		$$
		is an interpolation algebra of $\mathcal T_k$.
	\end{theorem}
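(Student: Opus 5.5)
Let $L=\operatorname{lcm}_{i=1}^{k}i$ and $\mathcal R=\mathbb T[y]/\langle Ly\sim 0\rangle$. Equality implies functional equivalence trivially, so the whole content is the converse: if $A,B\in\mathcal T_k$ and $A(\tau)=B(\tau)$ for every $\tau\in\mathcal R$, then $A=B$. My plan is to evaluate $A$ at elements of the special form $\tau=\lambda+y$ with $\lambda\in\mathbb R$, and read off the canonical representation coefficientwise. Since $Ly\sim 0$, we get $iy\equiv (i\bmod L)y$. Write $i=r+Lj$ and let $A=\min_i(a_i+ix)$. Then
$$
[ry]\,A(\lambda+y)=\min_{j}\bigl(a_{r+Lj}+(r+Lj)\lambda\bigr)=r\lambda+\operatorname{ms}_{L,r}(A)(L\lambda).
$$
Thus, up to the known shift $r\lambda$, the coefficient of $ry$ is the ordinary tropical evaluation of the subpolynomial $\operatorname{ms}_{L,r}(A)$ at the arbitrary real point $L\lambda$. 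Functional equivalence over $\mathcal R$ therefore gives $\operatorname{ms}_{L,r}(A)\sim_{\mathbb T}\operatorname{ms}_{L,r}(B)$ for every $0\le r<L$.

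Next I apply Theorem~\ref{thm:sec_con_sec}. Since $\tdw(A),\tdw(B)\le k$ and $L$ is exactly $\operatorname{lcm}_{i=1}^k i$, every element of $\operatorname{MS}_L(A)$ and of $\operatorname{MS}_L(B)$ is convex. Now I invoke the separation argument of Lemma~\ref{lem:poly_con_ip}. That lemma is phrased for a multiplicatively closed set satisfying convexity, but its proof only uses that the two polynomials being compared are convex (or $\infty$). So the argument applies to the pair $\operatorname{ms}_{L,r}(A),\operatorname{ms}_{L,r}(B)$ directly; alternatively, one takes the set $\mathcal C_1$ of convex polynomials, which is multiplicatively closed by Corollary~\ref{cor:cgap}. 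Either way, two convex polynomials that agree on all of $\mathbb R$ are equal, so $\operatorname{ms}_{L,r}(A)=\operatorname{ms}_{L,r}(B)$ for every $r$.

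Finally, $A$ is recovered from its residue-class subpolynomials, provided we also know its degree. The coefficient $a_{r+Lj}$ is the $j$-th coefficient of $\operatorname{ms}_{L,r}(A)$, with $\infty$ if that index exceeds the degree. Given the trailing-$\infty$ convention, equality of all $\operatorname{ms}_{L,r}$ forces $a_i=b_i$ for all $i$, and hence $A=B$.

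The main obstacle is bookkeeping at the boundary cases. If $A=\infty$, then $A(\tau)$ is $\infty$; this differs from $B(\lambda+y)$ for finite $\lambda$ whenever $B\neq\infty$. I also need to confirm that the canonical representation is well defined, so that "coefficient of $ry$" makes sense. Here $\min$ over a residue class is exactly the cyclic reduction, so it is. One further check is that the separation argument of Lemma~\ref{lem:poly_con_ip} handles convex polynomials with leading $\infty$ coefficients, i.e.\ when $L(\cdot)>0$; it does, via its $x_0\to\pm\infty$ cases. The substantive mathematical input is entirely Theorem~\ref{thm:sec_con_sec}.
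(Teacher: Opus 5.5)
Your proposal is correct and follows the paper's proof: the paper also evaluates at $\sigma = x_0/L + y$, reads off $[qy]A(\sigma)=\operatorname{ms}_{L,q}(A)(x_0)+qx_0/L$, and combines Theorem~\ref{thm:sec_con_sec} with the separation argument of Lemma~\ref{lem:poly_con_ip}. The only differences are presentational. The paper argues directly (it picks an index $p$ with $a_p\neq b_p$ and handles a degree mismatch separately with a very negative real point) rather than by your contrapositive. Your remark that Lemma~\ref{lem:poly_con_ip} applies to the two convex subpolynomials via the multiplicatively closed set $\mathcal C_1$ (Corollary~\ref{cor:cgap}) makes explicit a step the paper uses implicitly.
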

	\begin{proof}
		Let $\mathcal{R}=\mathbb T[y]/\left\langle\left(\operatorname{lcm}_{i=1}^{k}i\right)y\sim 0\right\rangle$, and let $A,B\in\mathcal T_k$ be distinct tropical polynomials. If $A=\infty$ or $B=\infty$, then clearly $A\not\sim_{\mathcal R}B$. Thus, it suffices to consider $A,B\neq\infty$. Write
		$A=\min_{i=0}^{m}(a_i+ix)$ and $B=\min_{i=0}^{t}(b_i+ix)$. If $m\neq t$, then evaluating $A$ and $B$ at a sufficiently negative real number separates them.
		
		Now suppose that $m=t$, and choose an index $p$ such that $a_p\neq b_p$. Set $L=\operatorname{lcm}_{i=1}^{k}i$ and $q=p\bmod L$. Consider subpolynomials
		$C=\operatorname{ms}_{L,q}(A)$ and $D=\operatorname{ms}_{L,q}(B)$. Since the coefficients of $C$ and $D$ corresponding to the index $p$ are different, we have $C\neq D$.
		
		Since $A,B\in\mathcal T_k$, Theorem \ref{thm:sec_con_sec} implies that $C$ and $D$ are convex. And since $C\neq D$, by Lemma \ref{lem:poly_con_ip}, there exists $x_0\in\mathbb R$ such that $C(x_0)\neq D(x_0)$.
		
		Set $\sigma=x_0/L+y\in\mathcal R$. It follows that
		$$
		\begin{aligned}
			[qy]A(\sigma)
			&=
			[qy]\min_{i=0}^{m}\left(a_i+i\left(\frac{x_0}{L}+y\right)\right)\\
			&=
			[qy]\min_{i=0}^{m}\left(a_i+\frac{ix_0}{L}+iy\right)\\
			&=
			\min_{0\leq i\leq m,i\equiv q\pmod L}
			\left(a_i+\frac{i x_0}{L}\right)\\
			&=
			\min_{j=0}^{\lfloor(m-q)/L\rfloor}
			\left(a_{q+jL}+\frac{(q+jL)x_0}{L}\right)\\
			&=
			\min_{j=0}^{\lfloor(m-q)/L\rfloor}
			\left(c_j+jx_0\right)+\frac{qx_0}{L}\\
			&=
			C(x_0)+\frac{qx_0}{L}.
		\end{aligned}
		$$
		Similarly, $[qy]B(\sigma)=D(x_0)+q x_0/L$. Therefore $[qy]A(\sigma)\neq[qy]B(\sigma)$, and consequently $A(\sigma)\neq B(\sigma)$. Thus $A\not\sim_{\mathcal R}B$.
		
		Hence every two distinct polynomials in $\mathcal T_k$ can be separated by evaluation over $\mathcal R$, so $\mathcal R$ is an interpolation algebra of $\mathcal T_k$.
	\end{proof}
	\begin{corollary}
		\label{cor:poly_bf_ei}
		If a multiplicatively closed set $\mathcal S\subseteq\mathbb T[x]$ satisfies the bounded factorization property, then it satisfies the extended identity property. Moreover, there even exists a multiplicatively closed set satisfying the polynomial identity property but not satisfying the bounded factorization property.
	\end{corollary}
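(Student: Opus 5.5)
The corollary has two parts, and I will prove them separately.

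\textbf{First part.} Suppose $\mathcal S$ has the bounded factorization property with constant $k$, so $\mathcal S\subseteq\mathcal T_k$. Put $L=\operatorname{lcm}_{i=1}^{k}i$ and $\mathcal R=\mathbb T[y]/\langle Ly\sim 0\rangle$.
\begin{itemize}
\item By Theorem \ref{thm:int_alg}, $\mathcal R$ is an interpolation algebra of $\mathcal T_k$.
\item The defining condition ($A=B\iff A\sim_{\mathcal R}B$ for all $A,B$ in the set) passes to subsets, so $\mathcal R$ is also an interpolation algebra of $\mathcal S$.
\item $\mathcal R$ is a $\mathbb T$-algebra through the constant embedding $\mathbb T\to\mathbb T[y]\to\mathcal R$. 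This map is a unital semiring homomorphism.
\item By the canonical representation, every element of $\mathcal R$ is $\min_{i=0}^{L-1}(a_i+iy)$. Hence $\{0,y,\ldots,(L-1)y\}$ generates $\mathcal R$ as a $\mathbb T$-semimodule, giving $\mu_{\mathbb T}(\mathcal R)\le L<\infty$.
\end{itemize}
This verifies Definition \ref{def:eip}.

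\textbf{Second part.} I will take
$$\mathcal S=\left\{\sum_{i=1}^{m}\min(0,c_i+1+c_ix)\;\middle|\; m\ge 1,\ c_i\in\mathbb Z_{>0}\right\},$$
the set from Lemma \ref{lem:poly_con_uf}. That lemma already shows $\mathcal S$ is multiplicatively closed and fails the bounded factorization property, since $\operatorname{tdw}(\min(0,k+2+(k+1)x))=k+1$.

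It remains to show that evaluation over $\mathbb T$ separates elements of $\mathcal S$. By Lemma \ref{lem:expr}, each $A\in\mathcal S$ has an expression $\bigl((0,0),((c_1+1)/c_1,c_1),\ldots,((c_m+1)/c_m,c_m)\bigr)$ of $\hat f_A$, with the factors sorted so that slopes are nondecreasing. So $\hat f_A$ is piecewise linear with breakpoints at the partial sums $\sum_{j\le u}c_j$, and the value there is $\sum_{j\le u}(c_j+1)$.

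The evaluation $A(x_0)=\min_i(a_i+ix_0)$ equals $\min_i(\hat f_A(i)+ix_0)$, because the minimum of a linear function over the lower hull is attained at a vertex. This is the discrete Legendre transform of $\hat f_A$. Its domain $[0,\deg A]$ is a finite interval and $\hat f_A$ is convex, so Legendre–Fenchel duality recovers $\hat f_A$ from $x_0\mapsto A(x_0)$.

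The argument at the end of Lemma \ref{lem:poly_con_uf} shows that $\hat f_A$ determines the multiset $\{c_i\}$. That multiset determines $A$ as the tropical product. Therefore distinct elements of $\mathcal S$ have distinct evaluation functions, and $\mathbb T$ is an interpolation algebra of $\mathcal S$.

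\textbf{Main obstacle.} The delicate step is justifying that the evaluation function on $\mathbb R$ determines $\hat f_A$. I expect to do this directly. The slopes of $x_0\mapsto A(x_0)$ are the vertex indices of the hull, namely the partial sums $\sum_{j\le u}c_j$. The intercepts give the corresponding values $\sum_{j\le u}(c_j+1)$. Together these recover the breakpoint data and hence the $c_i$.
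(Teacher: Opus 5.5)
Your proof is correct. The first part is exactly the paper's argument: $\mathcal S\subseteq\mathcal T_k$, then Theorem~\ref{thm:int_alg}, then the generating set $\{0,y,\ldots,(L-1)y\}$. For the second part you use the same set $\mathcal S$ and the same reason it fails bounded factorization.

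You differ from the paper in how you show that evaluation over $\mathbb T$ separates $\mathcal S$. The paper does an explicit computation. For each $d\ge 1$ it evaluates $A$ at $-\lambda$ and at $-\lambda-1/(d(d+1))$, where $\lambda=(d+1)/d$. Both points lie on the linear piece of $x_0\mapsto A(x_0)$ just left of $-(1+1/d)$. The scaled difference is therefore the slope $\tau_d=\sum_{c_i\ge d}c_i$, and the multiplicity of $d$ is $(\tau_d-\tau_{d+1})/d$.

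You instead note that $x_0\mapsto A(x_0)=\min_t(\hat f_A(t)+tx_0)$ is, up to sign, the Legendre--Fenchel conjugate of $\hat f_A$. Biconjugation then recovers $\hat f_A$. It applies because $\hat f_A$ is finite, convex and piecewise linear on $[0,\deg A]$ and $\infty$ outside, hence closed and proper. You then reuse the uniqueness argument from Lemma~\ref{lem:poly_con_uf}.

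The paper's route is elementary and gives an explicit finite interpolation procedure tailored to this family. Yours is more conceptual and yields a reusable fact: $A\sim_{\mathbb T}B$ if and only if $\hat f_A=\hat f_B$. So a set has the polynomial identity property exactly when $A\mapsto\hat f_A$ is injective on it. This also recovers the first half of Lemma~\ref{lem:poly_con_ip}, since a convex polynomial is determined by $\hat f_A$.

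One small imprecision is in your ``main obstacle'' sketch. The slopes of $x_0\mapsto A(x_0)$ are not all the partial sums $\sum_{j\le u}c_j$. With $c_1\ge\cdots\ge c_m$, they are only $0$, $\deg A$, and the partial sums at which $c_u>c_{u+1}$. Factors with equal $c$ have equal slope $1+1/c$, and their segments merge. A direct recovery should therefore read a linear piece of slope $1+1/c$ and length $\ell$ as $\ell/c$ copies of $c$. Your main line goes through the lemma's argument, which handles repeated values correctly, so it is unaffected.
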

	\begin{proof}
		If $\mathcal S$ satisfies the bounded factorization property, then
		$\mathcal S\subseteq\mathcal T_k$ for some positive integer $k$. By Theorem \ref{thm:int_alg}, $\mathcal T_k$ admits an interpolation algebra $\mathcal R$ such that $\mu_{\mathbb T}(\mathcal R)\leq \operatorname{lcm}_{i=1}^{k}i<\infty$.
		Since $\mathcal S\subseteq\mathcal T_k$, $\mathcal R$ is also an interpolation algebra for $\mathcal S$. Therefore, $\mathcal S$ satisfies the extended identity property.
		
		For the second statement, consider the set
		$$
		\mathcal S=\left\{\sum_{i=1}^{m}\min(0,c_i+1+c_i x)\;\middle|\; m\ge 1,\ (c_1,\ldots,c_m)\in \mathbb Z_{>0}^{m}\right\}.
		$$
		As shown in the proof of Lemma \ref{lem:poly_con_uf}, $\mathcal S$ does not satisfy the bounded factorization property. On the other hand, take any $A=\sum_{i=1}^{m}\min(0,c_i+1+c_ix)\in\mathcal S$, and assume $c_1\le \cdots \le c_m$. For each $d\in\mathbb Z_{>0}$, set $\lambda=(d+1)/d$. Evaluating $A$ at $x=-\lambda$ and $x=-\lambda-1/(d(d+1))$ yields
		$$
		A(-\lambda)=\sum_{1\le i\le m,c_i\ge d}\left(c_i+1-c_i \lambda\right),
		$$
		$$
		A\left(-\lambda-\frac{1}{d(d+1)}\right)=\sum_{1\le i\le m,c_i\ge d}\left(\left(1-\frac{1}{d(d+1)}\right)c_i+1-c_i \lambda\right).
		$$
		Thus $$d(d+1)\cdot \left(A(-\lambda)-A\left(-\lambda-\frac{1}{d(d+1)}\right)\right)=\sum_{1\le i\le m,c_i\ge d} c_i.$$
		Denote this quantity by $\tau_d$. The number of occurrences of $d$ in the sequence $c$ can be recovered by
		$\sum_{i=1}^{m}[c_i=d]=(\tau_d-\tau_{d+1})/d.$
		Since $c$ is non-decreasing, these counts uniquely determine the entire sequence $c$, and hence the polynomial $A$. Moreover, this reconstruction uses only the evaluation values of $A$. Therefore $\mathcal S$ satisfies the polynomial identity property.
	\end{proof}
	Next, we establish a lower bound on the generating rank of any interpolation algebra for the set of tropical polynomials of bounded degree. The key idea is to view the action of all elements of an interpolation algebra on a generating set as left multiplication by tropical matrices, as in representation theory. This then allows us to interpret the problem as a shortest path problem and analyze it using techniques from graph theory.
	
	We show that this lower bound is tight by again appealing to the tropical cyclic polynomial semiring. It is noteworthy that the tropical cyclic polynomial semiring continues to play a central role in interpolation.
	\begin{theorem}
		\label{thm:int_alg_neg}
		For a positive integer $n$, let
		$\mathcal S=\{A\in\mathbb T[x]\mid\deg A\leq n\}$.
		Then every interpolation algebra $\mathcal R$ of $\mathcal S$ satisfies
		$\mu_{\mathbb T}(\mathcal R)\geq\lfloor n/2\rfloor+1$.
		Moreover, there exists an interpolation algebra $\mathcal R$ of
		$\mathcal S$ satisfying
		$\mu_{\mathbb T}(\mathcal R)=\lfloor n/2\rfloor+1$.
	\end{theorem}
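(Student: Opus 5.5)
The proof has two parts: a lower bound obtained by encoding evaluation in a finitely generated $\mathbb T$-algebra as tropical matrix powers, and a matching construction using the tropical cyclic polynomial semiring. I would do the construction first, since it is routine given earlier results.

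\textbf{Construction.} Set $m=\lfloor n/2\rfloor+1$ and $\mathcal R=\mathbb T[y]/\langle my\sim 0\rangle$.
\begin{itemize}
\item \emph{Rank.} The canonical representation shows that $\{0,y,\ldots,(m-1)y\}$ generates $\mathcal R$, so $\mu_{\mathbb T}(\mathcal R)\le m$. Equality will follow from the lower bound.
\item \emph{Interpolation.} I would follow the proof of Theorem~\ref{thm:int_alg}. Since $2m\ge n+1$, each residue class $r \bmod m$ meets $[0,n]$ in at most the two indices $r$ and $r+m$. Hence every $\operatorname{ms}_{m,r}(A)$ has degree at most $1$, and is therefore convex by Lemma~\ref{lem:conv_equiv}.
\item \emph{Separation.} Let $A\ne B$ have degree at most $n$. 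If their degrees differ, a very negative real input separates them. Otherwise pick $p$ with $a_p\ne b_p$ and set $q=p\bmod m$. Lemma~\ref{lem:poly_con_ip} gives $x_0\in\mathbb R$ with $\operatorname{ms}_{m,q}(A)(x_0)\ne \operatorname{ms}_{m,q}(B)(x_0)$. Evaluating at $\sigma=x_0/m+y$ and reading off the coefficient $[qy]$, exactly as in Theorem~\ref{thm:int_alg}, separates $A$ and $B$.
\end{itemize}

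\textbf{Lower bound: linearizing evaluation.} Let $\mathcal R$ be an interpolation algebra with generating set $g_1,\ldots,g_\mu$. By Lemma~\ref{lem:ide}, $\mathcal R$ is idempotent, and I work in $(\min,+)$ notation.
\begin{itemize}
\item For $\tau\in\mathcal R$, write $\tau+g_j=\min_i(M_{ij}+g_i)$ for some $\mu\times\mu$ tropical matrix $M=M(\tau)$. Also write $0_{\mathcal R}=\min_i(v_i+g_i)$ for a vector $v$ independent of $\tau$.
\item By induction on $k$, $k\tau=\min_j\big((M^{k}v)_j+g_j\big)$, where $M^k$ is the tropical matrix power.
\item Consequently
$$A(\tau)=\min_j\Big(\min_{k}\big(a_k+(M^kv)_j\big)+g_j\Big).$$
So $A(\tau)$ depends on $A$ only through the $\mu$ scalars $F_j(A,\tau)=\min_k\big(a_k+s^{(j)}_k\big)$.
\item Here $s^{(j)}_k=(M^kv)_j$ is the minimum weight of a length-$k$ walk ending at vertex $j$ in a weighted digraph on $\mu$ vertices, with initial weights $v$.
\end{itemize}
It therefore suffices to exhibit fixed $A\ne B$ of degree at most $n$ such that $F_j(A,\tau)=F_j(B,\tau)$ for every walk-weight system on $\mu\le\lfloor n/2\rfloor$ vertices.

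\textbf{Lower bound: choosing the colliding pair.} The tool is walk surgery. Any walk of length $k>\mu$ repeats a vertex among its last $\mu+1$ positions, so it contains a closed subwalk of length $c\le\mu$. Deleting that cycle or duplicating it gives walks of lengths $k-c$ and $k+c$ ending at the same vertex, with total weight $2\cdot(\text{original})$ split as $w(k-c)+w(k+c)$. This yields, for each $j$ and each $k$ with $\mu<k\le n-\mu$, some $1\le c\le \mu$ with
$$2s^{(j)}_k\ \ge\ s^{(j)}_{k-c}+s^{(j)}_{k+c}.$$
Since $\mu\le\lfloor n/2\rfloor$, the index range $[0,n]$ leaves room for this at some middle index, for instance $k=\mu$ or $k=\lceil n/2\rceil$ with $c$ ranging up to $\mu$.

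I would then take $A$ to be a polynomial on $\{0,\ldots,n\}$ whose coefficients are strictly concave in a way that makes every $a_{k\pm c}$ ``cheap''. Then adding a middle monomial to form $B=\min(A,\,b_k+kx)$ with $2b_k\ge a_{k-c}+a_{k+c}$ for all $c\le\mu$ never lowers any $F_j$, while $B\ne A$. The simplest candidate is $a_i=0$ for $i\ne k$ and $b_k$ a suitable value.

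\textbf{Main obstacle.} The inequality above only gives $2(b_k+s_k)\ge (a_{k-c}+s_{k-c})+(a_{k+c}+s_{k+c})$, an average bound. A minimum requires $b_k+s_k\ge \min(a_{k-c}+s_{k-c},\,a_{k+c}+s_{k+c})$, which does follow from the average bound because a minimum is at most an average. The genuinely hard part is that $c$ depends on $\tau$ and $j$. So $A$ must contain all the monomials $k\pm c$ for $1\le c\le\mu$ simultaneously, within $[0,n]$, which forces $k-\mu\ge 0$ and $k+\mu\le n$, i.e.\ $\mu\le n/2$. Making the cycle-length bookkeeping rigorous, including the choice of $v$ and the case where the walk has no repeated vertex, is where I expect the bulk of the work to lie.
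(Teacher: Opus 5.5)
Your proposal is correct and is essentially the paper's proof. The upper bound uses the same $\mathbb T[y]/\langle my\sim 0\rangle$ construction with degree-$\le 1$ residue subpolynomials. The lower bound uses the same walk encoding of $t\sigma$ over the $\mu$ generators, the same pigeonhole cycle deletion/duplication, and the same colliding pair: all coefficients $0$ on $[0,2\mu]$, differing only at index $\mu$ (the paper's $\min_{i=0}^{2m}(ix)$ versus the same with $i=m$ omitted).

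The obstacle you flag is not real, provided you take $k=\mu$; note your range should read $\mu\le k$, not $\mu<k$, since it is empty when $n=2\mu$. Every length-$\mu$ walk occupies $\mu+1$ positions on $\mu$ vertices, so it always contains a cycle of some length $c\in[1,\mu]$. Because every index in $[0,2\mu]\setminus\{\mu\}$ carries coefficient $0$, the bound $s_\mu\ge\min(s_{\mu-c},s_{\mu+c})$ then holds for whichever $c$ arises.
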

	\begin{proof}
		Suppose, for contradiction, that $\mathcal R$ is an interpolation algebra
		of $\mathcal S$ with
		$m=\mu_{\mathbb T}(\mathcal R)\leq\lfloor n/2\rfloor$.
		Then $2m\leq n$. Since $\mathcal R$ has generating rank $m$, we can pick a generating set $\mathcal G=\{g_1,\ldots,g_m\}$ of $\mathcal R$ over $\mathbb T$. There exist $d_1,\ldots,d_m\in\mathbb T$ such that
		$0=\min_{j=1}^{m}(d_j+g_j)$.
		
		Fix $\sigma\in\mathcal R$. For every $1\leq j\leq m$, choose
		$c_{j,1},\ldots,c_{j,m}\in\mathbb T$ such that
		$g_j+\sigma=\min_{k=1}^{m}(c_{j,k}+g_k)$.
		For every nonnegative integer $t$, let
		$\mathcal P_t=\{(p_0,\ldots,p_t)\mid p_i\in\{1,\ldots,m\}\}$ be the set of paths, and define
		the weight of path $p=(p_0,\ldots,p_t)\in\mathcal P_t$ by
		$w(p)=d_{p_0}+\sum_{i=0}^{t-1}c_{p_i,p_{i+1}}+g_{p_t}$.
		An induction on $t$ gives
		$$
		t\sigma=\min_{j=1}^{m}(d_j+g_j)+t\sigma=\min_{p\in\mathcal P_t}\left(w(p)\right).
		$$
		
		Consider the two distinct tropical polynomials
		$A=\min_{i=0}^{2m}(ix)$ and
		$B=\min_{0\leq i\leq2m,i\neq m}(ix)$.
		Since $2m\leq n$, both polynomials belong to $\mathcal S$.
		We show that $A(\sigma)=B(\sigma)$ for every $\sigma\in \mathcal R$.
		
		Fix a path $p=(p_0,\ldots,p_m)\in\mathcal P_m$ with $w(p)\neq\infty$.
		By the pigeonhole principle, there exist $0\leq a<b\leq m$ such that
		$p_a=p_b$. Put $q=b-a$ and
		$\Delta=\sum_{i=a}^{b-1}c_{p_i,p_{i+1}}$.
		Define
		$$
		\begin{aligned}
			p^-&=(p_0,\ldots,p_a,p_{b+1},\ldots,p_m),\\
			p^+&=(p_0,\ldots,p_a,p_{a+1},\ldots,p_b,p_{a+1},\ldots,p_b,p_{b+1},\ldots,p_m).
		\end{aligned}
		$$
		Since $p_a=p_b$, both are valid paths,
		with $p^-\in\mathcal P_{m-q}$ and
		$p^+\in\mathcal P_{m+q}$. Since $p^-$ and $p^+$ are obtained from $p$ by deleting and by duplicating a cycle of weight $\Delta$, their weights satisfy
		$w(p^-)=w(p)-\Delta$ and $w(p^+)=w(p)+\Delta$. Therefore,
		$$
		\min\left(w(p^-),w(p^+)\right)=\min\left(w(p)-\Delta,w(p)+\Delta\right)=w(p)-|\Delta|\leq w(p).
		$$
		Since $0\le m-q<m<m+q\le 2m$, both $w(p^+)$ and $w(p^-)$ appear in $B(\sigma)$, which implies $B(\sigma)\le \min(w(p^+),w(p^-))\le w(p)$. Hence $B(\sigma)\le \min_{p\in\mathcal P_m,w(p)\neq \infty} w(p)=\min_{p\in\mathcal P_m} w(p)=m\sigma$. Consequently, $A(\sigma)=\min(B(\sigma),m\sigma)=B(\sigma)$.
		
		Since $A(\sigma)=B(\sigma)$ for every $\sigma\in \mathcal R$, we have
		$A\sim_{\mathcal R}B$, contradicting that $\mathcal R$ is an
		interpolation algebra of $\mathcal S$. Therefore,
		$\mu_{\mathbb T}(\mathcal R)\geq\lfloor n/2\rfloor+1$.
		
		For the second statement, let $m=\lfloor n/2\rfloor+1$ and set
		$\mathcal R=\mathbb T[y]/\langle my\sim 0\rangle$. For every $A\in \mathcal S$, since $n<2m$, every polynomial
		in $\operatorname{MS}_m(A)$ has degree at most $1$, and is therefore
		convex. The same argument as in the proof of
		Theorem \ref{thm:int_alg} shows that $\mathcal R$ is an interpolation algebra
		of $\mathcal S$. Moreover, $\{0,y,\ldots,(m-1)y\}$ is a generating set of $\mathcal R$ over $\mathbb T$, so
		$\mu_{\mathbb T}(\mathcal R)\leq m$. Together with the lower bound
		proved above, this gives $\mu_{\mathbb T}(\mathcal R)=m=\lfloor n/2 \rfloor+1$.
	\end{proof}
	\begin{corollary}
		\label{cor:int_alg_neg}
		For every positive integer $k$, every interpolation algebra $\mathcal R$ of $\mathcal T_k$ satisfies $\mu_{\mathbb T}(\mathcal R)\ge \lfloor k/2\rfloor+1$. Moreover, if $k\ge 2$, then every interpolation algebra $\mathcal R$ of $\mathcal C_k$ satisfies $\mu_{\mathbb T}(\mathcal R)=\infty$.
	\end{corollary}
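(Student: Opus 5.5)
The first claim follows from Theorem \ref{thm:int_alg_neg}; the second needs a family of polynomials in $\mathcal C_k$ that defeats every bounded generating rank.

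\textbf{First claim.} Every tropical polynomial $A$ with $\deg A\le k$ satisfies $\operatorname{tdw}(A)\le k$, since $A=A$ is a one-factor decomposition. Hence $\mathcal S_k=\{A\in\mathbb T[x]\mid \deg A\le k\}\subseteq\mathcal T_k$. Any interpolation algebra $\mathcal R$ of $\mathcal T_k$ separates all pairs of distinct elements of $\mathcal T_k$, so it is also an interpolation algebra of $\mathcal S_k$. Theorem \ref{thm:int_alg_neg} with $n=k$ then gives $\mu_{\mathbb T}(\mathcal R)\ge\lfloor k/2\rfloor+1$.

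\textbf{Second claim.} Let $k\ge 2$ and suppose $\mathcal R$ is an interpolation algebra of $\mathcal C_k$ with $m=\mu_{\mathbb T}(\mathcal R)<\infty$. I reuse the path and cycle argument from the proof of Theorem \ref{thm:int_alg_neg}, but with polynomials of large degree that still have convex gap at most $2$. Take
$$
A=\min_{i=0}^{2m}(ix),\qquad B=\min_{0\le i\le 2m,\ i\ne m}(ix).
$$
\begin{itemize}
\item $A$ is convex, so $\operatorname{cgap}(A)\le 1$.
\item In $B$ all finite coefficients are $0$. So $\operatorname{csupp}(B)=(0,\ldots,m-1,m+1,\ldots,2m)$: every finite point lies on the lower hull, and the index $m$ has coefficient $\infty$.
\item The largest gap in this support is $2$, so Theorem \ref{thm:cgap} gives $\operatorname{cgap}(B)=2\le k$.
\end{itemize}
Hence $A,B\in\mathcal C_k$ and $A\ne B$. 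Note that $B\in\mathcal C_k$ uses $k\ge 2$, which explains that hypothesis.

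Now fix $\sigma\in\mathcal R$ and expand $t\sigma$ as a minimum of path weights over $\mathcal P_t$, exactly as in the proof of Theorem \ref{thm:int_alg_neg}. Any path $p\in\mathcal P_m$ with finite weight revisits a generator, so it contains a cycle of some length $q$ with $1\le q\le m$ and weight $\Delta$. Deleting the cycle gives a path in $\mathcal P_{m-q}$ of weight $w(p)-\Delta$; duplicating it gives a path in $\mathcal P_{m+q}$ of weight $w(p)+\Delta$. Both lengths $m\pm q$ lie in $[0,2m]\setminus\{m\}$, so both terms appear in $B(\sigma)$. Therefore $B(\sigma)\le w(p)-|\Delta|\le w(p)$ for every such $p$, so $B(\sigma)\le m\sigma$ and $A(\sigma)=B(\sigma)$.

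Since this holds for every $\sigma\in\mathcal R$, we get $A\sim_{\mathcal R}B$ with $A\ne B$, a contradiction. So $\mu_{\mathbb T}(\mathcal R)=\infty$.

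The main point to check is the convex-gap computation for $B$, in particular that $\infty$ coefficients are handled correctly in the definition of $\operatorname{csupp}$. If that is delicate, I would instead verify directly from the definition that $\operatorname{cgap}(B)=2$: for $y-x>2$ choose a middle index $z$ with $z\ne m$. If $b_z=0$ the triplet is convex because all finite coefficients are $0$, and if $a_x$ or $a_y$ is $\infty$ the triplet is trivially convex.
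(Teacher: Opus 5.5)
Your proof is correct and matches the paper's. The first claim uses $\{A\mid \deg A\le k\}\subseteq\mathcal T_k$ together with Theorem \ref{thm:int_alg_neg}. The second uses the same pair $A=\min_{i=0}^{2m}(ix)$, $B=\min_{0\le i\le 2m,\,i\ne m}(ix)$ with $\operatorname{cgap}\le 2\le k$, and the cycle-deletion/duplication argument, which you re-derive where the paper simply cites it.
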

	\begin{proof}
		Since $\{A\in\mathbb T[x]\mid \deg A\le k\}\subseteq \mathcal T_k$, the first statement follows directly from Theorem \ref{thm:int_alg_neg}.
		
		For the second statement, suppose $k\ge 2$ and let $\mathcal R$ be an interpolation algebra of $\mathcal C_k$ with $m=\mu_{\mathbb T}(\mathcal R)<\infty$. Define $A=\min_{i=0}^{2m}(ix),B=\min_{0\le i\le 2m,i\ne m}(ix)$.
		Then $\operatorname{cgap}(A),\operatorname{cgap}(B)\le 2\le k$, so $A,B\in\mathcal C_k$. In the proof of Theorem \ref{thm:int_alg_neg}, it is shown that $A\sim_{\mathcal R} B$ whenever $\mu_{\mathbb T}(\mathcal R)=m$, contradicting the assumption that $\mathcal R$ is an interpolation algebra of $\mathcal C_k$. This proves the second statement.
	\end{proof}
	\begin{corollary}
		\label{cor:poly_ei_uf}
		$\mathbb T[x]$ satisfies neither the extended identity property nor the unique factorization property.
	\end{corollary}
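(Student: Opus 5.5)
The corollary has two parts, and I would prove them separately.

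\textbf{Extended identity property.} I would invoke Theorem \ref{thm:int_alg_neg} directly. Suppose $\mathbb T[x]$ satisfied the extended identity property. Then there would be a $\mathbb T$-algebra $\mathcal R$ with $m=\mu_{\mathbb T}(\mathcal R)<\infty$ that is an interpolation algebra of $\mathbb T[x]$. The set $\mathcal S_n=\{A\mid \deg A\le n\}$ is a subset of $\mathbb T[x]$, so $\mathcal R$ is also an interpolation algebra of $\mathcal S_n$, for every $n$. Theorem \ref{thm:int_alg_neg} then forces $m\ge \lfloor n/2\rfloor+1$ for all $n$. Taking $n=2m$ gives a contradiction. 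More concretely, the explicit pair $A=\min_{i=0}^{2m}(ix)$ and $B=\min_{0\le i\le 2m,i\ne m}(ix)$ from that proof is already functionally equivalent over $\mathcal R$.

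\textbf{Unique factorization property.} I need one explicit polynomial in $\mathbb T[x]$ with two normal decompositions that are inequivalent up to permutation and association. Normality over $\mathbb T[x]$ means every factor is irreducible in $\mathbb T[x]$. I would take $P=\min(0,x+1,2x+1,3x)$? A cleaner choice is $P=\min(0,2x)$, with coefficients $(0,\infty,0)$. Its convex envelope has a single segment of length $2$, and no factorization into two linear factors exists, since the middle coefficient would have to be finite. So $P$ is irreducible.

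My candidate is $P^{\otimes 3}=3\min(0,2x)=\min(0,2x,4x,6x)$. The hope is to write it differently as well, for example as $\min(0,3x)+\min(0,3x)=\min(0,3x,6x)$. But that has infinite coefficient at $x^2$ and $x^4$ while $P^{\otimes3}$ has finite ones, so it is a different polynomial and the attempt fails.

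The working example is
\[
\min(0,2x)+\min(0,3x)=\min(0,2x,3x,5x),
\]
compared with
\[
\min(0,5x)+Q,
\]
which does not obviously work either. The classical non-UFD tropical example is
\[
\min(0,x,2x)+\min(0,2x)=\min(0,x,2x,3x,4x)=\min(0,x)+\min(0,x)+\min(0,x)+\min(0,x),
\]
where the coefficients are $0$ everywhere: $\min(0,x,2x)+\min(0,2x)$ has coefficients at degrees $0,1,2,3,4$ all equal to $0$. But $\min(0,x,2x)=2\min(0,x)$ is reducible, so the left side is not normal.

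I would therefore use
\[
\min(0,2x)+\min(0,2x)+\min(0,x)+\min(0,x)
\]
versus $\min(0,x)^{\otimes 6}$. The first product is $\min(0,2x,4x)+\min(0,x,2x)=\min_{i=0}^{6}(ix)$, which equals $6\min(0,x)$. Both decompositions are normal: $\min(0,x)$ is linear, and $\min(0,2x)$ is irreducible by the argument above. They have $4\ne 6$ factors, so uniqueness fails.

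The only step needing care is checking that $\min(0,2x)$ is irreducible. Any factorization into two degree-$1$ factors $\min(a,b+x)+\min(c,d+x)$ has $x$-coefficient $\min(a+d,b+c)$. This is finite unless a factor is a monomial, and a monomial factor would force $L(P)>0$ or a degree mismatch. So no such factorization exists, and the corollary follows.
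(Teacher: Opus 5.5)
Your proof is correct. The extended-identity half is the paper's argument in slightly different form. The paper notes $\mathcal C_2\subseteq\mathbb T[x]$ and applies Corollary~\ref{cor:int_alg_neg}. You restrict to $\{A\mid \deg A\le n\}$ and apply Theorem~\ref{thm:int_alg_neg} with $n=2m$. Both reduce to the same pair $\min_{i=0}^{2m}(ix)$ and $\min_{0\le i\le 2m,\,i\ne m}(ix)$ being indistinguishable over any $\mathcal R$ with $\mu_{\mathbb T}(\mathcal R)=m$.

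The unique-factorization half takes a different and more complete route than the paper. The paper only says that Lemma~\ref{lem:poly_con_uf} ``directly implies'' the failure. However, the non-uniqueness shown there ($\min_{i=0}^{23}(ix)$ split as $10+13$ versus $11+12$) is relative to the restricted set $\{\min_{i=0}^{k}(ix)\mid k\ge 10\}$. In $\mathbb T[x]$ those factors equal $k\min(0,x)$ and are reducible, so neither decomposition is normal over $\mathbb T[x]$.

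Your explicit witness, $\min(0,2x)+\min(0,2x)+\min(0,x)+\min(0,x)=6\min(0,x)$, gives two normal decompositions with $4$ versus $6$ irreducible factors. This is what actually establishes the claim over $\mathbb T[x]$. Your irreducibility check for $\min(0,2x)$ is sound. You should state explicitly that any unit factors can be absorbed into a degree-one factor. Alternatively, cite the converse direction of Lemma~\ref{lem:poly_con_lf}: $\operatorname{tdw}\le 1$ forces convexity, and $\min(0,2x)$ is not convex.

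In a final write-up, drop the abandoned candidates. The smaller witness $\min(0,x)+\min(0,2x)=\min(0,x,2x,3x)=3\min(0,x)$, with $2$ versus $3$ normal factors, already suffices.
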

	\begin{proof}
	    Since $\mathcal C_2\subseteq \mathbb T[x]$, Corollary \ref{cor:int_alg_neg} implies that every interpolation algebra $\mathcal R$ of $\mathbb T[x]$ satisfies $\mu_{\mathbb T}(\mathcal R)=\infty$. Therefore, $\mathbb T[x]$ does not have the extended identity property.
		
		Lemma \ref{lem:poly_con_uf} directly implies that $\mathbb T[x]$ fails the unique factorization property.
	\end{proof}
	\begin{lemma}
		\label{lem:eva_int}
		Let $\mathcal S\subseteq\mathbb T[x]$ be a multiplicatively closed set, and let $\mathcal R$ be an interpolation algebra for $\mathcal S$. Denote by $\mathcal F$ the set of all functions from $\mathcal R$ to itself. Then the evaluation map $\operatorname{ev}_{\mathcal R}:\mathcal S\to\mathcal F$, defined by $\left(\operatorname{ev}_{\mathcal R}(A)\right)(x_0)=A(x_0)$ for every $x_0\in\mathcal R$, is injective and preserves multiplication. More precisely, for all $A,B\in\mathcal S$, we have
		$$
		\operatorname{ev}_{\mathcal R}(A+B)
		=
		\operatorname{ev}_{\mathcal R}(A)+\operatorname{ev}_{\mathcal R}(B),
		$$
		where $\operatorname{ev}_{\mathcal R}(A)+\operatorname{ev}_{\mathcal R}(B)$ is defined pointwise by
		$$
		\left(\operatorname{ev}_{\mathcal R}(A)+\operatorname{ev}_{\mathcal R}(B)\right)(x_0)
		=
		\operatorname{ev}_{\mathcal R}(A)(x_0)+\operatorname{ev}_{\mathcal R}(B)(x_0),\quad x_0\in \mathcal R.
		$$
	\end{lemma}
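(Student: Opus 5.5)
Two things have to be checked: that $\operatorname{ev}_{\mathcal R}$ is multiplicative, and that it is injective. I will verify them separately, using only the definitions already fixed in the paper.

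For multiplicativity, I fix $x_0\in\mathcal R$ and regard evaluation at $x_0$ as a map from $\mathbb T[x]$ to $\mathcal R$. I write $\varphi:\mathbb T\to\mathcal R$ for the structure homomorphism. Because $\mathcal R$ is commutative, the universal property of the polynomial semiring applies: the assignment $x\mapsto x_0$, together with $\varphi$ on coefficients, extends to a unital semiring homomorphism $\mathbb T[x]\to\mathcal R$. Concretely, for $A=\min_{i}(a_i+ix)$ this map sends $A$ to $\min_i(\varphi(a_i)+ix_0)$.

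I will verify the multiplicative law explicitly rather than just cite the universal property, since the whole argument rests on it. Let $A=\min_i(a_i+ix)$, $B=\min_j(b_j+jx)$, and let $C=A+B$, whose coefficients are $c_k=\min_{i+j=k}(a_i+b_j)$. The steps are:
\begin{itemize}
\item First, $A(x_0)+B(x_0)$ expands to $\min_{i,j}(\varphi(a_i)+\varphi(b_j)+(i+j)x_0)$. This uses distributivity of $+$ over $\min$ in $\mathcal R$ (Lemma \ref{lem:ide} makes $\mathcal R$ idempotent, so the $(\min,+)$ notation is legitimate) and commutativity, which gives $ix_0+jx_0=(i+j)x_0$.
\item Second, I regroup the terms by $k=i+j$.
\item Third, I use that $\varphi$ preserves both operations, so $\min_{i+j=k}(\varphi(a_i)+\varphi(b_j))=\varphi(c_k)$.
\end{itemize}
This yields $C(x_0)=A(x_0)+B(x_0)$. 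Since $x_0$ was arbitrary, $\operatorname{ev}_{\mathcal R}(A+B)=\operatorname{ev}_{\mathcal R}(A)+\operatorname{ev}_{\mathcal R}(B)$ pointwise.

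Multiplicative closedness of $\mathcal S$ is only needed so that $A+B\in\mathcal S$, which lets $\operatorname{ev}_{\mathcal R}$ restricted to $\mathcal S$ be a homomorphism of multiplicative monoids. The computation itself holds on all of $\mathbb T[x]$. I also need to handle the degenerate factor $\infty$ (the polynomial with no finite coefficient): it is absorbing on both sides because $\varphi(\infty)=\infty$ and $\infty$ is absorbing in $\mathcal R$.

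Injectivity is immediate from the definition of interpolation algebra. If $\operatorname{ev}_{\mathcal R}(A)=\operatorname{ev}_{\mathcal R}(B)$, then $A(\tau)=B(\tau)$ for every $\tau\in\mathcal R$, which means $A\sim_{\mathcal R}B$, and hence $A=B$ for $A,B\in\mathcal S$.

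There is no real obstacle here. The only point needing care is the regrouping step, where the coefficient-level $\min$ has to be transported through $\varphi$ and combined with idempotent distributivity in $\mathcal R$. I expect this to go through directly from the semiring axioms.
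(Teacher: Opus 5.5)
Your proposal is correct and follows essentially the same route as the paper. Injectivity comes directly from the definition of an interpolation algebra, and multiplicativity comes from expanding $A(x_0)+B(x_0)$ by distributivity and regrouping terms by $k=i+j$. Your explicit bookkeeping with the structure homomorphism $\varphi$ and the $\infty$ polynomial only spells out what the paper's computation leaves implicit.
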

	\begin{proof}
		Let $A,B\in\mathcal S$ with $A\neq B$. Since $\mathcal R$ is an interpolation algebra for $\mathcal S$, there exists $x_0\in\mathcal R$ such that $A(x_0)\neq B(x_0)$. Hence $\operatorname{ev}_{\mathcal R}(A)\neq \operatorname{ev}_{\mathcal R}(B)$, proving that $\operatorname{ev}_{\mathcal R}$ is injective.
		
		Now write $A=\min_{j=0}^{m}(a_j+jx)$ and $B=\min_{k=0}^{t}(b_k+kx)$. For every $x_0\in\mathcal R$, we have
		$$
		\begin{aligned}
			(A+B)(x_0)
			&=\min_{i=0}^{m+t}
			\left(
			\min_{j+k=i}
			(a_j+b_k)+ix_0
			\right)\\
			&=\min_{0\leq j\leq m,0\leq k\leq t}
			\left(a_j+jx_0+b_k+kx_0\right)\\
			&=\min_{j=0}^{m}(a_j+jx_0)
			+\min_{k=0}^{t}(b_k+kx_0)\\
			&=A(x_0)+B(x_0).
		\end{aligned}
		$$
		Therefore, $\operatorname{ev}_{\mathcal R}(A+B)=\operatorname{ev}_{\mathcal R}(A)+\operatorname{ev}_{\mathcal R}(B)$.
	\end{proof}
	\begin{lemma}
		For a positive integer $k$, let $L=\operatorname{lcm}_{i=1}^{k}i$, set $\mathcal S=\mathcal T_k$, $\mathcal R=\mathbb T[y]/\langle Ly\sim 0\rangle$, and let $\mathcal F$ be the set of all functions from $\mathcal R$ to itself. Suppose that $\operatorname{ev}_{\mathcal R}:\mathcal S\to\mathcal F$ is the evaluation map defined in Lemma \ref{lem:eva_int}. Then the interpolation map $\operatorname{in}_{\mathcal R}:\operatorname{Im}(\operatorname{ev}_{\mathcal R})\to\mathcal S$, defined by
		$$
		[dx]\left(\operatorname{in}_{\mathcal R}(f)\right)
		=
		\sup_{r\in\mathbb R}\left( [(d\bmod L)y]f(r+y) - dr \right),\quad d\in\mathbb N,
		$$
		is a left inverse of $\operatorname{ev}_{\mathcal R}$.
	\end{lemma}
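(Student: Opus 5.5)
We need to show that $\operatorname{in}_{\mathcal R}(\operatorname{ev}_{\mathcal R}(A))=A$ for every $A\in\mathcal T_k$. In other words, for every $d\in\mathbb N$ the supremum formula must return $a_d$, where $A=\min_{i}(a_i+ix)$ and $a_d=\infty$ for $d>\deg A$. The case $A=\infty$ is trivial, since then $f\equiv\infty$ and every supremum is $\infty$. So assume $A\neq\infty$ and fix $d$. Put $q=d\bmod L$ and write $d=q+jL$.

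\textbf{Step 1: compute the coefficient.} We evaluate at $r+y$ and read off the $qy$ coefficient, exactly as in the proof of Theorem~\ref{thm:int_alg}, but without rescaling $r$. This gives
$$
[qy]A(r+y)=\min_{i\equiv q\ (\mathrm{mod}\ L)}(a_i+ir)=C(Lr)+qr,
$$
where $C=\operatorname{ms}_{L,q}(A)=\min_{j'}(c_{j'}+j'x)$. Hence
$$
[qy]A(r+y)-dr=C(Lr)-jLr.
$$
Substituting $s=Lr$, which ranges over all of $\mathbb R$, the quantity to compute becomes $\sup_{s\in\mathbb R}\bigl(C(s)-js\bigr)$. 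This is the Legendre-type inversion of the evaluation function of $C$ at the index $j$.

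\textbf{Step 2: identify the supremum with the convex envelope.} We claim that
$$
\sup_{s}\bigl(\min_{j'}(c_{j'}+j's)-js\bigr)=\hat f_C(j),
$$
with the convention that the value is $\infty$ when $j$ lies outside $[L(C),\deg C]$.

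For the upper bound, fix $s$. For any convex combination $\sum\lambda_{j'}j'=j$ we have
$$
\min_{j'}(c_{j'}+j's)\le\sum\lambda_{j'}(c_{j'}+j's)=\sum\lambda_{j'}c_{j'}+js.
$$
So each term of the supremum is at most $\hat f_C(j)$.

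For the lower bound with $j$ in the range, Lemma~\ref{lem:env} supplies $\lambda\in\partial\hat f_C(j)$. Taking $s=-\lambda$ and arguing as in Lemma~\ref{lem:poly_con_ip}, we get $c_{j'}\ge\hat f_C(j')\ge\hat f_C(j)+\lambda(j'-j)$ for all $j'$. Hence $C(-\lambda)+j\lambda\ge\hat f_C(j)$.

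When $j$ is out of range, there are two cases. If $j>\deg C$, send $s\to-\infty$; if $j<L(C)$, send $s\to+\infty$. In either case the expression diverges to $+\infty$, which matches $a_d=\infty$. The case $C=\infty$, which occurs when $q>\deg A$, gives $\infty$ directly. Some care is needed here over the range of $d$: if $d>\deg A$ but $q\le\deg A$, then $j>\deg C$, so the value is again correctly $\infty$.

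\textbf{Step 3: use convexity.} Since $A\in\mathcal T_k$ and $\operatorname{lcm}_{i=1}^k i\mid L$, Theorem~\ref{thm:sec_con_sec} shows that $C$ is convex. By Lemma~\ref{lem:conv_equiv}, $\operatorname{csupp}(C)$ is then the full interval $[L(C),\deg C]$, and the coefficients are $\infty$ below $L(C)$. Therefore $\hat f_C(j)=c_j=a_d$ for $j$ in range, and $a_d=\infty$ otherwise, so the formula returns exactly $a_d$.

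\textbf{Expected obstacle.} The main difficulty is the bookkeeping in Step 2. The supremum must be attained or correctly diverge in every boundary case: $j$ below $L(C)$, $j$ above $\deg C$, $C=\infty$, and $d$ beyond $\deg A$. Each of these has to be matched against the $\infty$ coefficient convention.
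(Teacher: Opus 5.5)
Your proposal is correct and follows essentially the paper's proof. You reduce to $\sup_s\bigl(C(s)-js\bigr)$ for the convex subpolynomial $C=\operatorname{ms}_{L,q}(A)$, take the lower bound from a subgradient $\lambda\in\partial\hat f_C(j)$ at $s=-\lambda$, and send $s\to\pm\infty$ when $j$ is out of range. The only difference is that you first bound the supremum by $\hat f_C(j)$ for arbitrary $C$ and then use convexity to get $\hat f_C(j)=c_j$, whereas the paper uses convexity directly by taking $i=j$ in the inner minimum. One small correction: $C=\infty$ can also occur when $q\le\deg A$ (e.g.\ $A=3x$ with $L=2$, $q=0$), not only when $q>\deg A$, but your argument covers that case unchanged.
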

	\begin{proof}
		Let $A\in\mathcal T_k$. If $A=\infty$, the statement is immediate. Otherwise, write $A=\min_{i=0}^{n}(a_i+ix)$ and set $f=\operatorname{ev}_{\mathcal R}(A)$. Write $d=q+sL$ with $0\leq q<L$ and $s\in\mathbb N$. Consider the subpolynomial $C=\operatorname{ms}_{L,q}(A)$. By Theorem \ref{thm:sec_con_sec}, $C$ is convex. As in the proof of Theorem \ref{thm:int_alg}, we have $[qy]f(r+y)=[qy]A(r+y)=C(Lr)+qr$. For each $s\in \mathbb N$, let $c_s=[sx]C$ denote the coefficient of $x^{\otimes s}$ in $C$.
		
		Define $\tau=\sup_{r\in\mathbb R}\left( [qy]f(r+y) - dr \right)$. We claim that $\tau=c_s$. If $C=\infty$, then $a_e=\infty$ for all $0\le e\le n$ with $e\equiv q\pmod L$; hence $[qy]f(r+y)=\infty$ for all $r\in\mathbb R$, and therefore $\tau=\infty$, proving the claim. Now suppose $C\ne\infty$, and let $p=L(C)$. Since $C$ is convex, Lemma \ref{lem:conv_equiv} implies that $\operatorname{csupp}(C)=(p,p+1,\ldots,m)$, so all finite coefficients occur exactly for indices $i\in[p,m]$.
		
		Let $g(r)=\min_{i=p}^{m}\left(c_i+(i-s)Lr\right).$ Observe that
		$$
		\begin{aligned}
			\tau
			&=\sup_{r\in\mathbb R}\bigl(C(Lr)+qr-dr\bigr) \\
			&=\sup_{r\in\mathbb R}\left(\min_{i=0}^{m}(c_i+iLr)+qr-dr\right) \\
			&=\sup_{r\in\mathbb R}\left(\min_{i=p}^{m}\bigl(c_i+(i-s)Lr\bigr)\right) \\
			&=\sup_{r\in\mathbb R} g(r).
		\end{aligned}
		$$
		
		If $s<p$, then letting $r\to\infty$ makes $g(r)\to\infty$, so $\tau=\infty=c_s$. If $s>m$, then letting $r\to-\infty$ gives $g(r)\to\infty$, and again $\tau=\infty=c_s$.
		
		It remains to handle the case $p\le s\le m$. By choosing $i=s$ in the inner minimum, we obtain
		$$\tau=\sup_{r\in\mathbb R}\left(\min_{i=p}^{m}\bigl(c_i+(i-s)Lr\bigr)\right)
		\le \sup_{r\in\mathbb R}\bigl(c_s+(s-s)Lr\bigr)=c_s.$$
		Thus $\tau\le c_s$.
		
		For the reverse inequality, since $p\le s\le m$, we have $s\in\operatorname{csupp}(C)$. Lemma \ref{lem:env} allows us to choose $\lambda\in\partial \hat f_C(s)$. Then for every $p\le i\le m$,
		$$c_s=\hat f_C(s)\le \hat f_C(i)+\lambda(s-i)\le c_i+\lambda(s-i).$$
		Rearranging gives
		$c_i+(i-s)L\cdot(-\lambda/L)\ge c_s.$
		Therefore $g(-\lambda/L)\ge c_s$, and consequently $\tau\ge c_s$. Combining the two inequalities yields $\tau=c_s$.
		
		Therefore, $$[dx](\operatorname{in}_{\mathcal R}(\operatorname{ev}_{\mathcal R}(A)))=\tau=c_s=[dx]A,$$ which completes the proof.
	\end{proof}
	\begin{remark}
		The preceding lemma provides an explicit left inverse for the evaluation map in the special case where $\mathcal R$ is a tropical cyclic polynomial semiring. For a general interpolation algebra $\mathcal R$, however, even if $\operatorname{ev}_{\mathcal R}$ is injective, which guarantees the existence of a left inverse, it is not clear how to construct one explicitly. It remains an interesting problem to construct an explicit interpolation map for $\operatorname{ev}_{\mathcal R}$ in the general setting.
	\end{remark}
	\subsection{Extension Algebras for Decomposition}
	In this subsection, we show that tropical decomposition width remains invariant under flat commutative semiring extension. The proof proceeds by translating the problem into a system of tropical congruence equations, and then applying the weak tropical Nullstellensatz from \cite{ar17}. We first present the definitions needed for this theorem.
	\begin{definition2}[tropical decomposition width over a commutative semiring]
		Let $\mathcal R$ be a commutative semiring that is a $\mathbb T$-algebra, where the natural embedding $\varphi$ is injective. Let $A\in \mathcal R[x]$ be a non-$\infty$ polynomial. A \textit{tropical polynomial decomposition} of $A$ over $\mathcal R$ is an equation
		$$
		A=\sum_{i=1}^{N} B^{(i)},\quad B^{(i)}\in \mathcal R[x].
		$$
		The \textit{tropical decomposition width} of $A$ over $\mathcal R$, denoted by $\operatorname{tdw}_{\mathcal R}(A)$, is the smallest nonnegative integer $k$ such that $A$ admits such a decomposition with $\deg B^{(i)}\le k$ for all $1\leq i\leq N$. In particular, we set $\operatorname{tdw}_{\mathcal R}(\infty)=0$.
	\end{definition2}
	\begin{definition2}[flatness over tropical semiring]
		Let $\mathcal R$ be a commutative semiring that is a $\mathbb T$-algebra. A congruence $\mathcal E$ on $\mathcal R$ is called \textit{flat over} $\mathbb T$ if for all $a,b\in \mathbb T$ and $c\in \mathcal R$,
		$$a+c \sim b+c \in \mathcal E \quad\Longrightarrow\quad a=b\text{ or } c\sim\infty \in \mathcal E.$$
		$\mathcal R$ itself is called \textit{flat over} $\mathbb T$ if the trivial congruence $\Delta_{\mathcal R}$ is flat.
	\end{definition2}
	\begin{definition2}[congruence and variety]
		Let $n$ be a positive integer and set $\mathcal R = \mathbb T[x_1,\ldots,x_n]$. For any subset $S\subseteq \mathbb T^n$, we define the \textit{congruence} of $S$ as the set of relations
		$$E(S)=\{f\sim g\mid f(a)=g(a)\text{ for all }a\in S\}.$$
		Similarly, for any set $\mathcal S$ of relations on $\mathcal R$, we define the \textit{variety} of $\mathcal S$ by
		$$V(\mathcal S)=\{a\in \mathbb T^n \mid f(a)=g(a)\text{ for all }f\sim g\in \mathcal S\}.$$
	\end{definition2}
	\begin{theorem}[weak tropical Nullstellensatz, \cite{ar17}, Theorem 2]
		\label{thm:weak_trop}
		Let $\mathcal E$ be a finitely generated congruence on $\mathbb T[x_1,\ldots,x_n]$. Then $V(\mathcal E)=\varnothing$ if and only if there exist $f\in \mathbb T[x_1,\ldots,x_n]$ and $r\in \mathbb T$ such that
		\begin{itemize}
			\item the constant term of $f$ is not $\infty$,
			\item $r\neq 0$,
			\item $f \sim f+r \in \mathcal E$.
		\end{itemize}
	\end{theorem}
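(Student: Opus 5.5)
The easy direction can be done by direct evaluation. For the hard direction I would reduce emptiness of $V(\mathcal E)$ to infeasibility of a finite family of real linear systems and turn a Farkas certificate into the required relation $f\sim f+r$.

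\emph{Sufficiency.} Suppose $f\sim f+r\in\mathcal E$ with $r\neq 0$ and the constant term $f_0$ of $f$ finite. If some $a\in V(\mathcal E)$ existed, then $f(a)=f(a)+r$. Since $f(a)=\min(f_0,\dots)\le f_0<\infty$, the value $f(a)$ is a finite real number. Cancelling it gives $r=0$, a contradiction. Hence $V(\mathcal E)=\varnothing$.

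\emph{Necessity.} Let $\mathcal E=\langle f_1\sim g_1,\dots,f_m\sim g_m\rangle$ with $V(\mathcal E)=\varnothing$.
\begin{itemize}
\item[(1)] \textbf{Remove $\infty$ coordinates.} For each support pattern $J\subseteq\{1,\dots,n\}$, consider the points with $a_j=\infty$ exactly for $j\notin J$. On such points every monomial containing some $x_j$, $j\notin J$, vanishes to $\infty$. The problem then becomes one about points of $\mathbb R^{J}$ for the truncated polynomials. A relation in the truncated ring lifts to $\mathcal E$ after adding a suitable monomial in the variables $x_j$, $j\notin J$; this monomial is needed to cancel the dropped terms. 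Certificates for different patterns would then be combined by taking $\min$ and $+$ of the resulting relations. Both operations are allowed by (I1) and (I2), and they keep the constant term finite.
\item[(2)] \textbf{Reduce to linear systems.} On $\mathbb R^{J}$, the condition $f_i(a)=g_i(a)$ says that the minimum over all monomials of $f_i$ and $g_i$ is attained on both sides. So the solution set is a finite union of polyhedra, one for each choice of which monomial attains the minimum on each side, each cut out by linear inequalities in $a$. Emptiness of each polyhedron gives, by Farkas' lemma, a nonnegative rational, hence after scaling integer, combination of these inequalities yielding $0<0$ or $0\le -r$ with $r>0$.
\item[(3)] \textbf{Convert the certificate into a relation.} An integer combination corresponds tropically to multiplying the generating relations by powers and monomials and adding them; in the paper's notation this means $k(f_i\sim g_i)$ followed by sums. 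One then takes a $\min$ with the remaining monomials so that both sides become the same polynomial $f$ up to a global shift $r$. The finite constant term would come from clearing denominators and multiplying by monomials so that the constant monomial appears.
\end{itemize}

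\emph{Main obstacle.} I expect step (3) to be the real difficulty. Farkas certificates differ from polyhedron to polyhedron, and they must be glued into a single congruence relation $f\sim f+r$, which needs a uniform choice of $f$. I would first try a mean-payoff-game / tropical-dual argument in the style of Grigoriev–Podolskii: linearize by treating all monomials up to a sufficiently large degree bound as tropical-linear unknowns. Infeasibility of the resulting tropical linear prevariety is then equivalent to a cycle of nonzero weight in an associated game graph, and that cycle directly yields $f\sim f+r$. If that failed, I would fall back on the argument of \cite{ar17} itself, which proves exactly this certificate form.
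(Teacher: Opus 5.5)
The paper does not prove this statement. It quotes Theorem~2 of \cite{ar17} as a black box, and the direction it relies on (in the proof that $\operatorname{tdw}_{\mathcal R}(A)=\operatorname{tdw}(A)$) is the hard one: $V(\mathcal E)=\varnothing$ implies that a relation $f\sim f+r$ exists. Your sufficiency argument is correct; it is essentially the cancellation the paper performs at the point $\tau$. Your necessity argument, however, is an outline with a genuine gap exactly where the content of the theorem lies.

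\textbf{Step (3).} Each Farkas certificate belongs to a single cell, that is, to one choice of which monomial attains the minimum on each side of each generator. The inequalities that cut out that cell are not consequences of $\mathcal E$. Everything you can derive from the generators via (E1)--(E3), (I1), (I2) is a relation between whole polynomials. At each point, such a relation is evaluated through whichever monomials are minimal \emph{there}. So an integer combination of one cell's inequalities has no counterpart as a congruence relation. The instruction ``take a $\min$ with the remaining monomials so that both sides become the same $f$ up to a shift'' restates the goal rather than constructing anything. Producing one $f$ with $f\sim f+r\in\mathcal E$ that is valid across all cells simultaneously is the whole theorem.

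\textbf{Step (1).} The same problem appears one level up. In $\mathbb T[x_1,\ldots,x_n]$ a term can be absorbed by a $\min$ with a dominating polynomial but never cancelled. Hence a relation $h\sim h'$ derived in $\mathbb T[x_J]$ from the truncated generators in general lifts only to $\min(h,P)\sim\min(h',P)\in\mathcal E$. Here $P$ is a polynomial, not a single monomial, all of whose monomials contain some $x_j$ with $j\notin J$. A pattern-$J$ certificate therefore becomes $\min(f',P)\sim\min(f'+r,P)$, which is not of the form $F\sim F+r$.

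Moreover, a certificate excluding a single support pattern cannot in general have finite constant term. Take $\mathcal E=\langle x_1\sim x_1+1\rangle$: it has no points in $\mathbb R^n$, yet $V(\mathcal E)=\{a\mid a_1=\infty\}\neq\varnothing$. By your own sufficiency argument, $\mathcal E$ then contains no relation $f\sim f+r$ with finite constant term. So ``combining by $\min$ and $+$'' hides the essential gluing step, and you give no mechanism for it.

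\textbf{The fallback routes.} The Grigoriev--Podolskii route does not close the gap for free. That linearization is faithful only above an effective degree bound, which is itself the main theorem there. It also concerns solution sets defined by ``the minimum is attained at least twice'' rather than by equalities $f(a)=g(a)$. Its output would still have to be converted into a congruence relation with finite constant term. Deferring to \cite{ar17} is exactly what the paper does, but then the hard direction is cited, not proved.
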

	We then prove Theorem \ref{thm:tdw_inv_intro}, which establishes the rigidity of tropical polynomials under decomposition via extension algebras.
	\begin{theorem}
		Let $\mathcal R$ be a commutative semiring that is a flat $\mathbb T$-algebra. Then for every tropical polynomial $A$,
		$$
		\operatorname{tdw}_{\mathcal R}(A)=\operatorname{tdw}(A).
		$$
	\end{theorem}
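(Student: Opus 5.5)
The inequality $\operatorname{tdw}_{\mathcal R}(A)\le\operatorname{tdw}(A)$ is immediate. Take any decomposition $A=\sum_{i=1}^N B^{(i)}$ over $\mathbb T$ with $\deg B^{(i)}\le\operatorname{tdw}(A)$ and apply the structure homomorphism $\varphi$ coefficientwise. Since $\varphi$ is a unital semiring homomorphism, it commutes with polynomial multiplication. Since it is injective, it preserves degrees. This gives a decomposition of $A$ over $\mathcal R$ of the same width. The case $A=\infty$ is trivial, so the real content is the reverse inequality. I plan to prove it by encoding ``$A$ has a width-$k$ decomposition'' as a finite system of tropical polynomial relations and then invoking Theorem \ref{thm:weak_trop}.

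\textbf{Setting up the system.} Let $k=\operatorname{tdw}_{\mathcal R}(A)$, and fix a decomposition $A=\sum_{i=1}^N B^{(i)}$ over $\mathcal R$ with $B^{(i)}=\min_{j=0}^{k}(\beta_{i,j}+jx)$ and $\beta_{i,j}\in\mathcal R$. Pad with $\infty$ coefficients where needed. Introduce variables $x_{i,j}$ for $1\le i\le N$ and $0\le j\le k$ in $\mathcal P=\mathbb T[x_{i,j}]$. For each $0\le d\le Nk$ define
$$f_d=\min_{j_1+\cdots+j_N=d}\ \sum_{i=1}^N x_{i,j_i}\in\mathcal P,$$
and let $a_d\in\mathbb T$ be the $d$-th coefficient of $A$, with $a_d=\infty$ for $d>\deg A$. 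Let $\mathcal E=\langle f_d\sim a_d\mid 0\le d\le Nk\rangle$; this congruence is finitely generated. A point of $V(\mathcal E)\subseteq\mathbb T^{N(k+1)}$ is exactly a choice of tropical coefficients giving $A=\sum_i\min_j(x_{i,j}+jx)$, which is a decomposition over $\mathbb T$ of width at most $k$. So it suffices to show $V(\mathcal E)\ne\varnothing$.

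\textbf{Contradiction via the Nullstellensatz.} Suppose $V(\mathcal E)=\varnothing$. Theorem \ref{thm:weak_trop} then yields $f\in\mathcal P$ whose constant term $c_0$ is not $\infty$, and some $r\in\mathbb T$ with $r\ne 0$ (possibly $r=\infty$), such that $f\sim f+r\in\mathcal E$. Consider the evaluation homomorphism $\mathrm{ev}_\beta:\mathcal P\to\mathcal R$ that sends $x_{i,j}\mapsto\beta_{i,j}$ and acts by $\varphi$ on constants. Its kernel congruence $\{g\sim h\mid \mathrm{ev}_\beta(g)=\mathrm{ev}_\beta(h)\}$ is a congruence on $\mathcal P$. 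It contains every generator $f_d\sim a_d$, precisely because $A=\sum_i B^{(i)}$ holds in $\mathcal R[x]$. Hence it contains $\mathcal E$, so $c+\varphi(r)=c$ in $\mathcal R$, where $c=\mathrm{ev}_\beta(f)$. Flatness, applied with $a=0$, $b=r$, $c$, together with $r\ne0$, forces $c=\infty$.

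The remaining step is to derive a contradiction from $c=\infty$. By Lemma \ref{lem:ide}, $\mathcal R$ is idempotent with $\infty$ the top element of its natural order. Write $c=\min(\varphi(c_0),c')$, where $c'$ collects the non-constant monomials. From $\min(\varphi(c_0),c')=\infty$ I get $\infty\le\varphi(c_0)$, hence $\varphi(c_0)=\infty=\varphi(\infty)$. Injectivity of $\varphi$ then gives $c_0=\infty$, a contradiction. Therefore $V(\mathcal E)\ne\varnothing$ and $\operatorname{tdw}(A)\le k$.

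I expect the main care to be needed in two places. The first is checking that $\mathcal E$ really lies inside the kernel congruence of $\mathrm{ev}_\beta$, in particular that evaluating $f_d$ reproduces the $d$-th coefficient of $\sum_i B^{(i)}$ computed in $\mathcal R$. The second is handling the case $r=\infty$ in the flatness step, where the hypothesis $f\sim\infty\cdot f$ must still be covered; this works because flatness is stated for all $a,b\in\mathbb T$, including $\infty$.
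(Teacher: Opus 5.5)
Your proposal is correct and follows essentially the same route as the paper's proof. In both, the width-$k$ decomposition is encoded as the finitely generated congruence $\langle h_d\sim a_d\rangle$, and the Nullstellensatz certificate $f\sim f+r$ is pulled into $\mathcal R$ through the kernel congruence of evaluation at the $\mathcal R$-coefficients (the paper's $\mathcal E_2$). Flatness together with the finite constant term of $f$ then gives the contradiction. The one thing you take for granted is injectivity of $\varphi$; the paper derives it in one line by applying flatness with $c=0$, which works because $0\neq\infty$ in $\mathcal R$.
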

	\begin{proof}
		Let $\varphi:\mathbb T\to\mathcal R$ be the natural embedding. Suppose $a,b\in\mathbb T$ satisfy $\varphi(a)=\varphi(b)$. Then $\varphi(a)+0=\varphi(a)=\varphi(b)=\varphi(b)+0$. By flatness of $\mathcal R$, either $a=b$ or $0=\infty$. As $0\ne\infty$, we conclude $a=b$. Thus $\varphi$ is injective, which implies that $\operatorname{tdw}_{\mathcal R}(A)$ is well-defined.
		
		If $A\in \mathbb T$, then $\operatorname{tdw}_{\mathcal R}(A)=\operatorname{tdw}(A)=0$. Otherwise set $q=\operatorname{tdw}_{\mathcal R}(A)\geq 1$ and choose a tropical polynomial decomposition
		$A=\sum_{i=1}^{N} \widetilde B^{(i)}$
		over $\mathcal R$ with $\deg \widetilde B^{(i)}\le q$ for all $1\leq i\leq N$. Write
		$A=\min_{j=0}^{n}(a_j+jx)$ and $\widetilde B^{(i)}=\min_{j=0}^{s_i}(\tilde b_j^{(i)}+jx)$. In particular, we set $a_j=\infty$ for all $j>n$.
		
		For each $1\leq i\leq N$, introduce variables $c^{(i)}_0,\ldots,c^{(i)}_{s_i}$, and let
		$
		\mathcal R_2=\mathbb T\left[ c^{(1)}_0,\ldots,c^{(1)}_{s_1},\ldots,c^{(N)}_0,\ldots,c^{(N)}_{s_N}\right].
		$
		For every $0\le d\le \sum_{i=1}^{N}s_i$, define the polynomial
		$$
		h_d=\min_{(p_1,\ldots,p_N)\in \mathbb Z^N,0\le p_i\le s_i,\sum_{i=1}^{N}p_i=d}
		\left(\sum_{i=1}^{N} c^{(i)}_{p_i}\right)
		$$
		in $\mathcal R_2$. Let $\mathcal S=\{h_d \sim a_d \mid 0\le d\le \sum_{i=1}^{N}s_i\}$ be a set of relations on $\mathcal R_2$, and let $\mathcal E=\langle \mathcal S\rangle$.
		
		We claim that $V(\mathcal E)\neq\varnothing$. Suppose otherwise. Since $\mathcal E$ is finitely generated, Theorem \ref{thm:weak_trop} provides $f\in \mathcal R_2$ and $r\in \mathbb T$ with constant term of $f$ not $\infty$, $r\neq 0$, and $f\sim f+r\in\mathcal E$.
		
		Set $\tau=\left(\tilde b^{(1)}_0,\ldots,\tilde b^{(1)}_{s_1},\ldots,\tilde b^{(N)}_0,\ldots,\tilde b^{(N)}_{s_N}\right)$, and define a congruence on $\mathcal R_2$ by $\mathcal E_2=\{\, f\sim g \mid f(\tau)=g(\tau)\,\}$. Since $A=\sum_{i=1}^{N}\widetilde B^{(i)}$, for every $0\le d\le \sum_{i=1}^{N}s_i$ we have
		$$
		h_d(\tau)
		=
		\min_{(p_1,\ldots,p_N)\in \mathbb Z^N,0\le p_i\le s_i,\sum_{i=1}^{N}p_i=d}
		\left(\sum_{i=1}^{N}\tilde b^{(i)}_{p_i}\right)
		=
		a_d.
		$$
		Hence $h_d\sim a_d\in \mathcal E_2$ for all $0\leq d\leq \sum_{i=1}^{N}s_i$, so $\mathcal S\subseteq \mathcal E_2$. By definition of generated congruence, $\mathcal E\subseteq \mathcal E_2$.
		
		Thus $f\sim f+r\in \mathcal E_2$, which gives $f(\tau)=f(\tau)+r$. Let $u\in \mathbb R$ be the constant term of $f$. Then since $u\neq\infty$, we have $f(\tau)\le u<\infty$. Since $\mathcal R$ is flat, we conclude that $0=r$, contradicting $r\neq 0$. Therefore $V(\mathcal E)\neq\varnothing$.
		
		Now pick 
		$\sigma=\left(b^{(1)}_0,\ldots,b^{(1)}_{s_1},\ldots,b^{(N)}_0,\ldots,b^{(N)}_{s_N}\right)\in V(\mathcal E).$
		For each $1\leq i\leq N$, let $v_i$ be the largest index $p$ such that $b^{(i)}_p\neq\infty$, and define
		$B^{(i)}=\min_{j=0}^{v_i}(b^{(i)}_j+jx)$. Since $\sigma\in V(\mathcal E)$, for every $0\le d\le \sum_{i=1}^{N}s_i$ we have
		$$
		\min_{(p_1,\ldots,p_N)\in \mathbb Z^N,0\le p_i\le s_i,\sum_{i=1}^{N}p_i=d}
		\left(\sum_{i=1}^{N}b^{(i)}_{p_i}\right)
		=h_d(\sigma)=a_d.
		$$
		Thus $A=\sum_{i=1}^{N} B^{(i)}$. Moreover, $\deg B^{(i)}=v_i\le s_i=\deg \widetilde B^{(i)}\le q$ for all $1\leq i\leq N$, so $\operatorname{tdw}(A)\le q=\operatorname{tdw}_{\mathcal R}(A)$.
		
		Conversely, since $\mathcal R$ is a $\mathbb T$-algebra, every tropical polynomial decomposition of $A$ over $\mathbb T$ is also a decomposition over $\mathcal R$. Hence $\operatorname{tdw}_{\mathcal R}(A)\le \operatorname{tdw}(A)$. Combining the two inequalities gives $\operatorname{tdw}_{\mathcal R}(A)=\operatorname{tdw}(A)$.
	\end{proof}
	\begin{remark}
		The preceding argument does not rely on any special properties of the tropical congruence equation associated with decomposition width. It applies equally to other parameters based on the decomposition shape, which determine a tropical equation system solely from the decomposition structure, for example, the Barvinok rank (\cite{ms15}, Definition 5.3.1) in tropical matrix theory. This shows that such parameters also admit no refined decompositions under extension algebras.
	\end{remark}
	\section*{Open Problems}
	We conclude this paper with a selection of open problems motivated by our findings.
	\begin{itemize}
		\item What is the exact complexity of deciding whether $\operatorname{tdw}(A)\leq k$ for a tropical polynomial $A$ of degree $n$? Can we show that this problem is solvable in time $f(k)n^{O(1)}$, or alternatively, that it is W[1]-hard?
		\item Can the time complexities of Multiple-Sequence $(\min,+)$ Convolution and Multiple-Choice Knapsack be improved? Our algorithm does not use the powerful additive-combinatorial techniques that have led to recent improvements for $0$-$1$ Knapsack \cite{jin24,bri24}. Since these problems are harder than $0$-$1$ Knapsack, incorporating these techniques would require a more sophisticated algorithm design.
		\item Can our results be extended from the univariate setting to multivariate tropical polynomials? Multivariate tropical polynomials appear to have connections to lattice theory and discrete convex analysis.
		\item Can the gap between the lower and upper bounds on the generating rank of interpolation algebras for $\mathcal T_k$ be closed? We currently have a linear lower bound and an upper bound of $e^{(1+o(1))k}$. The converse construction in Theorem \ref{thm:sec_con_sec} shows that the latter is tight for tropical cyclic polynomial semirings, so any polynomial upper bound would require a different algebraic construction.
	\end{itemize}
	\section*{AI Disclosure}
	All original concepts, theorem statements, proof ideas, and figures presented as contributions of this paper were developed by the authors. We used GPT-5.6 to assist in verifying proof details, including checking for typographical errors in mathematical notation and identifying corner cases we had initially overlooked. All proofs were written and subsequently revised by the authors themselves. The authors take full responsibility for the correctness and integrity of this work.
	\section*{Acknowledgments}
	We thank Ce Jin for helpful discussions.
	
\end{document}